\documentclass[11pt]{article}
\usepackage[IL2]{fontenc} 
\usepackage[a4paper,margin=2.5cm]{geometry}
\usepackage{todonotes}

\usepackage[T1]{fontenc} 
\usepackage{hyperref}
\usepackage{float}
\usepackage{diagbox}
\usepackage{comment}
\usepackage{amssymb}
\usepackage{amsmath}
\usepackage{amsthm}
\usepackage{xspace}
\usepackage[noend]{algpseudocode}
\usepackage{bm}
\usepackage[normalem]{ulem}
\usepackage[subrefformat=parens]{subcaption}
\usepackage{multicol}	
\usepackage{mdframed}
\usepackage{cleveref}
\usepackage[switch,mathlines]{lineno}
\usepackage{tikz}
\usetikzlibrary{automata, positioning, arrows.meta, bending}
\usetikzlibrary{calc}
\tikzset{
	invisible/.style={opacity=0},
	emph/.style={color=magenta},
	alert/.style={color=red},
	anotherhl/.style={color=blue},
	bold/.style={very thick},
	emph on/.style={alt={#1{emph}{}}},
	alert on/.style={alt={#1{alert}{}}},
	anotherhl on/.style={alt={#1{anotherhl}{}}},
	arrow on/.style={alt={#1{->}{}}},
	bold on/.style={alt={#1{bold}{}}},
	visible on/.style={alt={#1{}{invisible}}},
	alt/.code args={<#1>#2#3}{%
		\alt<#1>{\pgfkeysalso{#2}}{\pgfkeysalso{#3}} 
	},
}

\newtheorem{theorem}{Theorem}
\newtheorem{lemma}[theorem]{Lemma}
\newtheorem{proposition}[theorem]{Proposition}
\newtheorem{corollary}[theorem]{Corollary}
\newtheorem{remark}[theorem]{Remark}
\newtheorem*{remark*}{Remark}

\newcommand{\newextmathcommand}[2]{%
	\newcommand{#1}{\ensuremath{#2}\xspace}
}

\newcommand{\renewextmathcommand}[2]{%
	\renewcommand{#1}{\ensuremath{#2}\xspace}
}

\newextmathcommand{\PTIME}{{\mathsf{PTIME}}}
\newextmathcommand{\EXPTIME}{{\mathsf{EXPTIME}}}
\newextmathcommand{\NL}{{\mathsf{NL}}}
\newextmathcommand{\TwoNPDAk}{{\text{2NPDA$(k)$}}}
\newextmathcommand{\TwoNPDA}{{\text{2NPDA}}}
\newextmathcommand{\TwoNPDAclass}{{\mathsf{2NPDA}}}
\newextmathcommand{\TwoNPDAhyp}{{\mathsf{2NPDA}}}
\newextmathcommand{\TwoNPDAkhyp}{{\mathsf{2NPDA}}(k)}
\newextmathcommand{\boldTwoNPDA}{{\mathbf{2NPDA}}}
\newextmathcommand{\TwoDPDA}{{\text{2DPDA}}}

\newextmathcommand{\TwoNFAAcc}{\text{\textsc{2NFA Acceptance}}}
\newextmathcommand{\TwoNPDARec}{\text{\textsc{2NPDA Recognition}}}
\newextmathcommand{\DyckTwoReach}{\mathsf{D_2Reach}}
\newextmathcommand{\AllPairsDyckTwoReach}{\mathsf{All\mbox{-}Pairs\mbox{-}D_2Reach}}

\newextmathcommand{\instance}{\bm{x}}

\newextmathcommand{\Lang}{\mathcal L}
\newextmathcommand{\RLang}{\mathcal L'}
\newextmathcommand{\A}{\mathcal A}
\newextmathcommand{\Ak}{\mathcal A_k}
\renewextmathcommand{\AA}{\mathcal A'}
\newextmathcommand{\B}{\mathcal B}
\newextmathcommand{\Rp}{\mathcal R}
\newextmathcommand{\PDA}{\mathcal P}
\newextmathcommand{\PDAA}{\mathcal P'}
\newextmathcommand{\SAT}{\mathsf{SAT}}
\newextmathcommand{\SharpSAT}{\mathsf{\# SAT}}
\newextmathcommand{\TAUT}{\mathsf{TAUT}}
\newextmathcommand{\SETH}{\mathsf{SETH}}
\newextmathcommand{\NSETH}{\mathsf{NSETH}}
\newextmathcommand{\NTIME}{\mathsf{NTIME}}
\newextmathcommand{\CONTIME}{\mathsf{co\text{-}NTIME}}
\newextmathcommand{\DTIME}{\mathsf{DTIME}}
\newextmathcommand{\MATIME}{\mathsf{MATIME}}
\newextmathcommand{\coMATIME}{\mathsf{co\text{-}MATIME}}
\renewextmathcommand{\P}{\mathsf{P}}
\newextmathcommand{\NP}{\mathsf{NP}}
\newextmathcommand{\MA}{\mathsf{MA}}
\newextmathcommand{\PSPACE}{\mathsf{PSPACE}}
\newextmathcommand{\HardestA}{\A_0}
\newextmathcommand{\AlmostHardestA}{\A'_0}
\newextmathcommand{\prestar}{\mathrm{Pre}^*(R)}

\newextmathcommand{\eps}{\varepsilon}
\renewcommand{\epsilon}{\varepsilon}

\newextmathcommand{\partto}{\rightharpoonup}

\newextmathcommand{\opr}{\text{\textup{\texttt{(}}}}
\newextmathcommand{\clr}{\text{\textup{\texttt{)}}}}
\newextmathcommand{\ops}{\text{\textup{\texttt{[}}}}
\newextmathcommand{\cls}{\text{\textup{\texttt{]}}}}
\newextmathcommand{\BrTwo}{\{ \opr, \clr, \ops, \cls \}}
\newextmathcommand{\OpenBr}{\{ \opr, \ops \}}
\newextmathcommand{\CloseBr}{\{ \clr, \cls \}}

\newextmathcommand{\qfinal}{q_\textsc{f}}
\newcommand{\set}[1]{\{ #1 \}}
\newextmathcommand{\lmarker}{{\lhd}}
\newextmathcommand{\rmarker}{{\rhd}}
\newextmathcommand{\llmarker}{{\trianglelefteq}}
\newextmathcommand{\rrmarker}{{\trianglerighteq}}
\newextmathcommand{\shuf}{\parallel}

\newcommand{\twonpda}{\mathcal{A}}
\newextmathcommand{\pureSigma}{\Sigma \setminus \{\lmarker, \rmarker\}}

\newextmathcommand{\sdelim}{{\mathtt{;}}}
\newextmathcommand{\strack}{\diamond}
\newextmathcommand{\Bin}{\{0, 1\}}

\newextmathcommand{\aAND}{\textsc{and}}
\newextmathcommand{\aOR}{\textsc{or}}
\newextmathcommand{\hash}{\mathtt\#}
\newextmathcommand{\amdelima}{\star}
\newextmathcommand{\amdelimb}{*}

\newextmathcommand{\offset}{\mathsf{offset}}
\newextmathcommand{\ind}{\mathsf{index}}
\newextmathcommand{\vmark}{\mathbin{\#}}
\newextmathcommand{\edgesep}{*}
\newextmathcommand{\poly}{\mathrm{poly}}
\newextmathcommand{\polylog}{\mathrm{polylog}}

\newcommand*{\myaccept}{\textbf{accept}\ }

\newextmathcommand{\AllNonTerminals}{\overrightarrow{V^2}}
\newextmathcommand{\NonTerminals}{\mathsf{NT}}

\newextmathcommand{\pda}{P}
\newextmathcommand{\fsa}{A}
\newextmathcommand{\two}{\mathcal M}
\newextmathcommand{\EncAlph}{\Delta}
\newextmathcommand{\LangAlph}{\Sigma}
\newcommand{\langof}[1]{\mathcal{L}(#1)}
\newextmathcommand{\aut}{\mathcal M}
\newextmathcommand{\hrd}{\mathcal H}
\newextmathcommand{\pu}{P}
\newcommand{\pdanfa}[1]{\mathrm{PDA\cap NFA}^{\!#1}}
\newcommand{\pdadfa}[1]{\mathrm{PDA\cap DFA}^{\!#1}}
\newcommand{\dpdadfa}[1]{\mathrm{DPDA\cap DFA}^{\!#1}}
\newextmathcommand{\mainpdanfa}{\pdanfa{k-1}}
\newextmathcommand{\mainpdadfa}{\pdadfa{k-1}}
\newextmathcommand{\maindpdadfa}{\dpdadfa{k-1}}
\newcommand{\boldpdanfa}[1]{\mathbf{PDA \bm{\cap} NFA}^{\!#1}}
\newcommand{\boldpdadfa}[1]{\mathbf{PDA \bm{\cap} DFA}^{\!#1}}
\newextmathcommand{\boldmainpdanfa}{\boldpdanfa{\bm{k-1}}}
\newextmathcommand{\boldmainpdadfa}{\boldpdadfa{\bm{k-1}}}

\newcommand*{\br}{\nobreak\discretionary{}{\hbox{}}{}}

\newcommand{\Fix}{{Fix}:\quad}
\newcommand{\Input}{\uline{Input}:\quad}
\newcommand{\Decide}{\uline{Decide}:\quad}
\newcommand{\qaqed}{\hfill$\vartriangleleft$}

\newcommand{\mach}{\mathcal{M}}

\newcommand{\bala}[1]{}
\newcommand{\dch}[1]{}

\title{Pushdown Model Checking Above the Cubic Bottleneck}

\author{%
	A. R. Balasubramanian\thanks{\footnotesize{A part of the work was done when the author was at TUM, Germany}}\\
	Max Planck Institute for Software Systems\\
	(MPI-SWS), Kaiserslautern, Germany\\
	\texttt{bayikudi@mpi-sws.org}
	\and
	Dmitry Chistikov\\
	University of Warwick\\
	Coventry, United Kingdom\\
	\texttt{d.chistikov@warwick.ac.uk}
	\and
	Rupak Majumdar\\
	Max Planck Institute for Software Systems\\
	(MPI-SWS), Kaiserslautern, Germany\\
	\texttt{rupak@mpi-sws.org}
}

\begin{document}
	
	\date{}
	\maketitle

	\begin{abstract}
Many problems in program analysis and the verification of recursive programs can be reduced to \emph{pushdown model checking}. 
In this problem, we are given as input a pushdown automaton (PDA) over a constant-sized stack alphabet, representing the program, and a 
description of undesirable behaviors given by an intersection of NFAs, and the problem is to decide if there is a behavior of the PDA
that belongs to the set of undesirable behaviors. 
It is well-known that there is an algorithm for this problem that runs in time $O(n^{2k} |\Sigma| + n^{3k})$, where $n$ is the maximum number of states of the PDA and the NFAs, 
$\Sigma$ is the common alphabet of these machines, and $k-1$ is the number of NFAs used to specify the violations. 
Despite the importance of this problem, no better algorithm is known for it.

In this paper, we provide an explanation for this lack of progress using the lens of \emph{fine-grained complexity theory}. 
More precisely, we prove that if the $3k$-Clique hypothesis is true, then for any $\epsilon > 0$,
there is no algorithm that solves pushdown model checking in time $O((n^{(\omega-1)k} |\Sigma| + n^{\omega k})^{1-\epsilon})$ where $\omega$ is the matrix multiplication exponent. 
Similarly, under the combinatorial $3k$-Clique hypothesis, we show that no combinatorial algorithm 
can solve pushdown model checking in time $O((n^{2k} |\Sigma| + n^{3k})^{1-\epsilon})$ for any $\epsilon > 0$.
Hence, our result implies that any significantly faster algorithm for pushdown model checking than the existing ones would lead to a breakthrough for the $3k$-Clique problem.
Our lower bound applies even in the case when all the machines are deterministic,
and even when the PDA is simply a deterministic one-counter machine. 
Furthermore, using the same combinatorial $3k$-clique hypothesis, we also show that
pushdown model checking over constant-sized input alphabets cannot be solved by combinatorial algorithms in time $O(n^{3(k-1)-\epsilon})$ for any $\epsilon > 0$. 

Finally, we investigate the possibility of an $O(N^{3k-\epsilon})$ time algorithm for pushdown model checking where $N$ is the total bit size of the given input. 
We formulate a new hypothesis, the 2NPDA$(k)$ hypothesis, that helps  
explain the lack of $O(N^{3k-\epsilon})$ time algorithms for pushdown model checking.
To corroborate this hypothesis, we show a web of linear-time reductions between the 2NPDA$(k)$ hypothesis, pushdown  model checking, and other problems in formal language and automata theory. 
\end{abstract}

	\newpage


	
	\section{Introduction}
\label{sec:intro}


Many problems in program analysis, formal language theory and verification of recursive programs
are reducible to \emph{pushdown model-checking}: given as input a PDA (over a constant-sized stack alphabet) and 
a specification of bad behaviors given as a set of NFAs,
the goal is to check if there is a behavior in the language of the PDA that lies in the intersection of the languages of the NFAs. 
It is well-known that the \emph{model complexity} of the problem, which treats the PDA as the input and the NFAs as constant-sized,
is $O(n^3)$ time, where $n$ is the number of states of the given PDA \cite{DolevEK82,Yannakakis90,RHS95,BEM97,FinkelWW97,Pavlogiannis-survey}. 
This cubic algorithm simply does a product construction of the given NFAs with the PDA to get another PDA and then performs reachability analysis on this latter PDA. 

When we consider the \emph{combined complexity} of the problem, where both the PDA and the set of NFAs are considered part of the input,
the best algorithm runs in time $O(n^{2k}|\Sigma| + n^{3k})$ where $n$
is the maximum number of states in any of the given NFAs and the PDA, $\Sigma$ is the common input alphabet of all of the 
machines and $k-1$ is the number of NFAs used to encode the specification.
This algorithm uses the algorithm from the previous paragraph.
Intuitively, it takes time $O(n^{2k}|\Sigma|)$ to do the product construction of the NFAs with the given PDA to get another PDA with $O(n^k)$ states, 
and then it takes $O(n^{3k})$ time to perform reachability analysis on this new PDA. 
Going beyond constant-sized specifications is important in several program analysis applications, such as 
checking set constraints with regular annotations \cite{KodumalA07}, pointer analysis \cite{RountevMR01},
and model checking for certain security properties \cite{ChenW02}. 

Despite decades of research, no substantially better algorithm is known for either problem.
For model complexity, the best current bound is $O(n^3/\log n)$ time \cite{Chaudhuri}.
No truly subcubic algorithms are known; this is usually referred to as the ``cubic bottleneck'' for program analysis \cite{HeintzeMcAllester}. 
Similarly, for combined complexity, the trivial algorithm above is the best known.

In this paper, we explain this lack of progress through the lens of \emph{fine-grained complexity theory}.
This subfield of complexity theory is one of its recent successes:
it aims to explain the absence of faster algorithms (than existing ones) for various
polynomial-time solvable problems. More precisely, it attempts to explain why
a problem can be solved in time $O(n^c)$ but
not $O(n^{c-\epsilon})$ for any $\eps>0$.
By identifying a small number of relevant hypotheses and linking many such questions of the above kind with them, fine-grained complexity has provided tight conditional lower bounds for problems
in a variety of domains:
graph theory, stringology, formal language theory, databases, data structures, and dynamic algorithms~\cite{WilliamsSurvey,Bringmann19tutorial}.

Within this context, our contributions in this paper are the following:
\begin{enumerate}

\item
We prove that significantly faster algorithms for pushdown model checking
would lead to breakthrough algorithms for the $k$-Clique problem
(and thus would refute a well-established hypothesis).

\item
We show that the hypothesis that pushdown model checking requires
time $n^{3k}$ is closely related (by linear-time inter-reductions)
to another natural hypothesis, on the time complexity
of languages recognized by two-way nondeterministic pushdown automata with $k$~heads
($\TwoNPDA(k)$, for short).

\item
We give a language-theoretic consequence of our reductions:
a new construction of hardest languages for the class of languages
recognized by $\TwoNPDA(k)$, for each~$k$.
%

\end{enumerate}

We now expand upon each of these contributions in more detail.

\paragraph*{1. Conditional lower bounds from $k$-Clique.}
One of the most important hypotheses used in fine-grained complexity theory is the so-called
$k$-Clique hypothesis ($k \ge 3$):
For each $\epsilon > 0$, there is no algorithm that detects the existence of cliques of size~$k$ in graphs on $n$~vertices in time $O(n^{\omega k/3-\epsilon})$. 
(Here $\omega$ is the matrix multiplication exponent: the infimum of all $c$ such that
there is an $O(n^c)$ algorithm to multiply two $n \times n$ matrices.)
Closely related to this hypothesis is the \emph{combinatorial} $k$-Clique hypothesis
which asserts that, for each $\epsilon > 0$, there is no combinatorial\footnote{
	While the notion of \emph{combinatorial algorithms} is not rigorous, these are algorithms that do not use fast matrix multiplication techniques.
} 
algorithm that solves $k$-Clique in time $O(n^{k-\epsilon})$. 
These two hypotheses have been used to show conditional lower bounds for a variety of problems such as
context-free language recognition and RNA folding~\cite{AbboudBW18},
parsing tree adjoining grammars \cite{BringmannW17},
Klee's measure problem~\cite{Chan08},
maximum-weight box problem in computational geometry~\cite{BackursDT16}, etc.
In fact, the best conditional lower bounds for the \emph{model complexity} of pushdown model checking
are based on the 3-Clique hypothesis \cite{ChatterjeeCP18,AbboudBW18,MathiasenP21,ChistikovMS22};
see also~\cite{PotechinS20}.
However, these lower bounds do not imply anything for the \emph{combined complexity} of pushdown model checking. \bala{added this sentence based on reviewer 3's comments.}

Our first main result is to provide a tight conditional lower bound on the combined complexity of pushdown model checking based on the Clique hypotheses. 
We show that, unless the $3k$-Clique
hypothesis (resp. combinatorial $3k$-Clique hypothesis) is false, there is no (combinatorial) algorithm that solves pushdown model checking 
in time $O((n^{(\omega-1)k}|\Sigma| + n^{\omega k})^{1-\epsilon})$ 
(resp. in time $O((n^{2k}|\Sigma| + n^{3k})^{1-\epsilon})$) for any $\epsilon > 0$. 
Hence, our result proves that, unless the combinatorial $3k$-Clique hypothesis is false, no algorithm can be better than the known algorithm for this problem. 
In fact, our lower bound applies even to the special case of the problem
in which all the $k-1$ NFAs are DFAs, and the PDA is a deterministic  one-counter machine.

Our proof (of Theorem~\ref{thm:general-case-lower-bound} in Section~\ref{sec:lb-general})
generalizes the well-known encoding of triangle finding using PDAs (cf.~\cite{PotechinS20}), but requires
several new ideas, in order to go beyond triangles.
First, a large input alphabet (with up to $n^k$ letters) enables us to ``name'' $k$-cliques.
With the help of a counter (counting up to $n^k$), we can uniquely store and retrieve a $k$-clique.
We use the counter along with the NFAs to find three
$k$-cliques whose nodes are all neighbors to each other.
Our key trick here is to use small automata ($n$ states each) to increment and decrement 
the counter all the way to these large numbers, as well as check the neighborhood relation between
nodes of different $k$-cliques.

We then turn our attention to the case of pushdown model checking when the input alphabet of all the machines is constant-sized, i.e.,
when $|\Sigma|$ is a constant. 
This is an important special case: the complexity of language-theoretic problems is often studied under this assumption.
Because $|\Sigma|$ is a constant, we have that $O(n^{2k}|\Sigma| + n^{3k}) = O(n^{3k})$, which is the best known running time in this setting.
For this case (handled in Theorem~\ref{thm:constant-alphabet-lower-bound} in Section~\ref{sec:lb-constant}),
based on the $3k$-Clique hypothesis, we rule out the existence
of $O(n^{\omega (k-1) - \epsilon})$ time algorithms. 
Similarly, using the combinatorial $3k$-Clique hypothesis, we also prove that $O(n^{3(k-1)-\epsilon})$ time combinatorial algorithms cannot exist for this problem. 
Since the alphabet size is fixed, we can no longer use the trick of naming $k$-cliques.
To go beyond this, our construction initially uses linear-sized input and stack alphabets
in order to find three $k$-cliques whose nodes are all neighbors to each other.
For this purpose, various specialized gadgets are constructed to store and keep track of (multiple) $k$-cliques
in the stack as well as to check that nodes in different $k$-cliques are neighbors. 
We then carefully encode the linear-sized alphabet in binary
and convert the original construction into one over constant-sized alphabets, with only a logarithmic blowup in the state space.
Note that our results in this regime have a gap of $O(n^{3 + (3-\omega)k})$ between the upper and lower bounds in the general case and $O(n^3)$ in the combinatorial case. 

We also note that, for the special case of $k = 2$, that is, the intersection non-emptiness problem
for the language of 1~PDA and 1~NFA, a conditional lower bound follows from a result
in~\cite{AbboudBBK17}.
They consider the special case of the problem in which the PDA is replaced by
a straight-line program (a context-free grammar that generates a single
word only). The problem is to decide whether a given NFA accepts this \emph{compressed word}.
They prove that, unless the combinatorial $k$-Clique hypothesis is false, there is no $\epsilon > 0$ for which
there exists an algorithm for this problem running in time
$O(\min\{p q^3, N q^2\}^{1 - \epsilon})$, where
$p$ is the size of the compressed representation of the word
(think the number of states in the PDA),
$q$ is the number of states in the NFA, and $N \le 2^n$.
The $N q^2$ term matches a decompress-and-solve algorithm which is not available
for general PDA.
In the regime $p = q = n$, this lower bound is of order $n^4$.

\paragraph*{2. New hypotheses.}
It is not known whether fast matrix multiplication algorithms can be used
for faster pushdown model checking.
Standard existing hypotheses appear to be insufficient for explaining the hardness
even for $k = 1$ (language non-emptiness of pushdown automata).
The best (non-combinatorial) lower bound is $\Omega(n^\omega)$ from $k$-Clique~\cite{AbboudBW18}, and
it has been shown that 
the strong exponential-time hypothesis (SETH),
perhaps the most well-known hypothesis in fine-grained complexity,
cannot be used to beat this bound, unless breakthrough results in circuit complexity appear~\cite{ChistikovMS22}.
Note that the $3k$-Clique hypothesis only asserts the non-existence of algorithms with runtime as a function of the number of nodes and not of the whole input. 
Intuitively, this cannot help explain
whether pushdown model checking admits an algorithm with runtime $O(N^{3k-\epsilon})$ where $N$
is the overall bit size of the input: 
the input includes all transitions in the machines, and their number could be quadratic in the number of states. 

Thus, new hypotheses may be required to explain the absence of faster algorithms.
Indeed, for $k = 1$ the recent NFA acceptance hypothesis~\cite{BringmannGKL24-j}
gives an $n^3$ lower bound for pushdown model checking for \emph{dense} PDAs.


For $k > 1$,
we introduce (in Section~\ref{sec:2npdak}) a new \emph{\TwoNPDAkhyp hypothesis}.
It asserts that there is no $\epsilon > 0$ for which some algorithm running in time $O(|w|^{3k-\epsilon})$
can decide if a given word $w$ is accepted by a (fixed)
two-way non-deterministic pushdown automaton with $k$ heads (\TwoNPDAk). 
Intuitively, a \TwoNPDAk is a machine (see~\cite{HarrisonI68,Ibarra73})
which has access to a stack and its input is written on a read-only input tape. 
This machine has $k$ heads on the input tape using which it can query the letters of $k$ positions on the input tape. 
Based on this query, it can update its state, the positions of these $k$ heads and also its stack content. 
It is a folklore result that 
there is an algorithm for the \TwoNPDAk acceptance problem that runs in time $O(|w|^{3k})$ for any fixed 2NPDA($k$).
(For example, Rytter~\cite{Rytter82} refers to Aho, Hopcroft, and Ullman~\cite{AhoHopcroftUllman68}, even though only
 2NPDA(1) are considered there.)
However, no faster algorithm is known for this problem.
Based on this lack of progress, we introduce this new hypothesis as a generalization of the \TwoNPDAhyp hypothesis (i.e., \TwoNPDAkhyp with $k=1$),
introduced by Neal \cite{Neal} and Heintze and McAllester \cite{HeintzeMcAllester}.
Furthermore, if $k \ge 2$, $\TwoNPDA(k)$ language recognition is not known to admit even $O(|w|^{3 k} / \log |w|)$ algorithms,
unlike for $k = 1$~\cite{Chaudhuri,Rytter85}.
To the best of our knowledge, algorithms with this complexity are only known for the special case of loop-free automata~\cite{Rytter85}; see also related results in~\cite{Rytter82,Rytter83a}.

As a way to strengthen the believability of this hypothesis, we provide a web of linear-time reductions 
between the \TwoNPDAk language recognition problem and a variety of other problems in language theory and program analysis, 
one of which is the pushdown model checking problem. (For $k=1$, similar reductions appear in~\cite{Chaudhuri} and later in~\cite{ChistikovMS22}.)
So, under the \TwoNPDAkhyp hypothesis, we show that there is no algorithm running in time $O(N^{3k-\epsilon})$ for pushdown model checking.
Hence, this gives rise to a hierarchy of program analysis problems, one for each $k > 1$, that go beyond the famous cubic bottleneck that is established in the literature.

\paragraph*{3. Consequences and applications.}
The lens of fine-grained complexity provides some purely language-theoretic consequences.
Using our chain of linear-time reductions between $\TwoNPDA(k)$ recognition and pushdown model checking,
we get that, for each $k\geq 1$, there is a \emph{hardest} $\TwoNPDA(k)$ language, i.e., there
is a fixed language $L_0^{(k)}$ recognized by a $\TwoNPDA$ with $k$ heads with the following property: For every $\TwoNPDA(k)$ language $L$,
there is a homomorphism $h$ such that, for all non-empty words~$w$, $w \in L$ if and only if $h(w) \in L_0^{(k)}$.
Previous constructions of hardest languages used language-theoretic constructions \cite{Miyano83},
\cite{Rytter-hardest}.


This paper is an extended version of the conference paper~\cite{BalasubramanianCM25} published at LICS 2025. In comparison to that paper, this version contains full proofs of the results as well as diagrams illustrating the main gadgets used in the reductions.

\section{Preliminaries}
\label{s:pre}

\subsection{PDAs and NFAs} A pushdown automaton (PDA) 
consists of a finite set of control states and a stack into which it can push/pop elements. Initially, the PDA begins in some designated initial control state and reads the input word $w$ one letter at a time. 
As it reads each letter, its transition relation allows it to move from one control state to another
whilst pushing/popping elements from its stack. At the end of reading $w$, if the machine is in one of a designated set of final states and its stack content is empty, then it is said to accept $w$. The language of the machine is the set of all words that it accepts.

Formally, a PDA is a tuple $P = (Q,\Sigma,\Gamma,\delta,q_0,F)$ where $Q$ is a finite set of control states,
$\Sigma$ is the input alphabet, $\Gamma$ is the stack alphabet, $\delta \subseteq Q \times (\Sigma \cup \{\epsilon\}) \times \Gamma \times Q \times \Gamma^{*}$ is the transition relation, $q_0$ is the initial state and $F$ is a set of final states. A transition of the form
$(p,a,\alpha,q,\beta)$ dictates that, in order for this transition to be used, the PDA must be at state $p$, read the letter $a$ (or read no letter if $a = \epsilon$) and then it must pop $\alpha$ from the stack, move to state $q$ and push $\beta$ into the stack. 
\begin{remark*}
	Note that $\beta$ could be $\epsilon$, which means that the net effect is simply popping $\alpha$ from the stack. Similarly, $\beta$ could also be of the form $\alpha \zeta$ for some letter $\zeta$, which means that the net effect is simply pushing $\zeta$ into the stack.	
\end{remark*}
We assume that $\Gamma$ contains a designated ``end of stack'' symbol~$Z_0$
such that no transition of $P$ replaces $Z_0$ on the stack with a different symbol, or pushes $Z_0$ 
on the stack when the top of the stack is not $Z_0$, i.e.,
for any transition $(p, a, \alpha, q, \beta)$, $\beta$ contains $Z_0$ if and only if $\alpha = Z_0$ and $\beta = Z_0 \gamma$ for some $\gamma \in (\Gamma \setminus \{Z_0\})^*$.
Throughout the paper, we will assume that the stack alphabet $\Gamma$ is of constant size, i.e., $|\Gamma| = O(1)$ unless specifically stated otherwise.

A \emph{configuration} of the PDA is a pair $(p,\gamma)$ where $p$ is a state and $\gamma \in \Gamma^*$ is the stack content.
For any transition $t = (p,a,\alpha,q,\beta)$, there is a step from a configuration of the form
$(p,\gamma \alpha)$ to $(q,\gamma \beta)$, which represents the changes made to the stack
as dictated by $t$. We will denote this step by $(p,\gamma \alpha) \xrightarrow{t} (q,\gamma \beta)$ 
or simply $(p,\gamma \alpha) \xrightarrow{a} (q,\gamma \beta)$ when only the component $a \in \Sigma \cup \{\epsilon\}$ of the transition $t$ is important.

A transition $(p,a,\alpha,q,\beta)$ is called an $\epsilon$ transition if $a = \epsilon$.
We say that a configuration $(q,\gamma)$ can reach a configuration $(q',\gamma')$ by $\epsilon$ transitions if there are configurations $(q_0,\gamma_0), \dots, (q_k, \gamma_k)$ such that $(q,\gamma) = (q_0,\gamma_0) \xrightarrow{\epsilon} (q_1,\gamma_1) \xrightarrow{\epsilon} (q_2,\gamma_2) \xrightarrow{\epsilon}
\dots (q_k,\gamma_k) = (q',\gamma')$. 

The initial configuration is $(q_0,Z_0)$.
A run of the PDA on a word $w = w_1w_2\dots w_n \in \Sigma^*$ is a sequence of configurations 
of the form $(q_0,\gamma_0), (q_1,\gamma_1),\dots, (q_n,\gamma_n)$ such that $(q_0,\gamma_0)$ is the initial configuration and, for each $i \ge 0$, there exist configurations $(q_i',\gamma_i'), (q_i'',\gamma_i'')$ such that $(q_i,\gamma_i)$ can reach $(q_i',\gamma_i')$ by $\epsilon$ transitions,
$(q_i',\gamma_i') \xrightarrow{w_{i+1}} (q_i'',\gamma_i'')$ and $(q_i'',\gamma_i'')$ can reach
$(q_{i+1},\gamma_{i+1})$ by $\epsilon$ transitions.
A run is said to be accepting if $\gamma_n$ is empty and $q_n \in F$.
The \emph{language} of a PDA $P$, denoted by $\langof{P}$, is the set of all words that it accepts, i.e., the set of all words
on which it has an accepting run. PDA are known to accept exactly the set of context-free languages.

A one-counter automaton (OCA) is a PDA with a stack alphabet containing only one letter (say $\alpha$) apart from $Z_0$.
Note that the content of a stack is then uniquely determined by the number of times $\alpha$ appears in it. Hence, in this case the stack can be thought of as a counter, where pushing/popping $\alpha$ corresponds to incrementing/decrementing the counter, respectively; and popping $Z_0$ corresponds to testing whether the counter is zero. 
Hence, the accepting condition for a run in an OCA is that the state at the 
end is an accepting state and the counter has reached the value 0. 

A PDA is deterministic if its transition relation is a partial function (i.e., $\delta$ is of the form $Q \times (\Sigma \cup \{\epsilon\}) \times \Gamma \partto Q \times \Gamma^{*}$) and, moreover,
for each $q \in Q$, either $\delta(q,a,\alpha)$ is defined for all $a \in \Sigma$, $\alpha \in \Gamma$ and $\delta(q,\epsilon,\alpha)$ is undefined for all $\alpha \in \Gamma$, or
                           $\delta(q,a,\alpha)$ is undefined for all $a \in \Sigma$, $\alpha \in \Gamma$ but $\delta(q, \epsilon,\alpha)$ is defined for all $\alpha \in \Gamma$.
A deterministic PDA (resp. deterministic OCA) will be succinctly referred to as a DPDA (resp. DOCA).

An NFA is a PDA in which there are no $\epsilon$ transitions and no stack operations are performed,
i.e., no push or pop happens to the stack.
(Formally, $\alpha = \beta = Z_0$ in all transitions $(p, a, \alpha, q, \beta)$.)
Hence, the transitions of an NFA can simply be presented as a relation $\delta \subseteq Q \times \Sigma \times Q$. A DFA is an NFA for which the transition relation is actually a function. NFAs and DFAs accept exactly the set of regular languages.

\subsection{Intersection Non-Emptiness Problems}\label{subsec:intersect}

This paper focuses on the following class of problems, one for \emph{each fixed $k \ge 1$}: 
%
\begin{mdframed}
\textbf{\boldmainpdanfa Non-Emptiness}\\[1.0ex]
	\Fix Stack alphabet $\Gamma$.\\
	\Input
	PDA \pda and $k-1$~NFAs $\fsa_1, \ldots, \fsa_{k-1}$,
	all over a common input alphabet $\Sigma$. 
	\\
	\Decide
	Is the intersection
	$\langof{\pda} \cap \langof{\fsa_1} \cap \ldots \cap \langof{\fsa_{k-1}}$ non-empty?
\end{mdframed}

When $k = 1$, this is simply checking if the given PDA $\pda$ has a non-empty language,
which is the well-known PDA non-emptiness problem.

%
In this paper, we let the stack alphabet $\Gamma$ be fixed (as it is in
applications related to program analysis). In contrast, the common input alphabet of the automata is provided as part of the input, as it forms a part of the description of the PDA and each NFA.
In applications, the input alphabet needs to be rich enough to specify possible actions or events in the system.
In theory, it is often instructive to consider constant-sized alphabets too, e.g., $|\Sigma|=2$.
Furthermore, there are applications related to PDAs in which the underlying language is fixed, which automatically fixes the alphabet as well. For instance, for the CFL Reachability problem (which we generalize and study as the CFL $k$-Intersection Reachability problem in Section~\ref{subsec:equivalences} in our paper), the underlying context-free language (and hence the input alphabet) is fixed.


In general, the model checking problem may use multiple NFAs to encode a specification.
We give a small example from the domain of model checking for security properties \cite{ChenW02,ChenDW04}.
A program is modelled as a PDA, and security properties as a set of NFAs.
For example, a security property is ``a program should drop privileges from all its user IDs before
calling certain system functions.'' 
This property is expressed as multiple NFAs: one NFA tracking if certain system calls have been made,
and the others tracking which user IDs have root privilege.
The common alphabet consists of system calls, which are executed by the program (PDA) and also cause the property NFAs to change state. 
For example, a call to drop root privilege from a user ID moves the corresponding NFA to a state in which that user ID does not have root privilege.

Furthermore, sometimes an alternative way of encoding a single big specification (an NFA of size $n^k$ for some $k$) might be to decompose it into a product of multiple smaller NFAs (intersection of $k$ NFAs, each of size $n$). There are NFAs that cannot be decomposed in this manner, but, a priori, it might have been possible that if the specification NFA has a nice structure, it could be decomposed into that form and the problem could have been solved faster than the general case. Our results in the next section (Theorem~\ref{thm:general-case-lower-bound}) show that even this restricted setting is as difficult as the original version.

If the number of NFAs, $k$, is unbounded (not fixed), then
intersection non-emptiness becomes complete for \EXPTIME.
(Indeed, a $T(n)$-time Turing machine can be simulated by
an auxiliary pushdown automaton (AuxPDA) with $O(\log T(n))$ bits of storage~\cite{Cook71char}.
Language recognition for such an AuxPDA is reducible to
the intersection non-emptiness of a usual PDA with $O(\log T(n))$ DFAs, each
responsible for one cell of the storage.)
Shortest words in the intersection may in this scenario be
doubly exponentially long~\cite{AiswaryaMS24}.

Our main contributions are to provide conditional lower bounds for the \mainpdanfa non-emptiness problem under different settings. For some of the settings, our lower bounds match the known upper bounds, suggesting that no improvement
over the known algorithm is possible.
These conditional lower bounds are based
on popular hypotheses from the field of fine-grained complexity theory. We now proceed to describe these hypotheses and then our contributions.

\subsection{The $k$-Clique Hypothesis}

One of the central hypotheses from the field of
fine-grained complexity theory is the $k$-Clique hypothesis which is formally stated as follows: For each $\epsilon > 0$,
there is no algorithm that, given a graph $G$ on some $n$ vertices, correctly decides if $G$
has a clique of size $k$ in time $O(n^{\omega k/3 - \epsilon})$. A similar hypothesis is the so-called
combinatorial $k$-Clique hypothesis, which states that there is no combinatorial algorithm that solves $k$-Clique in time $O(n^{k-\epsilon})$ for any $\epsilon > 0$. As mentioned in Section~\ref{sec:intro},  while the notion of combinatorial algorithms is not rigorous, these are roughly taken to be algorithms
that do not use fast matrix multiplication techniques.

	\section{Lower Bounds for \mainpdanfa Non-Emptiness}
\label{sec:lb-general}

Before we present our main result, which is a conditional lower bound for the \mainpdanfa non-emptiness problem, 
let us first recall the known upper bounds for this problem.

\begin{theorem}[Upper Bounds]\label{thm:general-case-upper-bound}
	The \mainpdanfa non-emptiness problem over an input alphabet $\Sigma$ 
	and with $n$ the maximum number of states among all the given machines can be
	solved in time
	\begin{itemize}
		\item $O(n^{2k}|\Sigma| + n^{\omega k})$ if the given PDA is an OCA;
		\item $O(n^{2k}|\Sigma| + n^{3 k})$ if the given PDA is not an OCA or if only combinatorial algorithms are allowed.
	\end{itemize}
\end{theorem}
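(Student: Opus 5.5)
The plan is to reduce to emptiness of a single product automaton and then call the appropriate emptiness algorithm.

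\textbf{Step 1: normalise.} Write $\pda=(Q,\Sigma,\Gamma,\delta,q_0,F)$ and $\fsa_i=(Q_i,\Sigma,\delta_i,q_0^i,F_i)$. Since $|\Gamma|=O(1)$, first convert $\pda$ in linear time into an equivalent PDA in which every transition pushes or pops at most one symbol. This is done by splitting each longer push into a chain of $\epsilon$-transitions through fresh intermediate states. There is one subtlety: the size of a pushed string is part of the input, so it contributes to the input size rather than to $n$. To keep the state count at $O(n)$, I would instead introduce $O(1)$ fresh states for each bounded-length push. Alternatively, I would observe that the transition count of $\pda$ is already at most $O(n^2|\Sigma|)$ per pushed word, and absorb the splitting into the product construction.

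\textbf{Step 2: product construction.} Build the PDA $\pda'$ with state set $Q\times Q_1\times\dots\times Q_{k-1}$, which has at most $n^k$ states. Its transitions are of two kinds:
\begin{itemize}
\item for each transition $(p,a,\alpha,q,\beta)$ of $\pda$ with $a\in\Sigma$ and each tuple of NFA transitions $(s_i,a,t_i)$, a transition $((p,\bar s),\epsilon,\alpha,(q,\bar t),\beta)$;
\item for each $\epsilon$-transition of $\pda$, the same move with $\bar s$ unchanged.
\end{itemize}
The final states are $F\times F_1\times\dots\times F_{k-1}$.

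To count transitions, fix a pair of source and target product states and a letter $a$. Each component then contributes $O(1)$ transitions, since $|\Gamma|=O(1)$ and each $\beta$ has bounded length. Hence there are $O(n^{2k}|\Sigma|)$ transitions, enumerable in that time by iterating over letters and, for each letter, over pairs of source and target tuples. The correctness claim, that $\langof{\pda'}\neq\emptyset$ if and only if the intersection is nonempty, is the standard simulation argument. The letters are replaced by $\epsilon$, because only emptiness matters.

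\textbf{Step 3: emptiness of $\pda'$.} This is CFL/Dyck reachability on a graph with $N=n^k$ nodes and a constant-size stack alphabet. For a general PDA I would use the classical saturation (summary-edge) algorithm, in the style of \cite{BEM97,RHS95}. It computes all pairs $(u,v)$ such that $u$ reaches $v$ with net-empty stack. A worklist argument shows that each derived summary edge is combined with at most $O(N)$ partners, so the algorithm runs in time $O(N^3+|\text{transitions}|)=O(n^{3k}+n^{2k}|\Sigma|)$. It is combinatorial. Taking the maximum of the Step 2 and Step 3 costs gives the second bullet.

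\textbf{Step 4: the one-counter case.} When $\pda$ is an OCA, $\pda'$ is also an OCA. Emptiness of one-counter reachability (the $D_1$ case) can be solved in time $\tilde O(N^{\omega})$ via matrix multiplication. The idea is to compute the relation "$u$ reaches $v$ with net counter change $0$ and the counter never dropping below the start" by repeated squaring and closure of Boolean matrices. I expect this step to be the main obstacle, because the naive fixpoint iteration could need $N$ rounds. To keep it at $O(N^\omega)$ without a log factor, I would rely on the known result for one-counter or bidirected $D_1$-type reachability, or on a divide-and-conquer over counter heights bounded by $N$, citing the literature rather than reproving it. Combined with the product cost, this gives $O(n^{2k}|\Sigma|+n^{\omega k})$.
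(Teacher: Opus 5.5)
Your proposal is correct and follows essentially the paper's route: build the product of the PDA with all $k-1$ NFAs, getting an $O(n^k)$-state PDA in $O(n^{2k}|\Sigma|)$ time (the paper does the product one NFA at a time, you do it in one shot), then run a cubic combinatorial emptiness check \cite{BEM97} or, when the product is an OCA, an $O(n^{\omega k})$ check that, like the paper, you take from the literature \cite{HansenKP21}. Your Step~1 normalisation is not needed, and if you keep it, share the intermediate states: index them by target state and pending pushed suffix, which gives $O(n)$ extra states since $|\Gamma|$ and push lengths are constant; creating fresh states per transition could instead blow the PDA's state count up to its number of transitions and break the $n^{3k}$ bound.
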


\begin{proof}
	Let $\pda$ be the given PDA and let $\fsa_1,\dots,\fsa_{k-1}$ be the given NFAs.
	The proof of the theorem follows from two observations: 
	First, PDA non-emptiness can be solved in $O(n^3)$ time~\cite{BEM97} and furthermore, if the PDA is an OCA, then
	it can be solved in $O(n^\omega)$ time~\cite{HansenKP21}. Second, for any PDA $P$ and any NFA $A$, in $O(n^4|\Sigma|)$ time, we can construct a PDA with $O(n^2)$ states and the same stack alphabet as $P$, such that this new PDA recognizes $\langof{P} \cap \langof{A}$. This is the usual Cartesian product construction between a PDA and an NFA (see, e.g., Hopcroft, Motwani, and Ullman's textbook~\cite[Section~7.3.4]{HopcroftMotwaniUllman}). Hence, by repeatedly doing the product construction, in time $O(n^{2k}|\Sigma|)$, we can construct 
	a PDA with $O(n^k)$ states having the same stack alphabet as $P$ which recognizes $\langof{\pda} \cap \langof{\fsa_1} \cap \ldots \cap \langof{\fsa_{k-1}}$. Then, we can perform a non-emptiness check on this PDA in time $O(n^{3k})$
	or in time $O(n^{\omega k})$ if it is an OCA.
\end{proof}


The above algorithm is essentially the best one that we know for this problem and no 
polynomial improvements have been made for this problem. 
We provide a conditional lower bound for this problem,
showcasing the difficulty of any improvement.
 

\begin{theorem}[Lower Bounds]\label{thm:general-case-lower-bound}
	If the $3k$-Clique hypothesis is true, 
	the \mainpdanfa non-emptiness problem over an input
	alphabet $\Sigma$ with $n$ the maximum number of states among all the given machines 
	cannot be solved in time 
	\begin{itemize}
		\item $O((n^{(\omega-1)k} |\Sigma| + n^{\omega k})^{1-\epsilon})$
		for any $\epsilon > 0$;
		\item $O((n^{2 k} |\Sigma| + n^{3 k})^{1-\epsilon})$
		for any $\epsilon > 0$, if only combinatorial algorithms are allowed.
	\end{itemize}
	Both lower bounds already hold when the given PDA is a DOCA
	and all the NFAs are DFAs.
\end{theorem}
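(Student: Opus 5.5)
We reduce $3k$-Clique on a graph $G=(V,E)$ with $|V|=N$ to an instance of \mainpdanfa non-emptiness consisting of a DOCA $\pda$ and DFAs $\fsa_1,\dots,\fsa_{k-1}$. The targets are: every machine has $n=O(N)$ states, the alphabet has $|\Sigma|=O(N^k)$ letters, and the instance can be written down in time $O(N^{k+2})$. Granting this, the arithmetic is immediate. We get $n^{2k}|\Sigma|+n^{3k}=O(N^{3k})$ and $n^{(\omega-1)k}|\Sigma|+n^{\omega k}=O(N^{\omega k})$. An algorithm beating either bound by a factor $N^{\Omega(\epsilon)}$ would therefore detect $3k$-cliques in time $O(N^{3k-\epsilon'})$ (combinatorially), respectively $O(N^{\omega\cdot 3k/3-\epsilon'})$. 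This contradicts the corresponding hypothesis, since the construction time $O(N^{k+2})$ is dominated for every $k\ge 1$.

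\emph{Alphabet and word shape.} Let $\Sigma$ contain one letter $\langle S\rangle$ for every $k$-clique $S=(s_1,\dots,s_k)$ of $G$, written in a fixed vertex order. Each such $S$ gets an index $\iota(S)=\sum_j s_j N^{j-1}<N^k$, viewing vertices as numbers in $[0,N)$. Add $O(1)$ further control letters: unary increment and decrement symbols, and level-$j$ carry markers for $j\le k$. The intended witness words have the form
\[
\langle A\rangle\, u\, \langle B\rangle\, \langle C\rangle\, d\, \langle A'\rangle .
\]
Here $u$ is an increment block pushing exactly $\iota(A)$ units onto the counter, and $d$ is a decrement block popping to zero, so that $d$ encodes $\iota(A')$ and the counter forces $\iota(A')=\iota(A)$, i.e.\ $A'=A$.

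\emph{Distribution of the checks.} Machine $i$ ranges over $\fsa_1,\dots,\fsa_{k-1}$ together with the finite control of $\pda$, so there are $k$ machines. Machine $i$ is responsible for coordinate $i$ and stores a single vertex, which costs $N$ states.
\begin{itemize}
\item On reading $\langle A\rangle$, machine $i$ stores $a_i$.
\item On the block $u$, machine $i$ verifies that the $i$-th base-$N$ digit of the number of increments equals $a_i$. The block is organised as nested loops delimited by carry markers; machine $i$ counts level-$i$ iterations with $N$ states and checks that carries occur exactly at rollover.
\item On reading $\langle B\rangle$, machine $i$ checks that $a_i$ is adjacent to every vertex of $B$, which needs only $O(N|\Sigma|)$ transitions in total. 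It then overwrites its memory with $b_i$.
\item On reading $\langle C\rangle$, machine $i$ checks $b_i$ against all of $C$ and then stores $c_i$.
\item On the block $d$, each machine again verifies its digit, now against the coordinates of the final letter $\langle A'\rangle$. When $\langle A'\rangle$ arrives, machine $i$ checks that $c_i$ is adjacent to all of $A'$.
\end{itemize}
Because $A'=A$ is forced by the counter, every pair across the three blocks is checked, and each $k$-clique letter certifies adjacency inside its own block. Hence the intersection is non-empty if and only if $A\cup B\cup C$ is a $3k$-clique for some $A,B,C$. All machines are deterministic, since every step is dictated by the current letter, and the counter only ever pushes or pops one unit.

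\emph{Main obstacle.} The hardest step is designing the increment/decrement blocks so that all of the following hold at once:
\begin{itemize}
\item the single counter receives exactly $\iota(A)$ units;
\item each of the $k$ machines can verify its own digit using only $O(N)$ states;
\item no machine needs to know another machine's digit.
\end{itemize}
I would first try a mixed-radix odometer word. Between two consecutive level-$j$ carry markers there are exactly $N$ level-$(j-1)$ markers, except in the partial top segment determined by the digits. Each machine checks its own level locally and ignores all other letters. The remaining work is to verify correctness both ways. Soundness means that any accepted word has this shape with the counts matching the stored digits. Completeness means that a $3k$-clique yields such a word. Both should follow from the local consistency of the odometer.
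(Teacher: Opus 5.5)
\textbf{Gap: the ordering of your witness word breaks the $O(N)$ state bound.} The problem is the order $\langle A\rangle\,u\,\langle B\rangle\langle C\rangle\,d\,\langle A'\rangle$.

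During $u$, machine $i$ must still know $a_i$ afterwards, because it needs it when $\langle B\rangle$ is read. At the same time it must run its level-$i$ counter modulo $N$ to check that carries occur exactly at rollover. These are two independent values in $[0,N)$, so machine $i$ needs $\Theta(N^2)$ states there, not $N$. The same happens in $d$: machine $i$ must keep $c_i$ until $\langle A'\rangle$ arrives while also counting its digit of $d$.

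Neither obvious trick rescues this. Counting down from $a_i$ destroys $a_i$. Storing only $(a_i-\text{digit}) \bmod N$ does not tell the machine when the digit equals $N-1$, i.e., when a carry is mandatory. With $n=\Theta(N^2)$ the bound $n^{3k}$ becomes $N^{6k}$, and the reduction proves nothing. So the ``main obstacle'' you flag is not just a matter of checking the odometer's local consistency: in your order it cannot be met.

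\textbf{The fix, which is exactly the paper's design.} Reorder so that the odometer register and the vertex register coincide. The paper's witness is $w_1\,!\,S_1S_2S_3S_1'\,?\,w_3$: the increment block comes before the first clique letter and the decrement block after the last one.
\begin{itemize}
\item At the end of $w_1$, the score (current digit) of machine $i$ \emph{is} the vertex.
\item Reading $S_1$ only checks it: $(s,\mathit{check})^i$ moves to $(s,\alpha)^i$ iff $s$ is the $i$-th node of $S_1$.
\item The machines then hop $S_1\to S_2\to S_3\to S_1'$, each time verifying adjacency and overwriting the stored vertex.
\item Finally the decrement gadget $C_i$, the reversal of the increment gadget, starts from score $s_i'$ and counts down. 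The counter empties iff $\iota(S_1')=\iota(S_1)$, i.e., $S_1'=S_1$.
\end{itemize}
With this change your odometer idea goes through. You would still have to prove its soundness and completeness, which the paper does via the invariant that every run is active or perfect with counter value $\sum_\ell N^\ell s_\ell$.

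\textbf{A smaller issue: construction time.} Your bound $O(N^{k+2})$ is not dominated by $N^{\omega k-\epsilon'}$ when $k\le 2$ and $\omega$ is close to $2$, nor by $N^{3k-\epsilon'}$ when $k=1$. So your claim that it is dominated for every $k\ge 1$ is false as stated. The actual cost is $O(N^{k+1})$; the paper states $O(mN^{k-1})$. For $k=1$, $\omega=2$ a separate argument about reading dense inputs is still needed, as in the paper's footnote.
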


This theorem provides a tight lower bound for combinatorial algorithms:
no improvements using such algorithms are possible unless the combinatorial $3k$-Clique hypothesis is false. 

\begin{remark}
Our lower bounds admit multivariate counterparts, where the number of states of PDA and NFAs
has different orders of magnitude.
Suppose in the input to the problem the PDA
has at most~$p$ states and each NFA has at most~$q$ states.
The existing algorithm from Theorem~\ref{thm:general-case-upper-bound} delivers
the upper bound $O(s^2 |\Sigma| + s^3)$, where $s = p q^{k-1}$ is the number of states in the
product automaton.
Thus, our lower bound implies that there is no combinatorial algorithm for this problem
running in time
\begin{equation*}
O\bigl(((p q^{k-1})^2 |\Sigma| + (p q^{k-1})^3)^{1 - \eps}\bigr)
\end{equation*}
for any $\eps > 0$. Indeed, even in the regime $p = q = n$ such an algorithm would
contradict Theorem~\ref{thm:general-case-lower-bound}.
Similar conclusions can be drawn from other lower bounds that we prove in this paper.
We will henceforth not make them explicit.
\end{remark}

\subsection{Proof Idea of Theorem~\ref{thm:general-case-lower-bound}}\label{subsec:proof-idea-general-case-lower-bound}
Before we proceed to describe the formal proof of Theorem~\ref{thm:general-case-lower-bound}, we give the main ideas and intuitions behind it.

\paragraph*{Recall: Triangle finding.}
The main idea behind this proof is a generalization of the idea used to detect triangles, i.e., 3-cliques in a graph by means of an OCA. Let us first recall this idea. Let $G$ be some graph and without loss
of generality, let the nodes of this graph be $\{0,1,\dots,n-1\}$. Furthermore, we 
can assume that the graph contains no self-loops, since removing self-loops does not destroy the property
of having a triangle.
We can check for the existence of a triangle in this graph by using the following 
DOCA $\mach$:
First $\mach$ reads a letter corresponding to some node $a$, moves to a state labelled by $(a,0)$ and increments the counter by $a$.
Then it reads a letter corresponding to some node $b$ and moves from $(a,0)$ to $(b,1)$ only if $b$ is a neighbor of $a$.
(Checking that $b$ is a neighbor of $a$ can be hard-coded into the transitions of $\mach$.) Then, from $(b,1)$, it reads a letter corresponding to 
some node $c$ and moves from $(b,1)$ to $(c,2)$ only if $c$ is a neighbor of $b$.
Then, it reads some letter $d$ and moves from $(c,2)$ to $(d,3)$ only if $d$ is a neighbor of $c$.
Finally, from $(d,3)$, it reads any letter and decrements the counter by $d$. 
If the counter is 0 at the end, then $\mach$ accepts, otherwise it rejects.

Note that any triangle $\{a,b,c\}$ in the graph can be converted into an accepting
path in $\mach$ by inputting the sequence $a, b, c, a$. 
On the other hand, any accepting path in $\mach$ is of the form: $(a,0), (b,1), (c,2), (d,3)$, where $(a,b), (b,c), (c,d)$ are edges in $G$. Further the only updates that happen along the way to the counter are an increment by $a$ at the beginning, and a
decrement by $d$ at the end. Recall that for a run to be accepted, the counter must be 0 at the end,
and so it follows that $d = a$ and so 
$\{a,b,c\}$ is a triangle in $G$.
Hence the language of $\mach$ is non-empty if and only if $G$
has a triangle. \bala{Commented out here}


We will now expand upon this idea and prove the lower bound for the general case.
More precisely, given a graph $G$ on $n$ nodes (without self-loops),
we will use a DOCA $\mach_0$, $k-1$ many DFAs $\mach_1,\dots,\mach_{k-1}$ (where $\mach_0,\mach_1,\dots,\mach_{k-1}$ will all have $O(n)$ states) and an input alphabet of size $O(n^k)$ to detect $3k$-cliques in graphs. 

\paragraph*{Finding three $k$-cliques.}
We can think of a $3k$-clique as 3 different $k$-cliques $C_1, C_2, C_3$ of size $k$ each
such that every node in each $C_i$ is connected to every node in each $C_j$ for $i \neq j$, 
i.e., $C_i \cup C_j$ for $i \neq j$ is a $2k$-clique. Our construction attempts to find such $k$-cliques
in the following manner: First, it finds a $k$-tuple of nodes $C_1 := (v_0,\dots,v_{k-1})$, stores each node of $C_1$ in one of the machines and also stores $C_1$ as a whole in the counter
of the OCA \emph{in a unique way}. Some questions arise at this point.

\textbf{Q: } How can we uniquely store a tuple of nodes of size $k$ as a single number?

\textbf{A: } We map each $k$-tuple of nodes $(v_0,\dots,v_{k-1})$ to the number
$n^{k-1} v_{k-1} + n^{k-2} v_{k-2} + \dots + v_0$. Note that no two tuples are mapped to the same number. \qaqed 

\textbf{Q: } The above representation can lead to numbers as high as $n^k - 1$. However, each machine
can only have $O(n)$ states. How can we increment a counter to that high a value?

\textbf{A: } For this purpose, we construct gadgets that serve as a base-$n$ counter over $k$ digits.
These gadgets will have two important properties:  
First, each gadget $G_i$ will have exactly $n$ states, one for each number from 0 to $n-1$, with transitions
which are either self-loops or only taking place between successor states (modulo $n$).
The second property is that, for each $G_i$,
a transition between successor states in $G_i$ can occur if and only if a run of length $n$ traversing all the states occurs in $G_{i-1}$. This means that if we execute the gadgets $G_0,G_{1},\dots,G_{i}$ and at some
point $G_i$ stops at the state $v_i$ and $G_{i-1},\dots,G_{0}$ all stop at the state 0,
then we have executed a path of length 
exactly $n^{i} v_i$. 

\dch{partially reinstated gadget desc}
Formally,
the input letters of each gadget $G_i$ will be $\{\#_0,\dots,\#_{k-1}\}$. Each state $j \in \{0,1,\dots,n-1\}$ in $G_i$
will stay at $j$ if it reads any letter from $\#_0,\dots,\#_{i-1}$, move to $j+1$ if $j < n-1$ and it reads $\#_i$
and finally move to $0$ if $j = n-1$ and it reads any letter from $\#_{i+1},\dots,\#_{k-1}$.
The start state of each gadget is $0$.

Suppose we now take $k$ copies of our gadgets and execute the 
first copy of $G_0,G_1,\dots,G_{k-1}$ until $G_{k-1}$ reaches some state $v_{k-1}$ and
all the other gadgets reach 0. At this point, suppose we stop
the first copy by reading a special letter, which can only be read when $G_0,\dots,G_{k-2}$
are at the state 0. (If read from any other state, the machines will move to a rejecting sink state.)
After reading this special letter, we ``freeze'' the value $v_{k-1}$ in $G_{k-1}$ (i.e., we will always remember this value in the remaining copies of $G_{k-1}$) and then execute the second copy of the gadgets $G_0,G_1,\dots,G_{k-2}$.
Then, once the second copy of $G_{k-2}$ reaches some state $v_{k-2}$ and all the other gadgets reach 0,
we stop the second copy by reading another special letter, freeze the value
$v_{k-2}$ in $G_{k-2}$ and move on to the third copies of $G_0,G_1,\dots,G_{k-3}$ and so on.
Hence, once we have finished executing all the $k$ copies, we must have a run 
of length exactly $n^{k-1} v_{k-1} + n^{k-2} v_{k-2} + \dots + v_0$ for some $v_{k-1}, \dots, v_0$
that are stored in the last copies of the gadgets. So, if we incremented
the counter every time a step is executed, at the end, the counter value
will be exactly $n^{k-1} v_{k-1} + n^{k-2} v_{k-2} + \dots + v_0$.\qaqed

Having found this tuple $C_1 := (v_0,\dots,v_{k-1})$, we now check that it is 
a $k$-clique of the graph $G$.

\textbf{Q:} How can we verify that $C_1$ is indeed a $k$-clique of $G$?

\textbf{A:} Recall that when the collection is found, each machine stores one node of $C_1$.
Furthermore, we are allowed to have an input alphabet of size $n^k$. Hence, for each 
$k$-clique we will have a letter in our input alphabet. Then, we force the $i^{th}$ machine
(which stores $v_i$) to read one of these letters from its current state only if the 
$i^{th}$ node in this letter is $v_i$. (If it reads some other letter, it will move to a rejecting sink state.) This ensures that if all the machines successfully read some letter,
then $C_1$ is a $k$-clique. \qaqed

Having now found a $k$-clique $C_1$, we now find another $k$-clique $C_2$ such that $C_1 \cup C_2$ is a $2k$-clique and store each node of $C_2$ in one of the machines.

\textbf{Q:} How can we find such a $C_2$?

\textbf{A:} Each node of $C_1$ is stored in some machine. Now, 
we force the $i^{th}$ machine to read a letter corresponding to some $k$-clique $C_2$ only if 
the node stored in the $i^{th}$ machine is a neighbor of every node in $C_2$. If this is indeed
the case, then the $i^{th}$ machine forgets its current node and starts storing the $i^{th}$ node of $C_2$.
Note that since no self-loops are present in $G$, if all the machines successfully read the same letter corresponding to some $k$-clique $C_2$, then we are guaranteed that $C_1 \cap C_2 = \emptyset$, 
	$C_1 \cup C_2$ is a $2k$-clique and all the machines now store nodes of $C_2$. \qaqed

Having now found a $k$-clique $C_2$, we now find another $k$-clique $C_3$ such that $C_2 \cup C_3$
is a $2k$-clique by the same method as before. Then, we once again find another $k$-clique
$C_1'$ such that $C_3 \cup C_1'$ is a $2k$-clique. Now, if we verify that $C_1 = C_1'$,
then we have successfully found a $3k$-clique.

\textbf{Q:} How can we verify that $C_1 = C_1'$?

\textbf{A:} Recall that after having incremented the counter to uniquely store $C_1$, we have not modified
it at all. Hence, the current value of the counter is the encoding of $C_1$. So, if 
we decrement the counter by the encoding of $C_1'$ and accept if the counter is zero,
then we would have verified that $C_1 = C_1'$.\qaqed

\textbf{Q:} How do we decrement the counter by the value corresponding to $C_1'$?

\textbf{A: } Recall that to increment the counter to the value of $C_1$, we constructed $k$ copies of the gadgets $G_0,\dots,G_{k-1}$ and executed them till the states in the last copy of each gadget $G_i$ 
stored the $i^{th}$ node of $C_1$. Hence, if we execute the copies of these gadgets \emph{in reverse}, beginning with the $i^{th}$ machine storing the $i^{th}$ node of $C_1'$ and decrement the counter every time we take a step, then this would decrement the counter by exactly the value corresponding to $C_1'$. \qaqed\\

This completes all the main ideas behind the proof of the theorem. All the machines described
above have $O(n)$ states (where the constant depends on $k$, but recall that $k$ is also a fixed constant) and the alphabet size is $O(n^k)$. Furthermore, each of these machines can be constructed in time $O(mn^{k-1})$, where $m$ is the number of edges of $G$.

Now, suppose \mainpdanfa non-emptiness can be solved
in time $O((M^{(\omega - 1)k}|\Sigma| + M^{\omega k})^{1-\epsilon})$ where $M$ is the maximum number of states among all the given machines and $\epsilon$ is some number strictly bigger than $0$.
Then, we can solve the $3k$-Clique problem in time $O(n^{\omega k (1-\epsilon)})$ 
as follows: Given a graph $G$, first construct
the machines $\mach_0,\dots,\mach_{k-1}$ described above in time $O(mn^{k-1})$ and then
solve the  \mainpdanfa non-emptiness problem on this instance. By the above reduction,
it follows this is a correct algorithm for solving the $3k$-clique problem.
The overall time taken for this procedure is $O(mn^{k-1}) + O((n^{(\omega-1)k}|\Sigma| + n^{\omega k})^{1-\epsilon}) = O(mn^{k-1}) + O(n^{\omega k (1 - \epsilon)}) = O(n^{\omega k (1 - \epsilon)})$,\footnote{Technically, if $k = 1$ and $\omega = 2$, then this last equality need not be true (for instance when $m = O(n^2)$). But in that case, we get a different contradiction: If $k = 1$ and $\omega = 2$, then by assumption, this means that PDA non-emptiness (over instances with $\Sigma = \{0,1\}$) can be solved in time $O(M^{2(1-\epsilon)})$ where $M$ is the number of states of the given PDA. However, if the given PDA is dense, i.e., its number of transitions is $O(M^2)$, then any such algorithm cannot even read the complete input and so cannot correctly solve PDA non-emptiness. This case will recur in other parts of the paper as well, and for the sake of brevity, we do not repeat this argument again in those instances.} which contradicts the $3k$-Clique hypothesis.
Similarly, any $O((n^{2k}|\Sigma| + n^{3k})^{1-\epsilon})$ time algorithm for
\mainpdanfa non-emptiness would contradict the combinatorial $3k$-Clique hypothesis.

We now move on to presenting all the formal details of this proof.


\subsection{Proof of Theorem~\ref{thm:general-case-lower-bound}}\label{subsec:proof-general-case-lower-bound}

Let $G$ be a graph (without self-loops) and $3k$ be a given number. 
Without loss of generality, let $\{0,1,\dots,n-1\}$ be the vertices of $G$.
We will now construct a DOCA $\mach_0$ and $k-1$ many DFAs $\mach_1,\mach_2,\dots,\mach_{k-1}$
over a common alphabet $\Sigma$ such that the intersection of $\mach_0, \mach_1, \dots, \mach_{k-1}$ is non-empty
if and only if $G$ has a $3k$-clique.

The high-level idea behind the construction of these machines has been described in Section~\ref{subsec:proof-idea-general-case-lower-bound} and so we concentrate here on the formal aspects. 
We will construct the machines in three stages. 
In the first stage, we will describe the gadgets necessary for uniquely storing a $k$-tuple $C_1 = (v_0,\dots,v_{k-1})$ of nodes of $G$ as a single number into the counter of the DOCA. In the second stage, we will describe the gadgets necessary for finding other $k$-tuples $C_2, C_3, C_1'$ such that
$C_1 \cup C_2, C_2 \cup C_3$ and $C_3 \cup C_1'$ are all $2k$-cliques. In the third stage, we will describe
the gadgets necessary for decrementing the counter of the DOCA by the unique number assigned to $C_1'$, thereby allowing us to verify that $C_1 = C_1'$. Before we describe these three stages, we make a small remark. 

\begin{remark}
	In each of the three stages, we will actually
	construct machines such that for each state $q$ and each input letter $a$, there is at most one outgoing
	transition for the pair $(q,a)$. Strictly speaking, these do not correspond to deterministic machines,
	because determinism mandates that there be exactly one outgoing transition for the pair $(q,a)$.
	However, it is easy to see that any such machine can be converted into a language-equivalent 
	deterministic machine
	by adding a special sink state to which all undefined outgoing transitions are diverted to.
	The reason we do not add this sink state to all of the machines in our construction is purely for 
	expository purposes as it makes the construction and proofs easier to formulate.	
\end{remark}

\subsubsection*{First stage: Machines for storing a $k$-tuple by incrementing the counter}

In this stage, we will introduce machines which will allow us to uniquely store a $k$-tuple of nodes
as a number in the counter of the DOCA. 
As mentioned in Section~\ref{subsec:proof-idea-general-case-lower-bound}, to every $k$-tuple $C = (v_0,v_1,\dots,v_{k-1})$ of nodes,
we can uniquely assign a number $n^{k-1} \cdot v_{k-1} + n^{k-2} \cdot v_{k-2} + \dots + v_0$.
In this stage, we will construct machines which will first force the counter of the DOCA
to reach a value of the above form for some collection of nodes $v_0,\dots,v_{k-1}$.

To this end, we will construct a DOCA $A_0$
and $k-1$ many DFAs $A_1,\dots,A_{k-1}$ in this first stage. 
The common set of input letters for these machines
will be $\#_0,\#_1,\dots,\#_{k-1}$ and $@_0,@_1,\dots,@_{k-1}$. 
Each machine $A_i \in \{A_0,\dots,A_{k-1}\}$ will have the following set of states:
For each $s \in \{0,\dots,n-1\}$ and each $\ell \in \{0,\dots,k-1\}$, 
$A_i$ will have a state $(s,\ell)^{i}$. Furthermore, for each $s \in \{0,\dots,n-1\}$, $A_i$ will also have a state $(s,done)^{i}$.
The $i$ in the superscript denotes the machine to which these states belong. The intuition behind these states
are that for each $\ell \in \{0,\dots,k-1\}$, the states $(0,\ell)^{i},(1,\ell)^{i},\dots,(n-1,\ell)^{i}$ correspond to the $(k-\ell)^{th}$ copy of the 
gadget $G_i$ described in Section~\ref{subsec:proof-idea-general-case-lower-bound}. The states $(0,done)^{i},(1,done)^{i},\dots,(n-1,done)^{i}$
denote that we have frozen the $i^{th}$ machine with the values $0, 1, \dots, n-1$ respectively.
For each $i$, the state $(0,k-1)^i$ will be the unique initial state of $A_i$.


Before we describe the transitions of each of these machines, we build some intuition.
Each machine $A_i$ will begin at $(0,k-1)^{i}$. Note that each state of $A_i$
is of the form $(s,p)^{i}$ for some $s \in \{0,\dots,n-1\}$ and some $p \in \{0,\dots,k-1,done\}$.
The first part $s$ will be called the \emph{score} of that state and the second
part $p$ will be called the \emph{phase} of that state.
If the phase is $\ell$ for some $\ell \in \{0,\dots,k-1\}$,
then we say that the state is active and otherwise, we say that it is done. 

The transitions that we will construct will always satisfy the
following property: Suppose while reading a word, the machines $A_0,A_1,\dots,A_{k-1}$
reach states with scores $s_0,\dots,s_{k-1}$ at some point. Then the value of the counter of $A_0$ at that point will be \emph{exactly} equal to $\sum_{0 \le i \le k-1} n^i s_i$. Furthermore, if some machine $A_i$ 
reaches a state of the form $(s_i,done)^{i}$, then this would intuitively mean that
we have finished the increments corresponding to the $n^i$ term in our representation, i.e.,
that we have decided on picking the $i^{th}$ node to be $s_i$. Intuitively, then in order 
to complete our task, we must force each $A_i$ to reach a state of the form 
$(s_i,done)^{i}$, i.e., we must force each $A_i$ to reach a done state.
This we will do by first forcing $A_{k-1}$ to reach a done state, then $A_{k-2}$, then $A_{k-3}$ and so on.
We now proceed to formally describe the transitions.

\paragraph*{Description of $A_0$.}
The machine $A_0$ will have the following transitions: For each active state $(s,\ell)^{0}$,
\begin{itemize}
	\item If it reads $\#_0$ and $s < n-1$, then it increments the counter by 1 and 
	moves to $(s+1,\ell)^{+0}$.
	\item If it reads any one of $\#_1,\#_2,\dots,\#_{k-1}$, and $s = n-1$, then it increments the counter by 1 and moves
	to $(0,\ell)^{+0}$.
	\item If it reads $@_\ell$ and $\ell > 0$ and $s = 0$, then it moves to $(0,\ell-1)^{+0}$.
	\item If it reads $@_\ell$ and $\ell = 0$, then it moves to $(s,done)^{+0}$.
\end{itemize}
See Figure~\ref{fig:A0} for a representation of the machine $A_0$.

\begin{figure}[t]
	\centering
\usetikzlibrary{automata, positioning, arrows.meta, bending}
\begin{tikzpicture}[
	>={Stealth[round]},
	shorten >=1pt,
	auto,
	node distance=1.6cm and 1.4cm,
	every state/.style={
		draw,
		rounded corners=4pt,
		align=center,
		font=\small,
		rectangle
	},
	trans/.style={font=\scriptsize},
	dots node/.style={draw=none, minimum width=0pt, minimum height=0pt}
	]
	
	\node[initial, state] (s0k) {$(0,k{-}1)^0$};
	\node[state, right=of s0k] (s1k) {$(1,k{-}1)^0$};
	\node[dots node, right=1.0cm of s1k] (dotsRk) {$\dots$};
	\node[state, right=1.0cm of dotsRk] (sn2k) {$(n{-}2,k{-}1)^0$};
	\node[state, right=of sn2k] (sn1k) {$(n{-}1,k{-}1)^0$};
	
	\draw[->] (s0k)   edge[trans] node[above]{$\#_0,{+}1$} (s1k);
	\draw[->] (s1k)   edge[trans] node[above]{$\#_0,{+}1$} (dotsRk);
	\draw[->] (dotsRk) edge[trans] node[above]{$\#_0,{+}1$} (sn2k);
	\draw[->] (sn2k)  edge[trans] node[above]{$\#_0,{+}1$} (sn1k);
	\draw[->] (sn1k) to[bend right=20]
	node[above, trans]{$\#_1,\ldots,\#_{k-1},{+}1$} (s0k);

	\node[dots node, below=1cm of s0k] (vd1_0) {};
	\node[dots node, below=-0.1cm of s1k, align=center] (vd1_1) {$\vdots$\\$\vdots$};
	\node[dots node, below=-0.1cm of sn2k, align=center] (vd1_n2) {$\vdots$\\$\vdots$};
	\node[dots node, below=-0.1cm of sn1k, align=center] (vd1_n1) {$\vdots$\\$\vdots$};
	\draw[->] (s0k)  edge[trans] node[right] {$@_{k-1}$} (vd1_0);
	
	\node[state, below=1cm of vd1_0] (s0l) {$(0,\ell)^0$};
	\node[state, below=1cm of vd1_1] (s1l) {$(1,\ell)^0$};
	\node[dots node, right=1.0cm of s1l] (dotsRl) {$\dots$};
	\node[state, below=1cm of vd1_n2] (sn2l) {$(n{-}2,\ell)^0$};
	\node[state, below=1cm of vd1_n1] (sn1l) {$(n{-}1,\ell)^0$};
	
	\draw[->] (s0l)    edge[trans] node[above]{$\#_0,{+}1$} (s1l);
	\draw[->] (s1l)    edge[trans] node[above]{$\#_0,{+}1$} (dotsRl);
	\draw[->] (dotsRl) edge[trans] node[above]{$\#_0,{+}1$} (sn2l);
	\draw[->] (sn2l)   edge[trans] node[above]{$\#_0,{+}1$} (sn1l);
	\draw[->] (sn1l) to[bend right=20]
	node[above, trans]{$\#_1,\ldots,\#_{k-1},{+}1$} (s0l);
	
	\node[state, below=1.2cm of s0l] (s0lm1) {$(0,\ell{-}1)^0$};
	\draw[->] (s0l) -- node[right, trans]{$@_\ell$} (s0lm1);
	
	\node[dots node] at (s1l |- s0lm1)   (vd2_1)  {$\dots$};
	\node[dots node] at (dotsRl |- s0lm1) {$\dots$};
	\node[dots node] at (sn2l |- s0lm1)  (vd2_n2) {$\dots$};
	\node[dots node] at (sn1l |- s0lm1)  (vd2_n1) {$\dots$};
	\coordinate (midarrow01l) at ($(s0l)!0.5!(s1l)$);
	\coordinate (midarrow1dl) at ($(s1l)!0.5!(dotsRl)$);
	\coordinate (midarrowdnl) at ($(dotsRl)!0.5!(sn2l)$);
	\coordinate (midarrown2n1l) at ($(sn2l)!0.5!(sn1l)$);
	
	\node[dots node] at (midarrow01l   |- s0lm1) {$\dots$};
	\node[dots node] at (midarrow1dl   |- s0lm1) {$\dots$};
	\node[dots node] at (midarrowdnl   |- s0lm1) {$\dots$};
	\node[dots node] at (midarrown2n1l |- s0lm1) {$\dots$};

	\node[dots node, below=-0.1cm of s0lm1, align=center] (vd3_0)  {$\vdots$};
	\node[dots node, below=0.15cm of vd2_1, align=center]  (vd3_1)  {$\vdots$};
	\node[dots node, below=0.15cm of vd2_n2, align=center] (vd3_n2) {$\vdots$};
	\node[dots node, below=0.15cm of vd2_n1, align=center] (vd3_n1) {$\vdots$};
	
	\node[state, below=1cm of vd3_0] (s00) {$(0,0)^0$};
	\node[state, below=1cm of vd3_1] (s10) {$(1,0)^0$};
	\node[dots node, right=1.0cm of s10] (dotsR0) {$\dots$};
	\node[state, below=1cm of vd3_n2] (sn20) {$(n{-}2,0)^0$};
	\node[state, below=1cm of vd3_n1] (sn10) {$(n{-}1,0)^0$};
	
	\draw[->] (s00)    edge[trans] node[above]{$\#_0,{+}1$} (s10);
	\draw[->] (s10)    edge[trans] node[above]{$\#_0,{+}1$} (dotsR0);
	\draw[->] (dotsR0) edge[trans] node[above]{$\#_0,{+}1$} (sn20);
	\draw[->] (sn20)   edge[trans] node[above]{$\#_0,{+}1$} (sn10);
	\draw[->] (sn10) to[bend right=20]
	node[above, trans]{$\#_1,\ldots,\#_{k-1},{+}1$} (s00);

	\node[state, below=1cm of s00]  (s0d)  {$(0,\mathit{done})^0$};
	\node[state, below=1cm of s10]  (s1d)  {$(1,\mathit{done})^0$};
	\node[dots node, right=1.0cm of s1d] (dotsRd) {$\dots$};
	\node[state, below=1cm of sn20] (sn2d) {$(n{-}2,\mathit{done})^0$};
	\node[state, below=1cm of sn10] (sn1d) {$(n{-}1,\mathit{done})^0$};
	
	\draw[->] (s00) -- node[right, trans]{$@_0$} (s0d);
	\draw[->] (s10) -- node[right, trans]{$@_0$} (s1d);
	\draw[->] (sn20) -- node[right, trans]{$@_0$} (sn2d);
	\draw[->] (sn10) -- node[right, trans]{$@_0$} (sn1d);
\end{tikzpicture}
\caption{The DOCA $A_0$. The label $+1$ that appears in (some of) the transitions after a letter means that we increment the counter while taking those transitions.}
\label{fig:A0}
\end{figure}

\paragraph*{Description of $A_i$ for $i \ge 1$.} Each machine $A_i$ for $i \ge 1$ will have the following transitions: For each active state $(s,\ell)^{+i}$,
\begin{itemize}
	\item If it reads any one of $\#_0,\dots,\#_{i-1}$, then it stays at $(s,\ell)^{+i}$.
	\item If it reads $\#_i$, and $s < n-1$, then it moves to $(s+1,\ell)^{+i}$.
	\item If it reads any one of $\#_{i+1},\dots,\#_{k-1}$, and $s = n-1$, then it moves
	to $(0,\ell)^{+i}$.
	\item If it reads $@_\ell$ and $\ell > i$ and $s = 0$, then it moves to $(0,\ell-1)^{+i}$.
	\item If it read $@_\ell$ and $\ell = i$, then it moves to $(s,done)^{+i}$.
\end{itemize}
Furthermore, for each done state $(s,done)^{+i}$, if we read any one of $\#_0,\dots,\#_{i-1},@_{0},
\dots,@_{i-1}$, we stay there. See Figure~\ref{fig:Ai} for a representation of the machine $A_i$.

\begin{figure}[t]
	\centering
	\usetikzlibrary{automata, positioning, arrows.meta, bending}
\begin{tikzpicture}[
	>={Stealth[round]},
	shorten >=1pt,
	auto,
	node distance=1.6cm and 1.4cm,
	every state/.style={
		draw,
		rounded corners=4pt,
		align=center,
		font=\small,
		rectangle
	},
	trans/.style={font=\scriptsize},
	dots node/.style={draw=none, minimum width=0pt, minimum height=0pt}
	]
	
	\node[initial, state] (s0k) {$(0,k{-}1)^i$};
	\node[state, right=of s0k] (s1k) {$(1,k{-}1)^i$};
	\node[dots node, right=1.0cm of s1k] (dotsRk) {$\dots$};
	\node[state, right=1.0cm of dotsRk] (sn2k) {$(n{-}2,k{-}1)^i$};
	\node[state, right=of sn2k] (sn1k) {$(n{-}1,k{-}1)^i$};
	
	\draw[->] (s0k)    edge[trans] node[above]{$\#_i$} (s1k);
	\draw[->] (s1k)    edge[trans] node[above]{$\#_i$} (dotsRk);
	\draw[->] (dotsRk) edge[trans] node[above]{$\#_i$} (sn2k);
	\draw[->] (sn2k)   edge[trans] node[above]{$\#_i$} (sn1k);
	\draw[->] (sn1k) to[bend right=20]
	node[above, trans]{$\#_{i+1},\ldots,\#_{k-1}$} (s0k);
	
	\node[dots node, below=1cm of s0k] (vd1_0) {};
	\node[dots node, below=-0.1cm of s1k, align=center] (vd1_1) {$\vdots$\\$\vdots$};
	\node[dots node, below=-0.1cm of sn2k, align=center] (vd1_n2) {$\vdots$\\$\vdots$};
	\node[dots node, below=-0.1cm of sn1k, align=center] (vd1_n1) {$\vdots$\\$\vdots$};
	\draw[->] (s0k) edge[trans] node[right]{$@_{k-1}$} (vd1_0);
	
	\node[state, below=1cm of vd1_0] (s0l) {$(0,\ell)^i$};
	\node[state, below=1cm of vd1_1] (s1l) {$(1,\ell)^i$};
	\node[dots node, right=1.0cm of s1l] (dotsRl) {$\dots$};
	\node[state, below=1cm of vd1_n2] (sn2l) {$(n{-}2,\ell)^i$};
	\node[state, below=1cm of vd1_n1] (sn1l) {$(n{-}1,\ell)^i$};
	
	\draw[->] (s0l)    edge[trans] node[above]{$\#_i$} (s1l);
	\draw[->] (s1l)    edge[trans] node[above]{$\#_i$} (dotsRl);
	\draw[->] (dotsRl) edge[trans] node[above]{$\#_i$} (sn2l);
	\draw[->] (sn2l)   edge[trans] node[above]{$\#_i$} (sn1l);
	\draw[->] (sn1l) to[bend right=20]
	node[above, trans]{$\#_{i+1},\ldots,\#_{k-1}$} (s0l);
	
	\node[state, below=1.2cm of s0l] (s0lm1) {$(0,\ell{-}1)^i$};
	\draw[->] (s0l) -- node[right, trans]{$@_\ell$} (s0lm1);
	
	\node[dots node] at (s1l    |- s0lm1) (vd2_1)  {$\dots$};
	\node[dots node] at (dotsRl |- s0lm1)            {$\dots$};
	\node[dots node] at (sn2l   |- s0lm1) (vd2_n2) {$\dots$};
	\node[dots node] at (sn1l   |- s0lm1) (vd2_n1) {$\dots$};
	\coordinate (midarrow01l)   at ($(s0l)!0.5!(s1l)$);
	\coordinate (midarrow1dl)   at ($(s1l)!0.5!(dotsRl)$);
	\coordinate (midarrowdnl)   at ($(dotsRl)!0.5!(sn2l)$);
	\coordinate (midarrown2n1l) at ($(sn2l)!0.5!(sn1l)$);
	\node[dots node] at (midarrow01l   |- s0lm1) {$\dots$};
	\node[dots node] at (midarrow1dl   |- s0lm1) {$\dots$};
	\node[dots node] at (midarrowdnl   |- s0lm1) {$\dots$};
	\node[dots node] at (midarrown2n1l |- s0lm1) {$\dots$};
	
	\node[dots node, below=-0.1cm of s0lm1, align=center] (vd3_0)  {$\vdots$};
	\node[dots node, below=0.15cm of vd2_1,  align=center] (vd3_1)  {$\vdots$};
	\node[dots node, below=0.15cm of vd2_n2, align=center] (vd3_n2) {$\vdots$};
	\node[dots node, below=0.15cm of vd2_n1, align=center] (vd3_n1) {$\vdots$};
	
	\node[state, below=1cm of vd3_0] (s0i) {$(0,i)^i$};
	\node[state, below=1cm of vd3_1] (s1i) {$(1,i)^i$};
	\node[dots node, right=1.0cm of s1i] (dotsRi) {$\dots$};
	\node[state, below=1cm of vd3_n2] (sn2i) {$(n{-}2,i)^i$};
	\node[state, below=1cm of vd3_n1] (sn1i) {$(n{-}1,i)^i$};
	
	\draw[->] (s0i)    edge[trans] node[above]{$\#_i$} (s1i);
	\draw[->] (s1i)    edge[trans] node[above]{$\#_i$} (dotsRi);
	\draw[->] (dotsRi) edge[trans] node[above]{$\#_i$} (sn2i);
	\draw[->] (sn2i)   edge[trans] node[above]{$\#_i$} (sn1i);
	\draw[->] (sn1i) to[bend right=20]
	node[above, trans]{$\#_{i+1},\ldots,\#_{k-1}$} (s0i);
	
	\node[state, below=1cm of s0i]  (s0d)  {$(0,\mathit{done})^i$};
	\node[state, below=1cm of s1i]  (s1d)  {$(1,\mathit{done})^i$};
	\node[dots node, right=1.0cm of s1d] (dotsRd) {$\dots$};
	\node[state, below=1cm of sn2i] (sn2d) {$(n{-}2,\mathit{done})^i$};
	\node[state, below=1cm of sn1i] (sn1d) {$(n{-}1,\mathit{done})^i$};
	
	\draw[->] (s0i)  -- node[right, trans]{$@_i$} (s0d);
	\draw[->] (s1i)  -- node[right, trans]{$@_i$} (s1d);
	\draw[->] (sn2i) -- node[right, trans]{$@_i$} (sn2d);
	\draw[->] (sn1i) -- node[right, trans]{$@_i$} (sn1d);
	
\end{tikzpicture}
\caption{The NFA $A_i$. In addition to these transitions, in every state, there is also a self-loop transition for reading any one of $\#_0,\dots,\#_{i-1}$. Furthermore, in every done state, there is also a self-loop transition for reading any one of $@_0,\dots,@_{i-1}$}
\label{fig:Ai}
\end{figure}

This completes the description of all the machines. Note that in every machine, for each state $q$ and each letter $a$, there is at most 
one state $q'$ to which the machine can move to while reading the letter $a$
from $q$. Hence, for every machine $A_i$ and every word $w$, there can be at most one run of the machine $A_i$ on reading the word $w$.

\paragraph*{Properties of $A_0,\dots,A_{k-1}$.}

We now prove some properties of these machines and show that they conform with the intuitions that we had described above.
To this end, let us set up some notation. Suppose there exist runs
of the machines $A_0,\dots,A_{k-1}$ on some word $w$. We say that the runs
are active with phase $i$ for some $0 \le i \le k-1$ if at the end of the runs,
\begin{itemize}
	\item The states of the machines $A_0,\dots,A_{i}$ are all active with phase $i$,
	\item The states of the machines $A_{i+1},\dots,A_{k-1}$ are all done.
\end{itemize}
Finally, we say that this collection of runs is perfect if at the end of each run for each machine, the phase of the state of that machine is done.

Having stated these definitions, we state our first result. 
It proves that the collection of runs that we get for any word $w$ is either active or perfect.

\begin{lemma}[Incrementing Counter Soundness Lemma]\label{lem:inc-soundness}
	Let $\rho = (\rho_0,\dots,\rho_{k-1})$ be a collection of runs of the machines $A_0,\dots,A_{k-1}$ along some word $w$.
	Then $\rho$ is either active or perfect.
	Furthermore, if $s_0,\dots,s_{k-1}$ are the scores
	of the states of $A_0,\dots,A_{k-1}$ at the end of $\rho$, then the value of the counter of $A_0$
	at the end of $\rho$ is $\sum_{0 \le \ell \le k-1} n^\ell s_\ell$.
\end{lemma}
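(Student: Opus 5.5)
We argue by induction on the length of $w$. The inductive invariant is slightly stronger than the statement. After any prefix on which all $k$ runs exist, the collection is either perfect or active with some phase $i$. Moreover, the counter equals $\sum_\ell n^\ell s_\ell$. The base case is the empty word: every $A_j$ is at $(0,k-1)^j$, so the collection is active with phase $k-1$, all scores are $0$, and the counter is $0$.

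For the inductive step, suppose the runs on $w$ are active with phase $i$. Then $A_0,\dots,A_i$ are in active states of phase $i$, and $A_{i+1},\dots,A_{k-1}$ are done. We extend $w$ by one letter $a$ and assume all $k$ runs still exist, splitting into cases on $a$.

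\emph{Case $a=@_\ell$.} Every $A_j$ with $j\le i$ is active with phase $i$, and it has an $@$-transition only for $@_i$, so $\ell=i$. The done machines $A_j$ with $j>i$ accept $@_i$ as a self-loop, since $i<j$. If $i>0$, then $A_0,\dots,A_{i-1}$ move from $(0,i)$ to $(0,i-1)$, and this is only possible if their scores are $0$. Meanwhile $A_i$ moves to $(s_i,done)^i$. The result is active with phase $i-1$, with all scores unchanged. If $i=0$, then $A_0$ becomes done and the collection is perfect. In both subcases the counter and the scores are unchanged.

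\emph{Case $a=\#_m$.} All done machines $A_j$ with $j>i$ have self-loops only on $\#_0,\dots,\#_{j-1}$, so we must have $m\le i$ (taking $j=i+1$ suffices when $i<k-1$). Machines $A_j$ with $j>m$ stay put. Machine $A_m$ increments its score, which requires $s_m<n-1$. Machines $A_j$ with $j<m$ see $\#_m$ with $m>j$, so they wrap from $n-1$ to $0$, which requires $s_j=n-1$. Exactly one increment occurs, performed by $A_0$: if $m=0$ it is the $\#_0$ increment, and if $m>0$ it is the wraparound increment. The new weighted sum is
\[
\sum_\ell n^\ell s_\ell \;+\; n^m \;-\; \sum_{j<m}(n-1)n^j \;=\; \sum_\ell n^\ell s_\ell+1 .
\]
This matches the counter increment by $1$, and the phases are unchanged.

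If the collection is perfect, then no machine accepts any further letter. $A_0$'s done states have no outgoing transitions, so the runs cannot be extended and the claim holds vacuously.

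The main obstacle is the bookkeeping in the $\#_m$ case. There we must check that the existence of all $k$ runs forces $m\le i$ together with the exact score conditions, namely $s_j=n-1$ for $j<m$ and $s_m<n-1$. The carry identity $n^m-\sum_{j<m}(n-1)n^j=1$ is what makes the counter equal to the weighted score sum. We also need to verify, from the listed transitions, that done machines $A_j$ with $j>i$ block every letter with index at least $j$.
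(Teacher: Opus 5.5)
Your proof is correct and is essentially the paper's argument. It proceeds by induction on $|w|$, rules out the perfect case because done states of $A_0$ have no outgoing transitions, and analyses the appended letter against the current phase $i$ using the carry identity $n^m-\sum_{j<m}(n-1)n^j=1$. The only difference is bookkeeping: the paper first splits on the smallest index $e\le i$ with $s_e<n-1$ (or its absence) and then on the letter, whereas you split on the letter and read off the score constraints. Your organisation is slightly cleaner and also avoids the paper's tacit use of $n\ge 2$ in its Case~2.
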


\begin{proof}
	We prove this by induction on the length of $w$. For the base case where $w$ is the empty word,
	the claim is easily seen to be true. 
	Suppose we have already proved this claim for some word $w$. We 
	would now like to prove this claim for the word $wa$ where $a$ is some letter. 
	
	To this end, let $\rho' = (\rho'_0,\dots,\rho'_{l-1})$ be a collection of runs of $A_0,\dots,A_{k-1}$ along the word $wa$. 
	Truncating each run in $\rho'$ at the last letter $a$, gives a collection of runs $\rho = (\rho_0,\dots,\rho_{k-1})$ 
	of $A_0,\dots,A_{k-1}$ along the word $w$. We now analyze the different possibilities for $\rho$.
	
	Suppose $\rho$ is perfect. Hence, the state reached in $A_0$ at the end of $\rho_0$ is a done state.
	However, there are no outgoing transitions from 
	any done state of $A_0$, which contradicts the fact that $\rho'_0$ is obtained from $\rho_0$ by letting $A_0$ read the letter $a$. 
	
	Hence $\rho$ must be active and so there is some $i \ge 0$ such that at the end of all of  the runs in $\rho$,
	\begin{itemize}
		\item The states of the machines $A_0,\dots,A_{i}$ are all active with phase $i$,
		\item The states of the machines $A_{i+1},\dots,A_{k-1}$ are all done.
	\end{itemize}
	Let $s_\ell$ be the score of the state of each $A_\ell$ at the end of reading $w$, i.e.,
	at the end of the run $\rho_\ell$. Combined with the above property, this means
	that the state of each $A_\ell$ at the end of  $\rho_\ell$ is either $(s_\ell,i)^\ell$ if $0 \le \ell \le i$ or $(s_\ell,done)^\ell$ if $i+1 \le \ell \le k-1$. Furthermore, by induction hypothesis, the value of the counter of $A_0$ at the end of $\rho_0$ is $\sum_{0 \le \ell \le k-1} n^\ell s_\ell$. We now consider two cases.
	
	\subparagraph*{Case 1: } Suppose there is a first index $e$ in $\{0,\dots,i\}$ such that $s_e < n-1$. We now consider all possible values that the letter $a$ can have.
	\begin{itemize}
		\item Suppose $a \notin \{\#_0,\#_1,\dots,\#_e,@_{i}\}$. By construction, $(s_\ell,i)^e$
		has no transitions for any such letter $a$, which contradicts the fact that $\rho'_e$ is obtained
		from $\rho_e$ by letting $A_e$ read the letter $a$ from  $(s_\ell,i)^e$. Therefore,
		this case is not possible.
		\item Suppose $a \in \{\#_0,\dots,\#_{e-1}\}$. Let $a = \#_x$ for some $x \in \{0,\dots,e-1\}$.
		By assumption on $e$, the machine $A_x$ is at the state $(n-1,i)^{x}$ at the 
		end of the run $\rho_x$. However, from this state there is no outgoing transition
		labelled by $\#_x$, which contradicts the fact that $\rho'_x$ is obtained from $\rho_x$ by letting $A_x$ read the letter $a$ from $(n-1,i)^x$. Therefore, this case is not possible as well.
		\item Suppose $a = \#_e$. 
		Then, upon reading the letter $a$ at the end of the runs in $\rho$, the machines $A_0,\dots,A_{e-1}$
		will move to $(0,i)^{0},\dots,(0,i)^{(e-1)}$, $A_e$ will move to $(s_e+1,i)^{e}$,
		all the other machines will remain where they are and the counter will be increased
		by 1. Hence, it follows that $\rho'$ is an active run with phase $i$.
		
		Now, the scores of the machines before reading $a$ were $\underbrace{n-1,\dots,n-1}_{e-1 \ \text{times}},s_e,s_{e+1},\dots,s_{k-1}$ and the scores
		of the machines after reading $a$ are $\underbrace{0, \dots, 0}_{e-1 \ \text{times}}, s_e + 1, s_{e+1},\dots,s_{k-1}$. 
		Since the counter was incremented by 1 upon reading $a$, using the induction hypothesis for $\rho$, we can now 
		conclude that the induction claim also holds for $\rho'$.

		\item Suppose $a = @_{i}$. Note that if there is some $x \in \{0,\dots,i-1\}$ such that $s_x \neq 0$, then
		there is no outgoing transition from $(s_x,i)^x$ upon reading $a$ in $A_x$. This contradicts the fact that
		$\rho'_x$ is obtained from $\rho_x$ by letting $A_x$ read the letter $a$ from the state  $(s_x,i)^x$. Therefore, $s_0 = s_1 = \dots = s_{i-1} = 0$.
		
		In this case, upon reading the letter $a$ at the end of the runs in $\rho$, the machines $A_0,\dots,A_{i-1}$ will move to the states
		$(0,i-1)^{0},(0,i-1)^{1},\dots,(0,i-1)^{(i-1)}$, 
		$A_{i}$ will move to the state $(s_{i},done)^{i}$ and 
		the states of all the other machines are unchanged. 
		Hence, it follows that $\rho'$ is an active run with phase $i-1$ if $i > 0$ or a perfect run if $i = 0$.
		Note that
		the scores of none of the states have changed and the counter value was undisturbed.
		Hence, we can conclude that the induction claim also holds for $\rho'$.
	\end{itemize}

	This finishes the proof of the induction step for Case 1 and therefore also concludes the proof of the lemma for Case 1.
	
	\subparagraph*{Case 2: } Suppose there is no index $e$ in $\{0,\dots,i\}$ such that $s_e < n-1$. Hence, 
	we have that $s_0 = s_1 = \dots = s_{i} = n-1$. We now consider all possible values that the letter $a$ can have.
	
	\begin{itemize}
		\item Suppose $a \in \{\#_{i+1},\dots,\#_{k-1},@_{i+1}\dots,@_{k-1}\}$. By assumption, the machine $A_{i+1}$ reaches a done state at the end of $\rho_{i+1}$ and done states of $A_{i+1}$ do not have an outgoing transition labelled by any letter from $ \{\#_{i+1},\dots,\#_{k-1},@_{i+1}\dots,@_{k-1}\}$. This contradicts the fact that $\rho'_{i+1}$ is obtained from $\rho_{i+1}$ by letting $A_x$ read the letter $a$. Hence, this case is not possible.
		\item Suppose $a \in \{\#_0,\dots,\#_{i}\}$. Let $a = \#_x$ for some $x \in \{0,\dots,i\}$. By construction, there is no
		outgoing transition from $(s_x,i)^x = (n-1,i)^x$ in $A_x$ labelled by $\#_x$, which leads to a contradiction for the same reason as above.
		\item Suppose $a \in \{@_0,\dots,@_{i-1}\}$. Then, by construction, there is no outgoing transition
		from $(s_{i},i)^{i} = (n-1,i)^{i}$ in $A_{i}$ labelled by $a$, which
		once again leads to a contradiction for the same reason as above.
		\item Hence, $a$ must be $@_{i}$. If $i > 0$, then by construction, there is no outgoing transition from $(s_{i-1},i)^{i-1} = (n-1,i)^{i-1}$ in $A_{i-1}$ labelled
		by $a$, which leads to a contradiction. Hence, $i = 0$, which means that at the end
		of $\rho$, the state of $A_0$ is active (with phase 0) and  all the other machines are at a done state. In this case, reading $a$ at the end of $\rho$, will make $A_0$ move to $(s_0,done)^0 = (n-1,done)^0$ and all the other machines will remain where they are. Hence, it follows that $\rho'$ is a perfect run. Furthermore, since none of the scores of the states nor the counter value was undisturbed, it follows that the induction claim also holds for $\rho'$. 
	\end{itemize}
	
	This finishes the proof of the lemma for Case 2 and thereby completes the proof of the lemma in its entirety.
\end{proof}

We now prove a lemma which acts as a sort of converse to the above lemma.
It shows that for any number $N = \sum_{0 \le i \le k-1} n^i s_i $ with $0 \le s_i \le n-1$ for each $i$,
there is a word $w$ with which we can force the counter value of $A_0$ to reach exactly $N$
whilst simultaneously guiding the machines $A_0,\dots,A_{k-1}$ to states with scores $s_0,s_1,\dots,s_{k-1}$.

\begin{lemma}[Incrementing Counter Completeness Lemma ]\label{lem:inc-completeness}
	Let $N = \sum_{0 \le i \le k-1} n^i s_i $ with $0 \le s_i \le n-1$ for each $i$. Then, there is a word $w$ satisfying the following property:
	There
	is a collection of perfect runs $\rho = (\rho_0,\dots,\rho_{k-1})$ for the machines 
	$A_0,\dots,A_{k-1}$ along the word $w$, with the counter 
	of $A_0$ reaching the value $N$ and each 
	$A_i$ reaching the state $(s_i,done)^{i}$.
\end{lemma}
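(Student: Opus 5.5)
I will construct $w$ explicitly, phase by phase, mirroring the $k$ copies of the gadgets. The word is
\[
w = u_{k-1}\, @_{k-1}\, u_{k-2}\, @_{k-2} \cdots u_0\, @_0 ,
\]
where each block $u_\ell$ is the word that, while the machines are in phase $\ell$, drives the score of $A_\ell$ from $0$ to $s_\ell$ and returns the scores of $A_0,\dots,A_{\ell-1}$ to $0$. To define $u_\ell$, I first define recursively a word $c_j$ for $0 \le j \le k-1$, intended to increase the score of $A_j$ by exactly one. Set $c_0 = \#_0$. For $j \ge 1$, let
\[
c_j = (\#_0\, \#_0^{\,n-2})\ \ldots
\]
More cleanly: let $t_j$ be a word that takes $A_0,\dots,A_{j-1}$ from all-zero scores to all-$(n-1)$ scores; then set $c_j = t_j\,\#_j$. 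Define $t_1 = \#_0^{n-1}$, and $t_{j+1} = t_j\,(\#_j\, t_j)^{n-1}$. Finally let $u_\ell = c_\ell^{\,s_\ell}$.

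The proof is an induction on $j$ showing the following: starting from active phase-$\ell$ states (with $\ell \ge j$) in which $A_0,\dots,A_{j-1}$ have score $0$ and $A_j$ has score $s<n-1$, reading $c_j$ is defined in all machines and leads to scores $0,\dots,0,s+1$ for $A_0,\dots,A_j$. All other machines are unchanged: the higher active ones $A_{j+1},\dots,A_\ell$ self-loop on $\#_0,\dots,\#_j$, and the done ones $A_{\ell+1},\dots$ self-loop on these letters as well.

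The induction checks the transitions case by case against the definitions. $\#_x$ with $x<j$ loops in $A_y$ for $y>x$ and steps in $A_x$. The letter $\#_j$ read at all-$(n-1)$ scores wraps $A_0,\dots,A_{j-1}$ to $0$, because each $A_x$ with $x<j$ has the transition $(n-1,\ell)\to(0,\ell)$ on $\#_{j}$. The same computation shows that $t_j$ moves scores $0^j$ to $(n-1)^j$. Note that $c_j$ is used only while $s<n-1$, and indeed $s_\ell \le n-1$ guarantees this. The counter claim can be imported directly from Lemma~\ref{lem:inc-soundness}, since that lemma gives the counter value purely from the final scores. No counting of increments is needed.

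After $u_\ell$, machines $A_0,\dots,A_{\ell-1}$ have score $0$, so $@_\ell$ is enabled for all of them and moves them to $(0,\ell-1)$. In $A_\ell$, $@_\ell$ moves to $(s_\ell,done)^\ell$, and the done machines $A_{\ell+1},\dots$ self-loop on $@_\ell$ since $\ell<$ their index. After $@_0$, every machine $A_i$ is in $(s_i,done)^i$, so the runs are perfect. By Lemma~\ref{lem:inc-soundness} the counter then equals $\sum_i n^i s_i = N$.

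The main obstacle is the bookkeeping in the induction on $c_j$ and $t_j$: verifying that every letter has a defined transition in \emph{every} machine, including the done ones and those of higher index. I will handle this by stating a single invariant about the configuration before and after each letter, then checking the letter types $\#_x$ with $x<j$, $x=j$, and $x>j$ against the five transition rules.
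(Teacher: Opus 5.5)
Your proof is correct, and it takes a genuinely different route from the paper's.

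\textbf{The paper's route.} The paper argues by induction on $N$. Starting from the word $w = w_{k-1}@_{k-1}\cdots w_0@_0$ for $N$, it first derives the general shape $\Sigma_{k-1}^*@_{k-1}\cdots\Sigma_0^*@_0$ of any word admitting perfect runs (Facts 1--3). It then builds the word for $N+1$ by moving $@_i,\dots,@_0$ to the end and inserting a single $\#_i$ before them, where $i$ is the least index with $s_i<n-1$. To show that the runs survive this rearrangement, it needs the observation that $\#$-transitions do not depend on the phase, so the run segments for $w_i,\dots,w_0$ can be replayed inside phase $i$. It obtains the counter value by noting that exactly one more $\#$-letter is read.

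\textbf{Your route.} You write the word down in closed form as an odometer: $t_j$ sweeps the lower digits from $0$ to $n-1$, and $c_j=t_j\#_j$ carries into digit $j$. You verify it through a purely local per-letter invariant, which checks out. The higher active machines and the done machines with index greater than $j$ self-loop on $\#_0,\dots,\#_j$. The letter $\#_j$ wraps $A_0,\dots,A_{j-1}$ from $n-1$ to $0$. The letter $@_\ell$ is enabled because the lower scores are $0$ after $u_\ell$. You then import the counter value from the soundness lemma. That is legitimate, since that lemma is proved independently and quantifies over every collection of runs.

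\textbf{Comparison.} Your approach needs neither the structural characterization of perfect words nor the transplanting of run segments between phases. It also yields an explicit word. Since $|t_j| = n^j-1$, you have $|c_j| = n^j$, so $u_\ell$ contains exactly $s_\ell n^\ell$ letters $\#$. You could therefore even count increments directly instead of invoking soundness. The paper's approach mirrors the ``increment by one'' intuition and avoids defining the recursive words $t_j$, at the price of a more intricate verification. One presentational point: delete the abandoned line ``$c_j = (\#_0\,\#_0^{\,n-2})\ldots$'' before your clean definition of $t_j$ and $c_j$.
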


\begin{proof}
	We prove this by induction on $N$. Note that for the base case of $N = 0$, it can be 
	easily verified that 
	$w := @_{k-1},@_{k-2},\dots,@_0$ satisfies the claim.
	
	Suppose we have proved the claim for some $N < n^k - 1$ and we now want to prove the claim for $N+1$.
	Let $w$ be the word that we obtain for $N$ and let $\rho = (\rho_0,\dots,\rho_{k-1})$ be the perfect collection of runs
	that we get out of reading $w$ from the machines $A_0,\dots,A_{k-1}$. 
	For each $i$, let $\Sigma_i$ denote the set $\{\#_0,\dots,\#_i\}$.
	
	Since $\rho$ is perfect, we claim that $w$ has to be a word of the form $\Sigma_{k-1}^* @_{k-1} \Sigma_{k-2}^* @_{k-2} \dots \Sigma_0^* @_0$. To see this, note the following facts regarding our construction:
	\begin{itemize}
		\item Fact 1: For every $i$, the only transitions that take an active state of $A_i$
		to a done state of $A_i$ are the ones labelled by $@_i$. Hence, the word $w$
		must contain at least one occurrence of $@_i$ for each $i$.
		\item Fact 2: For every $i$, once $A_i$ reaches a done state,
		upon reading any letter $a$, it either stays at the same state
		or has no transition for $a$ (if $a \in \{\#_i,\dots,\#_{k-1},@_i,\dots,@_{k-1}\}$). Hence, after the first occurrence
		of $@_i$, no more occurrences of $\#_i$ or $@_i$ can happen.
		\item Fact 3: For every $i$, the machine $A_i$ has no transitions if it is at an active state and reads $@_j$ for $j < i$. 
	\end{itemize}
	Hence Facts 1 and 2 imply that each $@_i$ occurs exactly once in $w$. Furthermore, Facts 1 and 3 together imply that $@_j$ can appear before $@_i$ if and only if $j > i$. Hence, among the letters in $\{@_{k-1},@_{k-2},\dots,@_0\}$, $@_{k-1}$ must appear first in $w$, then $@_{k-2}$, then $@_{k-3}$, and so on until $@_0$. Finally, Fact 2 implies that in between the occurrences of $@_{i+1}$ and $@_i$, only letters
	from $\Sigma_{i}$ can occur (and after $@_0$ no letter can occur). Putting all of this together, we get that $w$ must be of the form
	$\Sigma_{k-1}^* @_{k-1} \Sigma_{k-2}^* @_{k-2} \dots \Sigma_0^* @_0$. Hence, 
	we let $w = w_{k-1} @_{k-1} w_{k-2} @_{k-2} \dots w_0 @_0$ with each $w_i \in \Sigma_i^*$.
	
	Now, let $N = \sum_{0 \le j \le k-1} n^j s_j$ with each $s_j$ between 0 and $n-1$.
	Since $N < n^k - 1$, there must be a smallest $i$ such that $s_i < n-1$.
	Hence, $N+1 = \sum_{0 \le j < i} n^j \cdot 0 + n^i \cdot (s_i+1) + \sum_{i+1 \le j \le k-1} n^j s_j$.
	
	We now construct a new word $w'$ as 
	$$w' = w_{k-1} @_{k-1}w_{k-2} @_{k-2}\dots w_{i+1} @_{i+1}w_i w_{i-1}
	\dots w_0 \#_i @_i@_{i-1} \dots @_0$$ 
	Compared with $w$, we have pushed the letters $@_i, @_{i-1}, \dots, @_0$ to the
	very end and inserted a $\#_i$ in between $w_0$ and $@_i$. We now claim that 
	$w'$ is the required word for $N+1$ by showing that 
	each machine $A_j$ has a run $\rho_j'$ over the word $w'$
	ending at the state $(0,done)^j$ if $j < i$, $(s_j+1,done)^j$ if $j = i$,
	and $(s_j,done)^j$ if $j > i$.

	To prove this, we use the fact that $A_j$ already has a run $\rho_j$ over the word $w$. 
	Indeed, we split the run $\rho_j$ of $A_j$ over the word $w$ into $2k$ segments as follows:
	For any $0 \le \ell \le k-1$, let $(s_j^\ell,p_j^\ell)^j$ and $(t_j^\ell,q_j^\ell)^j$ 
	be the states reached in $\rho_j$ just before the beginning of the subword $w_\ell$ and just after the end of the subword $w_\ell$ respectively. Also, let $(s_j,done)^j := (s_j^{-1},p_j^{-1})^j$ be the state
	reached at the end of $\rho_j$.	By definition of the word $w$, this means that
	there must be a transition from $(t_j^\ell,q_j^\ell)^j$ to $(s_j^{\ell-1},p_j^{\ell-1})^j$
	reading the letter $@_\ell$	for every $0 \le \ell \le k-1$.

	We now claim that $s_j^{\ell-1} = t_j^{\ell}$, $p_j^\ell = q_j^\ell = \ell$ if $\ell > j$ and $p_j^\ell = q_j^\ell = done$ if $\ell < j$. To this end, we make the following observations regarding our construction:
	\begin{itemize}
		\item Observation 1: If there is a transition in $A_j$ from some state $\alpha$ to some state $\beta$ over a letter $x \in \{\#_0,\dots,\#_{k-1}\}$, then the phase of $\alpha$ and $\beta$ are the same.
		\item Observation 2: If there is a transition in $A_j$ from some state $\alpha$ to some state $\beta$ over a letter $x \in \{@_0,\dots,@_{k-1}\}$, then the score of $\alpha$ and $\beta$ are the same. Furthermore, the phase of $\beta$ is 
		\begin{itemize}
			\item Either one less than the phase of $\alpha$ if $x = @_\ell$ for $\ell > j$.
			\item Or done if $x = @_\ell$ for $\ell \le j$.
		\end{itemize}
	\end{itemize}

	Now, we use these two observations to show our claims. Observation 1 implies that $q_j^\ell = p_j^\ell$ for all $0 \le \ell \le k-1$. Next, Observation 2 implies that
	$s_j^{\ell-1} = t_j^\ell$ and furthermore $p_j^{\ell-1} = p_j^{\ell}-1$
	if $\ell  > j$ and $p_j^{\ell-1} = done$ if $\ell \le j$. Finally, by definition $(s_j^{k-1},p_j^{k-1})^j = (0,k-1)^j$. Putting these three together we get that 
	$s_j^{\ell-1} = t_j^{\ell}$, $p_j^\ell = q_j^\ell = \ell$ if $\ell \ge j$ and $p_j^\ell = q_j^\ell = done$ if $\ell < j$.

	We now use this to show that $A_j$ has a run $\rho_j'$ over the word $w'$. First, since $w'$ has the same prefix as $w$ till up to and including the occurrence of the letter $@_{i+1}$, it follows that $A_j$ has a run up till this prefix of $w'$. Moreover, this run will end at the state $(s_j^i,p_j^i)^j$. To extend the run 
	from here till the end of $w'$, we rely on the following observation, which immediately follows from the construction of $A_j$.
	\begin{itemize}
		\item Observation 3: For some phase $p$, there is a transition from $(s,p)$ to $(s',p)$ in $A_j$ reading a letter $x \in \{\#_0,\dots,\#_{k-1}\}$ if and only if for all phases $p$, there is a transition
		from $(s,p)$ to $(s',p)$ in $A_j$ reading the letter $x$.
	\end{itemize}
	
	We can use this observation as follows: We know that there for every $i \le \ell \le 0$, there
	is a run from $(s_j^\ell,p_j^\ell)^j$ to $(t_j^\ell,p_j^\ell)^j := (s_j^{\ell-1},p_j^\ell)^j$ in $A_j$. By Observation 3,
	this also implies that there is a run from $(s_j^\ell,p_j^i)^j$ to $(s_j^{\ell-1},p_j^i)^j$ in $A_j$.
	Therefore, it follows that from the state $(s_j^i,p_j^i)^j$ in $A_j$ we have a run upon reading the (sub)word $w_iw_{i-1}\dots w_0$ which leads to the state $(t_j^0,p_j^i)^j := (s_j^{-1},p_j^i)^j := (s_j,p_j^i)^j$.

	By our observation above, $p_j^i = done$ if $i < j$ and $p_j^i = i$ if $i \ge j$. 
	Furthermore, by assumption on $i$, $s_j = n-1$ if $i > j$ and $s_j < n-1$ if $i = j$. This implies that from $(s_j,p_j^i)^j$, upon reading the (sub)word $\#_i@_i@_{i-1}\dots@_0$, the machine $A_j$ will, by construction, do any of the following three operations:
	\begin{itemize}
		\item If $i < j$, it will continue to remain at $(s_j,done)^j$ by means of self-loop transitions.
		\item If $i > j$, it will first move to $(0,i)^j$ after reading $\#_i$ 
		and then move to the states
		$(0,i-1)^j,(0,i-2)^j,\dots,(0,j)^j,\underbrace{(0,done)^j,(0,done)^j,\dots,(0,done)^j}_{j \ \text{times}}$.
		\item If $i = j$, it will first move to $(s_j+1,i)^j$ after reading $\#_i$ 
		and then move to the states
		$\underbrace{(s_j+1,done)^j,(s_j+1,done)^j,\dots,(s_j+1,done)^j}_{j \ \text{times}}$.
	\end{itemize}

	It follows that we have the required run $\rho_j'$ for the machine $A_j'$ over the word $w'$.
	Note that the number of letters from $\Sigma_{k-1}$ appearing in $w'$ is one more than the number of letters from $\Sigma_{k-1}$ appearing in $w$. Since the machine $A_0$ increments its counter in a transition if and only if it reads a letter from $\Sigma_{k-1}$ and since the counter value of $A_0$ (on the run $\rho_0$) upon reading the word $w$ was $N$, it follows that the counter value of $A_0$ (on the run $\rho_0'$) upon reading the word $w'$ is $N+1$. This completes the induction step and therefore also concludes the proof of the lemma.
\end{proof}

The proof of these two lemmas also finishes the first stage of the reduction. 
We end this stage with an observation on the size of each $A_i$ as well as the time taken to construct it.
Its proof is immediate from the definition of each $A_i$.

\begin{proposition}\label{prop:size-ai-general}
	The number of states of each $A_i$ is $O(nk)$ and each $A_i$ can be constructed
	in time $O(nk^2)$.
\end{proposition}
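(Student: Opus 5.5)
The plan is to read both bounds straight off the definition of $A_i$ by counting its states and its transitions. I will treat $k$ as a parameter rather than a constant, so that the dependence on $k$ is visible. The common input alphabet of the first stage is $\{\#_0,\dots,\#_{k-1},@_0,\dots,@_{k-1}\}$, which has $2k$ letters.

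\emph{State count.} The states of $A_i$ are the active states $(s,\ell)^i$ for $s\in\{0,\dots,n-1\}$ and $\ell\in\{0,\dots,k-1\}$, together with the done states $(s,done)^i$ for $s\in\{0,\dots,n-1\}$. This gives exactly $nk+n=n(k+1)=O(nk)$ states. For $A_0$ the counter is the single extra stack symbol next to $Z_0$, so no further control states are needed; this bound holds uniformly for all $i$.

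\emph{Transition count and construction time.} As noted right after the construction, each machine has at most one outgoing transition for every pair of a state $q$ and a letter $a$. Hence $A_i$ has at most $n(k+1)\cdot 2k=O(nk^2)$ transitions.

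To construct $A_i$, I iterate over all states and all $2k$ letters. For each pair I decide in $O(1)$ time, by the case distinction in the definition of $A_i$, whether a transition exists and what its target is. The conditions involved are comparisons such as $s<n-1$, $s=n-1$, $s=0$, $x<i$, $x=i$, $x>i$, $\ell>i$ and $\ell=i$, each checkable in constant time. For $A_0$, I additionally attach the counter increment to the $\#$-transitions, which is also $O(1)$ per transition. Writing each transition down takes constant time, so $A_i$ is constructed in $O(nk^2)$ time overall. If we also add the sink state mentioned in the earlier remark to make $A_i$ genuinely deterministic, this adds one state and $O(nk\cdot k)$ transitions, and both bounds still hold.

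There is no real obstacle here. The only point that needs care is to count the done-state self-loops, on letters $\#_0,\dots,\#_{i-1}$ and $@_0,\dots,@_{i-1}$, which number up to $2i\le 2k$ per state. These stay within the $O(nk^2)$ budget because they are already included in the one-transition-per-(state, letter) bound.
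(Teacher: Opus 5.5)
Your proof is correct and takes the same approach as the paper, which simply declares the bounds immediate from the definition of $A_i$. Your explicit count of $n(k+1)$ states, together with at most one transition per state for each of the $2k$ letters (so $O(nk^2)$ transitions, each written in $O(1)$ time), is exactly the verification the paper leaves implicit.
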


\subsubsection*{Second stage: Machines for finding $2k$-cliques}
In this stage, we will construct machines which will help us find $2k$-cliques. 
More precisely, we will construct a DOCA $B_0$ and $k-1$ DFAs $B_1,\dots,B_{k-1}$
which are initially allowed to begin at some collection of states $S$.
The first task of the machines is to verify that this collection $S$ encodes a $k$-clique. From then on,
the machines will collectively ``hop'' from one $k$-clique to another, such that whenever they hop from one $k$-clique $S$
to another $k$-clique $S'$, they will ensure that $S \cup S'$ is by itself a $2k$-clique.

We now look at the formal aspects. The common set of input letters for these machines will be as follows:
For each $k$-clique $S$ of the given graph $G$, there will be a letter which we will also denote by $S$.
Further each machine $B_i$ will have the following set 
of states: For each $s \in \{0,\dots,n-1\}$ and $p \in \{check,\alpha,\beta,\gamma,\zeta\}$,
$B_i$ will have a state $(s,p)^{i}$. As before, the part $s$ will be called the score of the state
and the part $p$ will be called the phase of the state.

We now describe the transitions of each machine $B_i$. 
From a state $(s,check)^{i}$, if $B_i$ reads some letter $S = (v_0,v_1\dots,v_{k-1})$ 
corresponding to some $k$-clique then it moves to $(s,\alpha)^{i}$ if and only if $s = v_i$. 
Further for any state $(s,p)^{i}$ with $p \in \{\alpha,\beta,\gamma\}$, 
if $B_i$ reads some letter $S = (v_0,v_1,\dots,v_{k-1})$ corresponding to some $k$-clique 
then it moves to $(v_i,p')^i$ if and only if $s$ is adjacent to all of the nodes in $S$ (and so $(s,S)$ is a $(k+1)$-clique) and
$p'$ is $\beta$ or $\gamma$ or $\zeta$ depending on whether $p$ is $\alpha$ or $\beta$ or $\gamma$ respectively.

The following proposition immediately follows from the construction of each machine and from the fact
that there are no self-loops in the given graph $G$.

\begin{proposition}
	For any $i$ and any $s, s' \in \{0,\dots,n-1\}$, 
	the machine $B_i$ has a run between the states $(s,check)^i$ and 
	$(s',\zeta)^i$ over a word $w$ if and only if $w = S_1 S_2 S_3 S_1'$
	for some $k$-cliques $S_1, S_2, S_3, S_1'$ such that $s$ is the $i^{th}$ node of $S_1$, $s'$ is the
	$i^{th}$ node of $S_1'$ and 
	the $i^{th}$ node of $S_1, S_2, S_3$ is adjacent to every node in $S_2, S_3, S_4$ respectively.
\end{proposition}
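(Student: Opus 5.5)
The plan is to read off the structure of $B_i$ directly from its transition relation, using the phase as a progress measure. The phase component of any state changes deterministically along every transition. From $check$ the only transitions go to $\alpha$; from $\alpha$ to $\beta$; from $\beta$ to $\gamma$; and from $\gamma$ to $\zeta$. There are no outgoing transitions from $\zeta$ and no transitions that preserve the phase. Hence every transition strictly advances the phase along the chain $check \to \alpha \to \beta \to \gamma \to \zeta$, and every letter is a $k$-clique letter. So any run from a $check$ state to a $\zeta$ state reads exactly four letters, $w = S_1 S_2 S_3 S_1'$. (Here $S_4$ in the statement is to be read as $S_1'$.) I will argue both directions simultaneously by unfolding this four-step run.

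For the ``only if'' direction I write the run as
\[
(s,check)^i \xrightarrow{S_1} (t_1,\alpha)^i \xrightarrow{S_2} (t_2,\beta)^i \xrightarrow{S_3} (t_3,\gamma)^i \xrightarrow{S_1'} (s',\zeta)^i .
\]
The first transition exists only if $s$ equals the $i^{th}$ node $(S_1)_i$ of $S_1$. It keeps the score, so $t_1 = s = (S_1)_i$. Each later transition on a letter $S$ from a state with score $t$ exists only if $t$ is adjacent to every node of $S$, and its target score is $S_i$. This gives $t_2=(S_2)_i$ and $t_3=(S_3)_i$, with $s'=(S_1')_i$. The adjacency conditions read: $(S_1)_i$ is adjacent to all of $S_2$, $(S_2)_i$ to all of $S_3$, and $(S_3)_i$ to all of $S_1'$, which is exactly the claim.

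For the ``if'' direction I reverse this: given such cliques, each of the four transitions above is present by construction, so the run exists. I will also note where the absence of self-loops in $G$ enters. It is not needed for the equivalence itself. It is needed for its later use, where adjacency of $(S_1)_i$ to all of $S_2$ forces $(S_1)_i \notin S_2$, so the cliques found are disjoint.

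The only delicate point is bookkeeping the score update. The $check$ transition preserves the score, while the $\alpha,\beta,\gamma$ transitions overwrite it with $S_i$. This must be tracked correctly so that the adjacency conditions are attributed to the right clique at each step. Everything else is immediate from the definition of $B_i$.
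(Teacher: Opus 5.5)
Your proof is correct and is the same argument the paper leaves implicit (it says the proposition is immediate from the construction): the phase advances $check\to\alpha\to\beta\to\gamma\to\zeta$ by one step per letter, which forces $w=S_1S_2S_3S_1'$. Tracking that the $check$ step keeps the score while the later steps overwrite it with the $i^{th}$ node then gives exactly the stated conditions, with $S_4$ read as $S_1'$. Your side remark is also right: the absence of self-loops, which the paper cites here, is not needed for this equivalence. It is only needed for the disjointness that makes $S_1\cup S_2$, $S_2\cup S_3$ and $S_3\cup S_1'$ genuine $2k$-cliques in the Clique Finding Lemma.
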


As mentioned before, this gadget simply ``hops'' from one $k$-clique to another.
This intuition is made concrete by the following lemma, which easily follows from the
above proposition.

\begin{lemma}[Clique Finding Lemma]\label{lem:clique-finding}
	For any word $w$ and any $s_0,\dots,s_{k-1},s'_0,\dots,s'_{k-1} \in \{0,\dots,n-1\}$,
	each machine $B_i$ has a run between the states $(s_i,check)^i$ and 
	$(s'_i,\zeta)^i$ over $w$ if and only if $w = S_1 S_2 S_3 S_1'$, 
	for some $k$-cliques $S_1, S_2, S_3, S_1'$ such that $S_1 = (s_0,\dots,s_{k-1}), S_1' = (s_0',\dots,s_{k-1}')$ and $S_1 \cup S_2, S_2 \cup S_3, S_3 \cup S_1'$ are all
	$2k$-cliques.
\end{lemma}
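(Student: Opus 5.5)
The plan is to derive the lemma directly from the preceding proposition, applied to each machine $B_i$ separately, and then to combine the $k$ per-machine conditions over the common word $w$. The key observation is that all machines read the same word $w$. The proposition forces $w$ to have length four, made of $k$-clique letters. So the four letters $S_1,S_2,S_3,S_1'$ are the same for every $i$, and only the adjacency conditions vary with $i$.

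\emph{Only if.} Suppose every $B_i$ has a run from $(s_i,check)^i$ to $(s'_i,\zeta)^i$ over $w$. Applying the proposition to $B_0$ gives $w=S_1S_2S_3S_1'$ for some $k$-cliques. Applying it to each $B_i$ for this same factorization gives three facts:
\begin{itemize}
\item $s_i$ is the $i$-th node of $S_1$ and $s'_i$ is the $i$-th node of $S_1'$, for every $i$; hence $S_1=(s_0,\dots,s_{k-1})$ and $S_1'=(s'_0,\dots,s'_{k-1})$.
\item The $i$-th node of $S_1$ is adjacent to every node of $S_2$, and similarly for $S_2,S_3$ and $S_3,S_1'$. (I read the proposition's ``$S_4$'' as a typo for $S_1'$.)
\item Ranging over all $i$, every node of $S_1$ is adjacent to every node of $S_2$.
\end{itemize}
Since $G$ has no self-loops, a node cannot be adjacent to itself. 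Hence $S_1\cap S_2=\emptyset$. Combined with $S_1$ and $S_2$ each being a $k$-clique, this makes $S_1\cup S_2$ a set of $2k$ distinct pairwise adjacent nodes, that is, a $2k$-clique. The same argument applies to $S_2\cup S_3$ and to $S_3\cup S_1'$.

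\emph{If.} Conversely, suppose $w=S_1S_2S_3S_1'$ with the stated properties. Each union being a $2k$-clique means that for every $i$ the $i$-th node of $S_1$ (resp.\ $S_2$, $S_3$) is adjacent to all nodes of $S_2$ (resp.\ $S_3$, $S_1'$). Also, $s_i$ and $s'_i$ are the $i$-th coordinates of $S_1$ and $S_1'$. So the proposition yields the required run of each $B_i$.

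The only delicate point is disjointness, which is needed for a genuine $2k$-clique. It comes from the absence of self-loops: each node of $S_1$ lies in some coordinate $i$ and is adjacent to all of $S_2$. Apart from this, the argument is routine bookkeeping, together with the remark that a single word fixes the letters for all machines simultaneously.
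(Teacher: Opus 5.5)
Your proposal is correct and follows the paper's route. The paper derives this lemma directly from the preceding proposition: it is applied to each $B_i$ over the common four-letter word, the per-coordinate adjacency conditions are combined over all $i$, and the absence of self-loops gives $S_1\cap S_2=\emptyset$, so each union is a genuine $2k$-clique (a point the paper makes in its proof idea). Your reading of $S_4$ as a typo for $S_1'$ is also right.
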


This completes the second stage of the reduction. Similar to the first stage, we end this stage with a remark on the size of each $B_i$.

\begin{proposition}\label{prop:size-bi-general}
	The number of states of each $B_i$ is $O(n)$ and each $B_i$ can be constructed
	in time $O(mn^{k-1})$, where $m$ is the number of edges of the graph $G$.
\end{proposition}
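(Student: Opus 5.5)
The plan is to bound the state count directly from the definition and then count the transitions that have to be written down. Each $B_i$ has states $(s,p)^i$ for $s \in \{0,\dots,n-1\}$ and $p \in \{check,\alpha,\beta,\gamma,\zeta\}$. That gives exactly $5n = O(n)$ states.

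For the construction time, the first step is to enumerate the alphabet, i.e.\ all $k$-cliques of $G$, listed as ordered tuples $(v_0,\dots,v_{k-1})$. I would bound the number of $k$-cliques by $O(mn^{k-2})$: choose an (ordered) edge for $v_0,v_1$ in $O(m)$ ways and the remaining $k-2$ vertices in $O(n^{k-2})$ ways. This enumeration generates every candidate tuple, and each candidate is checked for being a clique in $O(k^2)=O(1)$ time using an adjacency matrix. The adjacency matrix itself is built in $O(n^2)$ time, which is within budget since $m n^{k-1}\ge n^2$ whenever $k\ge 2$ and $m\ge n$. (Isolated vertices can be discarded first; if $m<n$ after that, the bounds still hold with a trivial adjustment.) The case $k=1$ is immediate, since then the alphabet consists of the single vertices. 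So the alphabet has size $O(mn^{k-2})$ and is listed within that time.

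Next I count the transitions. For each letter $S$, the $check$ phase contributes exactly one transition, from $(v_i,check)^i$ to $(v_i,\alpha)^i$, costing $O(1)$ per letter. For each phase $p\in\{\alpha,\beta,\gamma\}$ and each score $s$, we add a transition on $S$ to $(v_i,p')^i$ iff $s$ is adjacent to every node of $S$. Testing this takes $O(k)=O(1)$ time with the adjacency matrix, so each letter costs $O(n)$, and the total is $O(n\cdot mn^{k-2}) = O(mn^{k-1})$.

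The step to get right is the clique-count bound. A naive enumeration of all $n^k$ tuples would give $O(n^{k+1})$ total, which is too large. Enumerating via an edge first is what keeps the cost at $O(mn^{k-1})$ overall.
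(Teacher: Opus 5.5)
Your proof is correct, but you distribute the count differently from the paper. The paper never enumerates the alphabet. It notes that $B_i$ has a transition from $(s,p)^i$ on $S=(v_0,\dots,v_{k-1})$ only if $s=v_i$ or $(s,v_i)\in E$. It anchors on the pair (stored score, $i$th clique vertex) and leaves the other $k-1$ coordinates free, which gives $O((n+m)n^{k-1})$ transitions. It then simply asserts that $B_i$ is built in time linear in its size.

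You instead anchor on an edge inside the clique, $(v_0,v_1)$. This bounds the number of letters by $O(mn^{k-2})$, and you then charge $O(n)$ work per letter.

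Your route gives two things the paper does not:
\begin{itemize}
\item an explicit construction procedure, including how to list the $k$-cliques, which the paper leaves implicit;
\item the sharper alphabet bound $O(mn^{k-2})$, whereas the paper later uses only $O(n^k)$.
\end{itemize}

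The paper's route is uniform in $k$: the edge $(s,v_i)$ exists for every $k\ge 1$, whereas your clique-count bound needs $k\ge 2$. For $k=1$ your generic procedure (adjacency matrix plus $O(n)$ work per letter) costs $\Theta(n^2)$ rather than the claimed $O(m)$. To make ``immediate'' true there, iterate over adjacency lists instead. For each $s$ and each neighbour $v$ of $s$, add the transitions on letter $v$; this takes $O(n+m)$ time and is exactly the paper's counting.
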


\begin{proof}
	From the definition of each $B_i$, it is immediate that it has at most $O(n)$ states.
	It is also clear from the definition that each $B_i$ can be constructed in time that is linear in its number of states and transitions. So it suffices to count the number of transitions of each $B_i$.
	To this end, note that $B_i$ has a transition from a state $(s,p)^i$ upon reading a letter $S = (v_0,v_1,\dots,v_{k-1})$ only if either $s = v_i$ or $(s,v_i)$ is an edge in $G$. It follows then
	that the number of transitions of each $B_i$ is at most $O(mn^{k-1})$, which completes the proof.
\end{proof}

\subsubsection*{Third stage: Machines for retrieving a $k$-tuple by decrementing the counter}
In this stage, we will introduce machines which will allow us to retrieve a $k$-tuple
which is stored in the counter of the DOCA and check that that tuple is the same as the one that is
currently stored in the states of the machines.
These gadgets will simply be the ``reverse'' of the machines introduced in the first stage.

Formally, we first construct a OCA $C_0$ as follows: 
Let $A_0$ be the DOCA introduced in the first stage. $C_0$ is exactly like $A_0$, except its transitions are reversed. More precisely, $C_0$ has (a copy of) the states of $A_0$, i.e., for every state
of the form $(s,p)^0$ that $A_0$ has, $C_0$ has a state $(\overline{s,p})^0$. Furthermore,
if $((s,p)^0,a,u,(t,q)^0)$ is a transition in $A_0$ 
for some letter $a$ and some $u \in \{-1,0,+1\}$, then $C_0$ has a corresponding ``reverse'' transition
given by $((\overline{t,q})^0,a,-u,(\overline{s,p})^0)$

Similarly, we can now construct a finite-automaton $C_i$ for every $i > 0$. For every state of the form $(s,p)^i$ that $A_i$ has, $C_i$ has a state $(\overline{s,p})^i$. Furthermore,
if $((s,p)^i,a,(t,q)^i)$ is a transition in $A_i$ 
for some letter $a$, then $C_i$ has a corresponding ``reverse'' transition
given by $((\overline{t,q})^i,a,(\overline{s,p})^i)$.

We note that each machine $C_i$ is deterministic. This is because in each machine $A_i$, for each state $(s,p)^{i}$ and each letter $a$, we had at most one \emph{incoming transition} to the state $(s,p)^{+i}$ upon reading $a$. Hence, the machine $C_0$ is a DOCA and the machines $C_1,\dots,C_{k-1}$ are all DFAs.

Note that any run in any machine $C_i$ is the reverse of some run in the machine $A_i$ and vice versa.
Hence, we can define notions of reverse-perfect and reverse-active collections of runs in $C_i$,
as a collection of runs obtained by reversing perfect and active collections of runs in $A_i$.
This observation combined with the Incrementing Counter lemmas (Lemma~\ref{lem:inc-soundness} and Lemma~\ref{lem:inc-completeness}) that we proved
in the first stage immediately implies the following two lemmas.

\begin{lemma}[Decrementing Counter Soundness Lemma]\label{lem:dec-soundness}
	Let $\rho = (\rho_0,\dots,\rho_{k-1})$ be a collection of reverse-perfect runs of the machines 
	$C_0,\dots,C_{k-1}$ along some word $w$ such that $\rho_0$ starts
	at some state $(\overline{s_0,done})^0$ with some counter value $N$ and every other $\rho_i$
	starts at some state $(\overline{s_i,done})^i$. Then $N = \sum_{0 \le i \le k-1} n^i s_i$.
\end{lemma}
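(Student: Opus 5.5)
We reduce the statement to the Incrementing Counter Soundness Lemma (Lemma~\ref{lem:inc-soundness}) by reversing runs. A reverse-perfect collection $\rho=(\rho_0,\dots,\rho_{k-1})$ along $w$ is, by definition, obtained by reversing a perfect collection $\sigma=(\sigma_0,\dots,\sigma_{k-1})$ of runs of $A_0,\dots,A_{k-1}$ along the reversed word $w^R$. Concretely, each $\sigma_i$ starts at the initial state $(0,k-1)^i$ and ends at the done state $(s_i,\mathit{done})^i$, where $(\overline{s_i,\mathit{done}})^i$ is the starting state of $\rho_i$. We must be careful that the collection is formed along a \emph{common} word. This holds because reversing every run in $\rho$ yields runs along the same word $w^R$ for all $i$: each transition $((\overline{t,q})^i,a,(\overline{s,p})^i)$ of $C_i$ corresponds letter-for-letter to the transition $((s,p)^i,a,(t,q)^i)$ of $A_i$.

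Next we track the counter. The counter effect of each $C_0$-transition is the negation of that of the corresponding $A_0$-transition. So if $\rho_0$ starts with counter value $N$ and ends with counter value $N'$, then $\sigma_0$ starts with value $N'$ and ends with value $N$. In $A_0$ the net change along $\sigma_0$ is $N-N'$, which must equal the value $A_0$ reaches when started from $0$. Here we use the fact that the $A_0$ transitions have no zero tests (no $Z_0$-dependent moves), so runs are translation-invariant in the counter.

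The intended meaning of reverse-perfect is that $\rho_0$ ends at the reversed initial state $(\overline{0,k-1})^0$ with counter $0$. This is exactly the condition that the reversed run be a genuine run of $A_0$ from its initial configuration; it is built into "obtained by reversing a perfect collection", since perfect runs start at the initial configuration. We will state this explicitly, since it is the one point needing care. With $N'=0$, $\sigma_0$ is a run of $A_0$ from the initial configuration whose counter ends at $N$. Lemma~\ref{lem:inc-soundness}, applied to $\sigma$ along $w^R$, says the final counter value equals $\sum_i n^i s_i$, where the $s_i$ are the final scores. Those final states are $(s_i,\mathit{done})^i$, giving $N=\sum_{0\le i\le k-1} n^i s_i$.

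The only obstacle is bookkeeping: confirming that the reversal preserves the common input word and exactly negates the counter updates, and that no zero tests occur. Both follow directly from the definition of $C_i$.
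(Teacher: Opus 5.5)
Your proposal is correct and takes the same route as the paper. The paper derives this lemma in one line by viewing runs of $C_i$ as reversals of runs of $A_i$ and invoking the Incrementing Counter Soundness Lemma (Lemma~\ref{lem:inc-soundness}). Your additional bookkeeping is exactly what that one-line justification leaves implicit: the reversed runs are along $w^R$, the counter updates are negated, and reverse-perfectness includes ending at $(\overline{0,k-1})^0$ with counter $0$. The statement omits that last condition, but the proof of Lemma~\ref{lem:clique-to-intersection} relies on it.
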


\begin{lemma}[Decrementing Counter Completeness Lemma]\label{lem:dec-completeness}
	Let $N = \sum_{0 \le i \le k-1} n^i s_i$ with $0 \le s_i \le n-1$ for each $i$. 
	Then, there is a word $w$ satisfying the following property: There
	is a collection of reverse-perfect runs $\rho = (\rho_0,\dots,\rho_{k-1})$ for the machines 
	$C_0,\dots,C_{k-1}$ along the word $w$, such that $\rho_0$ starts
	at $(\overline{s_0,done})^0$ with counter value $N$ and every other $\rho_i$
	starts at $(\overline{s_i,done})^i$. 
\end{lemma}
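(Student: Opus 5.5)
Given the $N$ and $s_0,\dots,s_{k-1}$ of the statement, we first apply the Incrementing Counter Completeness Lemma (Lemma~\ref{lem:inc-completeness}). It yields a word $u = a_1 a_2 \cdots a_m$ and a perfect collection of runs $\rho' = (\rho'_0,\dots,\rho'_{k-1})$ of $A_0,\dots,A_{k-1}$ along $u$. In this collection each $A_i$ starts at $(0,k-1)^i$ and ends at $(s_i,done)^i$, and the counter of $A_0$ goes from $0$ to $N$. We then set $w := a_m a_{m-1}\cdots a_1$, the reverse of $u$, and reverse each run $\rho'_i$.

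Next we check that the reversed runs are genuine runs of the $C_i$. By construction, every transition $((s,p)^i,a,u',(t,q)^i)$ of $A_i$ gives a transition $((\overline{t,q})^i,a,-u',(\overline{s,p})^i)$ of $C_i$. So if $\rho'_i$ passes through states $q_0,q_1,\dots,q_m$ while reading $a_1,\dots,a_m$, then $C_i$ can pass through $\overline{q_m},\overline{q_{m-1}},\dots,\overline{q_0}$ while reading $a_m,\dots,a_1$. This means $\rho_i$ starts at $(\overline{s_i,done})^i$ and ends at $(\overline{0,k-1})^i$, and by definition the collection is reverse-perfect.

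The only point needing care is the counter of $C_0$, since a one-counter run must keep the counter nonnegative. Suppose the counter along $\rho'_0$ takes values $c_0=0,c_1,\dots,c_m=N$. Along the reversed run, started with value $N$, every update is negated, so after reversing $j$ letters the value is $N - \sum_{t>m-j}(c_t-c_{t-1}) = c_{m-j}$. All these values are nonnegative because they already occur in $\rho'_0$. Hence $\rho_0$ is a valid run of $C_0$ that starts with counter value $N$ (and ends at $0$). I would also remark that the zero tests via $Z_0$ reverse consistently: in $A_0$ no transition tests for zero at all, since all its updates are $+1$ or $0$, so nothing extra has to be checked. This counter bookkeeping is the main (mild) obstacle. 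Everything else is a direct transfer of Lemma~\ref{lem:inc-completeness} through the reversal correspondence.
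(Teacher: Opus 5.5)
Your proposal is correct and takes the same approach as the paper. The paper derives this lemma at once by reversing the perfect runs supplied by the Incrementing Counter Completeness Lemma, which is exactly your construction. Your explicit check that the reversed counter values are $c_m, c_{m-1}, \ldots, c_0$ (so they stay nonnegative and end at $0$) fills in a detail the paper leaves implicit.
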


This finishes the third stage of the reduction. 

\subsubsection*{Putting the three stages together} Now we put the three stages together and
complete the reduction as follows: For each $i$, we have constructed three different machines $A_i, B_i, C_i$. Let us now combine
them together into one machine $\mach_i$ in the following manner: $\mach_i$ will have
all the states, letters and transitions of $A_i, B_i$ and $C_i$. In addition, it will have
two fresh letters $!,?$ and the following transitions:
\begin{itemize}
	\item From each state of $A_i$ whose phase is done, i.e., each state of the form
	$(s,done)^{i}$, upon reading the letter $!$, $A_i$ moves to the state $(s,check)^i$ of $B_i$.
	\item From each state of $B_i$ whose phase is $\zeta$, i.e., each state of the form
	$(s,\zeta)^{i}$, upon reading the letter $?$, $B_i$ moves to the state $(\overline{s,done})^{i}$ of $C_i$.
\end{itemize}

Let the initial state of each $\mach_i$ be $(0,k-1)^{i}$ and let the final state of each
$\mach_i$ be $(\overline{0,k-1})^{i}$. We now have the following lemma, which is a result of the lemmas that we proved in the previous stages.

\begin{lemma}\label{lem:clique-to-intersection}
	There is a word $w$ such that $w$ is accepted by each $\mach_i$ if and only if
	there is a $3k$-clique in the graph $G$.
\end{lemma}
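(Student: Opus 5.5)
We prove the two directions separately, using the structure of $\mach_i$ imposed by the two connecting letters $!$ and $?$. The first step is a structural decomposition of any accepted word. The states of $A_i$, $B_i$, $C_i$ are disjoint. The only transitions leaving the $A_i$-part go on $!$, and only from done states. The only transitions leaving the $B_i$-part go on $?$, and only from $\zeta$-phase states. There are no transitions back from $B_i$ to $A_i$ or from $C_i$ to $B_i$, and the final state lies in $C_i$. Hence every accepting run of $\mach_i$ on a word $w$ factors as $w = u\,!\,v\,?\,x$, where $u$ contains neither $!$ nor $?$ and similarly for $v,x$.

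On this factorization the run is as follows: $A_i$ runs on $u$ from $(0,k-1)^i$ to some done state $(s_i,done)^i$; $B_i$ runs on $v$ from $(s_i,check)^i$ to some $(s'_i,\zeta)^i$; and $C_i$ runs on $x$ from $(\overline{s'_i,done})^i$ to $(\overline{0,k-1})^i$. The factorization is the same for all $i$, since it is determined by the positions of $!$ and $?$ in $w$. For $\mach_0$ we also track the counter. It is untouched during $v$, because $B_0$ performs no counter operations, and it must be $0$ at the end.

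For the ``if'' direction, take a $3k$-clique and split it as $C_1,C_2,C_3$ with $k$ nodes each, ordered as tuples. Let $N=\sum_i n^i s_i$ be the encoding of $C_1 = (s_0,\dots,s_{k-1})$. Take $u$ from Lemma~\ref{lem:inc-completeness} for $N$, take $v = C_1C_2C_3C_1$, which is accepted by Lemma~\ref{lem:clique-finding} with $s'=s$, and take $x$ from Lemma~\ref{lem:dec-completeness} for the same $N$. We need the reverse-perfect runs of the $C_i$ to end at $(\overline{0,k-1})^i$ with counter $0$. This holds because such a run is the reverse of a perfect run of the $A_i$, which starts at $(0,k-1)^i$ with counter $0$. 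Then $w=u\,!\,v\,?\,x$ is accepted by every $\mach_i$.

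For the ``only if'' direction, take an accepted word and decompose it as above. By Lemma~\ref{lem:inc-soundness}, the runs on $u$ are perfect (all machines end in done states) and the counter after $u$ equals $\sum n^i s_i$. By Lemma~\ref{lem:clique-finding}, $v = S_1S_2S_3S_1'$ with $S_1 = (s_i)_i$ and $S_1'=(s'_i)_i$, and with $S_1\cup S_2$, $S_2\cup S_3$, $S_3\cup S_1'$ all $2k$-cliques. The runs on $x$ reversed are perfect runs of the $A_i$ ending in $(s'_i,done)^i$, so they are reverse-perfect. Lemma~\ref{lem:dec-soundness} then says the counter at the start of $x$ equals $\sum n^i s'_i$. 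The counter is unchanged across $v$, so $\sum n^i s_i=\sum n^i s'_i$. Uniqueness of base-$n$ representations then gives $S_1=S_1'$, so $S_1\cup S_2\cup S_3$ is a $3k$-clique. The three sets are disjoint because $G$ has no self-loops.

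The main obstacle is checking that the soundness lemmas apply to the segments: the $x$-runs must really be reverse-perfect, and the counter in $C_0$ must stay nonnegative consistently. I would handle this by reversing the $C$-runs and viewing them as $A$-runs from $(0,k-1)$ with initial counter $0$. Lemma~\ref{lem:inc-soundness} then gives the counter value at the done states, which is exactly the decrement needed to reach $0$.
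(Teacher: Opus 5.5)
Your proposal is correct and follows the paper's route. You factor an accepted word as $w_1\,!\,w_2\,?\,w_3$, prove the ``if'' direction with the incrementing-counter, clique-finding and decrementing-counter completeness lemmas, and prove the ``only if'' direction with the corresponding soundness lemmas plus uniqueness of base-$n$ representations. Your reversal arguments, showing that the $C_i$-runs end at $(\overline{0,k-1})^i$ with counter $0$ and are reverse-perfect, only make explicit what the paper leaves implicit.
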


\begin{proof}
	Suppose there is a $3k$-clique $S$ in the graph $G$. Hence, there are three $k$-cliques
	$S_1, S_2, S_3$ such that $S_1 \cup S_2, S_2 \cup S_3, S_3 \cup S_1$ are each $2k$-cliques.
	Let $S_1 = (s_0,\dots,s_{k-1})$. 
	
	By the Incrementing Counter Completeness lemma (Lemma~\ref{lem:inc-completeness}), there is a word $w_1$ such that each machine $\mach_i$, starting from $(0,k-1)^{i}$ 
	can read the word $w_1$ and reach the state $(s_i,done)^{i}$. 
	Furthermore, at the end of reading $w_1$, the 
	counter value of $\mach_0$ will be $N = \sum_{0 \le i \le k-1} n^i s_i$. 
	
	Afterwards by reading $!$, each machine $\mach_i$ will move from $(s_i,done)^{i}$
	to $(s_i,check)^i$.	After that, by the Clique Finding lemma (Lemma~\ref{lem:clique-finding}), upon reading the word $w_2 = S_1S_2S_3S_1$, each $\mach_i$ will move to $(s_i,\zeta)^i$. Then, by reading $?$, each
	$\mach_i$ will move to $(\overline{s_i,done})^{i}$.
	Finally, by the Decrementing Counter Completeness lemma (Lemma~\ref{lem:dec-completeness}), there is a word $w_3$ such that,
	after reading $w_3$,
	each $\mach_i$ will move to $(\overline{0,k-1})^{i}$, which is the final state of $\mach_i$. 
	Moreover, at the end of reading $w_3$, the counter value of $\mach_0$ will be 0.
	Hence, the word $w := w_1!w_2?w_3$ is accepted by each $\mach_i$. 
	
	Now, suppose there is a word $w$ that is accepted by each $\mach_i$. 
	By construction of $\mach_i$, it follows that $w$ has to be of the form $w_1!w_2?w_3$ for some $w_1, w_2, w_3$.
	Furthermore, for each $i$, the words $w_1, w_2$ and $w_3$ are read entirely in the parts of $\mach_i$ corresponding 
	to $A_i, B_i$ and $C_i$ respectively. Now, we note the following.

	\begin{itemize}
		\item For each $i$, the letter $!$ can only be read from a done state of the machine $A_i$.
		This means that the collection of runs of $A_0,\dots,A_{k-1}$ on the word $w_1$
		must be perfect. Let $(s_i,done)^{i}$ be the state visited
		by $A_i$ after reading $w_1$. By the Incrementing Counter Soundness lemma (Lemma~\ref{lem:inc-soundness}), 
		the value of the counter of $A_0$ at the end of reading $w_1$ is $N = \sum_{0 \le i \le k-1} n^i s_i$.
		
		\item For each $i$, reading the letter $!$ from $(s_i,done)^{i}$ leads to
		the state $(s_i,check)^i$ of $B_i$. Also, the letter $?$ could be read only from
		states whose phase is $\zeta$.
		By the Clique Finding lemma (Lemma~\ref{lem:clique-finding}), it follows that 
		$w_2$ must be of the form $S_1S_2S_3S_1'$ for some $k$-cliques
		$S_1, S_2, S_3, S_1'$ such that the $i^{th}$ node in $S_1$ is $s_i$, $S_1 \cup S_2, S_2 \cup S_3, S_3 \cup S_1'$ are all $2k$-cliques. Furthermore, each $B_i$, upon reading $w_2$ 
		has a run from $(s_i,check)^i$
		to $(s_i',\zeta)^i$ where $s_i'$ is the $i^{th}$ node in $S_1'$.
		\item For each $i$, reading the letter $?$ from $(s_i',\zeta)^i$ leads to the state
		$(\overline{s_i',done})^{i}$ of $C_i$. Note that since $B_0$ does not modify the counter value,
		the counter value after reading $?$ is still $N$.
		
		By assumption, $w$ is accepted by each $\mach_i$ and so this means
		that each $C_i$, starting at $(\overline{s_i',done})^{i}$, upon reading $w_3$, has a run which ends
		at $(\overline{0,k-1})^{i}$. Moreover, the value of the counter of $C_0$ at the end of reading $w_3$ must be 0. By the Decrementing Counter Soundness lemma (Lemma~\ref{lem:dec-soundness}),
		it follows that $N = \sum_{0 \le i \le k-1} n^i s_i'$. Since $N = \sum_{0 \le i \le k-1} n^i s_i$,
		it follows that $s_i = s_i'$ for each $i$ and so the clique $S_1'$ is actually $S_1$.
	\end{itemize}
	
	This then implies that $S_1 \cup S_2 \cup S_3$ is a $3k$-clique and so we are done.
\end{proof}

\subsubsection*{Running time of the reduction} Let us now analyse the running time taken by the reduction.
By Proposition~\ref{prop:size-ai-general}, it follows that each machine $A_i$ can be constructed in  $O(nk^2)$ time where $n$ is the number of nodes of $G$. Similarly each $C_i$ can also be constructed in $O(nk^2)$ time. Furthermore, by Proposition~\ref{prop:size-bi-general}, each $B_i$ can be constructed in time $O(mn^{k-1})$ where $m$ is the number of edges of $G$.
Since $k$ is a constant, by definition of $\mach_i$, it then follows that $\mach_i$ can be constructed in time $O(mn^{k-1})$. By Propositions~\ref{prop:size-ai-general} and~\ref{prop:size-bi-general}, 
it follows that the number of states of each $\mach_i$ is bounded by $O(n)$ and the alphabet size of each $\mach_i$ is bounded by $O(n^{k})$. 

Now, suppose the \mainpdanfa non-emptiness problem (even when the given machines are only a DOCA and $(k-1)$ DFAs) can be solved in time $O((M^{(\omega-1)k} |\Sigma| + M^{\omega k})^{1-\epsilon})$,
where $M$ is the maximum number of states among all the given machines and $\epsilon$ is any number strictly bigger than 0. Then, we can solve the $3k$-Clique problem in $O(n^{\omega k (1-\epsilon)})$ time
as follows: Given a graph $G$, first construct the machines $\mach_0,\dots,\mach_{k-1}$.
Then, run the algorithm for the  \mainpdanfa non-emptiness problem on $\mach_0,\dots,\mach_{k-1}$ and return the answer of this algorithm.
By Lemma~\ref{lem:clique-to-intersection}, this is a correct algorithm for deciding the $3k$-Clique problem.
Furthermore, its running time is $O(mn^{k-1} + n^{\omega k (1-\epsilon)}) = O(n^{\omega k (1-\epsilon)})$, which contradicts the 
$3k$-Clique hypothesis. The same argument proves a similar claim for combinatorial algorithms for the \mainpdanfa problem, thereby allowing us to conclude Theorem~\ref{thm:general-case-lower-bound}.

	\section{\mainpdanfa Non-Emptiness - The Case of Constant Alphabets}
\label{sec:lb-constant}

In the previous section, we gave conditional lower bounds for the \mainpdanfa non-emptiness problem in its full generality. This was accomplished by giving a reduction from the $3k$-Clique problem to the \mainpdanfa non-emptiness problem. Inspecting that reduction, we observe that it produces instances whose alphabet depends upon the given input graph $G$.

In this section, we turn our attention to instances of the \mainpdanfa non-emptiness problem where the input alphabet of the underlying machines is fixed. More precisely, we fix an alphabet $\Sigma$ in advance and only consider instances of the \mainpdanfa non-emptiness problem over this fixed alphabet $\Sigma$. 
As a consequence of Theorem~\ref{thm:general-case-upper-bound}, we get the following upper bound for this case. 

\begin{corollary}
	The \mainpdanfa non-emptiness problem over a fixed input alphabet with $n$ the maximum number of states among all the given machines can be solved in time
	\begin{itemize}
		\item $n^{\omega k}$ if the given PDA is an OCA
		\item $n^{3k}$ if the given PDA is not an OCA or if only combinatorial algorithms are allowed.
	\end{itemize}
\end{corollary}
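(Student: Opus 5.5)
This is a direct specialization of Theorem~\ref{thm:general-case-upper-bound}, so I will rerun its algorithm with $|\Sigma|$ treated as a constant.

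First, I will instantiate the two bounds of Theorem~\ref{thm:general-case-upper-bound} for a fixed alphabet. If the PDA is an OCA, the bound is $O(n^{2k}|\Sigma| + n^{\omega k})$. Otherwise, or if only combinatorial algorithms are allowed, the bound is $O(n^{2k}|\Sigma| + n^{3k})$. Since $\Sigma$ is fixed in advance, $|\Sigma| = O(1)$, and the construction term becomes $O(n^{2k})$.

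Next, I will absorb this construction term into the reachability term. This uses $2k \le \omega k$, which holds because $\omega \ge 2$, and $2k \le 3k$. So the totals are $O(n^{\omega k})$ and $O(n^{3k})$ respectively. One detail to keep: the product PDA, obtained by repeatedly taking the product with the $k-1$ NFAs, remains an OCA when the original PDA is one. This is because the Cartesian product leaves the stack alphabet unchanged, so the $O(n^{\omega k})$ OCA reachability algorithm cited in Theorem~\ref{thm:general-case-upper-bound} still applies to it.

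The only care needed is the reading of the statement. Its bounds ``$n^{\omega k}$'' and ``$n^{3k}$'' are meant up to constant factors, i.e., as $O(\cdot)$ bounds, with the hidden constant depending on $k$ and on the fixed $|\Sigma|$. I do not expect any real obstacle here.
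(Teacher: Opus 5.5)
Your proposal is correct and matches the paper, which derives this corollary directly from Theorem~\ref{thm:general-case-upper-bound} by treating $|\Sigma|$ as a constant, so that the $n^{2k}|\Sigma|$ construction term is dominated by $n^{\omega k}$ (since $\omega \ge 2$) and by $n^{3k}$. Your remark that the product with the NFAs keeps the stack alphabet unchanged, so an OCA stays an OCA, is exactly the point the paper's proof of that theorem relies on implicitly.
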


No polynomial improvement over this algorithm is known in the literature. 
We now provide a lower bound that is almost tight in the case of combinatorial algorithms
and suggests that big improvements over this algorithm are unlikely.

\begin{theorem}\label{thm:constant-alphabet-lower-bound}
	If the $3(k-1)$-Clique hypothesis is true, 
	the \mainpdanfa non-emptiness problem over a fixed input alphabet 
	with $n$ the maximum number of states among all the given machines 
	cannot be solved in time 
	\begin{itemize}
		\item $O(n^{\omega (k-1) -\epsilon})$ for any $\epsilon > 0$.
		\item $O(n^{3(k-1) -\epsilon})$
		for any $\epsilon > 0$, if only combinatorial algorithms are allowed.
	\end{itemize}
	Both lower bounds already hold when the given PDA is a DPDA.
\end{theorem}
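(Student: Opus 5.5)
Given a graph $G$ on $n$ vertices $\{0,\dots,n-1\}$ without self-loops, we construct a DPDA $\mach_0$ and DFAs $\mach_1,\dots,\mach_{k-1}$ over a \emph{fixed} alphabet, each with $O(n\log n)$ states (constants depending on $k$). Their intersection will be non-empty if and only if $G$ has a $3(k-1)$-clique. The shift from $3k$ to $3(k-1)$ is deliberate: with a constant alphabet we can no longer name whole $k$-cliques by single letters, so we let the $k-1$ DFAs carry the $k-1$ nodes of the current $(k-1)$-clique. The PDA, whose stack is now a real stack rather than a counter, acts as storage and as a verifier. A combinatorial $O(N^{3(k-1)-\epsilon})$ algorithm with $N=O(n\log n)$ would then solve $3(k-1)$-Clique in time $O(n^{3(k-1)-\epsilon'})$, and similarly for the $\omega$ bound. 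The construction itself takes $O(n^{2}\log n)$ time, which is negligible whenever $k\ge 2$. The case $k=2$ (a single PDA plus one DFA versus $3$-Clique) is sanity-checked against the triangle construction recalled in Section~\ref{subsec:proof-idea-general-case-lower-bound}.

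\textbf{Step 1: a linear-alphabet construction.} We first build the reduction with input alphabet $\{0,\dots,n-1\}$ plus $O(1)$ markers, and stack alphabet of size $O(n)$. The word guesses a $(k-1)$-tuple $C_1=(v_1,\dots,v_{k-1})$ as $k-1$ letters. Each DFA $\mach_i$ stores $v_i$ in its state, and the PDA pushes $v_1,\dots,v_{k-1}$ onto its stack. Pairwise adjacency inside $C_1$ is checked by having the word repeat the tuple: $\mach_i$ reads all $k-1$ letters and checks that each $v_j$ with $j\neq i$ is adjacent to its stored $v_i$, and that the letter in position $i$ equals $v_i$. Next the word spells $C_2$, and each $\mach_i$ checks that its stored node is adjacent to every letter of $C_2$. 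While reading $C_2$ for a second time, $\mach_i$ switches to storing the $i$-th node of $C_2$. The same hop is performed to $C_3$ and then to $C_1'$, exactly as in the Clique Finding Lemma (Lemma~\ref{lem:clique-finding}). Finally, the PDA must verify $C_1'=C_1$. It does this by popping the stored nodes while the word spells $C_1'$ in reverse order, requiring each popped symbol to equal the letter read. This equality check is hard-coded in $O(n)$ states, or by transitions indexed by the stack symbol. The DFAs simultaneously check that the letters spelled are exactly their stored nodes of $C_1'$. Correctness then follows by the same argument as Lemma~\ref{lem:clique-to-intersection}, and all machines are deterministic with $O(n)$ states.

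\textbf{Step 2: binary encoding.} We replace each letter $v$ by its $\lceil\log n\rceil$-bit encoding over $\{0,1\}$, keeping markers as constant extra letters. Each DFA state that read one letter becomes a binary trie of depth $\log n$, so the DFA has $O(n\log n)$ states; this is done per stored value and phase. The stack alphabet is handled the same way: each stored node is pushed as $\log n$ bits. To compare a popped node against the node spelled in the input, the DPDA pops bits one at a time and compares each with the corresponding input bit. Pushing in one order and popping in the reverse order must be reconciled. We therefore push the bits of each node in reversed order, or equivalently have the word spell $C_1'$ with the bits of each node reversed, so that the comparison is bitwise and needs only $O(\log n)$ control states. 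The PDA's other steps, such as pushing guessed nodes, likewise need no large state. Determinism is preserved because every branching is on the input bit.

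\textbf{Main obstacle.} The delicate step is the adjacency checks in the DFAs after binarization. Checking ``$v_j$ adjacent to stored $v_i$'' means walking a trie of depth $\log n$ below each stored value, which gives $n\cdot n$ states if done naively. We must instead keep the state count at $O(n\log n)$ per DFA. For this we let the trie for a stored value $v$ share structure, reading the bits of $v_j$ through a trie of size $O(n)$ whose leaves depend on $v$. That alone gives $O(n^2)$ states. The fix to try first is to make the transitions, not the states, carry the information: for a stored $v$, the DFA walks $O(\log n)$-depth paths but only along prefixes of neighbours of $v$. That still gives $\deg(v)\log n$ states per $v$, i.e.\ $O(m\log n)$ in total. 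If this is too many, the fallback is to accept a state count of $n^{1+o(1)}$ only for sparse $G$, or to move part of the adjacency checking onto the PDA by pushing and popping encodings. We would settle this carefully, keeping in mind that the theorem's bound is stated for $n$ states, so polylogarithmic blowups are harmless but quadratic ones are not.
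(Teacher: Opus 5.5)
There is a genuine gap, and it sits exactly at the step you call the main obstacle. It cannot be closed along the route you chose. You insist that $\mathcal{M}_1,\dots,\mathcal{M}_{k-1}$ be DFAs, although the theorem only requires the PDA to be deterministic, and you have the DFAs test adjacency by reading binary encodings of nodes.

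Over a constant alphabet, a DFA with $s$ states has only $O(s)$ transitions, hence an $O(s\log s)$-bit description. The same holds for your DPDA, whose pushes are single bits. In your design the adjacency relation of $G$ can be read off the machines: from the state storing $v$, exactly the encodings of neighbours of $v$ can be read. That needs $\Theta(n^2)$ bits for dense $G$. So no amount of trie sharing brings the DFAs down to $n^{1+o(1)}$ states, which is why each of your attempts ends at $O(n^2)$ or $O(m\log n)$ states. Shifting the test onto the DPDA does not help either.

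Restricting to sparse $G$ is also useless. $3(k-1)$-Clique on graphs with $n^{1+o(1)}$ edges is solvable far below the hypothesised bounds.

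There is also a smaller omission. When $C_2$ is spelled twice, nothing ties the second spelling (which the $\mathcal{M}_i$ store) to the first (which was checked for adjacency). The PDA has to enforce this, e.g.\ by pushing one spelling and popping against the reversed other.

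The missing idea is to put the edges of $G$ into \emph{nondeterministic} transitions on a single marker letter, which binarization leaves untouched, and to let the PDA certify the guesses. The paper's gadget for checking $x_j\sim y$ works as follows.
\begin{itemize}
\item The NFA $\mathcal{M}_j$ stores $x_j$, the DPDA has $y$ on top of its stack, and the word is $x_j\,\#\,\overline{y}\,\#\,x_j$.
\item Reading $x_j$, the DPDA records $x_j$ in its control state.
\item On the first $\#$, $\mathcal{M}_j$ forgets $x_j$ and jumps to a state storing any neighbour $y'$ of $x_j$. It must then read $\overline{y'}$ while the DPDA pops $y$, forcing $y'=y$.
\item On the second $\#$, $\mathcal{M}_j$ jumps to any neighbour $x'$ of $y$ and must read $x'$, which the DPDA forces to equal $x_j$.
\end{itemize}
Thus $\mathcal{M}_j$ regains $x_j$ after certifying $x_j\sim y$, using $O(n)$ states and $O(m)$ $\#$-transitions.

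After binarization, each state storing a node either reads that one node's encoding or ignores whichever node is read. Either way this costs $O(\log n)$ extra states. Only the constantly many $\lozenge$-states pay $O(n\log n)$. Since each check consumes one stack copy, the paper pushes $x_j$ exactly $j-1$ times in the setup phase and pushes each explored node $k$ times.

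With this gadget in place of your DFA adjacency tests, the rest of your outline is essentially the paper's proof: the three hops, the final pop-and-compare against $C_1$, and msbf push with lsbf pop.
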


We note that the lower bound here is a factor of $n^{3 + (3-\omega)k}$ away in the 
general case and a factor of $n^3$ away in the combinatorial case from the respective upper bounds.

\subsection{Proof idea of Theorem~\ref{thm:constant-alphabet-lower-bound}}
\label{subsec:proof-idea-constant-alphabet-lower-bound}
Before we proceed to describe the formal proof of Theorem~\ref{thm:constant-alphabet-lower-bound}, we give the main ideas and intuitions behind it.

Let us fix a number $3(k-1)$. Let $G$ be some graph (without self-loops) over nodes $\{0,\dots,n-1\}$. 
We will construct a DPDA $\mach_0$ and $k-1$ many NFAs $\mach_1,\dots,\mach_{k-1}$ such that
$G$ has a $3(k-1)$-clique if and only if there is a word $w$ in the intersection of the languages of all $\mach_i$.
For the purposes of presentation, we will first describe this construction with \emph{linear-sized}
input and stack alphabets, i.e., the size of the input and stack alphabets will not be a constant. 
Then, by a careful analysis of the construction, we will convert the linear-sized alphabets to constant-sized alphabets with a logarithmic blowup in the state space. We now proceed
to the construction with the linear-sized alphabets.


The very high-level idea behind these machines is similar to the construction that we saw in Theorem~\ref{thm:general-case-lower-bound}. Intuitively, the machines will first find a collection of $k-1$ vertices $C_1$, check that $C_1$ is a $(k-1)$-clique, then find a collection of $k-1$ vertices $C_2$ and check that every node in $C_1$ is connected
to every node in $C_2$. Then, they will check that $C_2$ is a $(k-1)$-clique, find a collection 
of $k-1$ vertices $C_3$ and check that every node in $C_2$ is connected to every node in $C_3$.
Then they will do a similar procedure with $C_3$ and find another collection $C_1'$.
Then they will finally check that $C_1' = C_1$, which will prove that $C_1 \cup C_2 \cup C_3$ is a $3(k-1)$-clique. 
We stress that while the high-level idea behind this construction and the one given
in Theorem~\ref{thm:general-case-lower-bound} are the same, the actual implementation details
vary significantly. In particular, new ideas are needed in order to store the cliques into the
stack and circumvent the large alphabet size of the construction from Theorem~\ref{thm:general-case-lower-bound}.
We will present these ideas now.

At any given point, each machine will store either a node of the graph $G$ in its state
or store a special symbol $\lozenge$ denoting that it is not storing any node.	
The intuition behind the NFAs $\mach_1,\dots,\mach_{k-1}$ is that, at any point, each $\mach_i$
will store one of the nodes of the collection of $k-1$ nodes that is currently being examined, i.e.,
one of the nodes in either $C_1, C_2, C_3$ or $C_1'$. 
The intuition behind the PDA $\mach_0$ is two-fold. First, the stack of $\mach_0$ will help store
the cliques $C_1, C_2, C_3, C_1'$ of $G$, along with some other information. 
Second, the states of $\mach_0$ should be thought of as a ``scratchpad'', in that it will help store some auxiliary 
information that will be needed for the NFAs. 
\bala{Commented out here}

We will encode each node $i \in \{0,\dots,n-1\}$ by itself, i.e.,
the input and the stack alphabets will have as letters all the numbers between 0 and $n-1$.
In addition to these letters, the input alphabet will also have letters
of the form $\{\overline{i} : 0 \le i \le n-1\}$. The intuition is that,
whenever $i$ is read as an input letter, either the stack does not change or 
$i$ will be pushed onto the stack. Similarly, whenever $\overline{i}$ is read as an input letter, $i$ will be popped from the stack.
Furthermore, the input alphabet will have $\#$ and $@$ as two other additional letters.

We now describe the construction of the machines. 
We recall that each machine will store either a node
of the graph or $\lozenge$ in its state at all times. 
The machines will work together in three different parts and each part will itself comprise 
three different sub-parts. We begin by describing the first part, whose goal
is to check if a collection of nodes $C_1$ is a $(k-1)$-clique and if so,
find another collection $C_2$ such that every node in $C_1$ is connected to every node in $C_2$.
This is done in three sub-parts.

\paragraph*{Part 1, Sub-Part I: The Setup.} In the first sub-part, we will store $k-1$ nodes in the stack
of $\mach_0$ in a specific manner. 
The PDA $\mach_0$ begins this sub-part by remembering $\lozenge$ in its initial state.
Further, each NFA $\mach_i$ begins by remembering some node $x_i$ in its state. (This can be thought of as non-deterministically selecting a state for each $\mach_i$ with some node $x_i$ stored
in that state; later on we will see how this restriction can be removed). The goal of the 
first sub-part is to setup the stack in a specific way so that each $x_i$ is pushed into the stack 
exactly $i-1$ times.
This is done in the following manner. Since $x_2$ is stored in the state of $\mach_2$, 
we can force the input letter that is read at this point to be $x_2$. Indeed,
we only have to create a copy of the current state and have exactly one transition
which leads from the original state to the copy by reading $x_2$. This 
will ensure that $x_2$ is the only possible input letter that could be read at this point.
We also ensure that upon reading $x_2$, the PDA $\mach_0$ pushes it onto the stack.
Similarly, since $x_3$ is stored in the state of $\mach_3$, we can force the next \emph{two} input letters to be $x_3$, by adding two copies of the current state of $\mach_3$ and appropriate transitions.
We can also ensure that the letter $x_3$ is pushed into the stack twice. In this way,
we can ensure that each letter $x_i$ is pushed into the stack $i-1$ times. 
Once this sub-part is done, each $\mach_i$ will store the node $x_i$ in its state,
$\mach_0$ will store $\lozenge$ in its state, and the stack of $\mach_0$ (from the top) will contain $k-2$ many copies of $x_{k-1}$,
$k-3$ copies of $x_{k-2}$ and so on all the way till one copy of $x_2$.
This completes the first sub-part.

\paragraph*{Part 1, Sub-Part II: The Check. } 
In the second sub-part, we will use the special structure of the nodes $x_1,\dots,x_{k-1}$ that are stored in the stack
to check that these nodes indeed form a $(k-1)$-clique.
This is done in the following manner. Note that, at the end of the first sub-part, the node $x_1$ is stored
in $\mach_1$ and the node $x_{k-1}$ is stored at the top of the stack. Using just
this information, we will devise a gadget that checks that $x_1$ is a neighbor of $x_{k-1}$
in the following way.
Since $x_1$ is stored in $\mach_1$, we can force the next input letter to be $x_1$, similar
to how we did it in the first sub-part. 
Upon reading this input letter, we can make $\mach_0$ remember it in its state. (Hence at this point,
both $\mach_0$ and $\mach_1$ remember $x_1$ and $x_{k-1}$ is at the top of the stack.) 
Now, we force the next input letter to be the special letter $\#$. 
Upon reading $\#$,
$\mach_1$ will move to some neighbor $x'$ of $x_1$. (At this point, $\mach_0$ stores $x_1$, 
$\mach_1$ stores $x'$ and the top of the stack stores $x_{k-1}$.) We can now force the next
input letter to be $\overline{x'}$ and we also force $\mach_0$ to pop $x'$ from the top of the stack.
For both these things to simultaneously happen, it must be the case that $x' = x_{k-1}$ and hence that
$x_1$ and $x_{k-1}$ are neighbors. (If this successfully happens, then at this point, $\mach_0$ stores $x_1$, $\mach_1$ stores $x' = x_{k-1}$
and the stack now contains one fewer $x_{k-1}$ at the top.) Now, we force the input letter to be $\#$,
upon reading which $\mach_1$ moves to a state remembering $\lozenge$. From there, because $\mach_0$
remembers $x_1$, we can force the input letter to be $x_1$, 
upon reading which $\mach_1$ will move back to storing $x_1$ and $\mach_0$ will move back to storing $\lozenge$.
In this way, we have checked that $x_1$ and $x_{k-1}$ are neighbors and the only information that we lost along the way was a copy of $x_{k-1}$ from the stack.

By the structure of the first sub-part, it follows that we now have $k-3$ more
copies of $x_{k-1}$ remaining in the stack of $\mach_0$. Hence, we can now reformulate the same gadget 
from the above paragraph to check that $x_2$ and $x_{k-1}$
are neighbors, $x_3$ and $x_{k-1}$ are neighbors and so on. 

Note that, after exhausting all the copies of $x_{k-1}$ from the stack, we are left with $k-3$ copies of $x_{k-2}$. 
This is then sufficient to check that $x_{k-2}$ is a neighbor of $x_1, x_2, \dots, x_{k-3}$.
Then we do the same check for $x_{k-3}, x_{k-4}$ and so on all the way till $x_2$. 
This ensures that $x_1,\dots,x_{k-1}$ is a $(k-1)$-clique. 
Note that, at the end of this computation, each $\mach_i$ stores $x_i$ and $\mach_0$ stores $\lozenge$.
This completes the second sub-part.

\paragraph*{Part 1, Sub-Part III: The Exploration. } In the third sub-part, we will find $k-1$ more nodes 
$y_1,\dots,y_{k-1}$ and check that each $x_i$ is connected with each $y_j$. 
This is done in the following manner. 
Initially, we read some node $y_1$, store it in the state of $\mach_0$ and push it onto the stack. Since $y_1$ is now stored in $\mach_0$, we can ensure that the next $k-1$ input letters are all $y_1$ and also that all these $k$ input letters are pushed onto the stack. 
Then, by using the gadget from the second sub-part, we can check that
$x_1, x_2, \dots, x_{k-1}$ are all neighbors of $y_1$. By construction of this gadget, at the end of this check, each $\mach_i$ will still store $x_i$, $\mach_0$ will store $\lozenge$ and
the stack will contain one copy of $y_1$ (since we pushed $k$ many copies of $y_1$ and
only popped $k-1$ many copies).
We now repeat what we did before to read another node $y_2$ for $k$ times
and ensure that $x_1,\dots,x_{k-1}$ are all neighbors with $y_2$. 
This will end with the stack containing one copy of $y_2$ and then one copy of $y_1$.
Continuing this we can get $y_3,\dots,y_{k-1}$ such that each $x_i$ is a neighbor of each $y_j$,
and the stack contains one copy of $y_{k-1}$, one copy of $y_{k-2}$ and 
so on all the way till $y_1$. 

Now, by popping the nodes on the stack, we can ensure that the next $k-1$ input letters
are $y_{k-1},\dots,y_2, y_1$ in that order. 
While popping $y_i$ (which happens upon reading $\overline{y_i}$), we will store $y_i$ in
the state of $\mach_i$. Hence, at this point, each $\mach_i$ will store $y_i$ and $\mach_0$ will
store $\lozenge$. This completes the third sub-part and also the first part.

\paragraph*{Parts 2 and 3.} At the beginning of the first part, we started with a node $x_i$ in the state of each $\mach_i$. At the end of the first part, we have ensured that $x_1,\dots,x_{k-1}$ is a $(k-1)$-clique, found $k-1$ more nodes $y_1,\dots,y_{k-1}$ such that 
each $x_i$ is a neighbor of each $y_j$ and stored each $y_i$ in $\mach_i$.

The second and third parts are obtained by repeating the same procedure as the first part from 
where it stopped. More precisely, in the second part, the machines will check that 
$y_1,\dots,y_{k-1}$ is a $(k-1)$-clique, find $k-1$ more nodes $z_1,\dots,z_{k-1}$
such that each $y_i$ is a neighbor of each $z_j$ and then store each $z_i$ in $\mach_i$. 
Then, in the third part, the machines will check
that $z_1,\dots,z_{k-1}$ is a $(k-1)$-clique, find $k-1$ more nodes $x_1',\dots,x_{k-1}'$
such that each $z_i$ is a neighbor of each $x_j'$ and store $x_i'$ in each $\mach_i$.

Hence, at the end of the third part, each $\mach_i$ stores the node $x_i'$. 
By construction, it would then follow that if each $x_i' = x_i$,
then the nodes $x_1,\dots,x_{k-1},y_1,\dots,y_{k-1},z_1,\dots,z_{k-1}$ together form
a $3(k-1)$-clique. So, under the assumption that in each machine we store the same node 
at the end as the one that 
we started off with, there is a word in the intersection of the languages of all the machines
if and only if the given graph has a $3(k-1)$-clique. A natural question arises
at this point.

\textbf{Q: } How can we get rid of this assumption?

\textbf{A: } Before we begin the first part, 
we add a zeroth part (which we call the prologue) in which we read $k-1$ nodes, check that
the $i^{th}$ node read is the same as the node in the state of $\mach_i$ and push each node onto the stack as we read it.
In this way, the prologue ensures that when we begin the first part
with nodes $x_1,\dots,x_{k-1}$, they are already on the stack.

Then we proceed to execute the first part, second part and third part as mentioned above.
Note that nowhere in any of these parts did we ever need the stack to be empty to make a transition.
Hence, for the execution of these three parts, it does not matter what stack content we began with,
and so, even with the addition of the zeroth part, the execution of these three parts will be exactly the same as described before, except for the following fact: At the end of the three parts, 
each $\mach_i$ stores $x_i'$ and the stack of $\mach_0$ contains the same content as the end
of the prologue.

Now, we add another last part (which we call the epilogue) which reads $k-1$ letters and attempts
to pop the node from the stack corresponding to the input letter as it is read and ensures that the
$i^{th}$ letter that is popped is the same as the node stored in 
$\mach_{k-i}$. This will ensure that the nodes that were put in the stack during the 
prologue are the same as the nodes $x_1',\dots,x_{k-1}'$, which is what we wanted to verify.\qaqed

This completes the construction of our reduction except for the fact that we have linear-sized input and stack alphabets.

\textbf{Q: } How can we convert the linear-sized alphabets into constant-sized alphabets?

\textbf{A: } The idea is to encode each node $i$ using its binary representation.
So, now each node can be represented only by 0's and 1's. Furthermore, in the transitions, 
we replace pushing the node $i$ onto the stack with pushing its most significant bit first (msbf) representation onto the stack. Similarly, we replace popping the node $i$ from the stack with popping its least significant bit first (lsbf) representation from the stack.
When this encoding is done in a naive way, for each transition labelled with some node, 
this would incur an extra logarithmic amount of states. 
Overall, this would then give us an extra $O(m \log n)$ states where $m$ is the number of edges in the graph $G$.
Since $m$ could be $n^2$, this is not efficient enough for our reduction.

We now sketch how to circumvent this naive method with a more efficient procedure resulting
in only an extra $O(n \log n)$ states.
The crucial observation for this reduction in the state-space is the following one, stated here informally: \emph{Every} state $q$ storing a node $i$ in each of the machines 
in our construction obeys one of the following three conditions.
\begin{itemize}
	\item It only allows to 
	either read $i$ or $\overline{i}$ as input (but not both): In this case, we can only read $i$ or $\overline{i}$
	from that state.
	Hence, we only need $O(\log n)$ more states
	to encode the node $i$, i.e., we need $O(\log n)$ more states to check that we are 
	reading $i$ or $\overline{i}$ and then we can non-deterministically choose any of the outgoing transitions from that state.
	\item It allows for reading any letter from $\{i : 1 \le i \le n\}$ or $\{\overline{i} : 1 \le i \le n \}$ (but not both) and it goes to the same state irrespective of which letter is read: In this case even though we can read any node, the final state reached is the same. Hence, we only have to ensure that a valid binary 
	representation of \emph{some node} is read, i.e., some string in $\{0,1\}^{\log n}$ is read, 
	which can be ensured by having $O(\log n)$ more states. Once such a string is read,
	we know that we can move to exactly one state.
	\item It only allows to read $\#$ as input: In this case we do not need any more states,
	as we are only replacing the encoding of the nodes $i$.
\end{itemize}

This means that for any state storing some node, we only need $O(\log n)$ more states.
The second observation for this reduction is that the number of states which 
do not store any node, i.e., store $\lozenge$, is only a function of $k$. Furthermore,
any state that stores $\lozenge$ has only one outgoing transition for each node of the graph.
This means that we can afford to spend $O(\log n)$ states replacing each outgoing transition
from these states and in the end, we would end up spending $O(n \log n)$ states for each state storing $\lozenge$. Since the overall number
of states storing $\lozenge$ is a function of $k$ only, the total number
of states we introduce this way is still $O(n \log n)$ (because $3(k-1)$ was a fixed constant to begin with).\qaqed 

Finally, we end up with a construction in which each of the machines have $O(n \log n)$ states
and the input and stack alphabets are constant-sized. Furthermore, all of these machines
can be constructed in $O(n^2)$ time. Using this, we can then show
that if we can solve the \mainpdanfa non-emptiness problem in $O(M^{\omega (k-1) -\epsilon})$ time
(resp. in $O(M^{3(k-1)-\epsilon})$ time) for some $\epsilon > 0$ where $M$ is the maximum number of states of the given machines,
then $3(k-1)$-clique can be solved in $O(n^{\omega (k-1) -\epsilon/2})$ time
(resp. in $O(n^{3(k-1)-\epsilon/2})$ time). This proves Theorem~\ref{thm:constant-alphabet-lower-bound}.

We now move on to formalizing all of the ideas mentioned here. 


\subsection{Proof of Theorem~\ref{thm:constant-alphabet-lower-bound}}

Let $3(k-1)$ be a fixed number and $G$ be a graph over the nodes $\{0,\dots,n-1\}$ without self-loops.
We will now construct a DPDA $\mach_0$ and $k-1$ many NFAs $\mach_1,\dots,\mach_{k-1}$ over a common alphabet $\Sigma$ (of size linear in $n$) such that the intersection of $\mach_0,\dots,\mach_{k-1}$ is non-empty if and only if $G$ has a $3(k-1)$-clique. 

The high-level construction of these machines has already been discussed in Section~\ref{subsec:proof-idea-constant-alphabet-lower-bound}
and so we concentrate here on the formal aspects. We will construct the machines 
incrementally by using gadgets, each of which will correspond to one specific sub-part or part
mentioned in the proof idea. Then we will put together all the sub-parts and parts to get the
final machines. 

As mentioned in the proof idea, first we will use a linear-sized (input and stack) alphabet and then
describe how to replace it with one that is of constant size. To this end,
the input alphabet will be $\{0,\dots,n-1\} \cup \{\overline{0},\overline{1},\dots,\overline{n-1}\} \cup \{\#,@\}$ and the stack alphabet will be $\{0,\dots,n-1\}$ for all of the gadgets that we will construct.

As mentioned before, each machine $\mach_i$ will be constructed by first constructing gadgets and then composing them together in a specific manner. We will be doing this quite often (corresponding to each sub-part as well as the prologue and the epilogue parts) and hence it can become quite repetitive.
However, this act of composing together gadgets is uniform throughout and hence we define it formally here,
so that it can be reused (repeatedly) in the construction.

\subsubsection*{Gadgets and their composition}

For the purposes of this construction, a gadget to us will simply be any machine (PDA or NFA) $A$ whose states are of the form $(x,c,s)$ where
$x \in \{0,\dots,n-1\} \cup \{\lozenge\}, 0 \le c \le \max(5,k)$ and $s \in S_A$ where $S_A$ is some set called
the \emph{auxiliary set}.
We will often denote an element $(x,c,s)$ with $s \in S_A$ as $(x,c)^s$, which will be called 
the superscript notation.

For a gadget $A$, its value is the largest value $c$ such that a state of the form $(x,c)^s$ appears in $A$. The value of $A$ will be denoted by $val(A)$.
A copy of $A$ is another gadget $B$ which is exactly the same as the gadget $A$ obtained
by renaming the set $S_A$ to some fresh set $S_B$. 

Initial states of a gadget $A$ can only be states of the form $(x,0)^s$ for some $x$ and $s$.
Final states of a gadget $A$ can only be states of the form $(x,val(A))^s$ for some $x$ and $s$.
We will always have the constraint that for any $x$, there is \textbf{exactly one} initial and final state whose first entry is $x$. Hence, given $x$, we can abuse notation, and, for example, say that we consider the initial (resp. final) state $x$ of a gadget $A$ to mean the unique initial (resp. final) state
of $A$ that has $x$ as its first entry. 

We say that we compose a finite sequence of gadgets $A_1,A_2,\dots,A_\ell$ to get another 
gadget $B$ if $B$ is constructed from $A_1,\dots,A_\ell$ by taking all of their states and
transitions and adding the following new transitions: For each $i < \ell$, from each
final state $x$ of $A_i$ we add a transition to the initial state
$x$ of $A_{i+1}$ which reads the input letter $@$. 
The initial (resp. final) states of $B$ will be the initial (resp. final) states of $A_1$ (resp. $A_\ell$).
The composition $B$ intuitively corresponds to first executing $A_1$, then $A_2$ and so on
all the way till $A_\ell$.

Having stated all the necessary definitions regarding gadgets, we now move on to describing the reduction.
As discussed in Section~\ref{subsec:proof-idea-constant-alphabet-lower-bound}, our construction has five parts, a prologue, then parts 1, 2 and 3, and then finally an epilogue. We begin by describing the prologue part.

\subsubsection{Part 0: The Prologue} Recall the description of the prologue part: Each of the $(k-1)$ NFAs
begin at some states storing nodes $x_1,\dots,x_{k-1}$ respectively. The role of this part is to make sure that these nodes are pushed into the stack of the DPDA. To accomplish this, we construct gadgets
$P_0, P_1, \dots, P_{k-1}$ as follows.

Each gadget $P_i$ will have as its states $(x,c,p_i)$ where $x \in \{0,\dots,n-1\} \cup \{\lozenge\}$
and $c \in \{0,\dots,k\}$. Note that in this case the auxiliary set of each $P_i$ is simply $\{p_i\}$
and hence each state (in the superscript notation) is of the form $(x,c)^{p_i}$.

The transitions of $P_0$ are as follows: Upon reading some node $z$ from a state $(\lozenge,c)^{p_0}$ with $c < k-1$, it will push $z$ into the stack and move to $(\lozenge,c+1)^{p_0}$. Intuitively, this gadget
simply pushes $k-1$ nodes into the stack.

The transitions of each $P_i$ with $i > 0$ are as follows: Upon reading some node $z$
from a state $(x,c)^{p_i}$ with $c < k-1$, it will move to $(x,c+1)^{p_i}$ if $c \neq i-1$ or if $c = i-1$ and $x = z$. Intuitively, the machine $P_i$ will not really do anything until
the $i^{th}$ letter is read (which must be a node of the graph) and when that happens, it will check that that node
is exactly the node stored in its state. See Figures~\ref{fig:P0} and~\ref{fig:Pi} for a representation of $P_0$ as well of $P_i$ for any $i > 0$.

\begin{figure}[t]
	\centering
	\usetikzlibrary{automata, positioning, arrows.meta, bending}
	\begin{tikzpicture}[
		>={Stealth[round]},
		shorten >=1pt,
		auto,
		node distance=1.6cm and 1.75cm,
		every state/.style={
			draw, rounded corners=4pt,
			align=center, font=\small, rectangle
		},
		trans/.style={font=\scriptsize},
		dots node/.style={draw=none, minimum width=0pt, minimum height=0pt}
		]
		
		\node[state, initial] (c0)  {$(\lozenge, 0)^{p_0}$};
		\node[state, right=of c0]   (c1)  {$(\lozenge, 1)^{p_0}$};
		\node[dots node, right=of c1] (dots) {$\dots$};
		\node[state, right=of dots]  (ck2) {$(\lozenge, k{-}2)^{p_0}$};
		\node[state, accepting, right=of ck2]   (ck1) {$(\lozenge, k{-}1)^{p_0}$};
		
		\foreach \src/\tgt in {c0/c1, c1/dots, dots/ck2, ck2/ck1}{
			\draw[->] (\src) to[bend left=40]
			node[above, trans]{$0,\ \text{push}(0)$} (\tgt);
			\draw[->] (\src) edge[trans]
			node[above]{$1,\ \text{push}(1)$} (\tgt);
			\draw[->] (\src) to[bend right=40]
			node[below, trans]{$n{-}1,\ \text{push}(n{-}1)$} (\tgt);
		}
		
		\node[dots node] at ($(c0)!0.5!(c1)+(0,-0.25)$) {$\vdots$};
		\node[dots node] at ($(c1)!0.55!(dots)+(0,-0.25)$) {$\vdots$};
		\node[dots node] at ($(dots)!0.45!(ck2)+(0,-0.25)$) {$\vdots$};
		\node[dots node] at ($(ck2)!0.5!(ck1)+(0,-0.25)$) {$\vdots$};	
	\end{tikzpicture}
	\caption{The transitions of the PDA $P_0$ from the initial state $(\lozenge,0)^{p_0}$.}
	\label{fig:P0}
\end{figure}

\begin{figure}[t]
	\centering
	\usetikzlibrary{automata, positioning, arrows.meta, bending}
	\begin{tikzpicture}[
		>={Stealth[round]},
		shorten >=1pt,
		auto,
		node distance=1.6cm and 1.8cm,
		every state/.style={
			draw, rounded corners=4pt,
			align=center, font=\small, rectangle
		},
		trans/.style={font=\scriptsize},
		dots node/.style={draw=none, minimum width=0pt, minimum height=0pt}
		]

		\node[state, initial] (ci0)   {$(x, 0)^{p_i}$};
		\node[dots node, right=1.3cm of ci0]   (dotsi1) {$\dots$};
		\node[state, right=1.3cm of dotsi1]    (ci1)   {$(x, i{-}1)^{p_i}$};
		\node[state, right=1.3cm of ci1]      (ci)    {$(x, i)^{p_i}$};
		\node[dots node, right=1.3cm of ci]   (dotsi2) {$\dots$};
		\node[state, accepting, right=1.3cm of dotsi2]    (cik1)   {$(x, k{-}1)^{p_i}$};
		
		\foreach \src/\tgt in {ci0/dotsi1, dotsi1/ci1, ci/dotsi2, dotsi2/cik1}{
			\draw[->] (\src) to[bend left=40]
			node[above, trans]{$0$} (\tgt);
			\draw[->] (\src) edge[trans]
			node[above]{$1$} (\tgt);
			\draw[->] (\src) to[bend right=40]
			node[below, trans]{$n{-}1$} (\tgt);
		}
		
		\node[dots node] at ($(ci0)!0.55!(dotsi1)+(0,-0.25)$) {$\vdots$};
		\node[dots node] at ($(dotsi1)!0.4!(ci1)+(0,-0.25)$) {$\vdots$};
		\node[dots node] at ($(ci)!0.55!(dotsi2)+(0,-0.25)$) {$\vdots$};
		\node[dots node] at ($(dotsi2)!0.4!(cik1)+(0,-0.25)$) {$\vdots$};
		
		\draw[->] (ci1) edge[trans]
		node[above]{$x$} (ci);	
		
	\end{tikzpicture}
	\caption{For any $i > 0$ and any $x \in \{0,\dots,n-1\}$, the transitions of the  NFA $P_i$ from the initial state $(x,0)^{p_i}$.}
	\label{fig:Pi}
\end{figure}

Recall our convention that for each element in $x \in \{0,\dots,n-1\} \cup \{\lozenge\}$, there
will be exactly one initial and final state whose first entry is $x$. (In the case of $P_i$,
the initial state by convention is $(x,0)^{p_i}$ and the final state by convention is $(x,k-1)^{p_i}$).
Now, from the construction of the gadgets, we can deduce the following theorem.
\begin{theorem}[The Prologue Theorem]\label{thm:prologue}
	Let $w$ be some word. Then, there is a run of $P_i$ on $w$ 
	between some initial state $x_i$ and some final state $x_i'$ for every $i$ if and only if 
	$w = x_1'x_2'\dots x_{k-1}'$, $x_0 = x_0' = \lozenge$ and for each $i > 0$, $x_i = x_i'$ is a node.
	Moreover, in every such collection of runs, 
	if the stack of $P_0$ initially stores a word $\gamma$, then at the end it stores the
	word $\gamma x_{1}'\dots x_{k-2}' x_{k-1}'$.
\end{theorem}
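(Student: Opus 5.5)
The Prologue Theorem only concerns the explicit gadgets $P_0,\dots,P_{k-1}$, so I will prove it by directly analysing their transition structure, handling $P_0$ and $P_i$ ($i>0$) separately and then combining.

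\textbf{Shared structure.} In every $P_i$, each transition reads a single node letter $z\in\{0,\dots,n-1\}$ and raises the second component $c$ by exactly one, from $c$ to $c+1$ with $c<k-1$. No transition changes the first component $x$. There are no transitions on $\overline{z}$, $\#$, $@$, and no $\epsilon$-transitions. Hence a run from an initial state $(x,0)^{p_i}$ to a final state $(x',k-1)^{p_i}$ reads exactly $k-1$ letters, all nodes, and satisfies $x'=x$. By the convention on gadgets there is a unique initial and a unique final state per first entry, so the runs are pinned down by $x_i$ and $x_i'$.

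\textbf{Gadget $P_0$.} The only states of $P_0$ carrying transitions have first entry $\lozenge$, so $x_0=x_0'=\lozenge$. Its transitions accept any node and push that same node. Since pushes only append to the stack and nothing is ever popped, a run on $w=z_1\cdots z_{k-1}$ starting from stack $\gamma$ ends with stack $\gamma z_1\cdots z_{k-1}$, the top being $z_{k-1}$. Conversely, every such $w$ of length $k-1$ over nodes has a run.

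\textbf{Gadgets $P_i$, $i>0$.} Every step except the $i$-th accepts any node. The $i$-th step, from $c=i-1$, is allowed only when the letter read equals the stored node $x$. So $P_i$ has a run from initial $x_i$ to final $x_i'$ on $w$ if and only if $x_i$ is a node, $x_i=x_i'$, $|w|=k-1$ with all letters nodes, and the $i$-th letter of $w$ is $x_i$. I need to check that $\lozenge$ is excluded as a first entry for $i>0$. Either these gadgets have no $\lozenge$ states, or such a state would need to read the letter $\lozenge$ at step $i$, which is not an input letter; in both cases no run exists.

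\textbf{Combining.} If all runs exist, the constraints from the $P_i$ force $w=x_1\cdots x_{k-1}=x_1'\cdots x_{k-1}'$, and $P_0$ supplies $x_0=x_0'=\lozenge$ together with the stack claim. Conversely, if $w=x_1'\cdots x_{k-1}'$ with $x_i=x_i'$ nodes, the runs can be constructed step by step as described above.

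The main obstacle is minor bookkeeping. I must respect the indexing of the check, $c=i-1$ corresponding to the $i$-th letter. I must also confirm that the initial/final convention rules out runs ending at a final state with a different first entry. Because $x$ is invariant along every transition, this is immediate.
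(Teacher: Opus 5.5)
Your proof is correct and follows essentially the same argument as the paper. Like the paper, you use that every transition fixes the first component and raises $c$ by one, which forces $|w|=k-1$ and $x_i=x_i'$. The single constrained step at $c=i-1$ then pins the $i^{th}$ letter to $x_i$, and the stack claim holds because $P_0$ simply pushes each letter it reads.
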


\begin{proof}
	Let us prove the right-to-left implication by constructing a run for each $P_i$. 
	Indeed, in this case, the machine $P_0$ upon reading $x_c'$ (for any $c$), moves from $(\lozenge,c-1)^{p_0}$ to $(\lozenge,c)^{p_0}$ and pushes $x_c'$ onto the stack.
	The machine $P_i$ upon reading $x_c$ works as follows: If $c < i$, it moves from $(x_i,c-1)^{p_i}$ to $(x_i,c)^{p_i}$. If $c = i$, since $x_i = x_i'$, 
	it can move from $(x_i,c-1)^{p_i}$ to $(x_i',c)^{p_i}$. If $c > i$, it moves from $(x_i',c-1)^{p_i}$ to $(x_i',c)^{p_i}$.
	This completes the desired construction of a run for each $P_i$. 
	
	Let us now prove the other direction. 
	Suppose for each $i$, there is a run of $P_i$ on $w$ from some initial state $x_i$ to some final state $x_i'$. Note that any transition from any state of the form $(x,c)^{p_i}$ takes it to a state of the form $(x,c+1)^{p_i}$. By construction of each $P_i$, it then follows that $|w| = k-1$, $x_0 = x_0' = \lozenge$ and $x_i = x_i'$ is a node for each $i > 0$. 
	
	Now, by the observation in the previous paragraph, it follows that for each $1 \le i,c \le k$,
	the machine $P_i$, before and after reading the $c^{th}$ letter of $w$, will be in states of the form $(x_i,c-1)^{p_i}$ and $(x_i,c)^{p_i}$ respectively.
	By construction, $P_i$ can take a transition from $(x_i,i-1)^{p_i}$ and reach $(x_i,c)^{p_i}$
	if and only if the letter that it reads is precisely $x_i$. It follows then that 
	the $i^{th}$ letter of $w$ is $x_i = x_i'$. By construction of $P_0$, it is then easily verified
	that the only change in the stack of $P_0$ during this run is that the letters $x_1',\dots,x_{k-1}'$
	are pushed into the stack in that order. This then completes the proof.
\end{proof}

Finally, we also note the following observation, which follows immediately from the construction given above.
\begin{proposition}[Size of $P_i$]\label{prop:size-pi}
	Each $P_i$ has $O(nk)$ states and can be constructed in $O(n^2k)$ time.
\end{proposition}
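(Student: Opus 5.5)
The statement to prove is Proposition~\ref{prop:size-pi}. The plan is a direct count of states and transitions from the definitions, followed by the remark that the construction takes time linear in the size of the resulting machine.

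\emph{States.} The states of each $P_i$ are triples $(x,c,p_i)$ with $x \in \{0,\dots,n-1\} \cup \{\lozenge\}$ and $c \in \{0,\dots,k\}$. The auxiliary set is the singleton $\{p_i\}$. So there are at most $(n+1)(k+1) = O(nk)$ states, and we simply enumerate them.

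\emph{Transitions.} I count outgoing transitions state by state.
\begin{itemize}
\item For $P_0$, only the $k-1$ states $(\lozenge,c)^{p_0}$ with $c<k-1$ have outgoing transitions. Each has one transition per node $z$ (read $z$, push $z$), which gives $O(nk)$ transitions.
\item For $P_i$ with $i>0$, a state $(x,c)^{p_i}$ with $c<k-1$ and $c \neq i-1$ has $n$ outgoing transitions, one for each letter $z$. When $c=i-1$ it has exactly one, namely on $z=x$. Since there are $O(nk)$ such states, this gives $O(n \cdot nk) = O(n^2k)$ transitions.
\end{itemize}
The states with first entry $\lozenge$ in $P_i$ ($i>0$) contribute at most $O(nk)$ further transitions, which does not change the bound.

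\emph{Time.} Each transition is produced in constant time by iterating over $x$, $c$ and $z$ and testing the condition $c \neq i-1$ or $x=z$. So $P_i$ is built in time $O(n^2k)$.

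There is no genuine obstacle here. The only point that needs care is that the $n$-fold branching on the input letter $z$ is what makes the construction time quadratic in $n$, even though the number of states is only linear.
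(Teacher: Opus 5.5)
Your proposal is correct and is essentially the paper's argument. The paper simply says the bound is immediate from the construction, and your explicit count supplies that detail: $(n+1)(k+1)$ states, and $O(n^2k)$ transitions coming from the $n$-way branching on the node $z$ in $P_i$ for $i>0$. Since the construction takes time linear in that size, the $O(n^2k)$ time bound follows.
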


\subsubsection{Part 1} We will now describe the gadgets for the first part by 
designing gadgets for each of its sub-parts and then composing them together.

\paragraph*{Sub-Part I: The Setup. } Recall the description of this sub-part: After the Prologue, each of the $(k-1)$ NFAs are at some states storing nodes $x_1, \dots, x_{k-1}$ respectively. The objective of this sub-part is to push $x_2$ into the stack once, $x_3$ into the stack twice and so on all the way up till pushing $x_{k-1}$ into the stack $k-2$ times. We now construct gadgets for this purpose, by modifying the gadgets from the Prologue part.

For each $i \in \{0,\dots,k-1\}$ and $j \in \{2,\dots,k-1\}$ we will construct a machine $A_i^j$,
which will have as its states $(x,c)^{a_i^j}$ for $x \in \{0,\dots,n-1\} \cup \{\lozenge\}$ and $c \in \{0,\dots,j-1\}$. Now the transitions of these machines are as follows.

For each $j \in \{2,\dots,k-1\}$, the machine $A_0^j$ is a PDA, which 
upon reading some node $z$ from a state of the form $(\lozenge,c)^{a_i^j}$ with $c < j-1$,
pushes $z$ into the stack and moves to $(\lozenge,c+1)^{a_i^j}$. Intuitively, this machine simply pushes some $j-1$ nodes
into the stack.

For each $i \in \{1,\dots,k-1\}$ and $j \in \{2,\dots,k-1\}$, the machine $A_i^j$ is an NFA, which
upon reading some node $z$ from a state of the form $(x,c)^{a_i^j}$ with $c < j-1$,
moves to $(x,c+1)$ if $i \neq j$ or $i = j$ and $x = z$. Intuitively, the machine $A_i^j$ does not
really do anything unless $i = j$. If $i = j$, it will simply check that
the input consists of exactly $j-1$ letters all of which are exactly the same node that it had
stored in its state at the beginning. See Figures~\ref{fig:A0j},~\ref{fig:Aij-neq} and~\ref{fig:Aij-eq} for a representation of $A_0^j$ as well of $A_i^j$ when $i \neq j$ and $i = j$ respectively.

\begin{figure}[t]
	\centering
	\usetikzlibrary{automata, positioning, arrows.meta, bending}
	\begin{tikzpicture}[
		>={Stealth[round]},
		shorten >=1pt,
		auto,
		node distance=1.6cm and 1.75cm,
		every state/.style={
			draw, rounded corners=4pt,
			align=center, font=\small, rectangle
		},
		trans/.style={font=\scriptsize},
		dots node/.style={draw=none, minimum width=0pt, minimum height=0pt}
		]
		
		\node[state, initial] (c0)  {$(\lozenge, 0)^{a_0^j}$};
		\node[state, right=of c0]   (c1)  {$(\lozenge, 1)^{a_0^j}$};
		\node[dots node, right=of c1] (dots) {$\dots$};
		\node[state, right=of dots]  (ck2) {$(\lozenge, j{-}2)^{a_0^j}$};
		\node[state, accepting, right=of ck2]   (ck1) {$(\lozenge, j{-}1)^{a_0^j}$};
		
		\foreach \src/\tgt in {c0/c1, c1/dots, dots/ck2, ck2/ck1}{
			\draw[->] (\src) to[bend left=40]
			node[above, trans]{$0,\ \text{push}(0)$} (\tgt);
			\draw[->] (\src) edge[trans]
			node[above]{$1,\ \text{push}(1)$} (\tgt);
			\draw[->] (\src) to[bend right=40]
			node[below, trans]{$n{-}1,\ \text{push}(n{-}1)$} (\tgt);
		}
		\node[dots node] at ($(c0)!0.5!(c1)+(0,-0.25)$) {$\vdots$};
		\node[dots node] at ($(c1)!0.55!(dots)+(0,-0.25)$) {$\vdots$};
		\node[dots node] at ($(dots)!0.45!(ck2)+(0,-0.25)$) {$\vdots$};
		\node[dots node] at ($(ck2)!0.5!(ck1)+(0,-0.25)$) {$\vdots$};
	
	\end{tikzpicture}
	\caption{For any $j > 0$, the transitions of the PDA $A_0^j$ from the initial state $(\lozenge,0)^{a_0^j}$).}
	\label{fig:A0j}
\end{figure}
\begin{figure}[t]
	\centering
	\usetikzlibrary{automata, positioning, arrows.meta, bending}
	\begin{tikzpicture}[
		>={Stealth[round]},
		shorten >=1pt,
		auto,
		node distance=1.6cm and 1.8cm,
		every state/.style={
			draw, rounded corners=4pt,
			align=center, font=\small, rectangle
		},
		trans/.style={font=\scriptsize},
		dots node/.style={draw=none, minimum width=0pt, minimum height=0pt}
		]
		
		\node[state, initial] (cx0)  {$(x, 0)^{a_i^j}$};
		\node[state, right=of cx0]   (cx1)  {$(x, 1)^{a_i^j}$};
		\node[dots node, right=of cx1] (xdots) {$\dots$};
		\node[state, right=of xdots]  (cxk2) {$(x, j{-}2)^{a_i^j}$};
		\node[state, accepting, right=of cxk2]   (cxk1) {$(x, j{-}1)^{a_i^j}$};
		
		\foreach \src/\tgt in {cx0/cx1, cx1/xdots, xdots/cxk2, cxk2/cxk1}{
			\draw[->] (\src) to[bend left=40]
			node[above, trans]{$0$} (\tgt);
			\draw[->] (\src) edge[trans]
			node[above]{$1$} (\tgt);
			\draw[->] (\src) to[bend right=40]
			node[below, trans]{$n{-}1$} (\tgt);
		}
		\node[dots node] at ($(cx0)!0.5!(cx1)+(0,-0.25)$) {$\vdots$};
		\node[dots node] at ($(cx1)!0.55!(xdots)+(0,-0.25)$) {$\vdots$};
		\node[dots node] at ($(xdots)!0.45!(cxk2)+(0,-0.25)$) {$\vdots$};
		\node[dots node] at ($(cxk2)!0.5!(cxk1)+(0,-0.25)$) {$\vdots$};
		
	\end{tikzpicture}
	\caption{For any $i, j > 0$ with $i \neq j$ and any $x \in \{0,\dots,n-1\}$, the transitions of the NFA $A_i^j$ from the initial state $(x,0)^{a_i^j}$.}
	\label{fig:Aij-neq}
\end{figure}
\begin{figure}[t]
	\centering
	\usetikzlibrary{automata, positioning, arrows.meta, bending}
	\begin{tikzpicture}[
		>={Stealth[round]},
		shorten >=1pt,
		auto,
		node distance=1.6cm and 1.8cm,
		every state/.style={
			draw, rounded corners=4pt,
			align=center, font=\small, rectangle
		},
		trans/.style={font=\scriptsize},
		dots node/.style={draw=none, minimum width=0pt, minimum height=0pt}
		]
		
		\node[state, initial, below = 2cm of cx0] (cy0)  {$(x, 0)^{a_i^i}$};
		\node[state, right=of cy0]   (cy1)  {$(x, 1)^{a_i^i}$};
		\node[dots node, right=of cy1] (ydots) {$\dots$};
		\node[state, right=of ydots]  (cyk2) {$(x, i{-}2)^{a_i^i}$};
		\node[state, accepting, right=of cyk2]   (cyk1) {$(x, i{-}1)^{a_i^i}$};
		
		\foreach \src/\tgt in {cy0/cy1, cy1/ydots, ydots/cyk2, cyk2/cyk1}{
			\draw[->] (\src) edge[trans]
			node[above]{$x$} (\tgt);
		}
	\end{tikzpicture}
	\caption{For any $i > 0$ and for any $x \in \{0,\dots,n-1\}$, the transitions of the NFA $A_i^i$, from the initial state $(x,0)^{a_i^i}$.}
	\label{fig:Aij-eq}
\end{figure}

The following lemma is immediate from the construction of the machines.
Its proof is exactly similar to the proof of the Prologue theorem (Theorem~\ref{thm:prologue}).

\begin{lemma}
	Let $w$ be some word and let $j \in \{2,\dots,k-1\}$. Then, there is a run of $A_i^j$ 
	on $w$ starting from some initial state $x_i$ and ending at some final state $x_i'$ for every $i$
	if and only if 
	$w = x_j^{j-1}$, $x_0  =  x_0' = \lozenge$ and for each $i > 0$, $x_i = x_i'$ is a node. 
	Moreover, in every such collection of runs, if the stack of $P_0$ initially stores a word $\gamma$, then
	at the end it stores the word $\gamma x_j^{j-1}$.
\end{lemma}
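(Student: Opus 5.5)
I will mirror the proof of the Prologue theorem (Theorem~\ref{thm:prologue}), exploiting the fact that every transition of every $A_i^j$ takes a state $(x,c)^{a_i^j}$ to a state $(x',c+1)^{a_i^j}$, and in fact always with $x'=x$. Fix $j \in \{2,\dots,k-1\}$.

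\emph{Right-to-left.} Suppose $w = x_j^{j-1}$, $x_0=x_0'=\lozenge$ and $x_i=x_i'$ is a node for each $i>0$. I build the runs letter by letter.
\begin{itemize}
\item For $A_0^j$: on the $c$-th letter, which is the node $x_j$, the machine moves from $(\lozenge,c-1)^{a_0^j}$ to $(\lozenge,c)^{a_0^j}$ and pushes $x_j$.
\item For $i\neq j$, $i>0$: the machine $A_i^j$ reads any node and moves from $(x_i,c-1)^{a_i^j}$ to $(x_i,c)^{a_i^j}$.
\item For $i=j$: each letter read equals the stored node $x_j$, so the checked transitions of $A_j^j$ are enabled.
\end{itemize}
After $j-1$ letters every machine sits at its final state $(x_i,j-1)^{a_i^j}$. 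The stack of $A_0^j$ has received exactly $j-1$ copies of $x_j$ on top of $\gamma$.

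\emph{Left-to-right.} Since the counter component $c$ increases by exactly one per letter, runs from value $0$ to value $val = j-1$ force $|w| = j-1$. They also force every letter to be a node, since no $\#$, $@$ or $\overline{z}$ transitions exist. Since the first component is never changed, the final state of each run has the same first entry as the initial one, so $x_i' = x_i$. For $i=0$, the only states of $A_0^j$ carry $\lozenge$, so $x_0=x_0'=\lozenge$. For $i>0$, the states carry nodes.

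\emph{The key point.} The run of $A_j^j$ is where the content lies. Every transition of $A_j^j$ out of $(x_j,c)^{a_j^j}$ is labelled only by $x_j$, so each of the $j-1$ letters of $w$ equals $x_j$, i.e.\ $w = x_j^{j-1}$. Finally, $A_0^j$ is a pure push machine: each transition pushes the letter just read. Hence the stack evolves from $\gamma$ to $\gamma x_j^{j-1}$, where the statement's ``$P_0$'' should read $A_0^j$.

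I expect no real obstacle; the only care needed is the small index bookkeeping. One must make sure the $i=j$ machine is present among $A_1^j,\dots,A_{k-1}^j$ (true since $2\le j\le k-1$), and that $c$ ranges over $\{0,\dots,j-1\}$ so that ``final'' means exactly value $j-1$.
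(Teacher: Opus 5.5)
Your proof is correct and is essentially the paper's argument, which simply defers to the proof of the Prologue theorem: the counter component advances by one per letter while the first component is preserved, and $A_j^j$ forces every letter to be $x_j$. You are also right that ``$P_0$'' in the statement should read $A_0^j$. One wording fix: $A_0^j$ formally also has node-labelled states, and $A_i^j$ for $i>0$ has $\lozenge$-labelled ones; these merely have no outgoing transitions, so $x_0=\lozenge$ and ``$x_i$ is a node'' follow because $j-1\ge 1$ forces every such run to take at least one transition.
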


Now, for each $i$, let us compose the gadgets $A_i^2,A_i^3,\dots,A_i^k$ to get a new gadget
$A_i$. For any collection of nodes $x_1,\dots,x_{k-1}$, let $A(x_1,\dots,x_{k-1})$ be the word
given by $x_2@x_3^2@\dots@x_{k-1}^{k-2}$.
The following theorem now follows by using the definition of composition and repeatedly applying the above lemma. 

\begin{theorem}[The Setup Theorem]\label{thm:setup}
	Let $w$ be some word. Then, there is a run of $A_i$ on $w$ starting from some initial state $x_i$ and ending at some final state $x_i'$ for every $i$ if and only if 
	$w = A(x_1,\dots,x_{k-1})$, $x_0 = x_0' = \lozenge$ and for each $i > 0$, $x_i = x_i'$ is a node.
	Moreover, in every such collection of runs, if the stack of $P_0$ initially stores a word $\gamma$, then
	at the end it stores the word $\gamma x_2 x_3^2 \dots x_{k-2}^{k-3} x_{k-1}^{k-2}$.
\end{theorem}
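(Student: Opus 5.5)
The plan is to prove the Setup Theorem by induction on the number of composed gadgets, using the preceding lemma for each single gadget $A_i^j$ and the definition of composition to glue the pieces together. The key structural fact is that within every gadget $A_i^j$, every transition reads a node letter. The letter $@$ is read only on the connecting transitions added by composition, which go from the final state $x$ of $A_i^j$ to the initial state $x$ of $A_i^{j+1}$.

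\textbf{Decomposing an accepting run.} Consider a run of the composite $A_i$ from an initial state of $A_i^2$ to a final state of the last gadget. It must pass through every component in order, since the only edges leaving a component are the $@$-edges to the next one. Moreover, it can leave a component only from one of its final states, because the $@$-transitions start only at final states. Hence the common word $w$ factors as $w = w_2 @ w_3 @ \cdots @ w_{k-1}$, where $w_j$ contains no $@$. Because all machines read the same word and the positions of $@$ are determined by $w$ itself, this factorization is the same for every $i$.

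On each factor, $A_i$ runs inside the copy $A_i^j$, from the initial state whose first entry equals the first entry of the previous final state, to some final state. The connecting edge preserves the first entry, so the stored node is handed over unchanged. Applying the lemma to each factor $w_j$ yields, for every $j$, that $w_j = y_j^{j-1}$, where $y_j$ is the node stored on entry. The lemma also says that the stored entries are preserved: $\lozenge$ for $i=0$, and the same node for $i>0$.

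\textbf{Soundness and completeness.} By induction over $j$, the stored node of $A_i$ remains $x_i$ throughout. Therefore $x_i' = x_i$, $x_0 = x_0' = \lozenge$, and $w_j = x_j^{j-1}$. This gives $w = A(x_1,\dots,x_{k-1})$. The stack of the $i=0$ machine (the PDA) receives $x_j^{j-1}$ on each block and is untouched on the $@$-edges. Concatenating these blocks gives the stated content $\gamma x_2 x_3^2\cdots x_{k-1}^{k-2}$. For the converse direction, I would concatenate the runs supplied by the lemma for each block, linking them with the $@$-transitions. This is possible precisely because the final state $x$ of one block connects to the initial state $x$ of the next.

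\textbf{Main obstacle.} The only delicate point is the bookkeeping of the stored first entry across the boundaries. I need to verify that no run can switch its stored node at a composition edge and that the $@$-positions are forced identically for all machines. Both follow directly from the definition of composition. There is also an index mismatch to reconcile: the text composes ``$A_i^2,\dots,A_i^k$'', while the gadgets were defined only for $j\le k-1$. I would read this as the range $j = 2,\dots,k-1$, which matches the word $A(x_1,\dots,x_{k-1})$.
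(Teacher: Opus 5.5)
Your proposal is correct and follows the paper's proof. You factor $w$ as $w_2@w_3@\cdots@w_{k-1}$, using that $@$ is read only on the composition edges, apply the single-gadget lemma to each block, and for the converse concatenate the blockwise runs via the $@$-transitions. Your extra bookkeeping (the stored first entry is handed over unchanged across block boundaries, and the factorization is the same for all machines) and your reading of the index range as $j=2,\dots,k-1$ are exactly what the paper's terser argument implicitly relies on.
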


\begin{proof}
	Let us first prove the right-to-left implication. Suppose $w$ is of the form $x_2@x_3^2@\dots@x_k^{k-1}$, $x_0 = x_0' = \lozenge$ and for each $i > 0$, $x_i = x_i'$ is a node. For each $i$, we can now construct a run of $A_i$ on $w$
	starting from the initial state $x_i$ and ending at the final state $x_i'$ as follows. Let $w_i = x_i^{i-1}$.
	By the previous lemma, for each $A_i^j$ with $j \in \{2,\dots,k\}$, there
	is a run of $w_i$ starting from the initial state $x_i$ of $A_i^j$ and ending with
	the final state $x_i$ of $A_i^j$. By using the definition of composition
	of the gadgets, it follows that we have a run of the desired form for $A_i$.
	
	Let us now prove the other direction. 	Suppose, for all $i$, there is a run of $A_i$ on $w$ starting from some initial state
	$x_i$ and ending at some final state $x_i'$.
	By definition of composition of gadgets, $w$ must be of the form $w_2@w_3@\dots@w_k$ such that for each $j \in \{2,\dots,k\}$,
	$A_i^j$ has an accepting run on $w_i$. Now, using the definition of composition and the previous lemma, we can conclude the proof of this direction.
\end{proof}

Finally, we observe that
\begin{proposition}[Size of $A_i$]\label{prop:size-Ai}
	Each $A_i$ has $O(nk^2)$ states and can be constructed in $O(n^2k^2)$ time.
\end{proposition}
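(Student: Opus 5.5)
Since $A_i$ is the composition of the gadgets $A_i^2,\dots,A_i^{k-1}$ (plus, if one follows the indexing literally, possibly one further gadget; this changes nothing beyond a constant factor), the plan is to bound each component gadget $A_i^j$ separately and then add the cost of the connecting $@$-transitions introduced by composition. Since $k$ is a fixed constant but appears explicitly in the bound, we keep track of $k$-dependence throughout.

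First, count the states of a single gadget $A_i^j$. By definition, its states are $(x,c)^{a_i^j}$ with $x \in \{0,\dots,n-1\}\cup\{\lozenge\}$ and $c \in \{0,\dots,j-1\}$. This gives at most $(n+1)\cdot j \le (n+1)k$ states, i.e.\ $O(nk)$. Summing over the at most $k$ gadgets $j \in \{2,\dots,k-1\}$ gives $O(nk^2)$ states for $A_i$. Composition adds no new states, only transitions.

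Next, count the transitions, which together with the states governs the construction time, since each gadget is written down explicitly state by state and transition by transition. From a state $(x,c)^{a_i^j}$ with $c<j-1$, the gadget has at most one outgoing transition per node letter $z$, so at most $n$ transitions. For $i=j$ it has only one, namely on $z=x$. For $A_0^j$, each transition additionally carries a push of $z$, which costs $O(1)$ to write. Hence each gadget has $O(nk)\cdot n = O(n^2k)$ transitions. Composition adds, for each of the at most $k$ consecutive pairs and each of the $n+1$ values of $x$, one $@$-transition, which is $O(nk)$ in total. Summing over the $O(k)$ gadgets gives $O(n^2k^2)$ transitions, and enumerating them (deciding for each state and letter whether the transition exists, with an $O(1)$ equality test when $i=j$) takes time $O(n^2k^2)$.

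There is no real obstacle here. The only point needing care is to make sure the construction time is dominated by the $n$ outgoing letters per state, rather than by something like the number of edges $m$; unlike the earlier gadgets, $A_i^j$ never consults the graph. The argument therefore mirrors Proposition~\ref{prop:size-pi}, multiplied by the number $O(k)$ of composed gadgets.
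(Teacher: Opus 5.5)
Your proof is correct and matches the reasoning the paper leaves implicit (it states the proposition without proof, in the same style as its $B_i$ bound). Each gadget $A_i^j$ has $O(nk)$ states and $O(n^2k)$ transitions, and $A_i$ composes $O(k)$ of them with only $O(nk)$ extra $@$-transitions, giving $O(nk^2)$ states and $O(n^2k^2)$ construction time; your handling of the $A_i^k$ indexing slip is also fine, since $A(x_1,\dots,x_{k-1})$ only uses $j\in\{2,\dots,k-1\}$.
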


\paragraph*{Sub-Part II: The Check.} Recall the description of this sub-part: After sub-part I,
each of the $(k-1)$ NFAs are at some states storing nodes $x_1,\dots,x_{k-1}$ respectively and the stack of the PDA contains $(k-2)$ copies of $x_{k-1}$ at the top, followed by $(k-3)$ copies of $x_{k-2}$ and so till a single copy of $x_2$. Using such a configuration, the goal of this sub-part is to check that the nodes $x_1,\dots,x_{k-1}$ form a $(k-1)$-clique, i.e.,  the goal of this sub-part is to check that any pair of such nodes are neighbors.  We shall now construct the necessary gadgets to check the neighborhood relation between such nodes. More precisely, given a node stored at the top of the stack and another node which 
is stored in some NFA, these gadgets will check whether these two nodes are neighbors.

For each $i \in \{0,\dots,k-1\}$ and $j \in \{1,\dots,k-1\}$, we will construct
a machine $B_i^j$ which will have as its states $(x,c)^{b_i^j}$ for $x \in \{0,\dots,n-1\} \cup \{\lozenge\}$ and $c \in \{0,1,2,3,4,5\}$. Now the transitions of these machines are as follows.

For each $j \in \{1,\dots,k-1\}$, the machine $B_0^j$ is a PDA which moves according to the following transitions:
\begin{itemize}
	\item From $(\lozenge,0)^{b_0^j}$, upon reading a node $x$, it moves to $(x,1)^{b_0^j}$.
	\item For any node $x$, from $(x,1)^{b_0^j}$, upon reading $\#$,  it moves to $(x,2)^{b_0^j}$.
	\item For any nodes $x, y$, from $(x,2)^{b_0^j}$, upon reading $\overline{y}$, it pops $y$ from the stack and moves to $(x,3)^{b_0^j}$.
	\item For any node $x$, from $(x,3)^{b_0^j}$, upon reading $\#$, it moves to $(x,4)^{b_0^j}$.
	\item Finally, for any node $x$, from $(x,4)^{b_0^j}$, upon reading $x$, it moves to $(\lozenge,5)^{b_0^j}$.
\end{itemize}

For each $i \in \{1,\dots,k-1\}$ and $j \in \{1,\dots,k-1\}$, the machine $B_i^j$ is an NFA 
which moves according to the following transitions: First, if $i \neq j$, then for any $x$ and any $c < 5$,
it moves from $(x,c)^{b_i^j}$ to $(x,c+1)^{b_i^j}$ upon reading any letter. Now, if $i = j$, then
its transitions are given as follows:
\begin{itemize}
	\item For any node $x$, from $(x,0)^{b_i^i}$, upon reading $x$, it moves to $(x,1)^{b_i^i}$.
	\item For any node $x$, from $(x,1)^{b_i^i}$, upon reading $\#$, it moves to $(y,2)^{b_i^i}$ where
	$y$ is any neighbor of $x$ in $G$.
	\item For any node $y$, from $(y,2)^{b_i^i}$, upon reading $\overline{y}$, it moves to $(y,3)^{b_i^i}$.
	\item For any node $y$, from $(y,3)^{b_i^i}$, upon reading $\#$, it moves to $(x',4)^{b_i^i}$ where
	$x'$ is any neighbor of $y$ in $G$.
	\item For any node $x'$, from $(x',4)^{b_i^i}$, upon reading $x'$, it moves to $(x',5)^{b_i^i}$.
\end{itemize}

See Figures~\ref{fig:B0j} and~\ref{fig:Bij-eq} for a representation of $B_0^j$ as well as of $B_i^i$ ($B_i^j$ for $i \neq j$ and $i > 0$ is quite trivial).
Now, for any two nodes $x$ and $y$, let $N(x,y)$ be the word $x\#\overline{y}\#x$.
The following lemma follows from an analysis of the constructed gadgets.

\begin{figure}[t]
	\centering
	\usetikzlibrary{automata, positioning, arrows.meta, bending}
	\begin{tikzpicture}[
		>={Stealth[round]},
		shorten >=1pt,
		auto,
		node distance=2.0cm and 0.9cm,
		every state/.style={
			draw, rounded corners=4pt,
			align=center, font=\small, rectangle,
		},
		trans/.style={font=\scriptsize},
		dots node/.style={draw=none, minimum width=0pt, minimum height=0pt}
		]
		
		\node[state, initial] (c0) {$(\lozenge, 0)^{b_0^j}$};
		
		\node[state, above right=2.0cm and 0.9cm of c0]  (x0c1) {$(0,1)^{b_0^j}$};
		\node[state, right=of x0c1]                (x0c2) {$(0,2)^{b_0^j}$};
		\node[state, right=2.5cm of x0c2]                (x0c3) {$(0,3)^{b_0^j}$};
		\node[state, right=of x0c3]                (x0c4) {$(0,4)^{b_0^j}$};
		
		\node[state, right=of c0]                  (x1c1) {$(1,1)^{b_0^j}$};
		\node[state, right=of x1c1]                (x1c2) {$(1,2)^{b_0^j}$};
		\node[state, right=2.5cm of x1c2]                (x1c3) {$(1,3)^{b_0^j}$};
		\node[state, right=of x1c3]                (x1c4) {$(1,4)^{b_0^j}$};
		
		\node[dots node, below=0.5cm of x1c1] (vd1) {$\vdots$};
		\node[dots node, below=0.5cm of x1c2] (vd2) {$\vdots$};
		\node[dots node, below=0.5cm of x1c3] (vd3) {$\vdots$};
		\node[dots node, below=0.5cm of x1c4] (vd4) {$\vdots$};
		
		\node[state, below right=2.0cm and 0.8cm of c0]  (xnc1) {\footnotesize$(n{-}1,1)^{b_0^j}$};
		\node[state, right=0.7cm of xnc1]                (xnc2) {\footnotesize$(n{-}1,2)^{b_0^j}$};
		\node[state, right=2.0cm of xnc2]                (xnc3) {\footnotesize$(n{-}1,3)^{b_0^j}$};
		\node[state, right=0.7cm of xnc3]                (xnc4) {\footnotesize$(n{-}1,4)^{b_0^j}$};
		
		\node[state, below right=2.0 cm and 0.9cm of x0c4, accepting] (c5) {$(\lozenge, 5)^{b_0^j}$};
		
		\draw[->] (c0) edge[trans] node[above left]{$0$}     (x0c1);
		\draw[->] (c0) edge[trans] node[above]{$1$}           (x1c1);
		\draw[->] (c0) edge[trans] node[below left]{$n{-}1$} (xnc1);
		
		\draw[->] (x0c1) edge[trans] node[above]{$\#$} (x0c2);
		\draw[->] (x0c2) to[bend left=30]
		node[above, trans]{$\bar{0},\ \text{pop}(0)$} (x0c3);
		\draw[->] (x0c2) edge[trans]
		node[above]{$\bar{1},\ \text{pop}(1)$} (x0c3);
		\draw[->] (x0c2) to[bend right=30]
		node[below, trans]{$\overline{n{-}1},\ \text{pop}(n{-}1)$} (x0c3);
		\node[dots node] at ($(x0c2)!0.5!(x0c3)+(0,-0.25)$) {$\vdots$};
		\draw[->] (x0c3) edge[trans] node[above]{$\#$} (x0c4);
		\draw[->] (x0c4) edge[trans] node[above right]{$0$} (c5);
		
		\draw[->] (x1c1) edge[trans] node[above]{$\#$} (x1c2);
		\draw[->] (x1c2) to[bend left=30]
		node[above, trans]{$\bar{0},\ \text{pop}(0)$} (x1c3);
		\draw[->] (x1c2) edge[trans]
		node[above]{$\bar{1},\ \text{pop}(1)$} (x1c3);
		\draw[->] (x1c2) to[bend right=30]
		node[below, trans]{$\overline{n{-}1},\ \text{pop}(n{-}1)$} (x1c3);
		\node[dots node] at ($(x1c2)!0.5!(x1c3)+(0,-0.25)$) {$\vdots$};
		\draw[->] (x1c3) edge[trans] node[above]{$\#$} (x1c4);
		\draw[->] (x1c4) edge[trans] node[above]{$1$} (c5);
		
		\draw[->] (xnc1) edge[trans] node[above]{$\#$} (xnc2);
		\draw[->] (xnc2) to[bend left=30]
		node[above, trans]{$\bar{0},\ \text{pop}(0)$} (xnc3);
		\draw[->] (xnc2) edge[trans]
		node[above]{$\bar{1},\ \text{pop}(1)$} (xnc3);
		\draw[->] (xnc2) to[bend right=30]
		node[below, trans]{$\overline{n{-}1},\ \text{pop}(n{-}1)$} (xnc3);
		\node[dots node] at ($(xnc2)!0.5!(xnc3)+(0,-0.25)$) {$\vdots$};
		\draw[->] (xnc3) edge[trans] node[above]{$\#$} (xnc4);
		\draw[->] (xnc4) edge[trans] node[below right]{$n{-}1$} (c5);
	\end{tikzpicture}
\caption{For any $j > 0$, the transitions of the PDA $B_0^j$ from the initial state $(\lozenge,0)^{b_0^j}$.}
\label{fig:B0j}
\end{figure}

\begin{figure}[t]
	\centering
	\usetikzlibrary{automata, positioning, arrows.meta, bending}
	\begin{tikzpicture}[
		>={Stealth[round]},
		shorten >=1pt,
		auto,
		node distance=2.0cm and 1.25cm,
		every state/.style={
			draw, rounded corners=4pt,
			align=center, font=\small, rectangle,
		},
		trans/.style={font=\scriptsize},
		dots node/.style={draw=none, minimum width=0pt, minimum height=0pt}
		]
		
		\node[state, initial] (xc0) {$(x,0)^{b_i^i}$};
		\node[state, right=of xc0] (xc1) {$(x,1)^{b_i^i}$};
		\node[state, right=of xc1] (yc2) {$(y,2)^{b_i^i}$};
		\node[state, right=of yc2] (yc3) {$(y,3)^{b_i^i}$};
		\node[state, right=of yc3] (xpc4) {$(x',4)^{b_i^i}$};
		\node[state, right=of xpc4, accepting] (xpc5) {$(x',5)^{b_i^i}$};
		
		\draw[->] (xc0)  edge[trans] node[above]{$x$}                        (xc1);
		\draw[->] (xc1)  edge[trans] node[above]{$\#$} (yc2);
		\draw[->] (yc2)  edge[trans] node[above]{$\bar{y}$}                   (yc3);
		\draw[->] (yc3)  edge[trans] node[above]{$\#$} (xpc4);
		\draw[->] (xpc4) edge[trans] node[above]{$x'$}                        (xpc5);
		
	\end{tikzpicture}
\caption{For any $i > 0$ and any $x \in \{0,\dots,n-1\}$, the transitions of the NFA $B_i^i$ from the initial state $(x,0)^{b_i^i}$. Here $y$ is any neighbor of $x$ in the graph $G$ and $x'$ is any neighbor of $y$ in $G$.}
\label{fig:Bij-eq}
\end{figure}

\begin{lemma}\label{lem:Bij}
	Let $w$ be some word and let $j \in \{1,\dots,k-1\}$. Then, there is a run of $B_i^j$
	on $w$ starting from some initial state $x_i$ and ending at some final state $x_i'$ for every $i$ if and only if $x_j$ is a node, there is a neighbor $y$ of $x_j$ such that
	$w := N(x_j,y)$, $y$ is at the top of the stack of $B_0^j$ at the beginning of the run, $x_0 = x_0' = \lozenge$ and for each $i > 0$, $x_i = x_i'$ is a node.
	Moreover, in every collection of runs, if the stack of $B_0^j$ initially stores $\gamma y$, then at the end it stores $\gamma$.
\end{lemma}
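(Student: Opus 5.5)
The proof follows the pattern of the Prologue Theorem (Theorem~\ref{thm:prologue}): it is a direct analysis of the gadgets $B_i^j$, and both directions hinge on one structural observation. Every transition of every $B_i^j$ goes from a state with second component $c$ to one with second component $c+1$. Initial states have $c=0$ and final states have $c=5$. Hence any run from an initial to a final state reads exactly five letters, and the $c$-th letter is read while moving from level $c-1$ to level $c$.

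For the right-to-left direction, take $w=N(x_j,y)=x_j\#\overline{y}\#x_j$ with $y$ a neighbor of $x_j$ and $y$ on top of the stack. We exhibit the runs explicitly.
\begin{itemize}
\item $B_0^j$ visits $(\lozenge,0)\to(x_j,1)\to(x_j,2)\to(x_j,3)\to(x_j,4)\to(\lozenge,5)$. On reading $\overline{y}$ it pops $y$, which is possible because $y$ is the top symbol.
\item $B_j^j$ visits $(x_j,0)\to(x_j,1)\to(y,2)\to(y,3)\to(x_j,4)\to(x_j,5)$. This uses that $y$ is a neighbor of $x_j$, and that adjacency is symmetric, so $x_j$ is a neighbor of $y$.
\item Each $B_i^j$ with $i\neq j$, $i>0$, simply advances $(x_i,c)\to(x_i,c+1)$ on every letter.
\end{itemize}
The stack then goes from $\gamma y$ to $\gamma$.

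For the left-to-right direction, first use the level observation to get $|w|=5$. Next read off the shape of $w$ from $B_0^j$, since its transitions at each level accept only one kind of letter:
\begin{itemize}
\item its initial state must be $(\lozenge,0)$, since only states with first entry $\lozenge$ have outgoing transitions at level $0$, so $x_0=\lozenge$;
\item the first letter is some node $a$, after which $B_0^j$ is in $(a,1)$;
\item the second letter is $\#$;
\item the third is $\overline{b}$ for some node $b$, and $b$ is popped, so $b$ was the top of the stack;
\item the fourth is $\#$;
\item the fifth must equal the stored $a$, and the run ends in $(\lozenge,5)$, so $x_0'=\lozenge$.
\end{itemize}

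Now use $B_j^j$. Its first transition forces $a=x_j$. Its third transition forces the node it stores at level $2$ to be exactly $b$, and that node was chosen as a neighbor of $x_j$. Its fifth transition forces the node stored at level $4$, which is some neighbor $x'$ of $b$, to equal the fifth letter, which is $a=x_j$. Hence $x_j'=x_j$. For $i\neq j$ the first component never changes, so $x_i=x_i'$. Setting $y:=b$ gives $w=N(x_j,y)$, with $y$ a neighbor of $x_j$ and $y$ on top of the stack. The stack claim holds because $B_0^j$ performs exactly one pop, of $y$, and no pushes.

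The only delicate point is making the roles of the intermediate nodes in $B_j^j$ explicit. Its guessed neighbor at level $2$ is pinned down by the $\overline{y}$ letter, which is tied to the stack through $B_0^j$. Its guessed node at level $4$ is pinned back to $x_j$ only through the final letter, which $B_0^j$ forces to equal the stored $x_j$. Everything else is routine bookkeeping.
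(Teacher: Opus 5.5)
Your proof is correct and follows essentially the same route as the paper's. The right-to-left direction uses explicit runs, and the converse observes that every run reads exactly five letters and then argues letter by letter: $B_j^j$ forces the first letter to be $x_j$, the pop in $B_0^j$ pins the level-$2$ neighbor guessed by $B_j^j$ to the stack top $y$, and the last letter, which $B_0^j$ forces to equal $x_j$, pins the level-$4$ neighbor back to $x_j$.
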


\begin{proof}
	Let us first prove the right-to-left implication. In this case, first notice that all the machines
	$B_i^j$ except for $B_0^j$ and $B_j^j$  have the run $(x_i,0)^{b_i^j}, (x_i,1)^{b_i^j}, (x_i,2)^{b_i^j}, (x_i,3)^{b_i^j}, (x_i,4)^{b_i^j}, (x_i,5)^{b_i^j}$ on reading the word $w$. On the other hand, the machine $B_0^j$ upon reading the subword $x_j\#$ has the run
	$(\lozenge,0)^{b_0^j}, (x_j,1)^{b_0^j}, (x_j,2)^{b_0^j}$. From there it reads $\overline{y}$ and since $y$ is at the top of the stack,
	$B_0^j$ pops it and moves to $(x_j,3)^{b_0^j}$. From there it reads $\#x_j$ and has the run $(x_j,4)^{b_0^j},(\lozenge,5)^{b_0^j}$.
	Similarly, the machine $B_j^j$ upon reading $w$ has the run $(x_j,0)^{b_j^j}, (x_j,1)^{b_j^j}, (y,2)^{b_j^j}, (y,3)^{b_j^j}, (x_j,4)^{b_j^j}, (x_j,5)^{b_j^j}$.
	Hence, this direction of the claim is true.
	
	Let us now prove the other direction. Note that for any $i \notin \{0,j\}$, any run of $B_i^j$
	must be of the form $(x_i,0)^{b_i^j}, (x_i,1)^{b_i^j}, \dots, (x_i,5)^{b_i^j}$ with $x_i$ being a node. 
	Hence $x_i = x_i'$ is true for $i \notin \{0,j\}$. 
	
	Now let us analyze the runs of $B_0^j$ on $B_j^j$ on the word $w$. By construction,
	it is easy to see that $|w| = 5$. Hence $w = a_1a_2a_3a_4a_5$ for some letters $a_1,a_2,a_3,a_4,a_5$.
	Note that if $a_1 \neq x_j$ or $x_j$ is not a node, then there is no transition from $(x_j,0)^{b_j^j}$ 
	labelled by $a_1$.
	Hence, $a_1 = x_j$, $x_j$ is a node and the machines $B_0^j$ and $B_j^j$ move to the states $(x_j,1)^{b_0^j}$ and $(x_j,1)^{b_j^j}$
	respectively. Now, if $a_2 \neq \#$, then there are no transitions from either of these states.
	Hence, $a_2 = \#$ and $B_0^j$ and $B_j^j$ move to the states $(x_j,2)^{b_0^j}$ and $(y',2)^{b_j^j}$ for some $y'$
	which is a neighbor of $x_j$. For similar reasons as $a_1$ and $a_2$, $a_3$ must be $\overline{y'}$.
	Now, after reading $\overline{y'}$, $B_0^j$ attempts to pop $y'$ from the stack. By assumption on the stack, this is possible if and only if $y = y'$ and so $y$ and $x_j$ must be neighbors. Hence, after reading $a_3$, the machines
	now move to the states $(x_j,3)^{b_0^j}$ and $(y,3)^{b_j^j}$. Now $a_4$ must be $\#$ and
	the machines now move to the states $(x_j,4)^{b_0^j}$ and $(x',4)^{b_j^j}$ for some neighbor $x'$ of 
	$y$. Finally, $a_5$ must be $x_j$ and so $x'$ must also be equal to $x_j$
	and then the machines move to $(\lozenge,5)^{b_0^j}$ and $(x_j,5)^{b_j^j}$, thereby completing the proof.
\end{proof}

Now, for each $j$, let us first create $k-j-1$ many copies of $B_i^j$ and call them
$B_i^{j,k-1},\dots,B_i^{j,j+1}$. Then, for each $i$, we compose all of these gadgets in the following order to get the gadget $B_i$: 
$B_i^{1,k-1}, B_i^{2,k-1}, \dots, B_i^{k-2,k-1},B_i^{1,k-2},B_i^{2,k-2},\dots,B_i^{k-3,k-2},$ $B_i^{1,k-3},\dots,B_i^{1,2}$. 

The intuition behind this composition is as follows: At the end of sub-part I, for some nodes $x_2,\dots,x_{k-1}$, the stack contains 
contains $(k-2)$ copies of $x_{k-1}$ at the top, followed by $(k-3)$ copies of $x_{k-2}$ and so till a single copy of $x_2$ and the $(k-1)$ NFAs store the nodes $x_1,x_2,\dots,x_{k-1}$. If we now run each $B_i$ from this point, each $B_i$ would first
execute $B_i^{1,k-1}$ which ensures that $x_1$ and $x_{k-1}$ are neighbors and pops the topmost $x_{k-1}$ from the stack. Then, each $B_i$ would execute $B_i^{2,k-1}$ which ensures that $x_2$ and $x_{k-1}$
are neighbors and pops the next $x_{k-1}$ from the stack and so on all the way till $B_i^{k-2,k-1}$. This process will ensure that $x_{k-1}$ is a neighbor of every other node. 
At this point, the topmost part of the stack contains
$k-3$ copies of $x_{k-2}$. Now, each $B_i$ will execute $B_i^{1,k-2}$ which will ensure that
$x_1$ and $x_{k-2}$ are neighbors, then each $B_i$ will execute $B_i^{2,k-2}$, which will ensure that $x_2$ and $x_{k-2}$
are neighbors and so on all the way up till $B_i^{1,2}$. Hence, in this way we would have 
ensured that $x_1,x_2,\dots,x_{k-1}$ is a $(k-1)$-clique. This intuition is made more precise by the next paragraph.

For any collection of nodes $x_1,\dots,x_{k-1}$, let $B(x_1,\dots,x_{k-1})$ be the word 
\begin{align*}
	N(x_1,x_{k-1})@N(x_2,x_{k-1})@\dots @N(x_{k-2},x_{k-1})@N(x_1,x_{k-2})@\dots @N(x_{k-3},x_{k-2})\dots N(x_1,x_2)	
\end{align*}

The following theorem follows by applying the definition of composition to the machines $B_i$
along with the previous lemma. Its proof is similar to the proof of the Setup Theorem (Theorem~\ref{thm:setup}).

\begin{theorem}[The Check Theorem]\label{thm:check}
	Let $w$ be some word. Then, there is a run of $B_i$ on $w$ starting from some 
	initial state $x_i$ and ending at some final state $x_i'$ for every $i$
	if and only if for each $i > 0$, $x_i = x_i'$ is a node,
	$w = B(x_1,\dots,x_{k-1})$, the stack of $P_0$ at the beginning of the run is of the form $\gamma x_2 x_3^2 \dots x_{k-2}^{k-3} x_{k-1}^{k-2}$ for some $\gamma$, $x_0 = x_0' = \lozenge$ and 
	$x_1,\dots,x_{k-1}$ is a $(k-1)$-clique. Moreover, in every such collection of runs, the stack
	at the end stores $\gamma$.
\end{theorem}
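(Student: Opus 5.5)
The plan is to derive the Check Theorem from Lemma~\ref{lem:Bij} by the same ``cut at the $@$'s'' argument used for the Setup Theorem (Theorem~\ref{thm:setup}). Each $B_i$ is the composition of $M = \sum_{l=2}^{k-1}(l-1) = \binom{k-1}{2}$ gadgets. Index them $t = 1,\dots,M$ in the listed order, where the $t$-th gadget is a copy of $B_i^{j_t}$ attached to the pair $(j_t,l_t)$:
\[
(1,k-1),\dots,(k-2,k-1),\;(1,k-2),\dots,(k-3,k-2),\;\dots,\;(1,2).
\]
The gadget letters are nodes, their bars, and $\#$, so $@$ occurs only on the composition edges. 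Moreover, a composition edge leads from the final state $x$ of one gadget to the initial state $x$ of the next, keeping the first entry. Hence any run from an initial to a final state of $B_i$ splits uniquely as $w = w_1 @ w_2 @ \cdots @ w_M$, where the $t$-th piece is a run of $B_i^{j_t}$ from some initial state to some final state, and the first entry is carried across each $@$.

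\emph{Right-to-left.} Assume the right-hand side of the statement holds. I would verify inductively over $t$ that before block $t$ the stack is $\gamma\, x_2 x_3^2\cdots x_{l_t-1}^{l_t-2}\, x_{l_t}^{\,l_t-1-(j_t-1)}$. For $t=1$ this is exactly the assumed initial stack, since $j_1 = 1$ and the top block is $x_{k-1}^{k-2}$. Within the block for a fixed $l$, the $j$-th gadget pops one copy of $x_l$. After $j = 1,\dots,l-1$, all $l-1$ copies are gone, which matches the count $l-1$ produced by the Setup. So before block $t$ the top symbol is $x_{l_t}$. Since $x_{j_t}$ and $x_{l_t}$ are adjacent (the set $x_1,\dots,x_{k-1}$ is a clique), Lemma~\ref{lem:Bij} gives runs on $N(x_{j_t},x_{l_t})$ from initial states $x_i$ to final states $x_i$, with $\lozenge$ for $i=0$, and each such run pops exactly one symbol. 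Chaining these runs through the $@$-edges yields the desired runs, and the stack at the end is $\gamma$.

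\emph{Left-to-right.} Apply Lemma~\ref{lem:Bij} to each piece $w_t$. Let $z^{(t)}_i$ denote the first entry of the state of $B_i$ between pieces. The lemma forces $z^{(t)}_i = z^{(t+1)}_i$ is a node for $i>0$, and $z^{(t)}_0 = \lozenge$. Telescoping gives $x_i = x_i'$ and $x_0 = x_0' = \lozenge$. The lemma also gives, for each $t$, a node $y_t$ such that: $y_t$ is on top of the stack before piece $t$, $y_t$ is a neighbor of $x_{j_t}$, and $w_t = N(x_{j_t}, y_t)$. Since each piece pops exactly $y_t$ and pushes nothing, the initial stack must be $\gamma\, y_M \cdots y_2 y_1$ and the final stack is $\gamma$. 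This pins down the stack shape from the runs rather than assuming it.

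\emph{Main obstacle.} What remains is the identification $y_t = x_{l_t}$, which simultaneously turns the stack into $\gamma x_2 x_3^2\cdots x_{k-1}^{k-2}$, turns $w$ into $B(x_1,\dots,x_{k-1})$, and turns ``$y_t$ adjacent to $x_{j_t}$'' into ``$x_1,\dots,x_{k-1}$ is a clique''. The difficulty is that the gadgets $B_i^j$ only relate $y_t$ to $x_{j_t}$ through the graph; nothing inside $B_i$ ties $y_t$ to the value stored in $\mach_{l_t}$. So I would not expect to get this equality from the $B$-gadgets alone. I would first try to extract it from the word: the statement's conclusion $w = B(x_1,\dots,x_{k-1})$ is exactly the condition that the letter $\overline{y_t}$ read in piece $t$ equals $\overline{x_{l_t}}$. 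Since the identification cannot be derived internally, I would establish the forward direction in the form that actually feeds into the composition. There, the initial stack is the one guaranteed by the Setup Theorem, namely $\gamma x_2 x_3^2 \cdots x_{k-1}^{k-2}$ with the same $x_i$ that are carried unchanged through the $B$-phase. Matching the two stack descriptions $\gamma\, y_M\cdots y_1 = \gamma x_2 x_3^2\cdots x_{k-1}^{k-2}$ position by position then gives $y_t = x_{l_t}$ for every $t$, and all three conclusions follow. The adjacency of $x_{j_t}$ and $x_{l_t}$ for all pairs $j<l$ is then exactly the clique condition.
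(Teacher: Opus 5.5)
Your proposal is the paper's argument. The paper only says the theorem follows from Lemma~\ref{lem:Bij} and the definition of composition, as for the Setup Theorem. Your diagnosis of the forward direction is also right, and you can state it outright: as literally written, the forward direction is false. Take $k=4$, a graph on $\{0,1,2,3\}$ whose only edge is $\{0,1\}$, $(x_1,x_2,x_3)=(0,1,3)$ and initial stack $\gamma\,1\,0\,1$. Then every $B_i$ runs on $N(0,1)@N(1,0)@N(0,1)$ from $x_i$ back to $x_i$, yet this word is not $B(0,1,3)$ and $\{0,1,3\}$ is not a clique. So the stack shape has to be a hypothesis. That is the only reading under which the paper's one-line justification works, and it is exactly what the Setup Theorem supplies in the Part~1 Theorem.

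One small correction: the trivial gadgets $B_i^j$ with $i\neq j$ read \emph{any} letter, including $@$. The unique decomposition $w=w_1@\cdots@w_M$ therefore comes from each gadget consuming exactly five letters, since composition edges leave only from final states. It does not come from $@$ being absent from the gadget alphabets.
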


We conclude with a discussion on the number of states and the time taken to construct
each $B_i$. Each $B_i$ is obtained by composing $O(k^2)$ many gadgets, each of which have
$O(n)$ states and each of which can be constructed in $O(n^2)$ time.
Hence, it follows that 
\begin{proposition}[Size of $B_i$]\label{prop:size-Bi}
	Each $B_i$ has $O(nk^2)$ states and can be constructed in $O(n^2k^2)$ time.
\end{proposition}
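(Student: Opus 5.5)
Proposition~\ref{prop:size-Bi} is proved by simple counting, in two parts: first bound the size and construction time of a single gadget $B_i^j$, then multiply by the number of copies composed into $B_i$ and add the cost of the linking transitions.

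\emph{States.} Each $B_i^j$ has states $(x,c)^{b_i^j}$ with $x \in \{0,\dots,n-1\}\cup\{\lozenge\}$ and $c \in \{0,\dots,5\}$, so it has at most $6(n+1) = O(n)$ states. The gadget $B_i$ is the composition of the copies $B_i^{j,\ell}$ with $1 \le j < \ell \le k-1$. There are $\sum_{\ell=2}^{k-1}(\ell-1) = O(k^2)$ such copies. Composition only takes the disjoint union of states (copies use fresh auxiliary sets) and adds no new states, so $B_i$ has $O(nk^2)$ states.

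\emph{Transitions of one gadget.} Construction time is linear in the number of states plus transitions, so it suffices to count transitions.
\begin{itemize}
\item For $i \neq j$ with $i>0$, each of the $O(n)$ states has one transition per letter. The alphabet has size $2n+2$, giving $O(n^2)$ transitions.
\item For $B_0^j$: there are $n$ transitions out of $(\lozenge,0)$, $n$ transitions on $\#$, $n^2$ transitions $(x,2)\to(x,3)$ (one for each popped $\overline{y}$), and $O(n)$ further transitions. This is $O(n^2)$ in total.
\item For $B_j^j$: the $\#$-transitions $(x,1)\to(y,2)$ and $(y,3)\to(x',4)$ are indexed by ordered edges, so there are $O(m) \subseteq O(n^2)$ of them, plus $O(n)$ others.
\end{itemize}
Hence each gadget has $O(n^2)$ transitions and can be built in $O(n^2)$ time.

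\emph{Composition.} Linking consecutive copies adds one $@$-transition per final state, i.e.\ $O(n)$ transitions per link and $O(nk^2)$ overall. Summing over the $O(k^2)$ copies gives $O(n^2k^2)$ construction time.

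The only step needing care is the $n^2$ pop transitions in $B_0^j$, together with the fact that $B_j^j$ is dense when $G$ is dense. Both are still $O(n^2)$, so neither causes a problem.
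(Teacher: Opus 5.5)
Your proof is correct and takes the same approach as the paper. The paper's argument is that $B_i$ is a composition of $O(k^2)$ gadgets, each with $O(n)$ states and buildable in $O(n^2)$ time; your explicit transition counts (the $n^2$ pop transitions in $B_0^j$, the $O(m)$ neighbor transitions in $B_j^j$, and the $O(n)$ linking $@$-transitions per composition step) fill in details the paper leaves implicit.
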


\paragraph*{Sub-Part III: The Exploration. } Recall the description of this sub-part: After sub-part II, each of the $(k-1)$ NFAs are at some states storing nodes $x_1,\dots,x_{k-1}$ respectively such that
$\{x_1,\dots,x_{k-1}\}$ is a $(k-1)$-clique. The role of this sub-part is to find another set of $k-1$ nodes such that every $x_i$ is connected to every $y_j$. 
For this purpose, we will create four different types of gadgets and then compose them all together.
All these gadgets will be similar to the gadgets that we have seen in the previous sub-parts (and the prologue), but with slight modifications. We will now describe the first type of gadgets, whose
purpose will be to simply push some node into the stack $k$ times.

For each $i \in \{0,\dots,k-1\}$, we will create a machine $D_i$, which will have as its states
$(x,c)^{d_i}$ for $x \in \{0,\dots,n-1\} \cup \{\lozenge\}$ and $c \in \{0,\dots,k\}$.
The machine $D_0$ is a PDA which moves according to the following transitions:
\begin{itemize}
	\item From $(\lozenge,0)^{d_0}$, upon reading a node $x$, it pushes $x$ onto the stack and moves to $(x,1)^{d_0}$.
	\item For any node $x$ and any $c \in \{1,\dots,k-2\}$, from $(x,c)^{d_0}$, upon reading $x$, it pushes
	$x$ onto the stack and moves to $(x,c+1)^{d_0}$. 
	\item For any node $x$, from $(x,k-1)^{d_0}$, upon reading $x$, it pushes $x$ onto the stack
	and moves to $(\lozenge,k)^{d_0}$.
\end{itemize}

Intuitively, this gadget remembers the first node that is read and then ensures
that the next $k-1$ input letters are the same as the first node. After ensuring this, it forgets
the node that it remembered. Further, each time it reads a node
it simply pushes it into the stack. See Figure~\ref{fig:D0} for a representation of $D_0$.

\begin{figure}[t]
	\centering
	\usetikzlibrary{automata, positioning, arrows.meta, bending}
	\begin{tikzpicture}[
		>={Stealth[round]},
		shorten >=1pt,
		auto,
		node distance=2.0cm and 1.85cm,
		every state/.style={
			draw, rounded corners=4pt,
			align=center, font=\small, rectangle,
		},
		trans/.style={font=\scriptsize},
		dots node/.style={draw=none, minimum width=0pt, minimum height=0pt}
		]
		
		\node[state, initial] (c0) {$(\lozenge, 0)^{d_0}$};
		
		\node[state, above right=2.0cm and 1.85cm of c0]  (x0c1) {$(0,1)^{d_0}$};
		\node[dots node, right=of x0c1]             (x0dots) {$\dots$};
		\node[state, right=of x0dots]               (x0ck1) {$(0,k{-}1)^{d_0}$};
		
		\node[state, right=of c0]                          (x1c1) {$(1,1)^{d_0}$};
		\node[dots node, right= of x1c1]             (x1dots) {$\dots$};
		\node[state, right=of x1dots]               (x1ck1) {$(1,k{-}1)^{d_0}$};
		
		\node[dots node, below=0.5cm of x1c1]  (vd1) {$\vdots$};
		\node[dots node, below=0.75cm of x1dots]  (vd2) {$\vdots$};
		\node[dots node, below=0.5cm of x1ck1] (vd3) {$\vdots$};
		
		\node[state, below right=2.0cm and 1.65cm of c0]   (xnc1)  {\footnotesize$(n{-}1,1)^{d_0}$};
		\node[dots node, right=of xnc1]              (xndots) {$\dots$};
		\node[state, right=of xndots]                (xnck1) {\footnotesize$(n{-}1,k{-}1)^{d_0}$};
		
		\node[state, below right=2.0cm and 1.85cm of x0ck1, accepting] (ck) {$(\lozenge, k)^{d_0}$};
		
		\draw[->] (c0) edge[trans] node[above left]{$0, \text{push}(0)$}     (x0c1);
		\draw[->] (c0) edge[trans] node[above]{$1, \text{push}(1)$}           (x1c1);
		\draw[->] (c0) edge[trans] node[below left]{$n{-}1, \text{push}(n{-}1)$} (xnc1);
		
		\draw[->] (x0c1)   edge[trans] node[above]{$0, \text{push}(0)$} (x0dots);
		\draw[->] (x0dots) edge[trans] node[above]{$0, \text{push}(0)$} (x0ck1);
		\draw[->] (x0ck1)  edge[trans] node[above right]{$0, \text{push}(0)$} (ck);
		
		\draw[->] (x1c1)   edge[trans] node[above]{$1,\text{push}(1)$} (x1dots);
		\draw[->] (x1dots) edge[trans] node[above]{$1,\text{push}(1)$} (x1ck1);
		\draw[->] (x1ck1)  edge[trans] node[above]{$1,\text{push}(1)$} (ck);
		
		\draw[->] (xnc1)   edge[trans] node[above]{\tiny $n{-}1,\text{push}(n{-}1)$} (xndots);
		\draw[->] (xndots) edge[trans] node[above]{\tiny $n{-}1,\text{push}(n{-}1)$} (xnck1);
		\draw[->] (xnck1)  edge[trans] node[below right]{$n{-}1,\text{push}(n{-}1)$} (ck);
		
	\end{tikzpicture}
\caption{The transitions of the PDA $D_0$ from the initial state $(\lozenge,0)^{d_0}$.}
\label{fig:D0}
\end{figure}	

Having described the PDA $D_0$, we now describe the remaining machines. For each $i \in \{1,\dots,k-1\}$, the machine $D_i$ is an NFA which upon reading any letter from a state $(x,c)^{d_i}$ with $x \in \{0,\dots,n-1\}$ and $c < k-1$ simply moves to $(x,c+1)^{d_i}$. The following lemma now follows immediately from the construction of these gadgets. Its proof is exactly similar
to the proof of the Prologue theorem (Theorem~\ref{thm:prologue}).

\begin{lemma}\label{lem:Di}
	Let $w$ be some word. Then, there is a run of $D_i$ on $w$ starting from
	some initial state $x_i$ and ending at some final state $x_i'$ for every $i$
	if and only if $w = y^k$ for some node $y$, 
	$x_0 = x_0' = \lozenge$ and for each $i > 0$, $x_i = x_i'$ is a node. Moreover,
	in every such collection of runs, if the stack of $D_0$ initially stores a word $\gamma$, then at the end it stores the word $\gamma y^k$.
\end{lemma}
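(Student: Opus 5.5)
We follow the template of the Prologue Theorem (Theorem~\ref{thm:prologue}) and prove the two directions separately. The structural fact used throughout is that every transition of every $D_i$ goes from a state with second component $c$ to one with second component $c+1$. Initial states have $c=0$ and final states have $c = val(D_i) = k$, so any run from an initial to a final state reads exactly $k$ letters. (For the NFAs $D_i$ with $i>0$ we take the counter to run up to $k$, so that their value matches that of $D_0$; we read the construction's bound $c<k-1$ as $c<k$.)

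\emph{Right-to-left.} Suppose $w=y^k$, $x_0=x_0'=\lozenge$, and $x_i=x_i'$ is a node for $i>0$. For $D_0$ we exhibit the run
\[
(\lozenge,0)^{d_0},(y,1)^{d_0},\dots,(y,k-1)^{d_0},(\lozenge,k)^{d_0}.
\]
On each letter $y$ it pushes $y$, so the stack goes from $\gamma$ to $\gamma y^k$. For $i>0$, $D_i$ runs $(x_i,0)^{d_i},\dots,(x_i,k)^{d_i}$ on any letters.

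\emph{Left-to-right.} Given runs from initial states $x_i$ to final states $x_i'$, the level-increment observation gives $|w|=k$. For $i>0$, the first component never changes and is a node, so $x_i=x_i'$ is a node. For $D_0$, the only initial state with outgoing transitions is $(\lozenge,0)^{d_0}$, so $x_0=\lozenge$. The final state at level $k$ is $(\lozenge,k)^{d_0}$, so $x_0'=\lozenge$. Now trace the run of $D_0$. The first letter must be some node $y$, and the machine moves to $(y,1)^{d_0}$. From $(y,c)^{d_0}$ with $1\le c\le k-1$, the only transition reads $y$. Hence $w=y^k$, and since every step pushes the letter read, the stack ends as $\gamma y^k$. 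Determinism of $D_0$ makes this stack claim hold for every collection of runs.

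No step is a real obstacle. The only care needed is the off-by-one in the NFA's level bound, and checking that no transition of $D_0$ depends on the initial stack content $\gamma$.
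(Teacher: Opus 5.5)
Your proof is correct and takes the same route as the paper, which simply says the argument is ``exactly similar'' to the Prologue Theorem: every transition raises the level by one, which forces $|w|=k$, and the runs are then traced letter by letter. Your reading of the bound $c<k-1$ in the definition of $D_i$ ($i>0$) as $c<k$ is the right fix. As literally written, those NFAs could never reach their final states $(x,k)^{d_i}$, and they would consume only $k-1$ letters against the $k$ consumed by $D_0$.
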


The second type of gadgets that we will construct will use the $D_i$ gadgets that we just constructed
and the $B_i^j$ gadgets from sub-part II. To this end, for each $i$, we construct a gadget $E_i$
by composing the gadgets $D_i, B_i^1, B_i^2, \dots, B_i^{k-1}$. Intuitively, the idea is that 
each NFA $E_i$ will initially begin with some node $x_i$ stored in its state.
Then the gadget $D_i$ will serve the purpose of pushing $k$ many copies of some node $y$ onto the stack.
Then, $B_i^1$ will ensure that this node $y$ is a neighbor of $x_1$,
$B_i^2$ will ensure that $y$ is a neighbor of $x_2$ and so on. Let us now state this formally.

Recall that for any two nodes $x,y$ we had defined the word $N(x,y) = x\#\overline{y}\#x$.
Given any node $y$ and $k-1$ many nodes $x_1,\dots,x_{k-1}$, let $E(x_1,\dots,x_{k-1},y)$ be the word $$y^k@N(x_1,y)@N(x_2,y)@\dots @N(x_{k-1},y)$$ From the definition of composition of gadgets
and from the lemmas that we have proved regarding the gadgets $D_i$ and $B_i^j$ (Lemmas~\ref{lem:Di} and~\ref{lem:Bij}), the following lemma follows.

\begin{lemma}\label{lem:Ei}
	Let $w$ be some word. Then, there is a run of $E_i$ on $w$ 
	starting from some initial state $x_i$ and ending at some final state $x_i'$ for every $i$ if and only if for each $i > 0$, $x_i = x_i'$ is a node, 
	there is a node $y$ that is a neighbor of each $x_i$ such that
	$w = E(x_1,\dots,x_{k-1},y)$ and $x_0 = x_0' = \lozenge$. Moreover, in every such collection
	of runs, if the stack of $E_0$ initially stores $\gamma$, then at the end it stores $\gamma y$.	
\end{lemma}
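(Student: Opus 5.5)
We prove Lemma~\ref{lem:Ei} by unfolding the composition $E_i = D_i, B_i^1, \dots, B_i^{k-1}$ and applying Lemma~\ref{lem:Di} once and Lemma~\ref{lem:Bij} $k-1$ times, following the pattern of the Setup Theorem (Theorem~\ref{thm:setup}).

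\textbf{Decomposition.} In a composed gadget the letter $@$ occurs only on the connecting transitions between consecutive components. We first check that $@$ never labels an internal transition of $D_i$ or of any $B_i^j$. For the NFAs $B_i^j$ with $i \neq j$, which move on ``any letter'', we restrict this to letters other than $@$, as the composition convention implicitly requires. Hence every run of $E_i$ from an initial state to a final state factorizes as
\[
w = w_0 @ w_1 @ \cdots @ w_{k-1},
\]
where $w_0$ is read inside $D_i$ and each $w_j$ is read inside $B_i^j$. Each connecting $@$-transition goes from the final state $z$ of one gadget to the initial state $z$ of the next. So the node (or $\lozenge$) remembered by a machine is passed unchanged across each junction.

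\textbf{Forward direction.} Let $z_i^{(j)}$ be the first entry of $E_i$'s state at the $j$-th junction. Lemma~\ref{lem:Di} applied to $w_0$ gives $w_0 = y^k$ for some node $y$, $z_0^{(0)} = \lozenge$, $z_i^{(0)} = x_i$ for $i > 0$, and the stack of $E_0$ becomes $\gamma y^k$. We then induct on $j = 1, \dots, k-1$ with the invariant: before $B^j$ the stack is $\gamma y^{k-j+1}$, $E_0$ stores $\lozenge$, and each $E_i$ with $i>0$ stores $x_i$. At step $j$, Lemma~\ref{lem:Bij} applies because the top of the stack is $y$. It forces $w_j = N(x_j, y)$ and $y$ adjacent to $x_j$, leaves every stored node unchanged, and pops one $y$. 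After $k-1$ steps the stack is $\gamma y$, each $x_i' = x_i$, and $w = E(x_1,\dots,x_{k-1},y)$.

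\textbf{Converse.} Given a common neighbour $y$ of all $x_i$, we concatenate the runs supplied by the right-to-left halves of the same two lemmas, glued by the $@$-transitions. The stack hypothesis of Lemma~\ref{lem:Bij} (top symbol $y$) holds at every stage by the same count. This also yields the claim about the final stack content $\gamma y$.

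\textbf{Main obstacle.} The real content is the bookkeeping. We must ensure that Lemma~\ref{lem:Bij} is invoked with the right top-of-stack symbol at each stage. We must also make precise that the synchronized runs split at the same positions in all machines. This holds because all machines read the same word $w$ and each must consume every $@$ via a junction.
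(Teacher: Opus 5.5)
Your proposal is correct and follows the paper's (one-line) argument: unfold the composition $D_i, B_i^1, \dots, B_i^{k-1}$, then apply Lemma~\ref{lem:Di} once and Lemma~\ref{lem:Bij} $k-1$ times while tracking the stack contents $\gamma y^{k-j+1}$. The synchronous split does not require you to alter the ``any letter'' transitions (and if you did, $D_i$ for $i>0$ would need the same change). The internal transitions of $E_0$ never read $@$, so $w$ contains exactly $k-1$ occurrences of $@$, and every $E_i$ must spend all of them on its $k-1$ junctions, which already forces all machines to split $w$ at the same positions.
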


Before we move on to the description of the third type of gadgets, we note the size of each $E_i$. Each $E_i$ is obtained by composing $D_i, B_i^1, \dots, B_i^{k-1}$.
Each $D_i$ has $O(nk)$ states and can be constructed in $O(n^2k)$ time.
Each $B_i^j$ has $O(n)$ states and can be constructed in $O(n^2)$ time.
It follows that 
\begin{proposition}[Size of $E_i$]\label{prop:size-Ei}
	Each $E_i$ has $O(nk)$ states and can be constructed in $O(n^2k)$ time.
\end{proposition}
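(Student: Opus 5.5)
The plan is to read off both the state count and the construction time of $E_i$ from the definition of composition. $E_i$ is obtained by composing $D_i, B_i^1, \dots, B_i^{k-1}$. Composition takes the disjoint union of the states and transitions of the components, after renaming auxiliary sets so that copies are fresh. It then adds, between consecutive gadgets, one $@$-transition from each final state $x$ to the corresponding initial state $x$. So it suffices to bound three things: the sizes of the $k$ components, their construction times, and the number of glue transitions.

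\textbf{Step 1 ($D_i$).} Its states are $(x,c)^{d_i}$ with $x \in \{0,\dots,n-1\}\cup\{\lozenge\}$ and $c \in \{0,\dots,k\}$, so there are $O(nk)$ states.

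For $D_0$, each state has at most $n$ outgoing transitions, and only $(\lozenge,0)^{d_0}$ actually reads arbitrary nodes. Every other state reads one fixed letter. Hence $D_0$ has $O(nk)$ transitions.

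For $i>0$, each state $(x,c)^{d_i}$ has one transition per letter of the alphabet. The alphabet has size $2n+2$, so $D_i$ has $O(n \cdot k \cdot n) = O(n^2k)$ transitions. Listing them takes $O(n^2k)$ time.

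\textbf{Step 2 ($B_i^j$).} Its states are $(x,c)^{b_i^j}$ with $c \le 5$, so there are $O(n)$ states.

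For $B_0^j$, only the state $(\lozenge,0)$ and the states $(x,2)$ have up to $n$ outgoing transitions. This gives $O(n^2)$ transitions in total.

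For $B_i^j$ with $i\neq j$, each of the $O(n)$ states has $O(n)$ letters, hence $O(n^2)$ transitions.

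For $B_i^i$, the $\#$-steps go from $x$ to each neighbour of $x$. Their number is therefore $O(m) = O(n^2)$, and all other steps are single transitions.

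So each $B_i^j$ has $O(n)$ states and is built in $O(n^2)$ time. This matches the bound the paper already used for Proposition~\ref{prop:size-Bi}.

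\textbf{Step 3 (summing).} The $k-1$ gadgets $B_i^j$ contribute $O(nk)$ states and $O(n^2k)$ construction time. The $k-1$ compositions add $O(n)$ $@$-transitions each, i.e.\ $O(nk)$ in total. Adding the bounds from Step 1 gives $O(nk)$ states and $O(n^2k)$ time.

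The only delicate point is Step 1. There one must confirm that $D_i$ for $i>0$ really has only $O(n^2k)$ transitions, and not more. This holds because each state has at most one transition per letter and the alphabet has size $O(n)$, not $O(n^k)$ as in Section~\ref{sec:lb-general}. I would state this explicitly and otherwise treat the proposition as immediate from the constructions.
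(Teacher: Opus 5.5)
Your proposal is correct and is essentially the paper's argument: $E_i$ is the composition of $D_i$ ($O(nk)$ states, $O(n^2k)$ time) with the $k-1$ gadgets $B_i^j$ ($O(n)$ states, $O(n^2)$ time each), and you verify these component bounds and the $O(nk)$ added $@$-transitions more explicitly than the paper does. One small nit: the formal PDA definition counts transitions per top-of-stack symbol, and this section uses a linear-sized stack alphabet, so under that count $D_0$ has $O(n^2k)$ rather than $O(nk)$ transitions, which still fits within the claimed $O(n^2k)$ time.
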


Now, the third type of gadgets is obtained by simply repeating each gadget $E_i$ for $k-1$ times.
Formally, we create a gadget $F_i$, by first creating $k-1$ copies of $E_i$ and then 
composing them all together. 

Given nodes $x_1,\dots,x_{k-1}$ and $y_1,\dots,y_{k-1}$, let $F(x_{1},\dots,x_{k-1},y_1,\dots,y_{k-1})$
be the word given by 
$$E(x_1,\dots,x_{k-1},y_1)@E(x_1,\dots,x_{k-1},y_2)@\dots @E(x_1,\dots,x_{k-1},y_{k-1})$$
By Lemma~\ref{lem:Ei}, we immediately have the following lemma.

\begin{lemma}\label{lem:Fi}
	Let $w$ be some word. Then, there is a run of $F_i$ on $w$
	starting from some initial state $x_i$ and ending at some final state $x_i'$ for every $i$ 
	if and only if for each
	$i > 0$, $x_i = x_i'$ is a node, 
	there are nodes $y_1,\dots,y_{k-1}$ that are all neighbors of each $x_i$,
	$w = F(x_1,\dots,x_{k-1},y_1,\dots,y_{k-1})$ and 
	$x_0 = x_0' = \lozenge$. Moreover, in every such collection
	of runs,  if the stack of $F_0$ initially stores $\gamma$, then at the end it stores $\gamma y_{1} y_2 \dots y_{k-1}$.
\end{lemma}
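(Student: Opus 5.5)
The proof will be an induction on the number of composed copies, using Lemma~\ref{lem:Ei} for each copy and the definition of composition. Let $E_i^{(1)},\dots,E_i^{(k-1)}$ be the copies of $E_i$ composed to form $F_i$.

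\emph{Decomposition of runs.} Copies have disjoint auxiliary sets, and the only transitions between consecutive copies are the added $@$-transitions from a final state $x$ of $E_i^{(\ell)}$ to the initial state $x$ of $E_i^{(\ell+1)}$. The letter $@$ is used inside $E_i$ only as a separator between its own components. Hence every run of $F_i$ from an initial state of $E_i^{(1)}$ to a final state of $E_i^{(k-1)}$ enters and leaves each copy exactly once. It therefore factors as $w = w_1 @ w_2 @ \dots @ w_{k-1}$, where each $w_\ell$ is read by a run of $E_i^{(\ell)}$ from an initial state $x_i^{(\ell-1)}$ to a final state $x_i^{(\ell)}$. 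Here $x_i^{(0)} = x_i$ and $x_i^{(k-1)} = x_i'$, and the boundary $@$-transitions preserve the first coordinate.

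The one subtle point is identifying the boundary $@$'s. I will handle it by noting that each $E_i$, and hence each copy, is itself a composition of gadgets with a fixed number ($k-1$) of internal $@$'s. The $\ell$-th inter-copy $@$ is therefore the $(\ell k)$-th occurrence of $@$ in $w$, and this index is the same for all machines $F_0,\dots,F_{k-1}$. So all machines split $w$ at the same positions.

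\emph{Applying Lemma~\ref{lem:Ei} to each block.} For each $\ell$, the runs of all $E_i^{(\ell)}$ on the common block $w_\ell$ satisfy the hypothesis of Lemma~\ref{lem:Ei}. The lemma gives, for $i>0$, $x_i^{(\ell-1)} = x_i^{(\ell)}$, so by induction every $x_i^{(\ell)}$ equals $x_i$. It also gives a node $y_\ell$ adjacent to every $x_i$ with $w_\ell = E(x_1,\dots,x_{k-1},y_\ell)$, and it gives $x_0^{(\ell)} = \lozenge$. For the stack, if it holds $\gamma y_1\cdots y_{\ell-1}$ before block $\ell$, then it holds $\gamma y_1\cdots y_\ell$ after it. Concatenating the blocks yields $w = F(x_1,\dots,x_{k-1},y_1,\dots,y_{k-1})$ and the final stack $\gamma y_1\cdots y_{k-1}$.

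\emph{Converse.} Given such $x_i$, $y_\ell$ and $w$, apply the right-to-left direction of Lemma~\ref{lem:Ei} to each block $E(x_1,\dots,x_{k-1},y_\ell)$. This gives runs of each copy from initial state $x_i$ to final state $x_i$ (respectively from $\lozenge$ to $\lozenge$ for $i=0$). Glue these runs with the $@$-transitions between final state $x$ and initial state $x$. The PDA part goes through because Lemma~\ref{lem:Ei} holds for an arbitrary initial stack $\gamma$.
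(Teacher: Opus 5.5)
Your route is the same as the paper's, which obtains this lemma by applying Lemma~\ref{lem:Ei} to each of the $k-1$ composed copies of $E_i$. You correctly single out the one delicate point, which the paper leaves implicit: all machines must split $w$ into the same blocks. However, your justification of that step is wrong as stated. The claim that ``the letter $@$ is used inside $E_i$ only as a separator between its own components'' is false for $i>0$. The NFA gadgets $D_i$, and $B_i^j$ with $j\neq i$, advance on \emph{any} letter, including $@$. So for an NFA $F_i$ considered alone, an $@$ in $w$ may be consumed inside a component gadget. The $(\ell k)$-th occurrence of $@$ in $w$ therefore need not be where $F_i$ passes from copy $\ell$ to copy $\ell+1$, and counting $@$'s machine by machine does not give the alignment.

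The fix is short and goes through the PDA.
\begin{itemize}
\item The components $D_0$ and $B_0^j$ of $F_0$ never read $@$, so every $@$ read by $F_0$ is a composition transition.
\item Since $F_0$ traverses all $k(k-1)$ gadgets in order, $w$ contains exactly $k(k-1)-1$ occurrences of $@$.
\item Each NFA $F_i$ also traverses all $k(k-1)$ gadgets, so it takes exactly $k(k-1)-1$ composition transitions, each reading an $@$ at a distinct position of $w$.
\item These must therefore be all the occurrences of $@$ in $w$, in order; in particular $F_i$ reads no $@$ internally.
\end{itemize}
Consequently every machine crosses from copy $\ell$ to copy $\ell+1$ at the $(\ell k)$-th $@$ of $w$, and the blocks $w_\ell$ are common to all machines. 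With that, the rest of your argument goes through: the blockwise use of Lemma~\ref{lem:Ei}, the induction on the stored nodes and on the stack, and the gluing for the converse.
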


We will now describe the fourth type of gadgets. Their job is to pop the topmost $k-1$ letters on the
stack and store them in the NFAs. To this end, for each $i \in \{0,\dots,k-1\}$, we will 
create a machine $H_i$, which will have as its states $(x,c)^{h_i}$ for $x \in \{0,\dots,n-1\} \cup \{\lozenge\}$ and $c \in \{0,\dots,k\}$. 

The machine $H_0$ is a PDA which operates according to the following transitions:
\begin{itemize}
	\item From $(\lozenge,0)^{h_0}$, upon reading $\#$, it moves to $(\lozenge,1)^{h_0}$.
	\item For any $c \in \{1,\dots,k-1\}$, from $(\lozenge,c)^{h_0}$, upon reading a letter of the form $\overline{z}$,
	it pops $z$ from the stack and moves to $(\lozenge,c+1)^{h_0}$.
\end{itemize}

Intuitively, this gadget first reads $\#$, then forces the next $k-1$ input letters to be the topmost $k-1$ letters in the stack. See Figure~\ref{fig:H0} for a representation of $H_0$.

We now describe the remaining NFAs. For each $i \in \{1,\dots,k-1\}$, the machine $H_i$ is an NFA which operates according to the following transitions:
\begin{itemize}
	\item For any node $x$, from $(x,0)^{h_i}$, upon reading $\#$, it moves to $(\lozenge,1)^{h_i}$.
	\item For any node $z$ and any $c \in \{1,\dots,k-i-1\}$, from $(\lozenge,c)^{h_i}$, upon reading a letter of the form $\overline{z}$, it moves to $(\lozenge,c+1)^{h_i}$.
	\item For any node $z$, from $(\lozenge,k-i-1)$, upon reading a letter of the form $\overline{z}$, it moves to $(z,k-i)^{h_i}$.
	\item For any nodes $z,x'$ and any $c \in \{k-i,\dots,k-1\}$, from $(z,c)^{h_i}$, upon reading a 
	letter of the form $\overline{x'}$, it moves to $(z,c+1)^{h_i}$.
\end{itemize}

Intuitively, this gadget first reads $\#$ and forgets the node that it currently stores. Then, it remembers the $(k-i)^{th}$ input letter that appears after the occurrence of $\#$. See Figure~\ref{fig:Hi} for a representation of $H_i$.

\begin{figure}[t]
	\centering
	\usetikzlibrary{automata, positioning, arrows.meta, bending}
	
	\begin{tikzpicture}[
		>={Stealth[round]},
		shorten >=1pt,
		auto,
		node distance=2.0cm and 1.7cm,
		every state/.style={
			draw, rounded corners=4pt,
			align=center, font=\small, rectangle,
		},
		trans/.style={font=\scriptsize},
		dots node/.style={draw=none, minimum width=0pt, minimum height=0pt}
		]
		
		\node[state, initial] (c0)  {$(\lozenge, 0)^{h_0}$};
		\node[state, right=of c0]   (c1)  {$(\lozenge, 1)^{h_0}$};
		\node[state, right=of c1]   (c2)  {$(\lozenge, 2)^{h_0}$};
		\node[dots node, right=of c2] (dots) {$\dots$};
		\node[state, right=of dots, accepting]   (ck)  {$(\lozenge, k)^{h_0}$};
		
		\foreach \src/\tgt in {c1/c2, c2/dots, dots/ck}{
			\draw[->] (\src) to[bend left=40]
			node[above, trans]{$\bar{0},\ \text{pop}(0)$} (\tgt);
			\draw[->] (\src) edge[trans]
			node[above]{$\bar{1},\ \text{pop}(1)$} (\tgt);
			\draw[->] (\src) to[bend right=40]
			node[below, trans]{$\overline{n{-}1},\ \text{pop}(n{-}1)$} (\tgt);
			\node[dots node] at ($(\src)!0.5!(\tgt)+(0,-0.25)$) {$\vdots$};
		}
	
		\draw[->] (c0) edge[trans]
		node[above]{$\#$} (c1);	
	\end{tikzpicture}
\caption{The transitions of the PDA $H_0$ from the initial state $(\lozenge,0)^{h_0}$.}
\label{fig:H0}
\end{figure}	

\begin{figure}[t]
	\centering
	\usetikzlibrary{automata, positioning, arrows.meta, bending}
	
\begin{tikzpicture}[
	>={Stealth[round]},
	shorten >=1pt,
	auto,
	node distance=1.5cm and 0.85cm,
	every state/.style={
		draw, rounded corners=4pt,
		align=center, font=\small, rectangle,
	},
	trans/.style={font=\scriptsize},
	dots node/.style={draw=none, minimum width=0pt, minimum height=0pt}
	]
	
	\node[state, initial] (xc0)   {\footnotesize $(x,0)^{h_i}$};
	\node[state, right=of xc0]    (xc1)   {\footnotesize$(\lozenge,1)^{h_i}$};
	\node[dots node, right=of xc1] (dots1) {$\dots$};
	\node[state, right=of dots1]  (xck1i) {\footnotesize $(\lozenge,k{-}i)^{h_i}$};
	\node[state, right=of xck1i]    (zki1)  {\footnotesize $(z,k{-}i{+}1)^{h_i}$};
	\node[dots node, right=of zki1] (dots2) {$\dots$};
	\node[state, right=of dots2, accepting]  (zk)    {\footnotesize $(z,k)^{h_i}$};
	
	\draw[->] (xc0) edge[trans]
	node[above]{$\#$} (xc1);
	
	\draw[->] (xc1) to[bend left=30]
	node[above, trans]{$\bar{0}$} (dots1);
	\draw[->] (xc1) edge[trans]
	node[above]{$\bar{1}$} (dots1);
	\draw[->] (xc1) to[bend right=30]
	node[below, trans]{$\overline{n{-}1}$} (dots1);
	\node[dots node] at ($(xc1)!0.5!(dots1)+(0.15,-0.1)$) {$\vdots$};
	
	\draw[->] (dots1) to[bend left=30]
	node[above, trans]{$\bar{0}$} (xck1i);
	\draw[->] (dots1) edge[trans]
	node[above]{$\bar{1}$} (xck1i);
	\draw[->] (dots1) to[bend right=30]
	node[below, trans]{$\overline{n{-}1}$} (xck1i);
	\node[dots node] at ($(dots1)!0.5!(xck1i)+(-0.2,-0.15)$) {$\vdots$};
	
	\draw[->] (xck1i) edge[trans] node[above]{$\bar{z}$} (zki1);
	
	
	\draw[->] (zki1) to[bend left=30]
	node[above, trans]{$\bar{0}$} (dots2);
	\draw[->] (zki1) edge[trans]
	node[above]{$\bar{1}$} (dots2);
	\draw[->] (zki1) to[bend right=30]
	node[below, trans]{$\overline{n{-}1}$} (dots2);
	\node[dots node] at ($(zki1)!0.5!(dots2)+(0.3,-0.1)$) {$\vdots$};
	
	\draw[->] (dots2) to[bend left=30]
	node[above, trans]{$\bar{0}$} (zk);
	\draw[->] (dots2) edge[trans]
	node[above]{$\bar{1}$} (zk);
	\draw[->] (dots2) to[bend right=30]
	node[below, trans]{$\overline{n{-}1}$} (zk);
	\node[dots node] at ($(dots2)!0.5!(zk)+(-0.1,-0.1)$) {$\vdots$};
	
\end{tikzpicture}
\caption{For any $i > 0$ and any $x, z \in \{0,\dots,n-1\}$, the transitions of the NFA $H_i$ from the initial state $(x,0)^{h_i}$.}
\label{fig:Hi}
\end{figure}	

From the construction, the following lemma easily follows. The proof is exactly similar to the proof 
of the Prologue theorem.
\begin{lemma}\label{lem:Hi}
	Let $w$ be some word. Then, there is a run of $H_i$ on $w$ starting from some initial state $x_i$
	and ending at some final state $x_i'$ for every $i$ if and only if $w = \# \overline{y_{k-1}}\dots \overline{y_1}$ for some nodes $y_{k-1},\dots,y_1$, the stack of $H_0$ at the beginning of the run is of the form $\gamma y_1 y_2 \dots y_{k-1}$, $x_0 = x_0' = \lozenge$ and for each $i > 0$, $x_i' = y_i$. Moreover, in every such collection of runs, the stack of $H_0$ at the end stores $\gamma$.
\end{lemma}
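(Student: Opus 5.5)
We argue as in the Prologue Theorem (Theorem~\ref{thm:prologue}): every transition of every $H_i$ moves from a state with second component $c$ to one with second component $c+1$. So any run from an initial state (second component $0$) to a final state (second component $k$) reads exactly $k$ letters. We prove the two directions separately. The only real bookkeeping is which position of $w$ each $H_i$ latches onto and how that matches the stack order.

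\emph{Right-to-left.} Suppose $w=\#\overline{y_{k-1}}\dots\overline{y_1}$ and the stack of $H_0$ is $\gamma y_1\dots y_{k-1}$. Then $H_0$ reads $\#$ and moves to $(\lozenge,1)^{h_0}$. For $c=1,\dots,k-1$ it reads $\overline{y_{k-c}}$ while $y_{k-c}$ is on top, pops it, and moves to $(\lozenge,c+1)^{h_0}$. It ends in $(\lozenge,k)^{h_0}$ with stack $\gamma$. For $i>0$, the machine $H_i$ moves from $(x_i,0)^{h_i}$ to $(\lozenge,1)^{h_i}$ on $\#$ and then advances through $(\lozenge,c)^{h_i}$. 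On the $(k-i)$-th barred letter it stores that node, and it then carries the node unchanged to $(\cdot,k)^{h_i}$. The $(k-i)$-th barred letter is $\overline{y_{k-(k-i)}}=\overline{y_i}$, so $H_i$ stores $y_i$ and the final state is $y_i=x_i'$.

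\emph{Left-to-right.} By the counting remark, $|w|=k$. Since $H_0$ has only a $\#$-transition from $(\lozenge,0)^{h_0}$, the initial state must have first entry $\lozenge$; hence $x_0=\lozenge$ and the first letter is $\#$. From then on $H_0$ only takes transitions reading barred letters $\overline{z}$, popping $z$, and staying at first entry $\lozenge$. So $w=\#\overline{z_1}\cdots\overline{z_{k-1}}$, $x_0'=\lozenge$, and successful pops force the top $k-1$ stack symbols to be $z_1,\dots,z_{k-1}$ from the top down. Setting $y_{k-c}:=z_c$ gives $w=\#\overline{y_{k-1}}\dots\overline{y_1}$ and stack $\gamma y_1\cdots y_{k-1}$ for some $\gamma$, which is left after the pops.

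For $i>0$, the construction of $H_i$ is deterministic in its first entry. It becomes $\lozenge$ after $\#$, and it becomes the node read at step $k-i$ (counting barred letters from $1$). That node is $z_{k-i}=y_i$, and it is never changed afterward. Hence $x_i'=y_i$. The initial $x_i$ must be a node, since $(x,0)^{h_i}$ has a $\#$-transition only for nodes $x$, but it is otherwise unconstrained, consistent with the statement.

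The only step needing care is the indexing. We must check that $H_i$ latching onto the $(k-i)$-th barred letter, the transition out of $(\lozenge,k-i)$ that leads to $(z,k-i+1)$, matches the reversal between the order of popping and the order of indices in the stack word. The rest is a routine replay of the Prologue argument.
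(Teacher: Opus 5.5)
Your proof is correct and is exactly the Prologue-style argument the paper appeals to: every transition increments the second component, so $|w|=k$, and one then tracks what $H_0$ pops and which barred letter each $H_i$ latches onto. Your indexing, with the latch on the transition from $(\lozenge,k-i)^{h_i}$ to $(z,k-i+1)^{h_i}$, follows Figure~\ref{fig:Hi} and the stated intuition rather than the off-by-one third bullet in the text, and it is the reading under which the lemma holds. In the right-to-left direction you also implicitly (and necessarily) use that each $x_i$ with $i>0$ is a node.
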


Now, let us create the final gadget for this sub-part. For each $i$, we compose the gadget $F_i$ and $H_i$
to get a new gadget $L_i$. For any collection of nodes $x_1, \dots, x_{k-1}, y_1,\dots, y_{k-1}$ let
$L(x_1,\dots,x_{k-1},y_1,\dots,y_{k-1})$ be the word given by 
$$E(x_1,\dots,x_{k-1},y_1,\dots,y_{k-1}) \ @ \ \# \ \overline{y_{k-1}} \ \overline{y_{k-2}} \dots \overline{y_1}$$

From Lemmas~\ref{lem:Fi} and~\ref{lem:Hi}, we immediately get the following theorem.
\begin{theorem}[The Exploration Theorem]\label{thm:exploration}
	Let $w$ be some word. Then, there is a run of $L_i$ on $w$ starting from 
	some initial state $x_i$ and ending at some final state $y_i$ for every $i$ 
	if and only if for each $i > 0$, $x_i$ and $y_i$ are nodes,
	$w = L(x_1,\dots,x_{k-1},y_1,\dots,y_{k-1})$, 
	each node in $\{y_1,\dots,y_{k-1}\}$ is a neighbor of each node in $\{x_1,\dots,x_{k-1}\}$ and
	$x_0 = y_0 = \lozenge$. Moreover, in every such collection
	of runs, if the stack of $L_0$ initially stores $\gamma$, then at the end it stores $\gamma$ as well.
\end{theorem}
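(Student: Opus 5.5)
The plan is to combine Lemma~\ref{lem:Fi} and Lemma~\ref{lem:Hi} using the definition of composition, in the same way as in the proof of the Setup Theorem (Theorem~\ref{thm:setup}). Note that the word $L(\dots)$ as written should be read with $F(x_1,\dots,x_{k-1},y_1,\dots,y_{k-1})$ in place of $E(\dots)$, since $L_i$ is the composition of $F_i$ and $H_i$. I will use this reading throughout.

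\textbf{Right-to-left direction.} Assume each $y_j$ is a neighbor of each $x_i$, set $w = F(\dots)\,@\,\#\,\overline{y_{k-1}}\cdots\overline{y_1}$, and let the stack initially be $\gamma$.
\begin{itemize}
\item By Lemma~\ref{lem:Fi}, every $F_i$ ($i>0$) has a run on the prefix $F(\dots)$ from the initial state $x_i$ to the final state $x_i$. $F_0$ has such a run from $\lozenge$ to $\lozenge$, and it leaves the stack as $\gamma y_1\cdots y_{k-1}$.
\item The letter $@$ moves each machine from the final state $x_i$ of $F_i$ to the initial state $x_i$ of $H_i$.
\item Since the stack now ends in $y_1\cdots y_{k-1}$, Lemma~\ref{lem:Hi} supplies runs of each $H_i$ on $\#\,\overline{y_{k-1}}\cdots\overline{y_1}$. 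These end at the final state $y_i$ (and $\lozenge$ for $i=0$), with stack $\gamma$.
\end{itemize}

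\textbf{Left-to-right direction.} Suppose every $L_i$ has a run on $w$ from initial state $x_i$ to final state $y_i$.
\begin{itemize}
\item The only transitions leaving the $F_i$-portion are the added $@$-transitions, from a final state $z$ of $F_i$ to the initial state $z$ of $H_i$. So $w = u\,@\,v$, where each $F_i$ runs on $u$ from $x_i$ to some final state $z_i$, and each $H_i$ runs on $v$ from initial state $z_i$ to $y_i$.
\item For this split to be well defined, $@$ must not occur inside $u$ or $v$. It does occur inside $u$, since $F(\dots)$ is itself a composition, so I take the \emph{last} occurrence of $@$ as the separator. This is legitimate because no transition of $H_i$ reads $@$, so $v$ is $@$-free. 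This is the only subtle point, and it is handled by inspecting the alphabet of the $H_i$ transitions.
\item Applying Lemma~\ref{lem:Fi} to $u$ gives: $z_i = x_i$ are nodes, $z_0 = x_0 = \lozenge$, $u = F(x_1,\dots,x_{k-1},y'_1,\dots,y'_{k-1})$ with every $y'_j$ adjacent to every $x_i$, and the stack becomes $\gamma y'_1\cdots y'_{k-1}$.
\item Applying Lemma~\ref{lem:Hi} to $v$ with this stack gives $v = \#\,\overline{y'_{k-1}}\cdots\overline{y'_1}$, final states $y_i = y'_i$ for $i>0$ and $y_0 = \lozenge$, and final stack $\gamma$.
\end{itemize}
Combining these yields exactly the claimed characterization, including that the stack ends as $\gamma$.

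I expect no real obstacle. The only care needed is matching the stack contents between the two lemmas. Lemma~\ref{lem:Fi} pushes $y_1,\dots,y_{k-1}$ in that order, so $y_{k-1}$ is on top. Lemma~\ref{lem:Hi} requires the stack to have the form $\gamma y_1\cdots y_{k-1}$ and pops $y_{k-1}$ first. These orders agree, so the identification $y_i = y'_i$ comes out with the same indexing.
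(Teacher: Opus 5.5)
Your proposal is correct and follows the paper's route: the paper obtains the theorem directly from Lemmas~\ref{lem:Fi} and~\ref{lem:Hi} via the definition of composition, and your argument spells that out. You are also right that $L(\dots)$ should be read with $F(\dots)$ in place of $E(\dots)$. Two details the paper leaves implicit are handled well: the crossing $@$ must be the last one in $w$ (because $H_i$ has no $@$-transitions), so all machines split $w$ at the same position, and the push order from Lemma~\ref{lem:Fi} matches the pop order required by Lemma~\ref{lem:Hi}.
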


We conclude with a discussion on the size of each $L_i$.
Each $L_i$ is obtained by composing $F_i$ and $H_i$. Each $F_i$ has $O(k)$ copies
of $E_i$ and so by Proposition~\ref{prop:size-Ei}, it follows that each $F_i$ has $O(nk^2)$ states and can be constructed in $O(n^2k^2)$ time. It is easy to see that
each $H_i$ has $O(nk)$ states and can be constructed in $O(n^2k)$ time.
Hence, it follows that
\begin{proposition}[Size of $L_i$]\label{prop:size-Li}
	Each $L_i$ has $O(nk^2)$ states and can be constructed in $O(n^2k^2)$ time.
\end{proposition}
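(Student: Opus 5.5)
The plan is to bound the size of $L_i$ by bounding its two components separately. By definition, $L_i$ is the composition of $F_i$ and $H_i$. Composition of gadgets takes the disjoint union of their states and transitions and adds $@$-transitions between matching final and initial states. Hence the number of states of $L_i$ is exactly the sum of the state counts of $F_i$ and $H_i$. The construction time is the sum of the two construction times plus the cost of the glue transitions.

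First, I will handle $F_i$, which consists of $k-1$ copies of $E_i$ composed together. By Proposition~\ref{prop:size-Ei}, each $E_i$ has $O(nk)$ states and can be built in $O(n^2k)$ time. A copy only renames the auxiliary set, so producing each copy costs the same. Therefore $F_i$ has $(k-1)\cdot O(nk) = O(nk^2)$ states, and building all copies takes $O(n^2k^2)$ time.

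Second, I will handle $H_i$. It has states $(x,c)^{h_i}$ with $x \in \{0,\dots,n-1\}\cup\{\lozenge\}$ and $c \in \{0,\dots,k\}$, giving $O(nk)$ states. For the transitions, I count per state: each state has at most $n$ outgoing transitions, one for each letter $\overline{z}$ or for $\#$. This gives $O(n^2k)$ transitions, so $H_i$ can be built in $O(n^2k)$ time. The same count applies to the PDA $H_0$, where each transition also carries a stack operation.

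Finally, I account for the composition steps. Each composition, whether between consecutive copies of $E_i$ inside $F_i$ or between $F_i$ and $H_i$, adds one $@$-transition for each $x \in \{0,\dots,n-1\}\cup\{\lozenge\}$. That is $O(n)$ transitions per junction, and there are $O(k)$ junctions, for $O(nk)$ extra work in total. Summing the contributions gives $O(nk^2)+O(nk)=O(nk^2)$ states and $O(n^2k^2)+O(n^2k)+O(nk)=O(n^2k^2)$ time. The argument is routine, and the only care needed is in the transition count for $H_i$, which is dominated by $O(n^2k)$.
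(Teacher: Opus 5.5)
Your proof is correct and follows the same route as the paper: bound $F_i$ as $k-1$ copies of $E_i$ via Proposition~\ref{prop:size-Ei}, bound $H_i$ directly by $O(nk)$ states and $O(n^2k)$ time, and add the two. You also spell out the per-state transition count for $H_i$ and the $O(n)$ glue $@$-transitions per composition, which the paper leaves implicit.
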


This completes the final gadget that we wanted to construct for the third sub-part. Now let us see how to put together the gadgets from all three sub-parts together.

\paragraph*{Wrapping up Part 1. } Now, we wrap up the first part by combining the gadgets from its three sub-parts into one
gadget. To this end, for each $i$, we will create a gadget $X_i$ by composing $A_i, B_i$ and $L_i$.
Intuitively, $X_i$ first executes $A_i$ which begins with a collection of $k-1$ nodes
$x_1,\dots,x_{k-1}$ and then sets up the stack in a specific way. Then $X_i$ executes $B_i$ which verifies that these nodes $x_1,\dots,x_{k-1}$
form a $(k-1)$-clique. Finally, $X_i$ executes $L_i$ which finds another collection of nodes
$y_1,\dots,y_{k-1}$ such that every node in this new collection is a neighbor of every node
in $\{x_1,\dots,x_{k-1}\}$.

To make this intuition more precise, for any collection of nodes $x_1,\dots,x_{k-1}, y_1,\dots,y_{k-1}$, let  $W(x_1,\dots,x_{k-1},y_1,\dots,y_{k-1})$ be the word given by 
$$A(x_1,\dots,x_{k-1})@B(x_1,\dots,x_{k-1})@L(x_1,\dots,x_{k-1},y_1,\dots,y_{k-1})$$
%
Now, from the Setup Theorem (Theorem~\ref{thm:setup}), the Check Theorem (Theorem~\ref{thm:check}) and the Exploration Theorem (Theorem~\ref{thm:exploration}), we have the 
following main result.

\begin{theorem}[Part 1 Theorem]\label{thm:part-1}
	Let $w$ be some word. Then, there is a run of $X_i$ on $w$ starting from
	some initial state $x_i$ and ending at some final state $y_i$ for every $i$ if and only if
	for each $i > 0$, $x_i$ and $y_i$ are nodes,
	$w = W(x_1,\dots,x_{k-1},y_1,\dots,y_{k-1})$,
	$x_1,\dots,x_{k-1}$ is a $(k-1)$-clique, each node in $\{y_1,\dots,y_{k-1}\}$ is a neighbor of 
	each node in $\{x_1,\dots,x_{k-1}\}$ and $x_0 = y_0 = \lozenge$. Moreover, in every such collection
	of runs, if the stack of $X_0$ initially stores $\gamma$, then at the end it stores $\gamma$ as well.
\end{theorem}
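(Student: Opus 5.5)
The plan is to derive the Part 1 Theorem by chaining the Setup Theorem (Theorem~\ref{thm:setup}), the Check Theorem (Theorem~\ref{thm:check}) and the Exploration Theorem (Theorem~\ref{thm:exploration}) through the definition of composition. The key structural fact is this: in a composition, the only transitions between consecutive gadgets read $@$ and go from the final state $x$ of one gadget to the initial state $x$ of the next. Moreover, $@$ never occurs inside the words accepted by any single one of $A_i$, $B_i$, $L_i$, except as their own internal composition separators. Since those gadgets are themselves compositions, I will use the state component (the auxiliary superscripts $a_i^j$, $b_i^j$, and so on) to locate each run precisely.

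For the forward direction, take a run of every $X_i$ on $w$ from initial state $x_i$ to final state $y_i$. The run passes through the states of $A_i$, then of $B_i$, then of $L_i$, in that order. Auxiliary sets are disjoint, so the run crosses exactly two inter-gadget $@$ transitions. This factors $w = w_1 @ w_2 @ w_3$, where $w_1$ is read inside $A_i$ from initial $x_i$ to some final $u_i$, and similarly for the other two pieces. I must check that this factorization is the same for all $i$. That follows because the position of the crossing $@$ is determined by when the run leaves the $A_i$ states. By the Setup Theorem, $w_1 = A(x_1,\dots,x_{k-1})$, and its $@$-count is the same fixed number for every machine, so the first crossing is the $(k-2)$-th $@$ in every machine. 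This common-factorization point is where I expect the most care to be needed. The cleanest way to handle it is probably to note that every gadget word has a fixed number of $@$'s independent of $i$: $k-3$ in $A$, a fixed count in $B$, and $k(k-1)-1$ in the $F$-part of $L$ plus one.

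Applying the Setup Theorem then gives $u_i = x_i$ for $i>0$, $x_0 = u_0 = \lozenge$, and stack $\gamma x_2 x_3^2\cdots x_{k-1}^{k-2}$. Because the $@$ transition preserves the first component, $B_i$ starts at $x_i$. The Check Theorem now applies with exactly the required stack shape. It yields that $x_1,\dots,x_{k-1}$ is a $(k-1)$-clique, that $w_2 = B(x_1,\dots,x_{k-1})$, that the states are unchanged, and that the stack returns to $\gamma$. Finally, the Exploration Theorem applied to $w_3$ from initial states $x_i$ gives nodes $y_i$, the fact that $w_3 = L(x_1,\dots,x_{k-1},y_1,\dots,y_{k-1})$, the neighbor condition, $y_0 = \lozenge$, and final stack $\gamma$. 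Hence $w = W(\dots)$.

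For the converse, I run the same three theorems in their right-to-left directions and concatenate the resulting runs with the connecting $@$ transitions. This works because each theorem leaves the first state component equal to the one the next theorem needs ($x_i$ before and after $A_i$ and $B_i$), and it produces exactly the stack the next theorem presupposes. The stack bookkeeping composes as follows: $\gamma \to \gamma x_2\cdots x_{k-1}^{k-2} \to \gamma \to \gamma$.
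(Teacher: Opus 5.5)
Your route is the paper's own. The paper obtains this theorem simply by chaining the Setup, Check and Exploration Theorems through composition, and your bookkeeping of first components and of the stack ($\gamma \to \gamma x_2 x_3^2\cdots x_{k-1}^{k-2} \to \gamma \to \gamma$) is correct. You are also right that the one point needing an argument is that all machines leave $A_i$ and $B_i$ at the same positions of $w$; the paper passes over this silently.

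However, the argument you propose for that step does not work, for two reasons.
\begin{itemize}
\item Invoking the Setup Theorem to get $w_1 = A(x_1,\dots,x_{k-1})$ is circular. That theorem speaks about runs of all the $A_i$ on one common word, which is exactly what is not yet known.
\item More seriously, the number of $@$'s read inside a gadget is \emph{not} fixed per machine. The idle NFAs $B_i^j$ with $i \neq j$, $i>0$ (and likewise $D_i$, $i>0$) advance on \emph{any} letter, $@$ included. So machine $i$ may read $@@@@@$ while traversing such a copy. Counting $@$'s along machine $i$'s own run therefore does not determine which $@$ of $w$ it uses to enter $L_i$.
\end{itemize}
For the $A$-part your count is fine, but only because the $A_i^j$ read nothing but nodes. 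That fact comes from the construction, not from the Setup Theorem.

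The invariant that does work is length. Inside every gadget, each transition increases the middle component $c$ by exactly one, and traversals go from $c=0$ to $c=val$. Corresponding gadgets have the same value for every $i$: $j-1$ for $A_i^j$, $5$ for $B_i^j$, and so on. Hence, by induction over compositions, every run of $A_i$ (resp.\ $B_i$) from an initial to a final state has a length $\ell_A$ (resp.\ $\ell_B$) independent of $i$ and of the run. So every machine takes its first crossing $@$ as the $(\ell_A+1)$-th letter of $w$ and its second as the $(\ell_A+\ell_B+2)$-th. Once this common factorization $w = w_1 @ w_2 @ w_3$ is established, the rest of your proof goes through unchanged in both directions.
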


We conclude this part by a discussion on the size of $X_i$. By propositions~\ref{prop:size-Ai},\ref{prop:size-Bi} and \ref{prop:size-Li}, it follows that

\begin{proposition}[Size of $X_i$]\label{prop:size-xi}
	Each $X_i$ has $O(nk^2)$ states and can be constructed in $O(n^2k^2)$ time.
\end{proposition}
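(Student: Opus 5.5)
The size bound comes from the three ingredients of $X_i$, and then from checking that composition itself costs essentially nothing. By definition, $X_i$ is the composition of $A_i$, $B_i$ and $L_i$. So I first make explicit what composing gadgets does to the number of states and to the construction time.

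\textbf{Cost of composition.} Composing a sequence $A_1,\dots,A_\ell$ takes the disjoint union of their states and transitions, using fresh auxiliary sets for each copy. It then adds, for each $j<\ell$ and each $x \in \{0,\dots,n-1\}\cup\{\lozenge\}$, one $@$-transition from the final state $x$ of $A_j$ to the initial state $x$ of $A_{j+1}$. These transitions are well defined, because every gadget has exactly one initial and exactly one final state with a given first entry. Hence:
\begin{itemize}
\item the state count of the composition is the sum of the state counts of the parts;
\item the number of new transitions is $O(\ell n)$;
\item the construction time is the sum of the construction times of the parts plus $O(\ell n)$.
\end{itemize}

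\textbf{Applying it to $X_i$.} Here $\ell=3$. Proposition~\ref{prop:size-Ai} gives $O(nk^2)$ states and $O(n^2k^2)$ time for $A_i$. Proposition~\ref{prop:size-Bi} gives the same bounds for $B_i$, and Proposition~\ref{prop:size-Li} gives them for $L_i$. Summing yields $O(nk^2)$ states and $O(n^2k^2)+O(n)=O(n^2k^2)$ time for $X_i$, which is the claim.

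\textbf{The only point needing care.} The composition must not blow up the state space, for example through renaming overhead or a quadratic number of glue transitions. This is dispatched by the uniqueness of initial and final states per first entry noted above, which gives only a linear number of $@$-edges per junction. The inner compositions, such as those building $B_i$ out of $O(k^2)$ gadgets $B_i^{j,\cdot}$, have already been accounted for in the cited propositions, so nothing further needs checking.
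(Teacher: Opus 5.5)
Your proof is correct and takes the same approach as the paper, which simply sums the bounds from Propositions~\ref{prop:size-Ai}, \ref{prop:size-Bi} and~\ref{prop:size-Li}. Your explicit accounting of composition (a disjoint union plus $O(n)$ glue $@$-transitions per junction) spells out a step the paper leaves implicit.
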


\subsubsection{Parts 2 and 3} 

The second and third parts are exactly like the first part.
More precisely, for each $i$, we create two more copies of $X_i$ and call them $Y_i$ and $Z_i$.
$Y_i$ and $Z_i$ are the gadgets for the second and the third part. 

Intuitively, the first set of gadgets $X_0,\dots,X_{k-1}$ begins with a collection of nodes $x_1,\dots,x_{k-1}$, verifies
it to be a $(k-1)$-clique, then finds another collection $y_1,\dots,y_{k-1}$ such that
each node in this new collection is a neighbor of each node in $x_1,\dots,x_{k-1}$, then
it ends with the nodes $y_1,\dots,y_{k-1}$ stored in the states of the NFAs $X_1,\dots,X_{k-1}$.
Now, if we compose each $Y_i$ with $X_i$, they will verify that $y_1,\dots,y_{k-1}$ is a $(k-1)$-clique, then they will find another collection of nodes $z_1,\dots,z_{k-1}$ such that each node in this new collection
is a neighbor of each node in $y_1,\dots,y_{k-1}$ and end with the nodes $z_1,\dots,z_{k-1}$ stored in the states of $Y_1,\dots,Y_{k-1}$. Hence, if we now compose each $Z_i$ with $Y_i$, we can verify that $z_1,\dots,z_{k-1}$ 
is a $(k-1)$-clique and also find another collection of nodes $x_1',\dots,x_{k-1}'$ such that
each node in this new collection is a neighbor of each node in $z_1,\dots,z_{k-1}$. If we then check that each $x_i' = x_i$ (which will be done by the Epilogue gadget) then we are guaranteed that
the $x_i$'s, $y_i$'s and $z_i$'s together form a $3(k-1)$-clique. Hence, we will first compose 
the $X_i$'s, $Y_i$'s and $Z_i$'s together and then finally compose these new gadgets with the Prologue and Epilogue gadgets.

For each $i$, let $M_i$ be the gadget obtained by composing $X_i, Y_i$ and $Z_i$. By the Part 1 theorem (Theorem~\ref{thm:part-1}) applied thrice, we immediately get the following theorem

\begin{theorem}[Main Parts Theorem] \label{thm:main-parts}
	Let $w$ be some word. Then, there is a run of $M_i$ on $w$ starting from
	some initial state $x_i$ and ending at some final state $x_i'$ for every $i$ if and only if
	\begin{itemize}
		\item 	For each $i > 0$, $x_i$ and $x_i'$ are nodes and
		there exists nodes $y_1,\dots,y_{k-1}, z_1,\dots,z_{k-1}$ such that $w$ is precisely equal to the word
		$$\text{W}(x_1,\dots,x_{k-1},y_1,\dots,y_{k-1})@\text{W}(y_1,\dots,y_{k-1},z_1,\dots,z_{k-1}) @\text{W}(z_1,\dots,z_{k-1},x_1',\dots,x_{k-1}')$$
		\item The sets $X := \{x_1,\dots,x_{k-1}\}, Y := \{y_1,\dots,y_{k-1}\}$ and  $Z := \{z_1,\dots,z_{k-1}\}$ are all $(k-1)$-cliques.
		\item Every node in $X$ is connected to every node in $Y$, every node in $Y$ is connected
		to every node in $Z$ and every node in $Z$ is connected to every node in $X' = \{x_1',\dots,x_{k-1}'\}$.
		\item $x_0 = x_0' = \lozenge$.
	\end{itemize}
	Moreover, in every such collection of runs, if the stack of $M_0$ initially stores $\gamma$, then at the end it stores $\gamma$ as well.
\end{theorem}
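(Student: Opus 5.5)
The plan is to derive the Main Parts Theorem by applying the Part 1 Theorem (Theorem~\ref{thm:part-1}) three times, once each to the copies $X_i$, $Y_i$, $Z_i$, and gluing the three applications along the $@$-transitions introduced by composition. A run of the composition $M_i$ from an initial state of $X_i$ to a final state of $Z_i$ must cross from $X_i$ to $Y_i$ and from $Y_i$ to $Z_i$ via $@$-transitions. These transitions connect a final state $u$ of one gadget only to the initial state $u$ of the next. Since the auxiliary sets of the copies are disjoint, every run decomposes uniquely in the form
\[
w = w_1 @ w_2 @ w_3 ,
\]
where $w_1,w_2,w_3$ are read entirely inside $X_i$, $Y_i$, $Z_i$ respectively. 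The first entries of the states are passed unchanged across each $@$.

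\textbf{Left-to-right direction.} Given runs of all $M_i$ on $w$, I first fix the split points. The point I expect to need care is that the split must be the same for every $i$. To see this, note that $@$ does not occur in $W(\cdot)$ at a top level distinguishable from its internal separators, so the factorisation cannot simply be read off the letters. Instead I would argue from $M_0$. The runs are synchronised on letters, and each word $W(\dots)$ has a fixed length depending only on $k$ (the words $A$, $B$, $L$ have lengths determined by $k$). So once Theorem~\ref{thm:part-1} forces $w_1 = W(\dots)$, the position of the next $@$ is determined. Alternatively, I would strengthen the statement of Theorem~\ref{thm:part-1} slightly so that it implies all machines leave $X$ at the same letter position; I would simply use the fixed-length observation.

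With a common split in hand, Theorem~\ref{thm:part-1} applied to the $X$-segment gives the following: $x_i$ and $y_i$ are nodes, $w_1 = W(x_1,\dots,x_{k-1},y_1,\dots,y_{k-1})$, $X$ is a $(k-1)$-clique, $Y$ is fully adjacent to $X$, and the stack is restored. The $@$-transition carries $y_i$ into the initial state of $Y_i$. Applying Theorem~\ref{thm:part-1} to $Y_i$ gives $z_i$, that $Y$ is a clique, and that $Z$ is adjacent to $Y$. Applying it to $Z_i$ gives that $Z$ is a clique, that $X'$ is adjacent to $Z$, and that $x_0' = \lozenge$. Since each application restores the stack content, the stack at every stage begins with $\gamma$. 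Therefore the stack hypotheses needed by the next application hold and do not depend on earlier content.

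\textbf{Right-to-left direction.} Given the nodes and the adjacency and clique conditions, I apply the converse direction of Theorem~\ref{thm:part-1} three times to obtain runs on the three $W$-segments. I then concatenate these runs using the $@$-transitions between matching final and initial states. The stack-preservation clause again guarantees that the PDA $M_0$ begins each segment with stack $\gamma$.

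The only real obstacle is justifying the common decomposition across all machines in the left-to-right direction. Everything else is a direct bookkeeping consequence of the definition of composition together with Theorem~\ref{thm:part-1}.
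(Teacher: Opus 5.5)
Your route is the paper's: the paper derives this theorem by applying Theorem~\ref{thm:part-1} three times across the $@$-junctions of the composition. It says nothing about the point you flag, namely that all $k$ runs must leave $X_i$ (and later $Y_i$) at the same position of $w$. You are right that this needs an argument, but the one you give is circular.

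Theorem~\ref{thm:part-1} is a joint statement about runs of \emph{all} $X_i$ on \emph{one common} word. It says nothing about a single machine in isolation. So you cannot use it to conclude $|w_1| = |W(\cdots)|$ before you already know that every machine splits $w$ in the same place.

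The fix is a per-machine count taken directly from the gadget construction:
\begin{itemize}
\item In every basic gadget ($A_i^j$, $B_i^j$, $D_i$, $H_i$), each transition raises the second coordinate $c$ by exactly one, whatever letter it reads (even the ``any letter'' transitions, which may read $@$).
\item A run enters a gadget at $c=0$ and can only leave it from a final state, where $c=val$. So it reads exactly $val$ letters inside that gadget.
\item Composition inserts exactly one $@$-transition between consecutive gadgets.
\item The only way out of $X_i$ is an $@$-transition from a final state of its last gadget.
\end{itemize}
The sequence of gadgets making up $X_i$, and their values (e.g.\ $val(A_i^j)=j-1$, $val(B_i^j)=5$), do not depend on $i$. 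Hence every run of $X_i$ from an initial to a final state reads exactly $\lambda$ letters, for some $\lambda$ depending only on $k$. The two junction letters $@$ therefore sit at positions $\lambda+1$ and $2\lambda+2$ in every $M_i$.

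With the common decomposition established this way, the rest of your argument goes through unchanged: the stack-independence of Theorem~\ref{thm:part-1}, the first coordinates carried across each $@$, and the concatenation of runs for the converse direction.
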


Since $M_i$ is constructed by composing three copies of $X_i$, by Proposition~\ref{prop:size-xi}, we get the following bound on the size of each $M_i$.

\begin{proposition}[Size of $M_i$]\label{prop:size-mi}
	Each $M_i$ has $O(nk^2)$ states and can be constructed in $O(n^2k^2)$ time.
\end{proposition}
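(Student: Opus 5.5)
The statement to prove is Proposition~\ref{prop:size-mi}. Both the state bound and the time bound come from counting, using the additivity of gadget composition together with Proposition~\ref{prop:size-xi}.

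First I record how composition affects size. Let $B$ be the composition of gadgets $A_1,\dots,A_\ell$. By definition, $B$ takes the disjoint union of the states and transitions of the $A_j$ (copies carry fresh auxiliary sets, so nothing collides). It then adds, for each $j<\ell$ and each $x\in\{0,\dots,n-1\}\cup\{\lozenge\}$, one $@$-transition from the final state $x$ of $A_j$ to the initial state $x$ of $A_{j+1}$. These glue transitions are well defined, because each gadget has exactly one initial and one final state with first entry $x$. Consequently:
\begin{itemize}
\item the number of states of $B$ equals the sum of the numbers of states of the $A_j$;
\item the number of transitions of $B$ is the sum of those of the $A_j$, plus $(\ell-1)(n+1)$ extra transitions.
\end{itemize}
The gluing is done by locating the unique initial and final state with each first entry. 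This can be done by indexing the states on the fly, or by a single linear scan over the states. So $B$ is built in time $\sum_j T(A_j)+O(\ell n)$, where $T(A_j)$ is the construction time of $A_j$. Note also that $B$ again satisfies the gadget convention, since its initial states are those of $A_1$ and its final states are those of $A_\ell$.

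Now apply this to $M_i$, which is the composition of $X_i$, $Y_i$ and $Z_i$ with $\ell=3$. Here $Y_i$ and $Z_i$ are copies of $X_i$, differing only by a renaming of the auxiliary set. Renaming changes neither the number of states nor, up to constants, the construction time. Proposition~\ref{prop:size-xi} therefore gives each of the three gadgets $O(nk^2)$ states and construction time $O(n^2k^2)$. Summing:
\begin{itemize}
\item states: $3\cdot O(nk^2)=O(nk^2)$;
\item time: $3\cdot O(n^2k^2)+O(n)=O(n^2k^2)$.
\end{itemize}
This is exactly the claim.

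The only point needing care is that the gluing step does not add a hidden cost, for example a search over all states for each $x$. Keeping a table that indexes initial and final states by their first entry avoids this, and the bound is safe in any case, since even a naive $O(n\cdot nk^2)$ lookup stays within $O(n^2k^2)$.
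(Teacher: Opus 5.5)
Your proposal is correct and follows the paper's argument: $M_i$ is the composition of three copies of $X_i$, so Proposition~\ref{prop:size-xi} gives both bounds directly. Your accounting of the $O(n)$ gluing $@$-transitions and of the lookup cost makes explicit a step the paper leaves implicit, and it does not change the bounds.
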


This completes the main three parts of the reduction. Now we move on to the epilogue.

\subsubsection{Part 4: Epilogue} 
Recall the description of the epilogue part: At the end of the prologue, the $P_i$ gadgets have pushed
$k-1$ nodes $x_1,\dots,x_{k-1}$ into the stack. If we then deploy the $M_i$ gadgets, by the Main Parts theorem, they will end with some nodes $x_1',\dots,x_{k-1}'$ stored in the NFAs $M_1,\dots,M_{k-1}$.
Furthermore, the stack content at the end will be exactly the same 
as it was at the beginning, i.e., the stack content will be $x_{k-1},\dots,x_1$.  
Hence, to check if $x_i' = x_i$ for each $i$, we only need
to pop the stack one element at a time and check that the $i^{th}$ element popped is the node
stored in the $(k-i)^{th}$ NFA. Equivalently, it suffices to check that the $(k-i)^{th}$ element
popped is the node stored in the $i^{th}$ NFA. This is what the Epilogue gadget will accomplish now.

Formally, for each $i \in \{0,\dots,k-1\}$, we will construct a gadget $Q_i$, which will have as
its states $(x,c)^{q_i}$ for $x \in \{0,\dots,n-1\} \cup \{\lozenge\}$ and $c \in \{0,\dots,k-1\}$.

The machine $Q_0$ is a PDA, which upon reading a letter of the form $\overline{z}$ from a state $(\lozenge,c)^{q_0}$ with $c < k-1$, pops $z$ from the stack and moves to $(\lozenge,c+1)^{q_0}$. Intuitively, this gadget simply pops $k-1$ nodes from the stack. See Figure~\ref{fig:q0} for a representation of $Q_0$.

\begin{figure}[t]
	\centering
	\usetikzlibrary{automata, positioning, arrows.meta, bending}
	\begin{tikzpicture}[
		>={Stealth[round]},
		shorten >=1pt,
		auto,
		node distance=2.0cm and 1.7cm,
		every state/.style={
			draw, rounded corners=4pt,
			align=center, font=\small, rectangle,
		},
		trans/.style={font=\scriptsize},
		dots node/.style={draw=none, minimum width=0pt, minimum height=0pt}
		]
		
		\node[state, initial] (c0)  {$(\lozenge, 0)^{q_0}$};
		\node[state, right=of c0]   (c1)  {$(\lozenge, 1)^{q_0}$};
		\node[dots node, right=of c1] (dots) {$\dots$};
		\node[state, right=of dots]  (ck2) {$(\lozenge, k{-}2)^{q_0}$};
		\node[state, right=of ck2]   (ck1) {$(\lozenge, k{-}1)^{q_0}$};
		
		\foreach \src/\tgt in {c0/c1, c1/dots, dots/ck2, ck2/ck1}{
			\draw[->] (\src) to[bend left=40]
			node[above, trans]{$\bar{0},\ \text{pop}(0)$} (\tgt);
			\draw[->] (\src) edge[trans]
			node[above]{$\bar{1},\ \text{pop}(1)$} (\tgt);
			\draw[->] (\src) to[bend right=40]
			node[below, trans]{$\overline{n{-}1},\ \text{pop}(n{-}1)$} (\tgt);
			\node[dots node] at ($(\src)!0.5!(\tgt)+(0,-0.25)$) {$\vdots$};
		}
	\end{tikzpicture}
\caption{The transitions of the PDA $Q_0$ from the initial state $(\lozenge,0)^{q_0}$.}
\label{fig:q0}
\end{figure}

For each $i \in \{1,\dots,k-1\}$, the machine $Q_i$ is an NFA, which upon reading a letter of the form $\overline{z}$ from a state $(x,c)^{q_i}$ with $c < k-1$, moves to $(x,c+1)^{q_i}$ if $c \neq k-i-1$ and otherwise moves to $(x,c+1)^{q_i}$ if and only if $c = k-i-1$ and $x = z$.
Intuitively, the gadget $Q_i$ ensures that the $(k-i)^{th}$ letter that is read is the same
as the one stored in its state. See Figure~\ref{fig:qi} for a representation of $Q_i$.

\begin{figure}[t]
	\centering
	\usetikzlibrary{automata, positioning, arrows.meta, bending}
	\begin{tikzpicture}[
		>={Stealth[round]},
		shorten >=1pt,
		auto,
		node distance=2.0cm and 1.15cm,
		every state/.style={
			draw, rounded corners=4pt,
			align=center, font=\small, rectangle,
		},
		trans/.style={font=\scriptsize},
		dots node/.style={draw=none, minimum width=0pt, minimum height=0pt}
		]
		
		\node[state, initial] (xc0)   {$(x,0)^{q_i}$};
		\node[dots node, right=of xc0] (dots1) {$\dots$};
		\node[state, right=of dots1]  (xck1i) {$(x,k{-}i{-}1)^{q_i}$};
		\node[state, right=of xck1i]  (xki)   {$(x,k{-}i)^{q_i}$};
		\node[dots node, right=of xki] (dots2) {$\dots$};
		\node[state, right=of dots2]  (xck1)  {$(x,k{-}1)^{q_i}$};
		
		
		\draw[->] (xc0) to[bend left=30]
		node[above, trans]{$\bar{0}$} (dots1);
		\draw[->] (xc0) edge[trans]
		node[above]{$\bar{1}$} (dots1);
		\draw[->] (xc0) to[bend right=30]
		node[below, trans]{$\overline{n{-}1}$} (dots1);
		\node[dots node] at ($(xc0)!0.55!(dots1)+(0,-0.2)$) {$\vdots$};
		
		\draw[->] (dots1) to[bend left=30]
		node[above, trans]{$\bar{0}$} (xck1i);
		\draw[->] (dots1) edge[trans]
		node[above]{$\bar{1}$} (xck1i);
		\draw[->] (dots1) to[bend right=30]
		node[below, trans]{$\overline{n{-}1}$} (xck1i);
		\node[dots node] at ($(dots1)!0.4!(xck1i)+(0,-0.2)$) {$\vdots$};
		
		\draw[->] (xck1i) edge[trans] node[above]{$\bar{x}$} (xki);
		
		\draw[->] (xki) to[bend left=30]
		node[above, trans]{$\bar{0}$} (dots2);
		\draw[->] (xki) edge[trans]
		node[above]{$\bar{1}$} (dots2);
		\draw[->] (xki) to[bend right=30]
		node[below, trans]{$\overline{n{-}1}$} (dots2);
		\node[dots node] at ($(xki)!0.6!(dots2)+(0,-0.2)$) {$\vdots$};
		
		\draw[->] (dots2) to[bend left=30]
		node[above, trans]{$\bar{0}$} (xck1);
		\draw[->] (dots2) edge[trans]
		node[above]{$\bar{1}$} (xck1);
		\draw[->] (dots2) to[bend right=30]
		node[below, trans]{$\overline{n{-}1}$} (xck1);
		\node[dots node] at ($(dots2)!0.4!(xck1)+(0,-0.2)$) {$\vdots$};
		
	\end{tikzpicture}
\caption{For any $i > 0$ and any $x \in \{0,\dots,n-1\}$, the transitions of the NFA $Q_i$ from the initial state $(x,0)^{q_i}$.}
\label{fig:qi}
\end{figure}

From the construction of the gadgets, we immediately get the following theorem.

\begin{theorem}[Epilogue Theorem]\label{thm:epilogue}
	Let $w$ be some word. Then, there is a run of $Q_i$ on $w$ between some initial
	state $x_i'$ and some final state $x_i$ for every $i$ if and only if $w = \overline{x_{k-1}'}\dots \overline{x_1'}$,
	$x_0' = x_0 = \lozenge$, for each $i > 0$, $x_i' = x_i$ is a node and the stack of $Q_0$
	at the beginning of the run is of the form $\gamma x_1' x_2' \dots x_{k-1}'$. Moreover, in every such collection of runs, the stack of $Q_0$ at the end stores $\gamma$.
\end{theorem}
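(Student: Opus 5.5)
The argument will follow the same template as the proof of the Prologue Theorem (Theorem~\ref{thm:prologue}): prove the two implications separately by tracking, letter by letter, the state of each gadget $Q_i$. The key structural fact is that every transition of every $Q_i$ goes from a state $(x,c)^{q_i}$ to a state $(x,c+1)^{q_i}$. So the first entry never changes along a run, and the counter $c$ increases by exactly one per letter. Since initial states have $c=0$ and final states have $c=k-1$, every run from an initial to a final state reads exactly $k-1$ letters, each of the form $\overline{z}$ for a node $z$.

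For the right-to-left direction, assume $w=\overline{x_{k-1}'}\cdots\overline{x_1'}$, that $x_i=x_i'$ are nodes, and that the stack of $Q_0$ is $\gamma x_1'\cdots x_{k-1}'$. $Q_0$ reads the $c$-th letter $\overline{x_{k-c}'}$ while the top of its stack is exactly $x_{k-c}'$, so the pop succeeds, and after $k-1$ steps it reaches $(\lozenge,k-1)^{q_0}$ with stack $\gamma$. For $i>0$, $Q_i$ moves freely except on the step from $c=k-i-1$ to $c=k-i$. On that step it reads the $(k-i)$-th letter, which is $\overline{x_{k-(k-i)}'}=\overline{x_i'}$, and this matches the stored node $x_i'$. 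Hence $Q_i$ reaches $(x_i',k-1)^{q_i}=(x_i,k-1)^{q_i}$.

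For the left-to-right direction, fix runs of all $Q_i$ on $w$. Preservation of the first entry gives $x_0=x_0'=\lozenge$ and $x_i=x_i'$ for $i>0$. The length argument gives $|w|=k-1$ with $w=\overline{z_1}\cdots\overline{z_{k-1}}$.
\begin{itemize}
\item The constrained step of $Q_i$ forces $z_{k-i}=x_i'$, so $w=\overline{x_{k-1}'}\cdots\overline{x_1'}$.
\item $Q_0$ pops $z_1,\dots,z_{k-1}$ in that order, which forces the top $k-1$ stack symbols to be $x_{k-1}',\dots,x_1'$ from top down. Hence the stack has the form $\gamma x_1'\cdots x_{k-1}'$ and ends as $\gamma$.
\end{itemize}

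The only place needing care is the index bookkeeping: matching the $(k-i)$-th popped letter to $x_i'$, and matching the pop order to the stack written bottom-to-top. A small point to check along the way is that $x_i$ is indeed a node for $i>0$, which holds because the states of $Q_i$ that admit the constrained transition carry a node in their first entry. Beyond that, the proof is a direct verification.
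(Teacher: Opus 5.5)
Your proof is correct and follows essentially the same approach as the paper. The paper states the Epilogue Theorem as immediate from the construction, in the same letter-by-letter style as its proof of the Prologue Theorem, and your index bookkeeping is right: the $(k-i)$-th letter is $\overline{x_i'}$, and the top-down pop order matches the stack $\gamma x_1'\cdots x_{k-1}'$. One small nit: $x_0=\lozenge$ does not follow from preservation of the first entry alone. It follows because $Q_0$ only has transitions out of $\lozenge$-states and at least one letter is read (as $k\ge 2$).
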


Note that we immediately get the following bound on the size of $Q_i$.
\begin{proposition}[Size of $Q_i$]\label{prop:size-qi}
	Each $Q_i$ has $O(nk)$ states and can be constructed in $O(n^2k)$ time.
\end{proposition}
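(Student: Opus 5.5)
The statement to prove is Proposition~\ref{prop:size-qi}: each $Q_i$ has $O(nk)$ states and can be constructed in $O(n^2k)$ time. I would argue this by direct counting from the definition of $Q_i$, as was done for $P_i$ in Proposition~\ref{prop:size-pi}.

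\textbf{Counting states.} By definition, the states of $Q_i$ are the pairs $(x,c)^{q_i}$ with $x \in \{0,\dots,n-1\} \cup \{\lozenge\}$ and $c \in \{0,\dots,k-1\}$. This gives exactly $(n+1)k = O(nk)$ states. For $Q_0$ only the states with $x = \lozenge$ are reachable, but listing all of them does not change the bound.

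\textbf{Counting transitions.} I would go through the definition one state at a time.
\begin{itemize}
\item In $Q_0$, each state $(\lozenge,c)^{q_0}$ with $c<k-1$ has one transition per letter $\overline{z}$. That gives $n$ transitions per state, so $O(nk)$ transitions in total.
\item In $Q_i$ with $i>0$, a state $(x,c)^{q_i}$ with $c \neq k-i-1$ has $n$ outgoing transitions, one for each $\overline{z}$.
\item A state with $c = k-i-1$ has exactly one outgoing transition, namely on $\overline{x}$.
\end{itemize}
Summing over all $(n+1)k$ states, the number of transitions is at most $(n+1)k \cdot n = O(n^2k)$.

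\textbf{Construction time.} Each transition can be emitted in constant time by iterating over the triples $(x,c,z)$ and performing one comparison to test whether $c = k-i-1$ and $x = z$. So the construction takes time linear in the number of states plus transitions, which is $O(n^2k)$.

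There is no real obstacle here. The only point needing care is that the bound $n$ on the out-degree of each state is uniform across all states, which holds directly by the definition.
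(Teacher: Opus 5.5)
Your proof is correct and follows the paper, which states the bound as immediate from the construction. Your explicit count ($(n+1)k$ states, out-degree at most $n$, hence $O(n^2k)$ transitions built in constant time each) is exactly the routine check behind that remark; states with $c = k-1$ actually have no outgoing transitions, but since you only use $n$ as an upper bound this is harmless.
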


Now, it is time to construct the final machines $\mach_0,\dots,\mach_{k-1}$. For each $i$, we construct $\mach_i$ by composing the three gadgets
$P_i, M_i$ and $Q_i$. From the Prologue, the Main Parts and the Epilogue Theorems (Theorems~\ref{thm:prologue},~\ref{thm:main-parts} and~\ref{thm:epilogue}), we immediately get the 
following theorem, which establishes the correctness of the reduction.

\begin{theorem}[Correctness of the Reduction]
	Let $w$ be some word. Then, there is an accepting run of $\mach_i$ on $w$ for every $i$ 
	if and only if there exists a $3(k-1)$-clique in the graph $G$.
\end{theorem}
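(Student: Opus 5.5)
The plan is to chain the Prologue Theorem (Theorem~\ref{thm:prologue}), the Main Parts Theorem (Theorem~\ref{thm:main-parts}) and the Epilogue Theorem (Theorem~\ref{thm:epilogue}) through the definition of composition. Fix the initial state of each $\mach_i$ to be the set of initial states of $P_i$: $(\lozenge,0)^{p_0}$ for $\mach_0$, and all $(x,0)^{p_i}$ for $i>0$ (equivalently, a fresh start state branching to them on a dummy prefix, which the NFAs tolerate). The final states are those of $Q_i$, and for $\mach_0$ we additionally require acceptance with the stack at $Z_0$.

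\emph{Soundness.} Suppose $w$ is accepted by every $\mach_i$. By the definition of composition, the only transitions between gadgets are the $@$-transitions linking the final state $x$ of one gadget to the initial state $x$ of the next. Hence $w = w_P @ w_M @ w_Q$, where each $w_\bullet$ is read entirely inside the corresponding gadget. I will check that $@$ does not occur in $w_P$ or $w_Q$, and that every $@$ inside $w_M$ is consumed internally by $M_i$ (the internal compositions use $@$ in the same lock-step way in all machines). Given this, the split is uniquely determined and is the same across all $i$.

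\begin{itemize}
\item Applying Theorem~\ref{thm:prologue} to $w_P$: each $P_i$ with $i>0$ ends at a node $x_i$, and the stack of $\mach_0$ becomes $Z_0x_1\dots x_{k-1}$.
\item Applying Theorem~\ref{thm:main-parts} to $w_M$, starting from the $x_i$: we obtain nodes $y_j, z_j$ and end nodes $x'_i$ such that $X$, $Y$, $Z$ are $(k-1)$-cliques, $X$--$Y$, $Y$--$Z$ and $Z$--$X'$ are completely connected, and the stack is unchanged.
\item Applying Theorem~\ref{thm:epilogue} to $w_Q$ with stack $Z_0x_1\dots x_{k-1}$: we get $x'_i = x_i$, and the stack returns to $Z_0$.
\end{itemize}

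Therefore $X\cup Y\cup Z$ is pairwise adjacent. Since $G$ has no self-loops, adjacency forces the nodes in different groups to be distinct, and within each group distinctness follows from the clique check. So $X \cup Y \cup Z$ has exactly $3(k-1)$ nodes and is a $3(k-1)$-clique.

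\emph{Completeness.} Given a $3(k-1)$-clique, split it into $X$, $Y$, $Z$ of size $k-1$ each and set $x'_i = x_i$. The word
\[
x_1\cdots x_{k-1}\,@\,W(X,Y)@W(Y,Z)@W(Z,X)\,@\,\overline{x_{k-1}}\cdots\overline{x_1}
\]
has runs in every machine by the right-to-left directions of the three theorems. The stack of $\mach_0$ is empty (down to $Z_0$) at the end, so all machines accept.

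\emph{Main obstacle.} The delicate point is the bookkeeping of stack contents across gadget boundaries. Each gadget theorem is stated relative to an arbitrary prefix $\gamma$ of the stack and only inspects the top portion. I must verify that the Main Parts gadgets never need to pop below the prologue content, which follows from the "stores $\gamma$ at the end" clauses applied with $\gamma = Z_0x_1\dots x_{k-1}$. I must also check that the $@$-linking respects the unique initial/final state per first entry $x$, so that the node stored in each NFA is carried across boundaries unchanged.
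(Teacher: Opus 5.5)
Your proposal is correct and takes essentially the same approach as the paper, which obtains this theorem by chaining the Prologue, Main Parts and Epilogue Theorems through the definition of composition. Your additional details are exactly what the paper leaves implicit when it calls the result immediate: splitting $w$ at its first and last $@$ (since $P_i$ and $Q_i$ read no $@$), carrying the stack $Z_0x_1\dots x_{k-1}$ across the gadget boundaries so that the epilogue forces $x_i'=x_i$, and using the absence of self-loops to obtain $3(k-1)$ distinct nodes.
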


By Propositions~\ref{prop:size-pi},~\ref{prop:size-mi} and~\ref{prop:size-qi}, it follows that 
\begin{proposition}\label{prop:size-machi}
	Each $\mach_i$ has $O(nk^2)$ states and can be constructed in $O(n^2k^2)$ time.
\end{proposition}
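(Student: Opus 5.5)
The statement is a bookkeeping claim: $\mach_i$ is the composition of $P_i$, $M_i$ and $Q_i$. My plan is to add up the sizes of these three components and the cost of the composition step, using the size propositions already stated.

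First, count states. By Proposition~\ref{prop:size-pi}, $P_i$ has $O(nk)$ states. By Proposition~\ref{prop:size-mi}, $M_i$ has $O(nk^2)$ states. By Proposition~\ref{prop:size-qi}, $Q_i$ has $O(nk)$ states. Composition only takes the disjoint union of the components' states, after renaming auxiliary sets to fresh copies so the union really is disjoint. It adds no new states, only $@$-transitions. So $\mach_i$ has $O(nk + nk^2 + nk) = O(nk^2)$ states.

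Second, bound the construction time. Building $P_i$, $M_i$ and $Q_i$ separately costs $O(n^2k) + O(n^2k^2) + O(n^2k)$ by the same three propositions. The composition adds, for each of the two junctions, one $@$-transition from the final state of the earlier gadget to the initial state of the later gadget whose first entry is $x$. The gadget convention says there is exactly one initial and one final state for each $x \in \{0,\dots,n-1\}\cup\{\lozenge\}$. Hence each junction adds exactly $n+1$ transitions, which costs $O(n)$ time once the initial and final states are indexed by $x$. Renaming states into disjoint copies is linear in the size of the gadgets, so it fits within the same bounds. The total is $O(n^2k^2)$.

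The one point to check is that the composition steps inside $M_i$ (three copies of $X_i$) and inside each $X_i$ are already accounted for. They are: Proposition~\ref{prop:size-mi} is stated for the fully composed $M_i$, so only the two outer junctions need to be added here. With that, the proposition follows.
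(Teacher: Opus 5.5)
Your proof is correct and follows the paper's argument, which simply sums the bounds for $P_i$, $M_i$ and $Q_i$ given by Propositions~\ref{prop:size-pi}, \ref{prop:size-mi} and~\ref{prop:size-qi}. Your additional accounting for the $O(n)$ $@$-transitions added at the two composition junctions is a harmless detail that the paper leaves implicit.
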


We also note that the PDA $\mach_0$ is deterministic, in the sense that, for every state $q$ and every letter $a$ it has at most one outgoing transition from $q$ labelled by $a$. Note that this can be naturally converted into a complete machine that has exactly one outgoing transition for each letter, by adding sink states. However, for the sake of brevity, we do not do it here. 

This completes the desired reduction from the $3(k-1)$-Clique problem, except for the fact that the input and stack alphabets of our machines are linear in size. We now show how their sizes can be reduced at the cost of a $O(\log n)$ blowup in the state space.

\subsection{Reducing The Alphabet Size}

Now, we will show how to reduce the input and stack alphabet size in the machines $\mach_i$
so that it becomes one of constant size. Note that the input alphabet of each $\mach_i$ is $\{0,\dots,n-1\} \cup \{\overline{0},\dots,\overline{n-1}\} \cup \{@,\#\}$.
Similarly the stack alphabet of $\mach_0$ is $\{0,\dots,n-1\}$. Hence, if we encode the elements
in the set $\{0,\dots,n-1\} \cup \{\overline{0},\dots,\overline{n-1}\}$ by means of words over a constant alphabet and force the machines $\mach_i$ to work over this new encoding, we would get equivalent machines
over just constant-sized alphabets. This is what we shall do now.

Without loss of generality, we can assume that $n$ is a power of 2. 
For any node $\ell \in \{0,\dots,n-1\}$, let msbf$(\ell)$ (resp. lsbf$(\ell)$) be the
most significant bit first encoding of $\ell$ (resp. least significant bit first encoding
of $\ell$) over $\{0,1\}$. In this way, we can represent each node $\ell$ by $\log n$ sized words
over $\{0,1\}$. Now, we shall modify the machines $\mach_0,\dots,\mach_k$ so that instead of reading nodes, i.e., letters from the set $\{0,\dots,n-1\} \cup \{\overline{0},\dots,\overline{n-1}\}$, they read $\log n$ sized words over the alphabet $\{0,1\}$ and interpret them as nodes. A naive way of
doing this would be to introduce a $\log n$ sized gadget for each transition of each machine
which replaces reading a node with reading its $\log n$ sized msbf encoding.
Furthermore, whenever it wants to push some node into the stack, it pushes the input
as it is read, i.e., it pushes the msbf encoding and whenever it wants to pop some node from the stack,
it pops the lsbf encoding of that node.

The problem with this approach is the following: The number of transitions in each machine $\mach_i$ is $O(n^2k^2)$ and so this naive way would increase the number of states of each $\mach_i$ to $O(n^2 k^2 \log n)$, which is undesirable.
However, it turns out that by modifying this naive idea by a bit, we can arrive at the desired machines with just an extra $O(n \log n)$ states.
The modification is simply to combine all the ``naive gadgets'' going between any two pair of
states into one ``smart gadget'', which saves a lot of states and allows us to reuse the gadgets.
We now describe this formally by first making a series of observations regarding the gadgets
that we have constructed.

Let $p$ be some state in some machine $\mach_i$. Recall that $p$ is a 3-tuple, the first
of which is either some node of the graph $G$ or the symbol $\lozenge$.
Now, by examining all the gadgets that we have constructed, we have the following first observation.
\begin{quote}
	\textbf{Observation 1: } Suppose there is some outgoing transition from $p$ that reads $z$ for some node $z \in \{0,\dots,n-1\}$. 
	Then, all the outgoing transitions only read letters from the set $\{0,\dots,n-1\}\}$,
	i.e., they only read nodes of the graph $G$. Furthermore, exactly one of the following cases always applies: 
	\begin{itemize}
		\item For each node $y$ of the graph, there is exactly one outgoing transition from $p$ reading $y$ and all such outgoing transitions lead to the same state.
		\item $p$ stores some node $x$, there is exactly one outgoing transition from $p$ and this outgoing transition reads the letter $x$.
		\item $p$ stores $\lozenge$ and for each node $y$ of the graph, there is exactly
		one outgoing transition from $p$ reading $y$. 
	\end{itemize}
	
	Moreover, if the underlying machine is the PDA $\mach_0$, then 
	\begin{itemize}
		\item Either all the outgoing transitions from $p$ do not push into/pop from the stack.
		\item Or all the outgoing transitions from $p$ push the input letter onto the stack.\qaqed
	\end{itemize}
\end{quote}

Our second observation is a dual to the above observation for letters of the form $\overline{a}$.
\begin{quote}
	\textbf{Observation 2: } Suppose there is some outgoing transition from $p$ that reads $\overline{z}$ for some node $z \in \{0,\dots,n-1\}$. 
	Then, all the outgoing transitions only read letters from the set $\{\overline{0},\dots,\overline{n-1}\}$. Furthermore, exactly one of the following cases always applies:
	\begin{itemize}
		\item For each node $y$ of the graph, there is exactly one outgoing transition from $p$ reading $\overline{y}$ and all such outgoing transitions lead to the same state.
		\item $p$ stores some node $x$, there is exactly one outgoing transition from $p$ and this outgoing transition reads the letter $\overline{x}$.
		\item $p$ stores $\lozenge$ and for each node $y$ of the graph, there is exactly
		one outgoing transition from $p$ reading $\overline{y}$. 
	\end{itemize}
	
	Moreover, if the underlying machine is the PDA $\mach_0$, then all the outgoing transitions
	from $p$ reading a letter of the form $\overline{y}$ pop the node $y$ from the stack.\qaqed 
\end{quote}

Intuitively, these observations mean that whenever $p$ stores a node of the graph $G$, then we do not have
to introduce a separate gadget for every outgoing transition of $p$ reading $z$ or $\overline{z}$ for some node $z$.
Instead, we can club together all these gadgets into one gadget that goes to the same state.
Hence, for ``most'' states, we only need a $\log n$ sized gadget.
As for the other remaining states, they all store $\lozenge$ and there are only constantly many of them (here we use the fact that $k$ is a constant) and so we can afford to introduce a separate gadget
for each such state, which will only increase the number of states to $O(n \log n)$ overall.
We now move on to the formal aspects.

Let $p$ be some state of some machine $\mach_i$. Suppose there is at least one outgoing transition of $p$ that reads some node $z$. We now consider each of the three cases given by Observation 1.

\paragraph*{Case 1. } In this case, for each node $y$, 
we have exactly one outgoing transition reading 
$y$ and all of these outgoing transitions move to the same state $q$ 
(and we have no other outgoing transitions). 
Intuitively, this means that it does not matter which node is being read as long as we are sure
that the input being read is indeed a node (and not $\#$ or $@$).
With this in mind, we can replace all such outgoing transitions with the following gadget between $p$ and $q$. First, we add $\log n$ many states $p := (p,0), (p,1), \dots, (p,\log n) = q$
and then from state $(p,i)$ we move to $(p,i+1)$ upon reading either $0$ or $1$. 
Furthermore, if the machine $\mach_i$ is $\mach_0$, i.e., the PDA, then we know that one of the following two conditions must hold.
\begin{itemize}
	\item Either all the outgoing transitions from $p$ do not change the stack. In this case, in the new 
	gadget as well, no stack operations are performed.
	\item Or all outgoing transitions from $p$ push the input letter that is read
	into the stack. In this case, in the new gadget as well, we push all of the input letters that are read,
	i.e., while moving between $(p,i)$ and $(p,i+1)$ we push either 0 or 1 if the input letter
	that is read is either 0 or 1 respectively. This ensures that every action of pushing a node
	into the stack is replaced by the action of pushing its msbf representation into the stack.
\end{itemize}

See Figure~\ref{fig:case-1} for a representation of this gadget.
\begin{figure}[t]
	\centering
	\begin{subfigure}[t]{0.2\textwidth}
		\centering
		\begin{tikzpicture}[
			>={Stealth[round]},
			shorten >=1pt,
			auto,
			node distance=2.0cm and 1cm,
			every state/.style={
				draw, rounded corners=4pt,
				align=center, font=\small, rectangle,
			},
			trans/.style={font=\scriptsize},
			dots node/.style={draw=none, minimum width=0pt, minimum height=0pt}
			]
			
			\node[state] (p) {$p$};
			\node[state, right=of p] (q) {$q$};
			
			\draw[->] (p) to[bend left=40]
			node[above, trans]{$0$} (q);
			\draw[->] (p) edge[trans]
			node[above]{$1$} (q);
			\draw[->] (p) to[bend right=40]
			node[below, trans]{$n{-}1$} (q);
			\node[dots node] at ($(p)!0.5!(q)+(0,-0.25)$) {$\vdots$};
			
		\end{tikzpicture}
		\caption{Case 1: Outgoing transitions from $p$ without stack updates.}
		\label{fig:gadget-orig-nopush}
	\end{subfigure}
	\hfill
	\begin{subfigure}[t]{0.7\textwidth}
		\centering
		\begin{tikzpicture}[
			>={Stealth[round]},
			shorten >=1pt,
			auto,
			node distance=2.0cm and 0.8cm,
			every state/.style={
				draw, rounded corners=4pt,
				align=center, font=\small, rectangle,
			},
			trans/.style={font=\scriptsize},
			dots node/.style={draw=none, minimum width=0pt, minimum height=0pt}
			]
			
			\node[state] (p0) {$p \ {=}\ (p,0)$};
			\node[state, right=of p0] (p1) {$(p,1)$};
			\node[dots node, right=of p1] (dots) {$\dots$};
			\node[state, right=of dots] (pln1) {$(p,\log n{-}1)$};
			\node[state, right=of pln1] (pln) {$q \ {=} \ (p,\log n)$};
			
			\draw[->] (p0) to[bend left=30]
			node[above, trans]{$0$} (p1);
			\draw[->] (p0) to[bend right=30]
			node[below, trans]{$1$} (p1);
			\draw[->] (p1) to[bend left=30]
			node[above, trans]{$0$} (dots);
			\draw[->] (p1) to[bend right=30]
			node[below, trans]{$1$} (dots);
			\draw[->] (dots) to[bend left=30]
			node[above, trans]{$0$} (pln1);
			\draw[->] (dots) to[bend right=30]
			node[below, trans]{$1$} (pln1);
			\draw[->] (pln1) to[bend left=30]
			node[above, trans]{$0$} (pln);
			\draw[->] (pln1) to[bend right=30]
			node[below, trans]{$1$} (pln);
			
		\end{tikzpicture}
		\caption{Case 1: Replacement of outgoing transitions from $p$ without stack updates.}
		\label{fig:gadget-repl-nopush}
	\end{subfigure}
	
	\bigskip
	
	\begin{subfigure}[t]{0.2\textwidth}
		\centering
		\begin{tikzpicture}[
			>={Stealth[round]},
			shorten >=1pt,
			auto,
			node distance=2.0cm and 1.2cm,
			every state/.style={
				draw, rounded corners=4pt,
				align=center, font=\small, rectangle,
			},
			trans/.style={font=\scriptsize},
			dots node/.style={draw=none, minimum width=0pt, minimum height=0pt}
			]
			
			\node[state] (p) {$p$};
			\node[state, right=of p] (q) {$q$};
			
			\draw[->] (p) to[bend left=40]
			node[above, trans]{\tiny{$0,\ \text{push}(0)$}} (q);
			\draw[->] (p) edge[trans]
			node[above]{\tiny{$1,\ \text{push}(1)$}} (q);
			\draw[->] (p) to[bend right=40]
			node[below, trans]{\tiny{$n{-}1,\ \text{push}(n{-}1)$}} (q);
			\node[dots node] at ($(p)!0.5!(q)+(0,-0.25)$) {$\vdots$};
			
		\end{tikzpicture}
		\caption{Case 1: Outgoing transitions from $p$ with stack updates.}
		\label{fig:gadget-orig-push}
	\end{subfigure}
	\hfill
	\begin{subfigure}[t]{0.7\textwidth}
		\centering
		\begin{tikzpicture}[
			>={Stealth[round]},
			shorten >=1pt,
			auto,
			node distance=2.0cm and 0.8cm,
			every state/.style={
				draw, rounded corners=4pt,
				align=center, font=\small, rectangle,
			},
			trans/.style={font=\scriptsize},
			dots node/.style={draw=none, minimum width=0pt, minimum height=0pt}
			]
			
			\node[state] (p0) {$p \ {=} \ (p,0)$};
			\node[state, right=of p0] (p1) {$(p,1)$};
			\node[dots node, right=of p1] (dots) {$\dots$};
			\node[state, right=of dots] (pln1) {$(p,\log n{-}1)$};
			\node[state, right=of pln1] (pln) {$q \ {=} \ (p,\log n)$};
			
			\draw[->] (p0) to[bend left=30]
			node[above, trans]{$0,\ \text{push}(0)$} (p1);
			\draw[->] (p0) to[bend right=30]
			node[below, trans]{$1,\ \text{push}(1)$} (p1);
			\draw[->] (p1) to[bend left=30]
			node[above, trans]{$0,\ \text{push}(0)$} (dots);
			\draw[->] (p1) to[bend right=30]
			node[below, trans]{$1,\ \text{push}(1)$} (dots);
			\draw[->] (dots) to[bend left=30]
			node[above, trans]{$0,\ \text{push}(0)$} (pln1);
			\draw[->] (dots) to[bend right=30]
			node[below, trans]{$1,\ \text{push}(1)$} (pln1);
			\draw[->] (pln1) to[bend left=30]
			node[above, trans]{$0,\ \text{push}(0)$} (pln);
			\draw[->] (pln1) to[bend right=30]
			node[below, trans]{$1,\ \text{push}(1)$} (pln);
			
		\end{tikzpicture}
		\caption{Case 1:  Replacement of outgoing transitions from $p$ with stack updates.}
		\label{fig:gadget-repl-push}
	\end{subfigure}
	\caption{Case 1: Replacement of outgoing transitions from $p$.}
	\label{fig:case-1}
\end{figure}

\paragraph*{Case 2. } In this case, $p$ stores some node $x$, there is exactly one outgoing transition
of $p$ and this transition reads the node $x$. Let $q$ be the state to which this transition goes to. Intuitively, this means
that we can only read the node $x$ from this state and so exactly one gadget between $p$ and $q$ suffices here. With this in mind, we can replace this outgoing transition with the following gadget between
$p$ and $q$. First, we add $\log n$ many states $p := (p,0), (p,1), \dots, (p,\log n) = q$
and then from state $(p,i)$ we move to $(p,i+1)$ upon reading the $(i+1)^{th}$ bit in the msbf encoding of $x$. This ensures that any way of using this gadget would be forced to read the msbf encoding of $x$.

In addition to the above modifications, if the machine $\mach_i$ is $\mach_0$, i.e., the PDA, then we know that one of the following two conditions must hold.
\begin{itemize}
	\item Either the (unique) outgoing transition from $p$ does not change the stack. 
	In this case, in the new  gadget as well, no stack operations are performed.
	\item Or the outgoing transition from $p$ pushes the input node $x$ into the stack. In this case, in the new gadget as well, we push the all of the input letters that are read.
\end{itemize}

See Figure~\ref{fig:case-2} for a representation of this gadget.
\begin{figure}[t]
	\centering
	\begin{subfigure}[t]{0.25\textwidth}
		\centering
		\begin{tikzpicture}[
			>={Stealth[round]},
			shorten >=1pt,
			auto,
			node distance=2.0cm and 1cm,
			every state/.style={
				draw, rounded corners=4pt,
				align=center, font=\small, rectangle,
			},
			trans/.style={font=\scriptsize},
			dots node/.style={draw=none, minimum width=0pt, minimum height=0pt}
			]
			
			\node[state] (p) {$p$};
			\node[state, right=of p] (q) {$q$};
			
			\draw[->] (p) edge[trans]
			node[above]{$x$} (q);
			
		\end{tikzpicture}
		\caption{Case 2: The unique outgoing transition from $p$ without stack updates.}
		\label{fig:gadget-orig-nopush-2}
	\end{subfigure}
	\hfill
	\begin{subfigure}[t]{0.7\textwidth}
		\centering
		\begin{tikzpicture}[
			>={Stealth[round]},
			shorten >=1pt,
			auto,
			node distance=2.0cm and 1.3cm,
			every state/.style={
				draw, rounded corners=4pt,
				align=center, font=\small, rectangle,
			},
			trans/.style={font=\scriptsize},
			dots node/.style={draw=none, minimum width=0pt, minimum height=0pt}
			]
			
			\node[state] (p0) {$p \ {=}\ (p,0)$};
			\node[state, right=of p0] (p1) {$(p,1)$};
			\node[dots node, right=of p1] (dots) {$\dots$};
			\node[state, right=1.6 cm of dots] (pln) {$q \ {=} \ (p,\log n)$};
			
			\draw[->] (p0) to
			node[above, trans]{$\text{msbf}(x)_1$} (p1);
			\draw[->] (p1) to[]
			node[above, trans]{$\text{msbf}(x)_2$} (dots);
			\draw[->] (dots) to[]
			node[above, trans]{$\text{msbf}(x)_{\log n}$} (pln);

		\end{tikzpicture}
		\caption{Case 2: Replacement of the unique outgoing transition from $p$ without stack updates.
		Here $\text{msbf}(x)_i$ denote the $i^{th}$ bit of the msbf($x$).}
		\label{fig:gadget-repl-nopush-2}
	\end{subfigure}
	
	\bigskip
	
	\begin{subfigure}[t]{0.25\textwidth}
		\centering
		\begin{tikzpicture}[
			>={Stealth[round]},
			shorten >=1pt,
			auto,
			node distance=2.0cm and 1.2cm,
			every state/.style={
				draw, rounded corners=4pt,
				align=center, font=\small, rectangle,
			},
			trans/.style={font=\scriptsize},
			dots node/.style={draw=none, minimum width=0pt, minimum height=0pt}
			]
			
			\node[state] (p) {$p$};
			\node[state, right=of p] (q) {$q$};
			
			\draw[->] (p) edge[trans]
			node[above]{$x$} (q);
			\draw[->] (p) edge[trans]
			node[below]{$\text{push}(x)$} (q);
			
		\end{tikzpicture}
		\caption{Case 2: The unique outgoing transition from $p$ with stack updates.}
		\label{fig:gadget-orig-push-2}
	\end{subfigure}
	\hfill
	\begin{subfigure}[t]{0.7\textwidth}
		\centering
		\begin{tikzpicture}[
			>={Stealth[round]},
			shorten >=1pt,
			auto,
			node distance=2.0cm and 1.9cm,
			every state/.style={
				draw, rounded corners=4pt,
				align=center, font=\tiny, rectangle,
			},
			trans/.style={font=\tiny},
			dots node/.style={draw=none, minimum width=0pt, minimum height=0pt}
			]
			
			\node[state] (p0) {$p \ {=} \ (p,0)$};
			\node[state, right=of p0] (p1) {$(p,1)$};
			\node[dots node, right=of p1] (dots) {$\dots$};
			\node[state, right=2.2 cm of dots] (pln) {$q \ {=} \ (p,\log n)$};
			
			\draw[->] (p0) to[]
			node[above, trans]{$\text{msbf}(x)_1$} (p1);
			\draw[->] (p0) to[]
			node[below, trans]{$\text{push}(\text{msbf}(x)_1)$} (p1);
			
			\draw[->] (p1) to[]
			node[above, trans]{$\text{msbf}(x)_2$} (dots);
			\draw[->] (p1) to[]
			node[below, trans]{$\text{push}(\text{msbf}(x)_2)$} (dots);
			
			\draw[->] (dots) to[]
			node[above, trans]{$\text{msbf}(x)_{\log n}$} (pln);
			\draw[->] (dots) to[]
			node[below, trans]{$\text{push}(\text{msbf}(x)_{\log n})$} (pln);
			
		\end{tikzpicture}
		\caption{Case 2:  Replacement of the unique outgoing transition from $p$ with stack updates.}
		\label{fig:gadget-repl-push-2}
	\end{subfigure}
	\caption{Case 2: Replacement of the unique outgoing transition from $p$.}
	\label{fig:case-2}
\end{figure}

\paragraph*{Case 3. } In this case, $p$ stores $\lozenge$ and for every node $y$,
there is exactly one outgoing transition from $p$ reading $y$. Let $q_y$ be the state 
to which this transition goes to. We now replace this outgoing transitions with the following gadget between $p$ and $q_y$. First, we add 
$\log n$ many states $p := (p,0)^y, (p,1)^y, \dots, (p,\log n)^y := q_y$ and then
from state $(p,i)^y$ we move to $(p,i+1)^y$ upon reading the $(i+1)^{th}$ bit in the msbf encoding
of $y$. Furthermore, if the machine $\mach_i$ is $\mach_0$, i.e., the PDA, then we know that one of the two following conditions must hold.
\begin{itemize}
	\item Either the (unique) outgoing transition reading $y$ from $p$ in does not change the stack.
	In this case, in the new  gadget as well, no stack operations are performed.
	\item Or the outgoing transition reading $y$ from $p$ pushes the input node $y$ into the stack. In this case, in the new gadget as well, we push all of the input letters that are read.
\end{itemize}

See Figure~\ref{fig:case-3} for a representation of this gadget. This completes the replacements for all the cases given by Observation 1.

\begin{figure}[t]
	\centering
	\begin{subfigure}[t]{0.24\textwidth}
		\centering
		\begin{tikzpicture}[
			>={Stealth[round]},
			shorten >=1pt,
			auto,
			node distance=2.0cm and 1cm,
			every state/.style={
				draw, rounded corners=4pt,
				align=center, font=\small, rectangle,
			},
			trans/.style={font=\scriptsize},
			dots node/.style={draw=none, minimum width=0pt, minimum height=0pt}
			]
			
			\node[state] (p) {$p$};
			\node[state, right=of p] (q) {$q_y$};
			
			\draw[->] (p) edge[trans]
			node[above]{$y$} (q);
			
		\end{tikzpicture}
		\caption{Case 3: The unique outgoing transition from $p$ reading $y$ without stack updates.}
		\label{fig:gadget-orig-nopush-3}
	\end{subfigure}
	\hfill
	\begin{subfigure}[t]{0.7\textwidth}
		\centering
		\begin{tikzpicture}[
			>={Stealth[round]},
			shorten >=1pt,
			auto,
			node distance=2.0cm and 1.3cm,
			every state/.style={
				draw, rounded corners=4pt,
				align=center, font=\small, rectangle,
			},
			trans/.style={font=\scriptsize},
			dots node/.style={draw=none, minimum width=0pt, minimum height=0pt}
			]
			
			\node[state] (p0) {$p \ {=}\ (p,0)^y$};
			\node[state, right=of p0] (p1) {$(p,1)^y$};
			\node[dots node, right=of p1] (dots) {$\dots$};
			\node[state, right=1.6 cm of dots] (pln) {$q_y \ {=} \ (p,\log n)^y$};
			
			\draw[->] (p0) to
			node[above, trans]{$\text{msbf}(y)_1$} (p1);
			\draw[->] (p1) to[]
			node[above, trans]{$\text{msbf}(y)_2$} (dots);
			\draw[->] (dots) to[]
			node[above, trans]{$\text{msbf}(y)_{\log n}$} (pln);

		\end{tikzpicture}
		\caption{Case 3: Replacement of the unique outgoing transition from $p$ reading $y$ without stack updates.
			Here $\text{msbf}(y)_i$ denote the $i^{th}$ bit of the msbf($y$).}
		\label{fig:gadget-repl-nopush-3}
	\end{subfigure}
	
	\bigskip
	
	\begin{subfigure}[t]{0.24\textwidth}
		\centering
		\begin{tikzpicture}[
			>={Stealth[round]},
			shorten >=1pt,
			auto,
			node distance=2.0cm and 1.2cm,
			every state/.style={
				draw, rounded corners=4pt,
				align=center, font=\small, rectangle,
			},
			trans/.style={font=\scriptsize},
			dots node/.style={draw=none, minimum width=0pt, minimum height=0pt}
			]
			
			\node[state] (p) {$p$};
			\node[state, right=of p] (q) {$q_y$};
			
			\draw[->] (p) edge[trans]
			node[above]{$y$} (q);
			\draw[->] (p) edge[trans]
			node[below]{$\text{push}(y)$} (q);
			
		\end{tikzpicture}
		\caption{Case 3: The unique outgoing transition from $p$ reading $y$ with stack updates.}
		\label{fig:gadget-orig-push-3}
	\end{subfigure}
	\hfill
	\begin{subfigure}[t]{0.72\textwidth}
		\centering
		\begin{tikzpicture}[
			>={Stealth[round]},
			shorten >=1pt,
			auto,
			node distance=2.0cm and 1.9cm,
			every state/.style={
				draw, rounded corners=4pt,
				align=center, font=\tiny, rectangle,
			},
			trans/.style={font=\tiny},
			dots node/.style={draw=none, minimum width=0pt, minimum height=0pt}
			]
			
			\node[state] (p0) {$p \ {=} \ (p,0)^y$};
			\node[state, right=of p0] (p1) {$(p,1)^y$};
			\node[dots node, right=of p1] (dots) {$\dots$};
			\node[state, right=2.2 cm of dots] (pln) {$q_y \ {=} \ (p,\log n)^y$};
			
			\draw[->] (p0) to[]
			node[above, trans]{$\text{msbf}(y)_1$} (p1);
			\draw[->] (p0) to[]
			node[below, trans]{$\text{push}(\text{msbf}(y)_1)$} (p1);
			
			\draw[->] (p1) to[]
			node[above, trans]{$\text{msbf}(y)_2$} (dots);
			\draw[->] (p1) to[]
			node[below, trans]{$\text{push}(\text{msbf}(y)_2)$} (dots);
			
			\draw[->] (dots) to[]
			node[above, trans]{$\text{msbf}(y)_{\log n}$} (pln);
			\draw[->] (dots) to[]
			node[below, trans]{$\text{push}(\text{msbf}(y)_{\log n})$} (pln);
			
		\end{tikzpicture}
		\caption{Case 3:  Replacement of the unique outgoing transition from $p$ reading $y$ with stack updates.}
		\label{fig:gadget-repl-push-3}
	\end{subfigure}
	\caption{Case 3: Replacement of the unique outgoing transition from $p$ reading $y$.}
	\label{fig:case-3}
\end{figure}

Now, suppose there is some outgoing transition from $p$ that reads some letter of the form
$\overline{z}$ for some node $z$. We now consider each of the three cases given by Observation 2 and design appropriate gadgets for each of them. Intuitively, the gadgets that we shall describe are exactly the same as the ones before, except they will read the lsbf encoding of the node, rather than the msbf encoding. We now move on to the formal aspects.

\paragraph*{Case 1. } In this case, let $q$ be the unique state to which all outgoing transitions of $p$
go to. We replace these transitions by adding $\log n$ many states $p := (p,0), (p,1), \dots, (p,\log n) = q$ and then moving from state $(p,i)$ to $(p,i+1)$ upon reading either $0$ or $1$. 
Furthermore, if the machine $\mach_i$ is $\mach_0$, i.e., the PDA, then 
all outgoing transitions from $p$ reading a letter of the form $\overline{y}$ pop the node $y$ from the stack. Correspondingly, in the new gadget as well, we pop all of the input letters that are read.
This ensures that every action of popping a node
from the stack is replaced by the action of popping its lsbf representation from the stack.

\paragraph*{Case 2. } In this case, $p$ stores some node $x$, there is exactly one outgoing transition of $p$ and this transition reads $\overline{x}$. Let $q$ be the unique
state to which this transition goes to. We replace this transition by adding $\log n$ many states $p := (p,0), (p,1), \dots, (p,\log n) = q$
and then moving from state $(p,i)$ to $(p,i+1)$ upon reading the $(i+1)^{th}$ bit in the lsbf encoding of $x$. 
Furthermore, if the machine $\mach_i$ is $\mach_0$, i.e., the PDA, then 
this outgoing transition from $p$ pops $x$ 
from the stack. Correspondingly, in the new gadget as well, we pop all of the input letters that are read.

\paragraph*{Case 3. } In this case, for every node $y$, let $q_y$ be the (unique) state that $p$ moves to upon reading $\overline{y}$. We replace this outgoing transition from $p$ by adding $\log n$ many states 
$p := (p,0)^y, (p,1)^y, \dots, (p,\log n)^y = q_y$
and then moving from state $(p,i)^y$ to $(p,i+1)^y$ upon reading the $(i+1)^{th}$ bit in the lsbf encoding of $y$. 
Furthermore, if the machine $\mach_i$ is $\mach_0$, i.e., the PDA, then
this outgoing transition from $p$ pops $y$ 
from the stack. Correspondingly, in the new gadget as well, we pop all of the input letters that are read.

This completes our transformation. Call the new machines as $\mach_0',\mach_1',\dots,\mach_{k-1}'$.
Given some word $\gamma \in \{0,\dots,n-1\}^*$, let $\text{msbf}(\gamma)$ (resp. $\text{lsbf}(\gamma)$) denote the word obtained by replacing
each node in $\gamma$ with its msbf (resp. lsbf) representation. The following proposition is immediate from construction.

\begin{proposition}
	The following are true.
	\begin{itemize}
		\item Suppose there is a step of the form $(p,\gamma) \xrightarrow{z} (q,\eta)$ 
		(resp. $(p,\gamma) \xrightarrow{\overline{z}} (q,\eta)$) in $\mach_0$ for some 
		states $p,q$, some node $z$ and some words $\gamma, \eta \in \{0,\dots,n-1\}^*$. Then
		there is a run of the form $(p,\text{lsbf}(\gamma)) \xrightarrow{\text{msbf}(z)} (q,\text{lsbf}(\eta))$
		(resp. $(p,\text{lsbf}(\gamma)) \xrightarrow{\text{lsbf}(z)} (q,\text{lsbf}(\eta))$)
		in $\mach_0'$.
		\item Suppose there is a run of the form $(p,\text{lsbf}(\gamma)) \xrightarrow{w} (q,\eta')$ in $\mach_0'$ such that $\gamma \in \{0,\dots,n-1\}^*$ and
		$q$ is the first state in this run after $p$ that belongs to $\mach_0$. Then $w$ must either 
		be $\text{msbf}(z)$ or $\text{lsbf}(z)$ for some node $z$ and there must be a step of the
		form $(p,\gamma) \xrightarrow{z} (q,\eta)$ (resp. $(p,\gamma) \xrightarrow{\overline{z}} (q,\eta)$ ) where $\eta \in \{0,\dots,n-1\}^*$ and $\text{lsbf}(\eta) = \eta'$.
		
		\item Suppose there is a step of the form $p \xrightarrow{z} q$ 
		(resp. $p \xrightarrow{\overline{z}} q$) in $\mach_i$ for some states $p,q$, some node $z$ and some $i > 0$. Then
		there is a run of the form $p \xrightarrow{\text{msbf}(z)} q$
		(resp. $p \xrightarrow{\text{lsbf}(z)} q$)
		in $\mach_i'$.
		\item Suppose there is a run of the form $p \xrightarrow{w} q$ in $\mach_i'$ for some $i > 0$ such that
		$q$ is the first state in this run after $p$ that belongs to $\mach_i$. Then $w$ must either 
		be $\text{msbf}(z)$ or $\text{lsbf}(z)$ for some node $z$ and there must be a step of the
		form $p \xrightarrow{z} q$ or $p \xrightarrow{\overline{z}} q$ respectively.
	\end{itemize}
\end{proposition}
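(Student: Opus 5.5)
The plan is to prove the four items by direct inspection of the replacement gadgets. The two forward items (first and third) are handled case by case on Observations~1 and~2, and the two backward items (second and fourth) by a structural argument on the fresh intermediate states. The one invariant used throughout is that a node $z$ on the stack of $\mach_0$ is represented on the stack of $\mach_0'$ by $\text{lsbf}(z)$ read bottom-to-top. Pushing the bits of $\text{msbf}(z)$ in reading order leaves the most significant bit deepest and the least significant bit on top. Hence the block for $z$, read top-down, is $\text{lsbf}(z)$. Consequently, popping that block bit by bit while reading $\text{lsbf}(z)$ on the input consumes exactly that block.

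For the forward directions, fix a step of $\mach_0$ or $\mach_i$ from $p$ reading $z$ (resp. $\overline{z}$), and split according to which of the three cases of Observation~1 (resp. Observation~2) applies at $p$.
\begin{itemize}
\item In Case~1 the chain $(p,0),\dots,(p,\log n)=q$ accepts every bit string of length $\log n$, in particular $\text{msbf}(z)$ (resp. $\text{lsbf}(z)$).
\item In Case~2 we have $z=x$, and the chain is hard-wired to read exactly $\text{msbf}(x)$ (resp. $\text{lsbf}(x)$).
\item In Case~3 we use the dedicated chain $(p,\cdot)^z$, which ends in $q_z$.
\end{itemize}
The stack effect follows from the invariant. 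Push-type steps push the bits in reading order, turning $\text{lsbf}(\gamma)$ into $\text{lsbf}(\gamma z)$. Pop-type steps in state $p$ with top $z$ pop, in order, the bits that are on top, which are exactly the bits of $\text{lsbf}(z)$. No-op steps leave the stack unchanged.

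For the backward directions, I will use that every state added in the transformation is a fresh intermediate state of exactly one chain. Each chain has in-degree and out-degree confined to that chain and has length exactly $\log n$. Hence a run from $p$ that first returns to an original state $q$ must leave $p$ into one chain and traverse it entirely. This pins $w$ to length $\log n$ and places it in the chain's label language. In Case~2 and Case~3 that language is a single word $\text{msbf}(y)$ or $\text{lsbf}(y)$, and the chain ends at the target of the original $y$- or $\overline{y}$-step. In Case~1 it is all of $\{0,1\}^{\log n}$. Since $n$ is a power of two, every such string is the encoding of some node $z$, and the original machine has a $z$-step to $q$.

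For $\mach_0'$ in the pop case, each pop transition requires the top bit to equal the bit read. So $w$ coincides with the top $\log n$ bits of $\text{lsbf}(\gamma)$, that is, $w=\text{lsbf}(z)$, where $z$ is the topmost node of $\gamma$. The corresponding original step pops $z$ and yields $\eta$ with $\text{lsbf}(\eta)=\eta'$.

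The main obstacle I expect is this last stack-alignment point, together with checking that the chains are genuinely disjoint. For disjointness, the issue is whether chains attached to a common $p$ in Case~3 might share intermediate states. They do not, because they are indexed by $y$. For alignment, I would first prove by induction along runs that the stack of $\mach_0'$ at any original state is always of the form $\text{lsbf}(\gamma)$. This rules out reading a pop whose bits straddle two node blocks. Given this, the rest is routine bookkeeping, and the converse statement follows.
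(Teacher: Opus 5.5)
Your proposal is correct and follows the same route as the paper, which simply says the proposition is immediate from the construction; you have spelled out that inspection case by case over Observations~1 and~2, and used chain disjointness for the two converse items. One small slip: your opening sentence says the block for $z$ is $\text{lsbf}(z)$ read bottom-to-top, but, as you correctly note two sentences later, pushing $\text{msbf}(z)$ leaves $\text{lsbf}(z)$ readable top-down (read bottom-to-top it is $\text{msbf}(z)$), and the top-down reading is the one your pop-case argument actually uses.
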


Given a word $w$ over $\{0,\dots,n-1\} \cup \{\overline{0},\dots,\overline{n-1}\} \cup \{\#,@\}$,
let $\text{tr}(w)$ be the word that is obtained from $w$ by replacing every letter in $\{0,\dots,n-1\}$ 
with its msbf representation and replacing every letter in $\{\overline{0},\dots,\overline{n-1}\}$ with its lsbf representation. Using this proposition it is then easy to see that

\begin{theorem}
	If $w$ is accepted by all of the machines $\mach_0,\mach_1,\dots,\mach_{k-1}$
	then $\text{tr}(w)$ is accepted by all of the machines $\mach_0',\mach_1',\dots,\mach_{k-1}'$.
	Conversely, if $w'$ is accepted by all of the machines $\mach_0',\mach_1',\dots,\mach_{k-1}'$,
	then $w' = \text{tr}(w)$ for some $w$ such that $w$ is accepted by all of the machines $\mach_0,\mach_1,\dots,\mach_{k-1}$.
\end{theorem}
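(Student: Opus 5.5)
The plan is to lift the local correspondence of the preceding proposition to whole runs. It splits the run of each machine at the states that already belong to the original machines ("old" states).

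\emph{Forward direction.} Let $w = a_1 \cdots a_m$ be accepted by every $\mach_i$, and fix an accepting run of each. The stack of $\mach_0$ in every configuration is a word $\gamma \in \{0,\dots,n-1\}^*$ over nodes; I ignore $Z_0$, which is handled trivially. I induct on the run, maintaining the invariant that after reading $\text{tr}(a_1\cdots a_j)$, machine $\mach_0'$ is in the same old state with stack $\text{lsbf}(\gamma)$. There are three kinds of step:
\begin{itemize}
\item A step reading $\#$ or $@$ is kept verbatim in $\mach_i'$, so it preserves the invariant.
\item A step reading a node $z$ is replaced, by the first item of the proposition, by a run on $\text{msbf}(z)$ between the same old states.
\item A step reading $\overline{z}$ is replaced in the same way by a run on $\text{lsbf}(z)$.
\end{itemize}
For the NFAs the third item of the proposition plays the same role. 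Initial and final states are unchanged, and empty stacks correspond, so $\text{tr}(w)$ is accepted by every $\mach_i'$.

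\emph{Converse.} Take $w'$ accepted by every $\mach_i'$, and fix an accepting run of $\mach_0'$. I cut this run at the successive visits to old states. Every newly added state lies strictly inside one of the chains $(p,1),\dots,(p,\log n-1)$, so consecutive old states are separated either by a single $\#$ or $@$ transition or by a segment covered by the second item of the proposition. By induction on the number of segments, with the invariant that the stack equals $\text{lsbf}(\gamma)$ for a node-word $\gamma$, each segment reads $\text{msbf}(z)$ or $\text{lsbf}(z)$ and corresponds to a genuine step of $\mach_0$ on $z$ or $\overline{z}$. This yields $w' = \text{tr}(w)$ for some $w$ with an accepting run of $\mach_0$. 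Doing the same for each $\mach_i'$ with $i>0$, using the fourth item, gives words $w^{(i)}$ with $\text{tr}(w^{(i)}) = w'$.

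The remaining task, and the main obstacle, is to show that all these $w^{(i)}$ are the same word. The issue is that a binary block of length $\log n$ could a priori be decoded as $z$ (via msbf) by one machine and as $\overline{z'}$ (via lsbf) by another. So I show $\text{tr}$ is injective, which forces $w^{(i)} = w$ for all $i$. For this I would invoke the synchronization of all machines established by the earlier theorems (Prologue, Main Parts, Epilogue). These fix the exact shape of $w$: which positions carry a barred letter and which an unbarred one, delimited by the fixed $\#$ and $@$ separators. All machines agree on the positions of the separators in $w'$, since $\#$ and $@$ are read verbatim. The shape then determines, for each gap between separators, how many blocks it contains and whether each block is barred or unbarred. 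Hence every binary block is decoded uniformly by all machines, and $w^{(0)} = \dots = w^{(k-1)}$.

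If that shape argument turns out to be delicate, the fallback is a cheap modification of the construction: add two marker letters that precede msbf and lsbf blocks respectively. This only changes the alphabet by a constant, and it makes $\text{tr}$ injective outright.
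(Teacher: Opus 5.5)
Your forward direction and the machine-by-machine decoding in the converse are the lifting of the preceding proposition that the paper intends; the paper itself says only that the theorem is ``easy to see''. You are right that the real issue is whether the $k$ machines decode each $\log n$-bit block the same way, and the paper does not address this. However, your justification of that step is circular. The Prologue, Main Parts and Epilogue theorems characterise words on which \emph{all} gadgets have runs simultaneously. At this point you only have words $w^{(0)},\dots,w^{(k-1)}$, each accepted by a single machine, so those theorems cannot fix ``the shape'' until you already have a common word. Also, $\text{tr}$ is not injective: $\text{tr}(z)=\text{tr}(\overline{z'})$ when $z'$ is the bit-reversal of $z$. What you need is injectivity on words of one fixed barred/unbarred template, together with a proof that each $w^{(i)}$ individually follows it. Finally, if the ``any letter'' transitions of $B_i^j$ ($i\neq j$) and $D_i$ are taken literally, the equality $w^{(0)}=\dots=w^{(k-1)}$ can fail. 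Only a common decoding is needed, not equality.

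The missing per-machine argument is short.
(i) Inside every gadget each transition raises $c$ by one, and a gadget is left only via $@$ from a state with $c=val(A)$. Corresponding gadgets of different machines have the same value and are composed in the same order. So the gadget/counter position of every $\mach_i$ after $j$ original letters depends only on $j$.
(ii) Every $\mach_i'$ reads $\#$ and $@$ verbatim, and each maximal binary segment of $w'$ between separators splits uniquely into $\log n$-blocks. Hence all $\mach_i'$ tokenise $w'$ identically, and they are at the same gadget/counter position at every block boundary.
(iii) At each such position, $\mach_0$ admits a single letter type. Each $\mach_i$ either admits the same type there, in which case injectivity of msbf (resp.\ lsbf) gives the same letter, or moves to one target state on every letter, in which case it can read $\mach_0$'s letter instead.

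Hence $w:=w^{(0)}$ is accepted by all $\mach_i$ and satisfies $\text{tr}(w)=w'$. Your marker-letter fallback would repair the overall reduction, but it changes the machines and $\text{tr}$, so it does not prove this theorem.
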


Now, let us analyze the size of each $\mach_i'$. For each state $p$ of $\mach_i$ that does not store
$\lozenge$, we have added only $\log n$ more states. For each state $p$ of $\mach_i$ that stores $\lozenge$, we have added $n \log n$ more states. Note that the number of states of $\mach_i$ that
store $\lozenge$ is a constant (depending only on $k^2$, which itself is a fixed constant). By Proposition~\ref{prop:size-machi} and by using the fact that $k$ is a fixed constant,
it follows that the total number of states in each $\mach_i'$ is $O(n \log n)$. 
Furthermore, by Proposition~\ref{prop:size-machi} it also follows 
that each machine $\mach_i'$
can be constructed in time $O(n^2 \log n)$.
Finally, it can also be verified that the PDA $\mach_0'$ is deterministic, in the sense
that for every state $q$ and every letter $a$, there is at most one outgoing transition
from $q$ labelled by $a$.

Now, suppose we can solve the \mainpdanfa non-emptiness problem
in time $O(M^{\omega (k-1)-\epsilon})$ where $M$ is the maximum number of states among all the given machines and $\epsilon$ is some number strictly bigger than $0$. Then, we can solve the $3(k-1)$-Clique problem in $O(n^{\omega (k-1) - \epsilon/2})$ time as follows: Given a graph $G$, first construct the machines $\mach_0',\dots,\mach'_{k-1}$ (each having $O(n \log n)$ states). Then, run the algorithm for the \mainpdanfa non-emptiness problem on $\mach_0',\dots,\mach'_{k-1}$ and return the answer of this algorithm. By the correctness of the reduction, it follows that this is a correct algorithm for deciding the $3(k-1)$-Clique problem. Furthermore, its running time is $O(n^2 \log n + (n \log n)^{\omega (k-1) - \epsilon})$.
Since $(\log n)^{\omega (k-1) - \epsilon}$ grows asymptotically slower than $n^{\epsilon/2}$ for any $\epsilon > 0$, it follows that we can solve the given instance of the $3(k-1)$-Clique problem in time $O(n^{\omega (k-1) - \epsilon + \epsilon/2})  = O(n^{\omega (k-1) - \epsilon/2})$, which contradicts the $3(k-1)$-Clique hypothesis.
Similarly, we can argue for an $O(n^{3(k-1)})$ lower bound for combinatorial algorithms. 
Theorem~\ref{thm:constant-alphabet-lower-bound} now follows.

	\section{2NPDA$(k)$ and \mainpdanfa Non-Emptiness}
\label{sec:2npdak}

In the previous sections, we have shown lower bounds for the \mainpdanfa non-emptiness problem based on the number of states 
of the underlying machines. 
However, this does not preclude the possibility that an $O(N^{3k-\epsilon})$ time algorithm exists for this problem (for some $\epsilon > 0$), \emph{where $N$ is the total number of bits required to encode all the machines}.
Note that $N$ might be quadratic in the number of states. Indeed, in \emph{dense} machines, the number of transitions is quadratic in the number of states.
Even when $k = 1$, i.e., PDA non-emptiness, no algorithm is known that runs in time $O(N^{3-\epsilon})$ for any $\epsilon > 0$.
Unfortunately, standard existing hypotheses
in fine-grained complexity theory seem to be insufficient for explaining this hardness aspect. 
Furthermore, it is known that, unless breakthrought results in circuit complexity appear, 
perhaps the most well-known hypothesis of fine-grained complexity theory (namely, the strong exponential-time hypothesis, SETH)
cannot be used to help explain this hardness \cite{ChistikovMS22}. 

Recently, a new NFA acceptance hypothesis was introduced~\cite{BringmannGKL24-j}.
Assuming this hypothesis,
no algorithm (combinatorial or otherwise) can solve PDA non-emptiness for dense PDA
in time $O(n^{3-\epsilon})$ for any $\epsilon > 0$ where $n$ is the number of states~\cite{BringmannGKL24-j}. 
However, even this result does not explain the absence of $O(N^{3-\epsilon})$ time algorithms where $N$ is the number of bits of the input.
Also, it is not clear how to extend (or use) that hypothesis to also help explain the absence
of faster algorithms for the \mainpdanfa non-emptiness problem.

In this paper we propose a new hypothesis, called the  \emph{\TwoNPDAkhyp hypothesis},
on the computational complexity of the 2NPDA$(k)$ language recognition problem (see below).
Based on this hypothesis, we prove that there can be no algorithm running in time $O(N^{3k-\epsilon})$
for the \mainpdanfa non-emptiness problem. To corroborate this hypothesis,
we provide reductions between the 2NPDA$(k)$ language recognition problem and other problems in language theory and program analysis. 
We now move on to describing the 2NPDA$(k)$ language recognition problem and the associated hypothesis.

\subsection{Two-Way Multihead Nondeterministic Pushdown Automata}

A \emph{two-way $k$-head nondeterministic pushdown automaton} ($\TwoNPDA(k)$) \cite{HarrisonI68,Ibarra73}
is a machine that 
consists of a finite set of control states, 
a read-only input tape, a pushdown store, i.e., a stack, 
and $k$~heads that read the input tape.
There is a single initial control state and a subset of states marked as accepting.
Based on the control state, the top of the stack, and the letters read by the heads,
the machine can nondeterministically pick a transition that
updates its control state, replaces the top symbol on the pushdown store with a (possibly empty) string, and moves each head to the left or right.
An input word $w$ over the machine's input alphabet is placed on the input tape between designated ``end of tape'' markers $\lmarker$ and $\rmarker$.

The machine starts from its initial state with an empty pushdown store, and applies its transitions.
A run on an input word is a sequence of transitions starting from the initial state consistent with the word.
A run is accepting if it leads to an accepting state; 
we say the word is \emph{accepted} by the machine.
Without loss of generality, we can assume above that a word is accepted in a final state 
with all heads scanning the right end marker and the pushdown store being empty. 
The \emph{language} of the machine is the set of all words that it accepts.

A $\TwoDPDA(k)$ is a $\TwoNPDA(k)$ machine in which the transition relation is deterministic:
there is at most one outcome for any state, top of stack, and letters being read by the heads.
Building upon this intuition, we now give a formal description of $\TwoNPDA(k)$ machines, following \cite{HarrisonI68} and \cite{Ibarra73}.

\subsubsection*{Formal Description of $\TwoNPDA(k)$ Machines}\label{subsec:2npda(k)-description}

A $\TwoNPDA(k)$ machine~\cite{HarrisonI68,Ibarra73} is a tuple
$\twonpda = (Q, \Sigma, \Gamma, \delta, q_0, \br F)$, where 
$Q$ is a finite set of states, 
$\Sigma$ are $\Gamma$ are finite alphabets of input and stack symbols, respectively, 
$q_0 \in Q$ is the initial state, 
$F\subseteq Q$ is the set of final states, 
and
$\delta \subseteq 
Q \times \Sigma^k \times \Gamma \times 
Q \times \Gamma^* \times \set{-1, 0, +1}^k$ is the transition relation.
We assume that $\Sigma$ contains two designated ``end of tape'' symbols $\lmarker$ and $\rmarker$ such that any head of $\twonpda$ cannot move left if it reads $\lmarker$ and cannot move right it if reads $\rmarker$. More precisely, suppose $(q,\vec a, \alpha, q', \gamma, \vec d) \in \delta$ is a transition 
with $\vec{a} = (a_1,\dots,a_k), \vec{d} = (d_1,\dots,d_k)$. Then, if $a_i = \lmarker$ (resp. $a_i = \rmarker$) for any $i$, then we require that $d_i \neq -1$ (resp. $d_i \neq 1$).
As usual, we also assume that $\Gamma$ contains a designated ``end of stack'' symbol~$Z_0$
such that any transition $(q, \vec a, Z_0, q', \gamma, \vec d)\in \delta$ satisfies $\gamma = Z_0$ or $\gamma = Z_0 \gamma'$ for $\gamma' \in (\Gamma \setminus \{Z_0\})^*$.
Thus, no transition of $\twonpda$ replaces $Z_0$ on the stack with a different symbol.
Similarly, we also require that no transition pushes $Z_0$
on the stack when the top of the stack is not $Z_0$, i.e., if $(q,\vec a, \alpha, q', \gamma, \vec d) \in \delta$ is a transition such that $Z_0$ occurs in $\gamma$, then $\alpha = Z_0$ and $\gamma = Z_0 \gamma'$ for $\gamma' \in (\Gamma \setminus \{Z_0\})^*$.

As mentioned before, the $\TwoNPDA(k)$ machine $\twonpda$ has a finite control (states from $Q$) 
and $k$~reading heads.
In a single transition, the machine simultaneously reads
$k$~symbols (elements of $\Sigma$) from the input tape
and also reads the top symbol (an element of $\Gamma$) from the pushdown store.
Based on the transition relation $\delta$, the machine moves by changing the control state, 
replacing the top symbol of the pushdown store by a finite string of symbols 
(possibly the empty string), 
and moving each of its input heads at most one symbol left or right (some heads can remain at the same position). We now formalize this.

A configuration of the $\TwoNPDA(k)$~$\twonpda$ is a triple $(q, w_1 \sigma_1 w_2\ldots \sigma_k w_{k+1}, \gamma)$, where 
$q\in Q$, $w_i\in\Sigma^*$ for each $i\in\set{1,\ldots,k+1}$, $\sigma_1 \ldots \sigma_k$ is a permutation of $\set{1,\ldots, k}$, and $\gamma\in\Gamma^+$.  
(Here, we assume that $\{1,\ldots,k\} \cap \Sigma = \emptyset$.)
Such a configuration represents the situation where the state is $q$, the word $w_1\ldots w_{k+1}$ is on the input tape, and the $k$ heads are at those positions 
on the input tape 
that are preceded in the configuration by $\sigma_1$ to $\sigma_k$. 
In other words, if $\sigma_i = \ell$, then the $\ell^{th}$ head of $\twonpda$ observes 
the first letter of the word~$w_{i+1} \ldots w_{k+1}$. 
Intuitively, $w_{i+1}$ is the word on the input tape between 
the cell observed by the $i^{th}$ leftmost head and 
the cell observed by the $(i+1)^{th}$ leftmost head (excluding the latter). 
Notice that $w_1 = \eps$ if and only if one of the heads observes the leftmost cell with the endmarker $\lmarker$;
and the suffix $w_{k+1}$ is never empty.

We write $c = (q', s, \gamma \alpha) \rightarrow (q', s', \gamma \gamma') = c'$
whenever there is a transition 
$(q, \vec a, \alpha, q', \gamma', \vec d) \in \delta$ with the following properties:
In $c$, each head $i$, observes some letter~$a_i$ such that $\vec a = (a_1, \ldots, a_k)$.
Furthermore, compared to $c$, in $c'$ each head must have moved by $d_i$ positions, where $\vec d = (d_1, \ldots, d_k)$
and only letters from $\Sigma$ are counted, not $\{1, \ldots, k\}$.
Our requirement that any head cannot move to the left of $\lmarker$ and cannot move to the right of $\rmarker$ ensures that none of the heads ``fall off'' the input word.
Note that the input tape is not changed, only the scan positions of each head may change. 
We write $\rightarrow^* $ for the reflexive and transitive closure of $\rightarrow$. 

The initial configuration of $\twonpda$ on a word $w \in (\Sigma \setminus\set{\lmarker,\rmarker})^*$
is $(q_0, \sigma_1\sigma_2\ldots \sigma_k\lmarker w\rmarker, Z_0)$
where each $\sigma_i = i$. An accepting configuration of $\twonpda$ on a word $w \in (\Sigma \setminus\set{\lmarker,\rmarker})^*$
is a configuration of the form $(q, \lmarker w \sigma_1\sigma_2 \ldots \sigma_k\rmarker, Z_0)$
where $q \in F$ and $\sigma_1 \ldots \sigma_k$ is a permutation of $\set{1,\ldots, k}$.
A \emph{run} of the $\TwoNPDA(k)$~$\twonpda$ on a word $w \in (\Sigma \setminus\set{\lmarker,\rmarker})^*$ is a sequence of configurations $C_0, C_1, \dots, C_k$
such that $C_0$ is the initial configuration of $\mach$ on $w$ and each $C_i \rightarrow C_{i+1}$.
A run is said to be accepting if the last configuration of that run is an accepting configuration.
A word $w \in (\Sigma \setminus\set{\lmarker,\rmarker})^*$ is \emph{accepted} by $\twonpda$
if there is an accepting run of $\twonpda$ on~$w$.
The language $\langof{\twonpda}$ of $\TwoNPDA(k)$~$\twonpda$ is the set of all words in $(\Sigma\setminus\set{\lmarker, \rmarker})^*$ that it accepts.




The following is the central decision problem for us:
\begin{mdframed}
\textbf{$\boldTwoNPDA\bm{(k)}$ Language Recognition}\\[1ex]
        \Fix $\TwoNPDA(k)$ $M$.\\
	\Input A word $w$.\\
	\Decide Is $w$ is accepted by $M$?
\end{mdframed}
This actually specifies a family of decision problems, one for each $\TwoNPDA(k)$ machine $M$. When referring to a problem from this family, we write ``\textbf{$\bm{M}$-language recognition}'', for any fixed machine $M$.
We are now ready to state the $\TwoNPDAhyp(k)$ hypothesis, $k \ge 1$. 
\begin{mdframed}
\textbf{$\TwoNPDAhyp\bm{(k)}$ Hypothesis}\\[1ex]
There is a fixed $\TwoNPDA(k)$ machine~$M$ such that the $M$-language recognition problem cannot be solved in time $O(|w|^{3k-\epsilon})$ for any $\epsilon > 0$. 
\end{mdframed}

This is an extension of the $\TwoNPDAhyp(1)$ hypothesis that was introduced by Neal~\cite{Neal}
and Heintze and McAllester~\cite{HeintzeMcAllester}. They successfully used the $\TwoNPDAhyp(1)$ hypothesis to explain the lack of sub-cubic algorithms for many problems in program analysis.


\begin{remark}
The class of languages accepted by $\TwoNPDA(k)$ machines has been studied both in language theory and in complexity theory.
Ibarra~\cite{Ibarra73} proved that the hierarchy is strict: for each $k \ge 1$, the class of languages of $\TwoNPDA(k)$ machines is a strict subset of the class of languages of $\TwoNPDA(k+1)$ machines.
Miyano~\cite{Miyano83} showed that each such class has a hardest language.
Cook~\cite[Corollary~1]{Cook71char} showed that the union of $\TwoNPDA(k)$ languages for all $k \ge 1$ precisely captures the class $\PTIME$.
\end{remark}

\subsection{Equivalences of $\TwoNPDAclass(k)$ with Other Problems}\label{subsec:equivalences}

We now show that the $\TwoNPDA(k)$ language recognition problem is linear-time equivalent to
a collection of other problems from program analysis and language theory.
To this end, let us define a \emph{linear-time reduction} from a problem~$\Pi_1$ to a problem~$\Pi_2$
as a linear-time algorithm $f$ that takes as input an instance $x$ of~$\Pi_1$ and
outputs an instance $f(x)$ of~$\Pi_2$ such that $x$ is a yes-instance of $\Pi_1$ if and only
if $f(x)$ is an yes-instance of $\Pi_2$.
Problems $\Pi_1$ and $\Pi_2$ are linear-time equivalent if
there exist linear-time reductions from $\Pi_1$ to $\Pi_2$ and
                                 from $\Pi_2$ to $\Pi_1$.

We first consider the following decision problem, where
$L$ is a context-free language (CFL), i.e., a language $L$ recognized by some PDA.
\begin{mdframed}
\textbf{CFL $k$-Intersection Reachability}\\[1ex]
\Fix
CFL~$L$.\\
\Input\negthickspace
$k$~NFAs $\fsa_1, \ldots, \fsa_k$
over the same alphabet as $L$.
\\
\Decide
Does the intersection $\langof{\fsa_1} \cap \ldots \cap \langof{\fsa_k}$
contain a word from~$L$?
\end{mdframed}
Note that
$k \ge 1$ is fixed,
$L$ is fixed,
and
the common input alphabet of the $k$~NFAs is also fixed.

Similar to $\TwoNPDA(k)$ language recognition, the above definition actually specifies a family of problems, one for
each $k \ge 1$ and each language $L$. 
When referring to a problem from this family, we will write ``\textbf{$\bm{(L, k)}$-intersection reachability}''.
The problem of DCFL $k$-intersection reachability for DFAs
is the special case of CFL $k$-intersection reachability 
when the language $L$ is a DCFL, i.e., a context-free language recognized by a deterministic PDA
and all of the NFAs $\fsa_1, \ldots, \fsa_k$ are actually DFAs.
The general case of the problem will also be referred to as 
CFL $k$-intersection reachability for NFAs.
For NFAs, note the subfamily with $k=1$ is exactly the well-known CFL reachability problem~\cite{Yannakakis90}.

Having introduced this problem, in the rest of this section, we show (informally speaking) that
for each $k \ge 1$, the following problems are linear-time equivalent:
\begin{itemize}
	\item $\TwoNPDA(k)$ language recognition,
	\item DCFL $k$-intersection reachability for DFAs,
	\item CFL $k$-intersection reachability for NFAs,
	\item \mainpdanfa non-emptiness,
	\item DPDA $\cap$ DFA$^{k-1}$ non-emptiness.
\end{itemize}
Note that $\TwoNPDA(k)$ language recognition is parameterised not only by~$k$ but also by the automaton.
Likewise, CFL $k$-intersection reachability is also parameterised by the CFL.
Thus, we cannot prove that \emph{each} problem from one family is linear-time equivalent to
the \mainpdanfa non-emptiness problem for the same~$k$.
However, what we show is that the \emph{hardest} (most difficult)
language recognition problem for $\TwoNPDA(k)$ is linear-time equivalent to
the \mainpdanfa non-emptiness problem, as illustrated below in the formal version of the above-mentioned equivalences.


\begin{theorem}[Linear-Time Equivalences]\label{th:nonemp=langrec}
	Let $k \ge 1$ be any fixed number.
	\begin{enumerate}
		\item For every $\TwoNPDA(k)$~\aut, there is a DCFL~$L$
		and a linear-time reduction from the \aut-language recognition
		problem to the $(L,k)$-intersection reachability problem for DFAs.
		\item For every CFL~$L$, 
		there is a linear-time reduction from the $(L,k)$-intersection reachability problem for NFAs (or DFAs)
		to
		the \mainpdanfa non-emptiness
		(or \mainpdadfa non-emptiness, respectively).
		Moreover, the reduction produces a DPDA if
		$L$ is a DCFL and the first given NFA is a DFA.
		\item There is a fixed $\TwoNPDA(k)$~$\aut_k$ such that
		\mainpdanfa non-emptiness
		has a linear-time reduction to $\langof{\aut_k}$.
	\end{enumerate}
\end{theorem}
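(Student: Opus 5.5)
We prove the three items separately. Each follows the $k=1$ template (cf.~\cite{Chaudhuri,ChistikovMS22}), now carried out over $k$ tape positions at once.

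\textbf{Item 1.} Let $\aut$ have $k$ heads and input word $w$, and write $u = \lmarker w \rmarker$ with $m=|u|$. For each head $i$ we build a DFA $\fsa_i$ with states $\{1,\dots,m\}$, one per tape cell; we call the state of $\fsa_i$ the position of head $i$. The common alphabet $\Delta$ is fixed and consists of the tuples $(a,d)$ with $a\in\Sigma$ and $d\in\{-1,0,+1\}$. On reading $(a_i,d_i)$ from state $j$, the DFA $\fsa_i$ moves to $j+d_i$ if $u_j=a_i$ and the move stays inside the tape; otherwise the transition is undefined (or goes to a sink). Its initial state is $1$ and its unique final state is $m$. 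To let all $k$ DFAs read the same word, we encode one transition of $\aut$ as a block of $k$ letters $(a_1,d_1)\cdots(a_k,d_k)$. DFA $\fsa_i$ uses the $i$-th letter of each block and ignores the others by cycling through a mod-$k$ counter, which is a constant factor in size. The fixed language $L$ is the set of block sequences that are consistent with some transition sequence of $\aut$ from $q_0$ with stack $Z_0$ ending in an accepting state with empty stack (acceptance with all heads on $\rmarker$ is enforced by the final states of the $\fsa_i$).

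The PDA for $L$ simulates $\aut$'s control and stack while reading a block. The PDA may be nondeterministic because $\aut$ is, but we need a DCFL. We fix this by enlarging each letter to also name the transition of $\delta$ being taken. The PDA then just checks the chosen transition against the current state and stack top, which is deterministic, and each DFA ignores the extra component. Correctness is a direct correspondence between accepting runs of $\aut$ on $w$ and words in $L\cap\bigcap_i\langof{\fsa_i}$. The construction is linear in $|w|$ since $\aut$ is fixed.

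\textbf{Item 2.} Fix a PDA (a DPDA if $L$ is a DCFL) for $L$. We take its product with the first NFA $\fsa_1$; since $L$ is fixed, the product has $O(|\fsa_1|)$ states and transitions. The product is deterministic when the PDA is a DPDA and $\fsa_1$ is a DFA, up to handling $\epsilon$-moves, which we route through product states that leave the $\fsa_1$ component unchanged. The remaining $\fsa_2,\dots,\fsa_k$ are kept as the $k-1$ NFAs. This gives the claimed reduction.

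\textbf{Item 3.} This is the main obstacle: we must design a single fixed $\aut_k$ that works for arbitrary instances encoded as a word. We encode $P,\fsa_1,\dots,\fsa_{k-1}$ as a word listing transitions, with state names written in unary or as delimited blocks. Head $i$ stores the current state of the $i$-th machine (the PDA for $i=1$) by resting on the block of that state.

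One simulated step guesses a letter $a$. The machine then moves each head to a transition of its own machine that starts at the current state and is labelled by $a$, and for the PDA this transition must also be consistent with the stack. The stack holds the (constant-size) stack symbols of $P$. The difficulty is comparing names across heads using only a constant control. Letters of $\Sigma$ are names in the encoding, so checking that all $k$ heads read the same letter $a$ requires comparing strings on the tape. We handle this by lining up encodings symbol by symbol: we move two heads in lockstep across their name blocks, using one head as a cursor while the others return to a delimiter.

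Jumping to the target-state block needs a way to find a given state name. The intended approach is to number states in unary and use a free head to count, or to sort the encoding so that transitions out of each state lie adjacent to that state's block. The catch is that all $k$ heads are occupied storing states. We therefore exploit the stack: we push a unary copy of a name, then move a head to match it by popping. Because the name sits above the simulation's own stack symbols, it is pushed and popped around them without disturbing them.

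Each simulated step costs time linear in the encoding, but only acceptance matters, not running time, so this is fine. Linearity of the reduction, namely that the encoding size is linear in the instance, is immediate.
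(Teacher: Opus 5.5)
Your proposal is correct and takes essentially the paper's route in all three items: a DPDA over transition letters plus $k$ head-position DFAs, the product of the fixed PDA with $\fsa_1$, and a fixed $\TwoNPDA(k)$ with one head per machine that relocates a head by pushing a name onto the stack and popping it against the target. In item 1 your $k$-letter blocks become redundant once each letter names its transition (the paper simply uses $\delta$ as the alphabet and lets $\fsa_i$ read $(a_i,d_i)$ off $t$). In item 3 you should make three fixes:
\begin{itemize}
\item Keep state names in binary inside delimited blocks and push a plain copy of the name. Unary names make the encoding superlinear, which contradicts your linearity claim.
\item Add, as the paper does, nondeterministic loops in which head~1 alone takes $\eps$-transitions of $\pu$. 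As written, your simulation only advances all machines together on a common letter, so it would miss runs that use $\eps$-moves.
\item Add a final check that all simulated states are accepting and that the simulated stack is empty.
\end{itemize}
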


Recall that the stack alphabet in the \mainpdanfa non-emptiness problem has constant size (see Section~\ref{subsec:intersect}).
Indeed, in Theorem~\ref{th:nonemp=langrec}, the reduction from $(L,k)$-intersection reachability produces a PDA whose stack alphabet only depends on $L$.

For $k=1$, a similar triangle of reductions appears in~\cite{Chaudhuri} and later in~\cite{ChistikovMS22}. Decision problems for one-way and two-way machines were originally connected by Hopcroft and Ullman in~\cite{hopcroft1967approach} and later also studied by Rubtsov and Vyalyi~\cite{RubtsovV2026}.

Note that by traversing through this sequence of equivalences, it follows that
there is a fixed $\TwoNPDA(k)$ $\aut_k$ such that, for every $\TwoNPDA(k)$ $\aut$,
the $\aut$-language recognition problem is linear-time reducible to the
$\aut_k$-language recognition problem. Hence, this proves
the existence of a ``hardest'' $\TwoNPDA(k)$ language, in terms of time complexity. 

It is known that there exist ``level by level'' reductions between the 
$\mainpdanfa$ non-emptiness problem
to the deterministic time hierarchy within~$\PTIME$ 
\cite{SwernofskyW15,WeharThesis,OliveiraW20}.
It follows from Theorem~\ref{th:nonemp=langrec} that $\TwoNPDAclass(k)$ has similar reductions.

The equivalences of Theorem~\ref{th:nonemp=langrec}
are with respect to linear-time reductions.
For measuring the computational complexity relative to the number of states and transitions 
in the automata, we can immediately draw the following consequences:

\begin{corollary}
Fix $k \ge 1$ and a finite alphabet $\Sigma$.
\begin{enumerate}
\item
The $\TwoNPDA(k)$ hypothesis is false if and only if
there is $\epsilon > 0$ for which
the \mainpdanfa non-emptiness problem has an algorithm
with running time $O(m^{3 k - \epsilon})$, where
$m$ is the maximum number of transitions in the PDA and NFAs.
The same holds for the special case of the 
\maindpdadfa non-emptiness problem.
\item
There exists a deterministic context-free language~$L$ such that,
unless the $\TwoNPDA(k)$ hypothesis is false,
there exists no $\epsilon > 0$ for which
the $(L,k)$-intersection reachability problem for DFAs has an algorithm
with running time $O(n^{3 k - \epsilon})$, where
$n$ is the maximum number of states of the DFAs.
\end{enumerate}
\end{corollary}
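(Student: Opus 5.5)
The plan is to read both items off the chain of linear-time reductions in Theorem~\ref{th:nonemp=langrec}. The only real work is converting between the size measures involved: input bit-size, number of transitions $m$, word length $|w|$, and number of states $n$. Throughout, $k$, $\Sigma$ and the stack alphabet $\Gamma$ are fixed. An instance with at most $m$ transitions per machine therefore has at most $O(m)$ useful states, since states without incident transitions can be discarded apart from the initial one. Its bit-size $N$ thus satisfies $m \le N = O(m\log m)$. Any polylogarithmic factor that appears will be absorbed by replacing $\epsilon$ with $\epsilon/2$, as in the end of the proof of Theorem~\ref{thm:constant-alphabet-lower-bound}.

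For item~1, I prove the two directions separately.

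Suppose the $\TwoNPDA(k)$ hypothesis is false. Then, in particular, for the fixed machine $\aut_k$ of part~3 of Theorem~\ref{th:nonemp=langrec}, $\aut_k$-language recognition runs in time $O(|w|^{3k-\epsilon})$ for some $\epsilon>0$. Given a \mainpdanfa instance with at most $m$ transitions per machine, the part~3 reduction produces, in linear time, a word $w$ with $|w| = O(N) = O(m\log m)$. The total time is therefore $O((m\log m)^{3k-\epsilon}) = O(m^{3k-\epsilon/2})$. Since \maindpdadfa is a special case, the same algorithm also handles it.

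Conversely, suppose \maindpdadfa non-emptiness (and hence also the more general problem, if that is what is assumed) can be solved in time $O(m^{3k-\epsilon})$. Take an arbitrary $\TwoNPDA(k)$ $\aut$ and an input word $w$. Part~1 yields a DCFL $L$ and $k$ DFAs of total size $O(|w|)$. Part~2, applied with $L$ a DCFL and the first automaton a DFA, turns this into a DPDA together with $k-1$ DFAs, again of size $O(|w|)$, so $m = O(|w|)$. Hence $\aut$-language recognition is solvable in time $O(|w|^{3k-\epsilon})$ for every $\aut$, which falsifies the hypothesis: the hypothesis asserts that some machine is hard, and we have shown that none is.

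For item~2, let $\aut$ be the machine witnessed by the $\TwoNPDA(k)$ hypothesis, and let $L$ be the DCFL given by part~1 for this $\aut$. An $O(n^{3k-\epsilon})$ algorithm for $(L,k)$-intersection reachability for DFAs would then give a fast algorithm for $\aut$. The part~1 reduction outputs DFAs over the fixed alphabet of $L$ whose sizes, and in particular whose numbers of states, are $O(|w|)$. We would therefore decide $\langof{\aut}$ in time $O(|w|) + O(|w|^{3k-\epsilon})$, contradicting the choice of $\aut$.

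I expect the main obstacle to be the bookkeeping of size measures rather than any new construction. There are two points to check. First, "linear time" in Theorem~\ref{th:nonemp=langrec} is measured in bit-size, and I must confirm that with $\Sigma$ and $\Gamma$ fixed this equals the transition count up to a logarithmic factor. Second, for item~2 I must confirm that the number of DFA states produced in part~1 is linear in $|w|$ and not merely bounded by the bit-size. For the second point I would inspect the part~1 construction: each DFA should essentially track one head position over $\lmarker w\rmarker$, which gives $O(|w|)$ states. I would also handle the degenerate $\omega=2$, $k=1$ issue in the same way as the footnote in Section~\ref{sec:lb-general}, if it arises.
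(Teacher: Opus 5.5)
Your proposal is correct and takes the same route as the paper. The paper simply asserts that the corollary follows immediately from the linear-time reductions of Theorem~\ref{th:nonemp=langrec}, and you supply the omitted bookkeeping: input bit-size versus $m$ up to a logarithmic factor absorbed by halving $\epsilon$, and the $|w|+O(1)$ states of the part~1 DFAs for item~2.

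One detail that both you and the paper leave implicit concerns the converse direction of item~1. There the instance produced by parts~1 and~2 is over the input alphabet $\delta$ and the stack alphabet of \aut, not over the fixed $\Sigma$ and $\Gamma$. So a constant-factor block re-encoding is needed, which is harmless because \aut is fixed, provided $|\Sigma|\ge 2$ and $\Gamma$ has at least two non-bottom symbols.
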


We next prove the three claims of Theorem~\ref{th:nonemp=langrec}.

\subsection{From $\TwoNPDA(k)$ language recognition to DCFL $k$-intersection reachability for DFAs}

Let \aut be a $\TwoNPDA(k)$ machine.
For every $w \in \Sigma^*$,
we show how to construct a DPDA $\pu$ and DFAs $\fsa_1, \ldots, \fsa_k$ such that
\begin{equation*}
	w \in \langof{\aut}
	\quad\text{if and only if}\quad
	\langof{\pu} \cap \langof{\fsa_1} \cap \ldots \cap \langof{\fsa_k} \ne \emptyset.
\end{equation*}
In our construction, the DPDA~\pu will be independent of the input word~$w$
and determined solely by the automaton~\aut;
thus, the DCFL $L$ from the theorem statement will be chosen as $L = \langof{\pu}$.
The DFAs $\fsa_1, \ldots, \fsa_k$ will be constructed in time linear in the length of~$w$.

The input alphabet of~$\pu$, $\fsa_1, \ldots, \fsa_k$ is the set $\delta$
of transitions of the $\TwoNPDA(k)$~\aut.
The language $\langof{\pu} \cap \langof{\fsa_1} \cap \ldots \cap \langof{\fsa_k}$ will consist of
all accepting runs of~\aut on input~$w$.
Indeed, a sequence $\rho = t_1 \ldots t_n \in \delta^*$ is an accepting run
if and only if the following three conditions are satisfied:
\begin{itemize}
	\item
	Transitions of $\rho$ trace a path in the finite graph on the states of~\aut
	from the initial state to a final state.
	\item
	Stack movements prescribed by the sequence~$\rho$ are valid, that is, the sequence of pushes and pops
	specified by the sequence $\rho$ constitutes a valid computation of the underlying stack.
	\item
	For each $i \in [1, k]$, the letters on the input tape that are read by the $i$th head
	in the sequence~$\rho$ are compatible with the input tape containing the word~$\lmarker w \rmarker$,
	where $\lmarker$ and $\rmarker$ are endmarker symbols.
\end{itemize}
In short, the DPDA~\pu checks the first two conditions, and
each DFA~$\fsa_i$ checks the third condition for $i$. We now expand upon the formal details of the reduction.

\paragraph*{Construction of the DPDA~\pu and the DCFL~$L$. }
The deterministic pushdown automaton~\pu verifies the first two conditions in the list above and is completely determined by the $\TwoNPDA(k)$~\aut.
The set of its control states is equal to that of~\aut, call it~$Q$.
For every $t = (q, \vec{a}, \alpha, q', \gamma', \vec{d})\in \delta$,
\pu will have a transition from $q$ to $q'$ labelled by $t$ which will pop $\alpha$
from the stack and push $\gamma'$ into the stack.
This way, both the conditions in the list above are checked by~\pu.
Note that the DPDA~\pu completely ignores the symbol $a \in \Sigma$ and the head movements $d$ prescribed by~$t$. 

To complete the description of~\pu, we choose
the initial state and the set of final states to be the same as they are in~\aut.
By construction, \pu has no $\eps$-transitions.
As already announced, $L = \langof{\pu}$ is the sought DCFL.

\paragraph*{Construction of the DFAs~$\fsa_1, \ldots, \fsa_k$.}
Let $i \in [1, k]$.
The deterministic finite automaton~$\fsa_i$ verifies the third condition in the list
above for~$i$.
The set of control states of $\fsa_i$ is $\{0, 1, \ldots, |w|+1\}$, which we think
of as possible positions of the $i^{th}$ head of \aut over the input tape.
The initial state is~$0$, and the only final state is~$|w|+1$, in line
with the semantics of $\TwoNPDA(k)$.

We refer to the $\lmarker$ (resp. the $\rmarker$ symbols) as the $0^{th}$ (resp. the $(|w|+1)^{th}$ letter) of $w$. Now, let $t = (q, \vec{a}, \alpha, q', \gamma', \vec{d}) \in \delta$ with $\vec{a} = (a_1,\dots,a_k)$
and $\vec{d} = (d_1,\dots,d_k)$. Corresponding to $t$, for every $0 \le j \le |w|+1$, $A_i$ will have a transition from $j$ to $j+d_i$ labelled by $t$ if and only if the $j^{th}$ letter of $w$ is $a_i$. 
This way the DFA $A_i$ checks the third condition in the list above for~$i$.

To ensure that the transition function of each DFAs is total, we
add a rejecting sink state and direct all missing transitions towards it.

\paragraph*{Running time of the reduction. } By construction it is immediately seen that $w \in \langof{\aut}$ if and only if $	\langof{\pu} \cap \langof{\fsa_1} \cap \ldots \cap \langof{\fsa_k} \ne \emptyset$. Regarding the running time, it is easy to see that \pu is determined solely by~\aut. Furthermore, the DFAs $\fsa_1, \ldots, \fsa_k$ depend on~$w$ and have $|w|+2$ states each. Their input alphabet is $\delta$, which is again independent of~$w$, and so all of these machines can be constructed in time that is linear in~$w$. This completes the proof of the reduction.


\subsection{From CFL $k$-intersection reachability to \mainpdanfa non-emptiness}

Let $L$ be a fixed context-free language, i.e., $L$ is recognized by some fixed PDA $P_0$.
We will now reduce the $(L,k)$-intersection reachability problem to the \mainpdanfa non-emptiness problem.

To this end, let $\fsa_1,\dots,\fsa_k$ be $k$~NFAs. Without loss of generality, we can assume that
if at all there is a DFA among these $k$~NFAs, then $\fsa_1$ is one of them. 
Now, the reduction first produces a PDA~$P$ for the language
$L \cap \langof{\fsa_1} = \langof{P_0} \cap \langof{\fsa_1}$
by utilising the standard product construction
(see, e.g., Hopcroft, Motwani, and Ullman's textbook~\cite[Section~7.3.4]{HopcroftMotwaniUllman}).
The set of control states of PDA~$P$ is the Cartesian product
of the sets of control states of~$P_0$ and $\fsa_1$.
Since $P_0$ is fixed, the description size of~$P$ is linear in the description
size of~$\fsa_1$.

The reduction then outputs the PDA~$P$ and the NFAs $\fsa_2, \ldots, \fsa_k$,
which together form the input to the \mainpdanfa non-emptiness problem.
The correctness and running time analysis of the reduction are immediate.

We remark that, if $L$ is a DCFL and $\fsa_1$ is a DFA, then
the product PDA~$P$ is in fact a DPDA~\cite[Theorem~3.1]{GinsburgG66}.

\subsection{From \mainpdanfa non-emptiness to $\TwoNPDA(k)$ language recognition}

\begin{figure*}
\begin{mdframed}
	\begin{algorithmic}[1]
		\State push (the encoding of) the bottom-of-stack symbol of $\fsa_0 = \pu$ onto the stack
		\For{$i = 1$ to $k$:}
		position head $i$ to initial state of $\fsa_{i-1}$
		\EndFor
		\While{true}
		\While{$*$} \Comment{nondeterministic choice: skip or repeat}
		\State move head~$1$ to an outgoing $\eps$-transition $t_0$ in the encoding of $\fsa_0$
		\State execute $t_0$ in $\fsa_0$
		\EndWhile
		\For{$i = 1$ to $k$}
		move head~$i$ to an outgoing transition $t_{i-1}$ in the encoding of $\fsa_{i-1}$
		\EndFor
		\For{$i = 2$ to $k$}
		check that $t_0$ and $t_{i-1}$ read the same input letter $a \in \LangAlph$
		\EndFor
		\For{$i = 1$ to $k$}
		execute $t_{i-1}$ in $\fsa_{i-1}$
		\EndFor
		\While{$*$} \Comment{nondeterministic choice: skip or repeat}
		\State move head~$1$ to an outgoing $\eps$-transition $t_0$ in the encoding of $\fsa_0$
		\State execute $t_0$ in $\fsa_0$
		\EndWhile
		\If{all of $\fsa_0, \fsa_1, \ldots, \fsa_{k-1}$ are accepting:}
		\myaccept
		\EndIf
		\EndWhile
	\end{algorithmic}
\end{mdframed}
\caption{Pseudocode of a $\TwoNPDA(k)$~$\aut_k$ that
that accepts~$w \in \EncAlph^*$ if and only if
$\langof{\pu} \cap \langof{\fsa_1} \cap \ldots \cap \langof{\fsa_{k-1}} \ne \emptyset$.}
\label{fig:2npda:constr}
\end{figure*}

The input to the \mainpdanfa non-emptiness problem
is a concatenation of the string encoding the PDA~\pu
and strings encoding the NFAs~$\fsa_1, \ldots, \fsa_{k-1}$, with delimiters
separating one from another.
The encodings use a fixed alphabet, which we denote by~\EncAlph;
then the input is some $w \in \EncAlph^*$.
In particular, all letters of the input alphabet of $\pu, \fsa_1, \ldots, \fsa_{k-1}$,
denoted by~\LangAlph, and of the stack alphabet of~\pu, denoted by~$\Gamma$,
are encoded by words from~$\EncAlph^*$.
We describe a fixed $\TwoNPDA(k)$~$\aut_k$ that accepts~$w \in \EncAlph^*$ if and only if
$w$~encodes some PDA~\pu and NFAs~$\fsa_1, \ldots, \fsa_{k-1}$ that accept
some word in common, i.e., if and only if $\langof{\pu} \cap \langof{\fsa_1} \cap \ldots \cap \langof{\fsa_{k-1}} \ne \emptyset$.

The description of $\TwoNPDA(k)$~$\aut_k$ will only depend on~$k$
but not on the PDA~\pu or the NFAs~$\fsa_1, \ldots, \fsa_{k-1}$.
The reduction is linear: in fact, it is just a matter of encoding the list $\pu, \fsa_1, \ldots, \fsa_{k-1}$
as a word $w \in \EncAlph^*$. The stack alphabet of $\aut_k$ is $\EncAlph \cup \{Z_0\}$, where $Z_0$ is the ``end of stack'' symbol.

The idea is for $\aut_k$ to guess a word in this intersection and
to simulate, on the fly, $k$~accepting runs on this word in lockstep:
one in the PDA~\pu and $k-1$ in the NFAs~$\fsa_1, \ldots, \fsa_{k-1}$.
In a nutshell, $\aut_k$ uses each one of its heads for keeping track of the states of each machine in these accepting runs,
and the stack for storing the content of the stack of the PDA~\pu.
The pseudocode in Figure~\ref{fig:2npda:constr} 
summarises the construction and is meant to be
seen as the program of the (fixed) $\TwoNPDA(k)$~$\aut_k$.
(For convenience of notation, we use $\fsa_0$ to mean the PDA $\pu$).
We now detail each step of the simulation.

\paragraph*{Initialising the stack. }
The description of the PDA $\fsa_0$ contains the encoding of its bottom-of-stack symbol
as a string over~\EncAlph, which is separated from other parts of the input word~$w$ of~$\aut_k$
by delimiters (also coming from~\EncAlph).
At the beginning of the simulation, $\aut_k$ locates this encoding using one of its heads
and pushes it onto the stack.
The head then returns to the left endmarker.

\paragraph*{Positioning head~$i$ to the initial state of $\fsa_i$. }
As already mentioned, $\aut_k$ uses one head for each of the machines $\fsa_0, \fsa_1, \ldots, \fsa_{k-1}$.
To keep track of the current control state of $\fsa_{i-1}$, the $i^{th}$ head of $\aut_k$ is used.
We can assume that the string encoding the machine~$\fsa_{i-1}$~--- be that PDA~\pu or NFA~$\fsa_{i-1}$~--- %
includes a list of the states of $\fsa_{i-1}$, in which the initial state comes first.
The $\TwoNPDA(k)$~$\aut_k$ moves the $i^{th}$ head right from the left endmarker,
skipping encodings of the first $i-1$ automata completely, and stopping over the first element in
the list of control states of the $i^{th}$ automaton, $\fsa_{i-1}$.

\paragraph*{Guessing and executing transitions. }
This phase of the simulation consists of the three \textbf{for}~loops in the pseudocode above,
as well as the two nondeterministic \textbf{while}~loops handling $\eps$-transitions
in the PDA~$\fsa_0$. We discuss the \textbf{for}~loops first.

In the first loop,
moving each head of~$\aut_k$ to an outgoing transition within the encoding of~$\fsa_{i-1}$
is non-deterministic: $t_{i-1}$ is guessed.
The implementation is self-explanatory, except for the following detail:
As an invariant of the simulation, we require that, in between iterations of
the main \textbf{while} loop in the pseudocode,
head~$i$ of $\aut_k$ is positioned over (the encoding of) the current control state of $\fsa_{i-1}$,
call it $q_{i-1}$,
within the list of all states of $\fsa_{i-1}$.
When the next transition~$t_{i-1}$ of~$\fsa_{i-1}$ is guessed,
the $\TwoNPDA(k)$~$\aut_k$ must check that $t_{i-1}$ departs from~$q_{i-1}$.
To this end, $\aut_k$~first pushes the encoding of $q_{i-1}$ on the stack.
It then moves head~$i$ to the encoding of~$t_{i-1}$, thus guessing~$t_{i-1}$.
At this point $\aut_k$~pops from the stack to check the match of the control state.
If the check fails, the nondeterministic branch rejects.

In the second loop, the goal is to ensure that the guessed transitions~$t_0, t_1, \ldots, t_{k-1}$
all read the same input letter $a \in \LangAlph$ from the input.
Recall that letters of the alphabet~\LangAlph are encoded by words over~\EncAlph.
(In fact, this is why it is not necessarily possible to guess~$a$
upfront and store it in the control state of~$\aut_k$.)
To perform the check, each $i^{th}$ head of~$\aut_k$ locates the encoding of the input letter
within the description of the corresponding transition~$t_{i-1}$.
The heads then move in synchrony to check equality of the letters.
As above, if the check fails, the nondeterministic branch of the computation rejects.

In the third loop, the transitions~$t_0, t_1, \ldots, t_{k-1}$ are executed:
\begin{itemize}
	\item
	For NFA~$\fsa_1, \ldots, \fsa_{k-1}$, it suffices, using head~$i$, to push the
	encoding of the destination of the transition~$t_{i-1}$ onto the stack, then
	locate the list of control states of $\fsa_{i-1}$ and guess the position of
	the destination in that list.
	After that, the stack is popped to compare the destination as recorded on the stack
	(which is popped) with the state in the list,
	ensuring the invariant of the simulation.
	\item
	For PDA~$\fsa_0$, we also need to simulate the operations on the stack.
	Recall that the semantics of a PDA transition dictates that a stack symbol
	$\alpha \in \Gamma$ be popped from the top of the stack and replaced
	by a word $\gamma' \in \Gamma^*$.
	Again as previously, letters of the stack alphabet~$\Gamma$ are encoded
	using words from~$\EncAlph^*$.
	To perform the stack operations, the $\TwoNPDA(k)$~$\aut_k$ locates the encoding of~$\alpha$
	and starts popping the stack, checking that the symbols match the encoding of~$\alpha$.
	If the check fails, the nondeterministic branch of computation rejects
	(because the guessed transition is not available from the current configuration).
	Otherwise $\aut_k$~proceeds to push the encoding of~$\gamma'$.
	After these stack operations, $\aut_k$ goes on to update the current control
	state, as in the case of NFA.
\end{itemize}

The \textbf{for}~loops discussed above ensure that all of
$\fsa_0, \fsa_1, \ldots, \fsa_{k-1}$ synchronise on the input letters,
i.e., in effect $\aut_k$~guesses $k$~sequences of transitions that form
$k$~accepting runs.
However, unlike the NFA~$\fsa_1, \ldots, \fsa_{k-1}$, the PDA~$\fsa_0$
may have $\eps$-transitions.
These are taken care of by the two nondeterministic \textbf{while}~loops:
with the help of head~$0$, $\aut_k$ can simulate an arbitrary sequence of
$\eps$-transitions taken by~$\fsa_0$ before and after \LangAlph-transitions.

\paragraph*{Checking acceptance. }
When $\aut_k$ guesses the end of the word in
$\langof{\fsa_0} \cap \langof{\fsa_1} \cap \ldots \cap \langof{\fsa_{k-1}}$,
it pushes the encoding of the current control states
of $\fsa_0, \fsa_1, \ldots, \fsa_{k-1}$ onto the stack and
then moves the heads to locate these states in the corresponding lists
of final states in the input word~$w \in \EncAlph^*$.
The stack is popped to verify that all these states are indeed final.
After that, one of the heads locates the encoding of the bottom-of-stack
symbol of~$\fsa_0$ within~$w$.
By popping the stack, $\aut_k$~verifies that the simulated stack of~$\fsa_0$
contains this symbol only and, therefore, that $\fsa_0$ has reached
an accepting configuration.
If all checks succeed, $\aut_k$~accepts.
This completes the construction of~$\aut_k$ and hence also the proof of all the three claims of Theorem~\ref{th:nonemp=langrec}.

\subsection{Application: Hardest 2NPDA$(k)$ Languages}

We already observed that Theorem~\ref{th:nonemp=langrec} proves
the existence of a ``hardest'' $\TwoNPDA(k)$ language in terms of time complexity.
In particular, for each~$k$, there exists a \emph{fixed} $\TwoNPDA(k)$ $\aut_k$ such that
for any $\TwoNPDA(k)$~$\aut$, there is a linear-time reduction from the $\aut$-language recognition problem to the $\aut_k$-language recognition problem.
We can strengthen this result by replacing linear-time reductions
with homomorphisms, giving a new proof of the result of Miyano~\cite{Miyano83}.
More precisely, we prove the following result.

\begin{proposition}
\label{hardest}
For each~$k$, there exists a fixed $\TwoNPDA(k)$ $\hrd_k$ over some alphabet $\Sigma_k$
with the following property:
For every $\TwoNPDA(k)$~\aut over a finite alphabet~$\Sigma$
there is a homomorphism $h \colon \Sigma^* \to \Sigma_k^*$ such that,
for every $w \in \Sigma^+$,
we have $w \in \langof{\aut}$ if and only if $h(w) \in \langof{\hrd_k}$.
\end{proposition}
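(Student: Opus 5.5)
Set $\hrd_k := \aut_k$, the fixed $\TwoNPDA(k)$ from Theorem~\ref{th:nonemp=langrec}(3), possibly over a slightly enlarged alphabet $\Sigma_k \supseteq \EncAlph$. The plan is to compose the reductions of Theorem~\ref{th:nonemp=langrec}(1) and~(2) and then encode the result. This gives, from $\aut$ and $w$, an instance $(\pu, \fsa_1, \ldots, \fsa_{k-1})$ of \mainpdanfa non-emptiness. We then show that the string encoding this instance can be written as $h(w)$ for a homomorphism $h$ depending only on $\aut$, possibly after adjusting $\aut_k$ to read a fixed prefix or suffix.

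First, inspect the instance produced from $w$. The PDA $\pu$ (the product of the fixed DPDA for $L$ with $\fsa_1$) and the DFAs $\fsa_2,\dots,\fsa_k$ all have state sets $Q_\aut \times \{0,\dots,|w|+1\}$ or $\{0,\dots,|w|+1\}$, plus a sink. The transitions incident to position $j$ depend only on the $j$-th letter of $\lmarker w \rmarker$: we have $j \xrightarrow{t} j+d_i$ iff $w_j = a_i$. Everything that is not $w$-dependent is a fixed string determined by $\aut$.

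Next, choose a position-local encoding in which the $\aut_k$ encoding of this instance decomposes into blocks, one block per tape cell. Specifically, encode states in unary-free relative form. For each cell $j$, the block $\beta(w_j)$ lists, for every machine $i$ and every transition $t$ of $\aut$ compatible with $w_j$ at head $i$, a record of the form "from here, move $d_i \in \{-1,0,+1\}$ blocks". All $k$ machines are interleaved inside the same block. Set $h(a) := \beta(a)$. The endmarker blocks $\beta(\lmarker)$ and $\beta(\rmarker)$, the transitions of the fixed DPDA, and the initial and final markers are constant strings $u_\aut$ and $v_\aut$.

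The main obstacle is that these constants depend on $\aut$, so they cannot sit in $h(w)$ outside the image of the letters. Two fixes are available. The constant prefix can be folded into $h(w_1)$, and the suffix into the last letter. Alternatively, and more cleanly, $h(a)$ can carry a full copy of the $\aut$-dependent header, with $\hrd_k$ reading the header from the first block only and ignoring later copies. This is where $w \in \Sigma^+$ is used: a first block must exist. The same copy-in-every-letter trick handles the right endmarker.

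Then $\hrd_k$ is a variant of $\aut_k$ adapted to this relative, block-structured encoding. Each head sits on a block, identified by its position, and stores the current $\aut$-state encoding on the stack as before. To follow a transition with displacement $d_i$, the head moves to the neighbouring block, which is found by scanning delimiters. Matching of transition letters across heads is done as in the pseudocode of Figure~\ref{fig:2npda:constr}, by comparing encodings with synchronized head moves or through the stack. The fixed PDA part, which only needs the transition name $t$, is simulated via the stack exactly as before.

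Correctness then follows from the equivalence chain: $h(w) \in \langof{\hrd_k}$ holds iff there is a transition sequence consistent with $\aut$'s control and stack and with every head's tape letters, which holds iff $w \in \langof{\aut}$.
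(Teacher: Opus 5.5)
\textbf{Verdict: same approach as the paper, but one step fails as written.} Your architecture is the paper's. Each tape cell gets a block $h(a)$, the heads of $\hrd_k$ move between neighbouring blocks exactly as the heads of $\aut$ move between cells, the stack of $\hrd_k$ mimics that of $\aut$, and $\aut$-dependent data is copied into every letter so that the endmarkers can be handled inside the first and last blocks.

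\textbf{The failing step.} You let $\hrd_k$ read the $\aut$-dependent header ``from the first block only and ignoring later copies''. In your split, that header is where the transitions of the DPDA live, i.e.\ the state and stack effects (source $q$, popped $\alpha$, pushed $\gamma'$, target $q'$) of each transition $t$. Your per-block records only give compatibility and the displacement $d_i$. The header information is needed at every simulated step. But $\hrd_k$ has exactly $k$ heads, and each is pinned to the block of the corresponding head of $\aut$. Block indices are unbounded, so they cannot be kept in the finite control. Pushing them onto the stack would bury the simulated stack of $\aut$, which the same step must pop and push. So no head can travel to block~$1$ to look up $t$ and then come back.

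\textbf{The fix.} Each head must consult the copy in its \emph{own} current block. Equivalently, make each per-block record for $t$ self-contained, carrying $q,\alpha,\gamma',q'$. This local access is exactly what the paper obtains by embedding a description of the entire $\aut$ in every $h(a)$. With this fix, your choice to keep the current state of $\aut$ on top of the stack works: the head at $t$'s local record pops $q$ and then $\alpha$ while comparing them against $t$'s fields, then pushes $\gamma'$ and $q'$. Note, though, that this is not what $\aut_k$ does ``as before''; there, control states are tracked by head positions.

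\textbf{Two smaller points.}
\begin{itemize}
\item Your first fix, folding the prefix into $h(w_1)$ and the suffix into the last letter, is unavailable for a homomorphism. The image $h(w_1)$ cannot depend on where the letter occurs, so only the copy-in-every-letter variant works.
\item You should state how a head of $\aut$ scanning $\lmarker$ (resp.\ $\rmarker$) is represented. The corresponding head of $\hrd_k$ sits in the copy of $\beta(\lmarker)$ inside the first block (resp.\ $\beta(\rmarker)$ inside the last block). It enters that copy by bouncing back when a move to a nonexistent neighbouring block hits the real endmarker of $\hrd_k$'s tape. Its position inside the block then distinguishes cell~$0$ from cell~$1$ (resp.\ cell $|w|+1$ from cell $|w|$). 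This is precisely the part the paper singles out as needing extra care.
\end{itemize}
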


We note that there is a classical result on the existence of ``hardest'' context-free languages by Greibach~\cite{Greibach-hardest}. Furthermore, for $\TwoNPDAclass(1)$, such a language
was first obtained by Rytter~\cite{Rytter-hardest}.
Our hardest languages, $\langof{\hrd_k}$, 
are different from those of Miyano~\cite{Miyano83}.

We do not provide the entire proof of Proposition~\ref{hardest} as it rests on an application
of existing ideas, namely on a similar recent argument for the case of $k = 1$~\cite[Section~8]{ChistikovMS22}.
We provide an outline of the proof, sketching the argument.

Conceptually, our hardest language $\langof{\hrd_k}$ is based on the ``circular''
application of the three reductions of Theorem~\ref{th:nonemp=langrec}.
For a word $w \in \Sigma^*$, the homomorphism $h$ embeds in each morphic
image $h(a)$ with $a \in \Sigma$, a description of the entire $\TwoNPDA(k)$~$\aut$,
encoded using an appropriate but fixed alphabet~$\Sigma_k$.
Roughly speaking, this enables the new fixed $\TwoNPDA(k)$~$\hrd_k$ to simulate~$\aut$,
using the same approach as the pseudocode from Figure~\ref{fig:2npda:constr}.
Movements of each head of $\hrd_k$ between ``blocks'' $h(a)$ with $a \in \Sigma$,
will follow the movements of the corresponding head of~$\aut$ between individual
letters $a$ of the input word~$w$.
The stack of $\hrd_k$ will also mimic the stack of $\aut$.
Auxiliary movements and auxiliary stack operations will be required for the simulation,
which are a bit tedious to describe but present no challenge.

A more sophisticated element of the construction is the handling of the endmarkers.
Intuitively, since the left and right tape delimiters
$\lmarker$ and $\rmarker$ are not given to the morphism~$h$,
special treatment of these two letters is required: the automaton~$\hrd_k$
``bounces back'' to the main part of the tape upon hitting an endmarker
and uses a copy of the description of~\aut embedded in the first (or last)
letter of the tape to continue the simulation.
Extra care is necessary to ensure that~$\hrd_k$ can process the additional
information, namely that some of the heads of the simulated automaton~\aut
are over the endmarker instead of the first (respectively, last) letter
of the input word.
The technique of \cite[Section~8]{ChistikovMS22} can be used to this end.
%
This completes the proof outline, as well as a sketch of the construction
of the hardest language $\langof{\hrd_k}$.

	\section{Conclusion}

In this paper, we have shown a conditional lower bound of $O(n^{2k} |\Sigma| + n^{3k})$ on the running time for the \mainpdanfa non-emptiness problem, where $n$ is the maximum number of states of the given PDA and the NFAs and $\Sigma$ is the common alphabet of these machines. This lower bound is conditional on the (combinatorial) $3k$-Clique hypothesis and matches the running time of the known (combinatorial) algorithms for this problem, thereby providing a tight bound on its complexity. 
Furthermore, we have also shown a conditional lower bound of $n^{3(k-1)}$ for the case when the machines have a constant-sized input alphabet. Finally, to investigate the possibility of algorithms with running time
faster than $O(N^{3k})$ (where $N$ is the total bit size of the input), we also introduced a new hypothesis called the $\TwoNPDA(k)$ hypothesis. We then used this hypothesis to help explain the lack of such algorithms for the \mainpdanfa non-emptiness problem, as well as for other problems in language theory and automata theory.

        \section*{Acknowledgments}
        We thank Marvin K\"unnemann, Neha Rino, Alexander Rubtsov, Henry Sinclair-Banks, and Karol W\k{e}grzycki for useful discussions.
        A. R. Balasubramanian and Rupak Majumdar were sponsored in part by the Deutsche Forschungsgemeinschaft project 389792660 TRR 248---CPEC.
        Dmitry Chistikov is supported by the Engineering and Physical Sciences Research Council [EP/X03027X/1] and
        by the Centre for Discrete Mathematics and its Applications (DIMAP) and Department of Computer Science,
        at the University of Warwick.

	
	\bibliographystyle{IEEEtranS} 
	\bibliography{refs}
\appendix

\end{document}